\setlist[itemize]{leftmargin=*,itemsep=1pt,parsep=0.5pt}
\setlist[enumerate]{leftmargin=*,itemsep=1pt,parsep=0.5pt}
\setlist[description]{itemsep=1pt,parsep=0.5pt}
\newtheorem{thm}{Theorem}
\newtheorem*{thm*}{Theorem}
\newtheorem{lem}{Lemma}
\newtheorem*{lem*}{Lemma}
\newtheorem{claim}{Claim}
\newtheorem*{claim*}{Claim}
\newtheorem{corol}{Corollary}
\newtheorem*{corol*}{Corollary}
\newtheorem{remark}{Remark}
\theoremstyle{definition}
\newtheorem{defn}{Definition}
\newtheorem*{mech*}{Construction}
\newtheorem*{notation*}{Notation}
\newtheorem*{remark*}{Remark}
\renewcommand{\paragraph}[1]{\smallskip\noindent{\bf #1}~}
\renewcommand{\subparagraph}[1]{\smallskip\noindent{\textit{#1}}~}
\renewenvironment{proof}{\noindent{{\em Proof:~}}}{\hfill $\Box$ \smallbreak}
\newcommand{\namedref}[2]{\hyperref[#2]{#1~\ref*{#2}}}
\newcommand{\Algorithmref}[1]{\namedref{Algorithm}{algo:#1}}
\newcommand{\Sectionref}[1]{\namedref{Section}{sec:#1}}
\newcommand{\Appendixref}[1]{\namedref{Appendix}{app:#1}}
\newcommand{\Theoremref}[1]{\namedref{Theorem}{thm:#1}}
\newcommand{\Remarkref}[1]{\namedref{Remark}{remark:#1}}
\newcommand{\Corollaryref}[1]{\namedref{Corollary}{corol:#1}}
\newcommand{\Definitionref}[1]{\namedref{Definition}{def:#1}}
\newcommand{\Lemmaref}[1]{\namedref{Lemma}{lem:#1}}
\newcommand{\Claimref}[1]{\namedref{Claim}{clm:#1}}
\newcommand{\Figureref}[1]{\namedref{Figure}{fig:#1}}
\newcommand{\Equationref}[1]{(\ref{eq:#1})}
\newcommand{\Footnoteref}[1]{\namedref{Footnote}{foot:#1}}
\renewcommand{\implies}{\ensuremath{\Rightarrow}\xspace}
\renewcommand{\iff}{\ensuremath{\Leftrightarrow}\xspace}
\DeclareMathOperator*{\E}{{\mathbb E}}
\newcommand{\N}{\ensuremath{{\mathbb N}}\xspace}
\newcommand{\R}{\ensuremath{{\mathbb R}}\xspace}
\newcommand{\Rplus}{\ensuremath{\R_{\ge0}}\xspace}
\newcommand{\eps}{\ensuremath{\epsilon}\xspace}
\newcommand{\cA}{\ensuremath{\mathcal{A}}\xspace}
\newcommand{\cH}{\ensuremath{\mathcal{H}}\xspace}
\newcommand{\cS}{\ensuremath{\mathcal{S}}\xspace}
\newcommand{\A}{\ensuremath{\mathcal{A}}\xspace}
\newcommand{\B}{\ensuremath{\mathcal{B}}\xspace}
\newcommand{\C}{\ensuremath{\mathcal{C}}\xspace}
\newcommand{\G}{\ensuremath{\mathcal{G}}\xspace}
\newcommand{\X}{\ensuremath{\mathcal{X}}\xspace}
\newcommand{\Y}{\ensuremath{\mathcal{Y}}\xspace}
\newcommand{\M}{\ensuremath{\mathcal{M}}\xspace}
\newcommand{\Lap}{\ensuremath{\mathrm{Lap}}\xspace}
\newcommand{\x}{\ensuremath{\mathbf{x}}\xspace}
\newcommand{\y}{\ensuremath{\mathbf{y}}\xspace}
\newcommand{\z}{\ensuremath{\mathbf{z}}\xspace}
\newcommand{\s}{\ensuremath{\mathbf{s}}\xspace}
\newcommand{\dn}{\ensuremath{\mathsf{\partial}}\xspace}
\newcommand{\dnx}{\ensuremath{{\widehat\dn}}\xspace}
\newcommand{\met}{\ensuremath{\mathfrak{d}}\xspace}
\newcommand{\drop}{\ensuremath{\dn_{\mathrm{drop}}}\xspace}
\newcommand{\move}{\ensuremath{\dn_{\mathrm{move}}}\xspace}
\newcommand{\dropmove}[1]{\ensuremath{\dn_{\mathrm{drmv}}^{#1}}\xspace}
\newcommand{\drme}{\ensuremath{\dn_{\mathrm{drmv}}^{\eta}}\xspace}
\newcommand{\hist}{\ensuremath{\mathrm{hist}}\xspace}
\newcommand{\thresh}{\ensuremath{\tau}\xspace}
\newcommand{\Hspace}[1]{\ensuremath{\cH_{#1}}\xspace}
\newcommand{\prob}[1]{\ensuremath{\mathfrak{p}_{_{#1}}}\xspace}
\newcommand{\Prob}[2]{\ensuremath{{#1}(#2)}\xspace}
\newcommand{\W}[1]{\ensuremath{W_{{#1}}}\xspace}
\newcommand{\Winf}[1]{\ensuremath{W_{#1}^\infty}\xspace}
\newcommand{\Winfz}{\ensuremath{W^\infty}\xspace}
\newcommand{\dd}{\ensuremath{\,\mathrm{d}}\xspace}
\newcommand{\id}{\ensuremath{\mathrm{id}}\xspace}
\newcommand{\STLap}{\ensuremath{\text{Shifted-Truncated Laplace}}\xspace}
\newcommand{\epdel}{\ensuremath{(\epsilon, \delta)}}
\newcommand{\eeps}{\ensuremath{e^{\epsilon}}}
\newcommand{\eepsratio}[1]{ \left(\frac{e^{#1}-1}{\eeps-1}\right)}
\newcommand{\Lnoise}{\pi_q}
\newcommand{\lnoise}[1]{\Lnoise(#1)}
\newcommand{\Dnoise}{\pi_{q'}}
\newcommand{\dnoise}[1]{\Dnoise(#1)}
\newcommand{\mtrlap}[1]{\ensuremath{\M_{\mathrm{STLap}}^{#1}}\xspace}
\newcommand{\mbuc}[1]{\ensuremath{\M_{\mathrm{buc}}^{#1}}\xspace}
\newcommand{\mmax}[1]{\ensuremath{\M_{\mathrm{max}}^{#1}}\xspace}
\newcommand{\msupp}[1]{\ensuremath{\M_{\mathrm{supp}}^{#1}}\xspace}
\newcommand{\mBhist}[1]{\ensuremath{\M_{\mathrm{BucHist}}^{#1}}\xspace}
\newcommand{\fsupp}{\ensuremath{f_{\mathrm{supp}}}\xspace}
\newcommand{\fmin}{\ensuremath{f_{\mathrm{min}}}\xspace}
\newcommand{\fmax}{\ensuremath{f_{\mathrm{max}}}\xspace}
\newcommand{\dhist}[2]{\ensuremath{\met_{\mathrm{hist}}(#1, #2)}\xspace}
\newcommand{\dhistx}{\ensuremath{\met_{\mathrm{hist}}}\xspace}
\newcommand{\dsupp}[2]{\ensuremath{\met_{\mathrm{supp}}(#1, #2)}\xspace}
\newcommand{\dsuppx}{\ensuremath{\met_{\mathrm{supp}}}\xspace}
\newcommand{\dA}[2]{\ensuremath{\met_{\mathrm{\A}}(#1, #2)}\xspace}
\newcommand{\dAx}{\ensuremath{\met_{\mathrm{\A}}}\xspace}
\newcommand{\dB}[2]{\ensuremath{\met_{\mathrm{\B}}(#1, #2)}\xspace}
\newcommand{\dBx}{\ensuremath{\met_{\mathrm{\B}}}\xspace}
\newcommand{\dC}[2]{\ensuremath{\met_{\mathrm{\C}}(#1, #2)}\xspace}
\newcommand{\dG}[2]{\ensuremath{\met_{\mathrm{\G}}(#1, #2)}\xspace}
\newcommand{\nhist}{\ensuremath{\sim_{\mathrm{hist}}\xspace}}
\newcommand{\distsens}[2]{\ensuremath{\sigma_{#1}^{#2}}\xspace}
\newcommand{\support}{\ensuremath{\mathrm{support}}\xspace}
\newcommand{\fhbs}{\ensuremath{{f_{\mathrm{HBS}}}}\xspace}
\newcommand{\err}{\ensuremath{\mathrm{err}}\xspace}
\newcommand{\mode}{\ensuremath{\mathrm{mode}}\xspace}
\begin{document}

\title{Flexible Accuracy for Differential Privacy}

\author[1]{Aman Bansal}
\author[1]{Rahul Chunduru}
\author[2]{Deepesh Data}
\author[1]{Manoj Prabhakaran}
\affil[1]{Indian Institute of Technology Bombay, India \authorcr \texttt{aman0456b@gmail.com},\ \texttt{\{chrahul,mp\}@cse.iitb.ac.in}\vspace{0.25cm}}
\affil[2]{University of California, Los Angeles, USA \authorcr \texttt{deepesh.data@gmail.com}}

\date{}
\maketitle
{\allowdisplaybreaks

\begin{abstract}
Differential Privacy (DP) has become a gold standard in privacy-preserving data analysis.
While it provides one of the most rigorous notions of privacy, there are many settings where its applicability is limited.

Our main contribution is in augmenting differential privacy with {\em Flexible Accuracy}, which 
allows small distortions in the input (e.g., dropping
outliers) before measuring accuracy of the output, allowing one to extend DP
mechanisms to high-sensitivity functions. 
We present mechanisms that can help in achieving this notion for functions that had no 
meaningful differentially private mechanisms previously. In
particular, we illustrate an application to differentially private histograms,
which in turn yields mechanisms for revealing the support of a dataset or the
extremal values in the data. Analyses of our constructions exploit new
versatile composition theorems that facilitate modular design. 

All the above extensions use our new definitional framework, which is in terms of 
``lossy Wasserstein distance'' -- a 2-parameter error measure for distributions. This may be of independent interest.
\end{abstract}

\section{Introduction}
\label{sec:intro}

In the era of big data, privacy has been a major concern, to the point that
recent legislative moves, like General Data Protection Regulation (GDPR) in
the European Union, have mandated various measures for ensuring privacy. Further, in
the face of a global pandemic that has prompted governments to collect and
share individual-level information for epidemiological purposes, debates on
privacy-utility trade-offs have been brought to sharper relief. 
Against this backdrop, mathematical theories of privacy are of great
importance. 
Differential Privacy \cite{DworkMNS06} is by far the most impactful
mathematical framework today for privacy in statistical databases.  It has
seen large scale adoption in theory and practice, including machine learning
applications and large scale commercial implementations (e.g.,
\cite{AbadiCGMMTZ16,BorgsCS15,BorgsCSZ18,Rappor14,AppleDP17}).  

In this work, we make foundational contributions to the area of Differential
Privacy (DP), extending its applicability. Our main contribution is the notion of \emph{Flexible Accuracy} --
a new framework for measuring the \emph{accuracy} of a mechanism (while
retaining the DP framework unaltered for quantifying privacy). This lets us
develop new DP mechanisms with non-trivial provable (and empirically
demonstrable) accuracy guarantees in settings involving high-sensitivity functions.

\paragraph{Motivating Flexible Accuracy (FA).}
Consider querying a database consisting of integer valued observations --
say, ages of patients who recovered from a certain disease -- for the
maximum value. For the sake of privacy, one may wish to apply a DP
mechanism, rather than output the maximum in the data itself.  Two possible
datasets which differ in only one patient are considered neighbors and a DP
mechanism needs to make the outputs on these two samples indistinguishable
from each other. However, the function in question is \emph{highly
sensitive} -- two neighboring datasets can have their maxima differ by as
much as the entire range of possible ages%
\footnote{In fact, \emph{all datasets} with low maximum values have high
sensitivity \emph{locally}, by considering a neighboring dataset with a
single additional data item with a large value.}
-- and, as we shall see in our empirical evaluations in \Sectionref{eval}, the various kinds of mechanisms in the literature~\cite{ExponentialMech,BNS,VadhanSurvey,DworkL09,NissimRS07,BeimelNiSt16} do not provide a satisfactory solution.

The difficulty in solving this problem is related to another issue.
Consider the problem of reporting a \emph{histogram} (again,
say, of patients' ages). Here a standard DP mechanism, of adding a zero-mean
Laplace noise to each bar of the histogram is indeed reasonable, as the
histogram function has low sensitivity in each bar. Now, note that
\emph{maximum can be computed as a function of the histogram}. However, even
though the histogram mechanism was sufficiently accurate in the standard
sense, the maximum computed from its output is no longer accurate! This is
because when a non-zero count is added to a large-valued item which
originally has a count of 0, the maximum can increase arbitrarily.

Flexible Accuracy (FA) is a relaxed notion of accuracy that lets us address both
of the above issues. In particular, it not only enables new DP mechanisms
for maximum, but also allows one to derive the mechanism from a new DP
mechanism for histograms. We provide a general \emph{composition theorem} that
enables such transfer of accuracy guarantees that is not applicable to
conventional accuracy measures.

The high-level idea of Flexible Accuracy is to allow for some
\emph{distortion of the input} when measuring accuracy.  We shall require
distortion to be defined using a \emph{quasi-metric} over the input space (a
quasi-metric is akin to a metric, but is not required to be symmetric). A
good example of distortion is \emph{dropping a few items} from the dataset;
note that in this case, \emph{adding} a data item is \emph{not} considered low distortion.
Referring back to the example of reporting maximum, given a dataset with a
single elderly patient and many young patients, flexible accuracy with
respect to this distortion allows a mechanism for maximum to report the
maximum age of the younger group.%
\footnote{Of course, it is not obvious what should determine which items
should be dropped and with what probability.  This will be the subject of
our new mechanisms.}

Flexible accuracy needs to account for errors that can be attributed to
distortion of the input (input error), as well as to inaccuracies in the
output (output error).  To be able to exploit input distortion while retaining privacy, we allow
input distortion to be randomized.  A side-effect of this is that our
measure of output accuracy needs to allow the ``correct output'' to be
randomized (i.e., defined by a distribution), even if we are interested in
only deterministic functions. To generalize the conventional
\emph{probabilistically approximately correct} (PAC) guarantees to this
setting, we introduce a natural, but new quantity called \emph{lossy
$\infty$-Wasserstein distance}.
Our final definition of flexible accuracy is a 3-parameter quantity, with
one parameter accounting for input distortion, and 2 parameters used for 
output error measured using lossy $\infty$-Wasserstein distance.

\subsection{Our Contributions}
Our contributions are in three parts:

\begin{itemize}
\item \textit{Definitions:} We present a conceptual
	enhancement to the framework of DP -- \emph{flexible accuracy} -- which considers error after allowing for a small \emph{distortion of the input}; see \Definitionref{alpha-beta-gamma-accu}. 
	To account for randomized distortion (and more generally, to be able to
consider distributions over inputs and/or randomized functions) we
need an error measure that compares a mechanism's output distribution to not
a fixed ``correct value,'' but a ``correct distribution.'' For this, we
introduce and use a new measure called \emph{lossy $\infty$-Wasserstein distance} (see \Definitionref{infty-delta-wass-dist}), 
extending the classical notion of Wasserstein distance (or Earth Mover
Distance). This also generalizes several existing notions, such as the PAC guarantee, the notion of total variation distance, etc.

\item \textit{Composition Theorems:} 
We present a composition theorem for flexible accuracy (see \Theoremref{compose-accuracy}),
which gives an FA guarantee for a composed mechanism from those of the constituent ones. This involves identifying new quantities including \emph{distortion sensitivity} (see \Definitionref{dist-sens}) 
and \emph{error sensitivity} (see \Definitionref{err-sens}). 
To be able to use such composed mechanisms for DP, we rely on the well-known post-processing theorem of DP,
as well as a new pre-processing theorem (see \Theoremref{compose-DP}).

\item \textit{Mechanisms:} We give a DP mechanism with FA guarantee for releasing a sanitized histogram (called the \STLap mechanism; see \Algorithmref{hist-mech} and \Algorithmref{histogram-mech}), 
which, via our composition theorems, yield DP mechanisms with FA
	guarantees for \emph{histogram-based statistics} (see
	\Theoremref{bucketing-general}).
	These functions include several high-sensitivity functions, such as maximum
	and minimum, support of a set, range, median, maximum margin separator, etc.\ (we give concrete bounds for max/min and support).
	We present an
	empirical comparison against state-of-the-art DP mechanisms, which
	reveals that apart from the theoretical guarantees we obtain (where
	none were available till now), our mechanisms compare favorably with
	the others in terms of accuracy (flexible and otherwise) empirically
	as well.
\end{itemize}

\subsection{The Surprising Power of Flexible Accuracy} 
Consider a sequence of $n+1$ neighboring histograms, such that the first in the sequence has all
its $n$ elements in the first bar, and the last one has all elements in the last
bar, and the first and the last bars are far away from each other. 
In any reasonably accurate (flexible or not) mechanism for a histogram-based statistic like max, the answers for these two extremes must be very different with
probability almost 1. So, intuitively, there should be some pair of neighbors in this
sequence for which the
answers should be significantly different with probability at least $1/n$.
This seems to preclude obtaining $(\eps,\delta)$-DP for a small constant
$\eps$ with $\delta \ll 1/n$. Remarkably, this intuition turns out to be
wrong! By carefully calibrating the probability of the responses (while
also making sure that the responses can be attributed to only dropping a few
items -- as permitted by flexible accuracy), our mechanism 
can obtain the following guarantee for the max function (see \Corollaryref{bucketing-max}): \\

\noindent \fbox{\parbox{\linewidth}{{\bf Informal result for max:}\label{informal-result-max} Our flexibly-accurate mechanism for max over a bounded range achieves $\left(\eps,\eps e^{-\Omega(\eps\alpha n)}\right)$-DP while incurring an arbitrarily small output error after dropping only $\alpha n$ elements.}} \\

The above result gives a trade-off between the privacy guarantee and number of elements dropped. For example: 
{\sf (i)} By choosing $\eps=\frac{1}{n^{1/4}}$ and $\alpha=\frac{1}{\sqrt{n}}$, our mechanism is $(\frac{1}{n^{1/4}},e^{-\Omega(n^{1/4})})$-DP while dropping only $O(\sqrt{n})$ elements. 
{\sf (ii)} By choosing $\eps$ to be a small constant (say, $0.1$) and say, $\alpha=\frac{\log^2 n}{n}$, our mechanism is $(0.1,n^{-\Omega(\log n)})$-DP while dropping only $O(\log^2 n)$ elements.
See \Sectionref{choosing-params} for several other parameter choices that are of interest.

\paragraph{Significance of the New Mechanisms.} Traditional DP literature has largely
not addressed functions like the maximum function, \fmax. This is in part due to the very high
sensitivity of such functions: When the database has entries from $[0,B]$, 
the sensitivity of $\fmax$ is $B$.%
\footnote{The sensitivity of a real-valued function $f:\X\to\R$ is defined by $\Delta_f:=\max_{\x,\x'\in\X:\x\sim\x'}|f(\x)-f(\x')|$. In the case of \fmax, there are neighboring databases $\x,\x'$, where the first database $\x$ has all the inputs as $0$ and the second database has $n-1$ inputs as $0$ but one input is $B$, so, $\Delta_{\fmax}=B$.}
The same holds for other functions like a ``thresholded maximum'' $\max_k$
which outputs the maximum value that appears at least $k$ times in the
database. Despite being natural functions about the shape of the data, no DP
mechanisms have been offered in the literature for these functions.
With FA, \emph{for the first time, we provide DP mechanisms for such
functions, with meaningful worst-case accuracy guarantees}. We
emphasize that we retain the \emph{standard} definition of
$(\eps,\delta)$-DP, and achieve strong parameters for it (see above).
Further, the additional dimension of inaccuracy that we allow -- namely,
input distortion -- is in line with what applications like (robust) Machine
Learning often anticipate and tolerate. 

We also remark that, \emph{on specific data distributions}, some of the
existing DP mechanisms may already enable empirical FA guarantees (see
\Sectionref{eval} where such guarantees are compared). But crucially, 
such guarantees are not always available in the worst-case, and even on data
distributions where they do exist, they were not identified previously.

\subsection{Related Work and Paper Organization}
\paragraph{Related work.}
DP, defined by Dwork et al.~\cite{DworkMNS06} has developed into a highly
influential framework for providing formal privacy guarantees (see
\cite{DworkRo14} for more details).  The notion of flexible accuracy we
define is motivated by the difficulty in handling outliers in the data. Some
of the work leading to DP explicitly attempts to address the privacy of
outliers \cite{ChawlaDMSW05,ChawlaDMT05}, as did some of the later works
within the DP framework \cite{DworkL09,BNS,Stable1}. These results rely on having a distribution over the data, or
respond only when the answer is a ``stable value''. 
Blum et al.~\cite{BLR} introduced the notion of \emph{usefulness}, that is motivated by similar limitations of
DP as those which motivated flexible accuracy, but as explained later, is
less generally applicable.
Incidentally, Wasserstein distance has been used
in privacy mechanisms in the Pufferfish framework \cite{KiferM14,SongWC17},
but assuming a data distribution.

Several DP mechanisms for histograms are available with a variety of
accuracy guarantees, as discussed in \Sectionref{eval}. While these
mechanisms do not claim any accuracy guarantees for functions computed from
histograms, on specific data distributions and for some of these mechanisms,
we see that FA can be used to empirically capture meaningful accuracy guarantees.

\paragraph{Paper organization.} 
We define the lossy Wasserstein distance and its properties in \Sectionref{lossy-wass}. We define flexible accuracy in \Sectionref{flex-accu}, where we also give several examples of distortion measure.
In \Sectionref{composition-theorems}, we present our composition theorems for flexible accuracy and differential privacy. We also motivate and define distortion and error sensitivities (with examples) in \Sectionref{dist-sens} and \Sectionref{err-sens}, respectively. 
In \Sectionref{histogram_mechs}, we present our (bucketed)-histogram mechanism and state its flexible accuracy and privacy guarantees, and we post-process that mechanism by any histogram-bases-statistic in \Sectionref{HBS}. Results with distortions other than dropping elements are presented in \Sectionref{beyond-drop}. All the proofs are presented in \Sectionref{proofs}. We empirically evaluate our mechanisms with several other mechanisms from literature in \Sectionref{eval}. Omitted details are provided in appendices.

\section{Lossy Wasserstein Distance}\label{sec:lossy-wass}
Central to the formalization of all the results in this work is a new notion of distance between
distributions over a metric space, that we call \emph{lossy Wasserstein
distance}. Lossy Wasserstein distance generalizes the notion of Wasserstein
distance~\cite{Villani_OptimalTransport08}, or Earth Mover Distance,
which is the minimum cost of transporting probability mass (``earth'') of one
distribution to make it match the other. Loss refers to the fact that some
of the mass is allowed to be lost during this transportation. We shall
use the ``infinity
norm'' version, where the cost paid is the maximum distance any mass is
transported.

Formally, consider a metric space with ground set
$\Omega$, and metric \met, where Wasserstein distance can be defined. For
example, one may consider $\Omega=\R^n$ and the metric \met being an
$\ell_p$-metric.  
For $\gamma\in[0,1]$, and distributions $P,Q$ over the metric space
$(\Omega,\met)$,\footnote{We will use upper case letters ($P,Q,X,Y$, etc.) to denote random variables (r.v.), as well as the probability distributions associated with them. Sometimes, we will also denote the probability distribution associated with a r.v.\ $X$ by $\prob{X}$.} we define $\Phi^{\gamma}(P,Q)$, the set of \emph{$\gamma$-lossy couplings of $P$
and $Q$}, as consisting of joint distributions $\phi$
over $\Omega^2$ with marginals $\phi_1$ and $\phi_2$ such that
$\Delta(\phi_1,P) + \Delta(\phi_2,Q)\leq\gamma$, where 
$\Delta(P,Q) := \frac{1}{2}\int_{\Omega}|\Prob{P}\omega-\Prob{Q}\omega|\dd \omega$ denotes the total variation distance between $P$ and $Q$.
Note that $\Phi^0(P,Q)$ consists of joint distributions with marginals exactly equal to $P$ and $Q$.

\begin{defn}[$\gamma$-Lossy $\infty$-Wasserstein Distance]\label{def:infty-delta-wass-dist}
Let $P$ and $Q$ be two distributions  over a metric space
$(\Omega,\met)$.
For $\gamma\in[0,1]$, the $\gamma$-lossy $\infty$-Wasserstein distance between $P$ and $Q$ is defined as:
\begin{equation}\label{eq:infty-delta-wass-dist}
\Winf{\gamma}(P,Q) = \inf_{\phi\in\Phi^{\gamma}(P,Q)} \sup_{(x,y)\leftarrow\phi}\met(x,y).
\end{equation}
\end{defn}
For simplicity, we write $\Winfz(p,q)$ to denote $\Winf0(p,q)$. 
We remark that while our definition of \Winf\gamma uses a worst case notion of distance
(as signified by $\infty$), there is an analogous average case definition,
that may be of independent interest. We define this in
\Appendixref{average_lossy-wass}.

\subsection{Lossy $\infty$-Wasserstein Distance Generalizes Some Existing Notions}
Now we show that the Lossy $\infty$-Wasserstein distance generalizes the guarantee of
being ``Probably Approximately Correct'' (PAC) and also the definition of total variation distance, as shown below.
\begin{itemize}
\item {\bf Generalizing the PAC guarantee:} The PAC guarantee states that
a randomized quantity $G$ is, except with some small probability $\gamma$,
within an approximation radius $\beta$ of a desired \emph{deterministic}
quantity $f$: i.e., $\Pr_{g\leftarrow G}[\met(f,g) > \beta] \le \gamma$. For example, when $G$ takes values over $\R$, $\met$ can be the standard different metric over $\R$, i.e., $\met(f,g)=|f-g|$.
Representing $f$ by a point distribution $F_f$, this can
be equivalently written as $\Winf{\gamma}(F_f,G) \le \beta$, where the underlying metric is $\met$;
see \Lemmaref{generalize-pac} in \Appendixref{wass-generalizes} for a proof of this.

\item {\bf Generalizing the total variation distance:} It also
generalizes the total variation distance $\Delta(P,Q)$ between two distributions, since
$\Winf\gamma(P,Q)=0$ iff $\Delta(P,Q) \le \gamma$; see \Lemmaref{generalizing-tv} in \Appendixref{wass-generalizes} for a proof of this.
\end{itemize}

\subsection{Triangle Inequality for Lossy Wasserstein Distance}
The Lossy $\infty$-Wasserstein distance satisfies the following triangle inequality.
\begin{lem}\label{lem:wass-triangle}
For distributions $P$, $Q$, and $R$ over a metric space $(\Omega,\met)$ and for all
$\gamma_1,\gamma_2 \in [0,1]$, we have
\begin{align}
\Winf{\gamma_1 + \gamma_2}(P, R) &\leq \Winf{\gamma_1}(P, Q) + \Winf{\gamma_2}(Q, R). \label{eq:wass-triangle-gamma-infty}
\end{align}
\end{lem}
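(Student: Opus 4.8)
The plan is to establish the triangle inequality by explicitly constructing, from near-optimal $\gamma_1$-lossy and $\gamma_2$-lossy couplings, a single $(\gamma_1+\gamma_2)$-lossy coupling of $P$ and $R$ whose supremum transport cost is at most the sum of the two individual costs (plus a vanishing slack). First I would fix $\eta > 0$ and choose $\phi \in \Phi^{\gamma_1}(P,Q)$ with $\sup_{(x,y)\leftarrow\phi}\met(x,y) \le \Winf{\gamma_1}(P,Q) + \eta$, and similarly $\psi \in \Phi^{\gamma_2}(Q,R)$ with $\sup_{(y,z)\leftarrow\psi}\met(y,z) \le \Winf{\gamma_2}(Q,R) + \eta$. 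The issue is that $\phi$ has second marginal $\phi_2$ and $\psi$ has first marginal $\psi_1$, and these are only close to $Q$ in total variation, not equal to each other, so we cannot directly ``glue'' them along $Q$.

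The key step is to handle this marginal mismatch. I would first modify $\phi$ and $\psi$ so that they share a common middle marginal: since $\Delta(\phi_2, Q) + \Delta(\psi_1, Q)$ may be positive, I would argue that one can find couplings $\phi' \in \Phi^{\gamma_1}(P,Q')$ and $\psi' \in \Phi^{\gamma_2}(Q',R)$ for a common intermediate $Q'$ without increasing transport costs — intuitively, by letting $Q' = \phi_2$ and transporting the mismatched mass within $\psi$, charging it to the lossy budget $\gamma_2$ rather than to distance. More carefully, one writes $\phi_2$ and $\psi_1$ in terms of their overlap, keeps the overlapping mass coupled through both, and simply \emph{drops} the non-overlapping parts (which total at most the slack already counted in $\gamma_1+\gamma_2$ via the marginal-deviation terms). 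Then I would apply the gluing lemma for couplings (the standard disintegration: given a coupling $\phi'$ of $(P, Q')$ and $\psi'$ of $(Q', R)$, the measure $\int \phi'_{\cdot|y} \otimes \psi'_{\cdot|y}\, \dd Q'(y)$ on $\Omega^3$ has $(P,R)$-compatible outer marginals), project onto the first and third coordinates, and verify via the triangle inequality of \met that every point $(x,z)$ in the support satisfies $\met(x,z) \le \met(x,y) + \met(y,z) \le \Winf{\gamma_1}(P,Q) + \Winf{\gamma_2}(Q,R) + 2\eta$.

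The remaining bookkeeping is to confirm the lossy budget: the glued coupling has first marginal within $\Delta(\cdot, P) \le \gamma_1$ of $P$ (inherited from $\phi$, since we only shed mass on the $Q'$-side) and third marginal within $\gamma_2$ of $R$ — but to get $\gamma_1 + \gamma_2$ total I need to track where the dropped mass is charged, accounting for the fact that the $Q'$-side deviations from $\phi$ and $\psi$ combine. I would be careful that $\Delta(\text{proj}_1, P) + \Delta(\text{proj}_3, R) \le (\Delta(\phi_1,P) + \Delta(\phi_2,Q)) + (\Delta(\psi_1,Q) + \Delta(\psi_2,R)) \le \gamma_1 + \gamma_2$, using that discarding mass on the middle coordinate only moves the middle-marginal discrepancy onto the outer coordinates in a total-variation-preserving way. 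Taking $\eta \to 0$ then yields \eqref{eq:wass-triangle-gamma-infty}. The main obstacle is precisely this middle-marginal reconciliation: making the gluing argument rigorous when the intermediate marginals are not identical but only close, and ensuring the transported/discarded mass is charged to the lossy parameters without inflating the distance term.
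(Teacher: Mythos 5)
Your plan takes a genuinely different route from the paper's. The paper first proves a \emph{loss-redistribution} lemma (\Lemmaref{wass-marginal-loss}): if $\Winf{\gamma}(P,Q)=\beta$, the loss budget can be slid freely between the two marginals, so in particular $\Winf{\gamma_1}(P,Q)=\beta_1$ yields a $P'$ with $\Delta(P,P')\le\gamma_1$ and $\Winf{}(P',Q)=\beta_1$, and symmetrically $\Winf{\gamma_2}(Q,R)=\beta_2$ yields $R'$ with $\Delta(R,R')\le\gamma_2$ and $\Winf{}(Q,R')=\beta_2$. Now both couplings have middle marginal \emph{exactly} $Q$, the standard gluing triangle inequality gives $\Winf{}(P',R')\le\beta_1+\beta_2$, and the alternative characterization $\Winf{\gamma}(P,R)=\inf_{\Delta(P,\hat{P})+\Delta(R,\hat{R})\le\gamma}\Winf{}(\hat{P},\hat{R})$ closes the argument. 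You instead glue first and try to repair the middle-marginal mismatch afterwards; this can be made to work, but as stated there is a real gap at exactly the step you flag as the main obstacle.

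The gap is that your bookkeeping claim
\[
\Delta(\mathrm{proj}_1,P)+\Delta(\mathrm{proj}_3,R)\le\bigl(\Delta(\phi_1,P)+\Delta(\phi_2,Q)\bigr)+\bigl(\Delta(\psi_1,Q)+\Delta(\psi_2,R)\bigr)
\]
is asserted, not established, and it is false for the natural construction. Write $a_1=\Delta(\phi_1,P)$, $b_1=\Delta(\phi_2,Q)$, $b_2=\Delta(\psi_1,Q)$, $a_2=\Delta(\psi_2,R)$, and let $\mu=\min(\phi_2,\psi_1)$, $s=\Delta(\phi_2,\psi_1)\le b_1+b_2$. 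Thinning $\phi$ and $\psi$ to have common middle marginal $\mu$ and gluing yields a \emph{sub}-coupling of mass $1-s$, whose outer marginals $\phi_1'\le\phi_1$ and $\psi_2'\le\psi_2$ each lose mass $s$. To make this a probability measure in $\Phi^{\gamma_1+\gamma_2}(P,R)$ you must add back mass $s$ on the diagonal (the only way to add mass without increasing the $\sup\met$). If that diagonal measure $D$ is chosen arbitrarily (or you just renormalize), you only get $\Delta(\tilde\chi_1,P)+\Delta(\tilde\chi_3,R)\le a_1+a_2+2s$, which can exceed $\gamma_1+\gamma_2$ since $2s$ may be as large as $2(b_1+b_2)$. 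The fix requires an asymmetric choice: take $D$'s marginal to be exactly $\phi_1-\phi_1'$ (fill the deficit of the first outer marginal), so that $\tilde\chi_1=\phi_1$ and $\Delta(\tilde\chi_1,P)=a_1$; then $\Delta(\tilde\chi_3,R)\le a_2+s$, and $a_1+a_2+s\le a_1+a_2+b_1+b_2\le\gamma_1+\gamma_2$. This asymmetry — charging all of $s$ to the $R$-side — is the same trick the paper encodes once and for all in \Lemmaref{wass-marginal-loss}, which is why its route avoids the issue. Your proposal identifies the right intuition and the right target inequality, but does not supply the construction that makes the total-variation accounting close, and the one you imply (uniformly ``discard non-overlapping parts'') does not.
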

We can easily prove \Lemmaref{wass-triangle} for the special case when $\gamma_1=\gamma_2=0$ using standard tools from \cite{Villani_OptimalTransport08}; see \Lemmaref{Winf_triangle} in \Appendixref{wasserstein} for a proof. However, proving \Lemmaref{wass-triangle} in its full generality requires a significantly more involved proof, which we present in \Sectionref{triangle-ineq_Wass}.

\section{Flexible Accuracy}\label{sec:flex-accu}
The high-level idea of flexible accuracy is to
allow for some \emph{distortion of the input} before measuring accuracy.
We would like to define ``natural'' distortions of a database, that are meaningful for the
function in question. For many functions, removing a few data points (say, outliers) would be a natural distortion, while for others, perturbing the data points (or a combination of both) is more natural.  Note that \emph{adding} new entries -- even just one -- is often not a reasonable
distortion. Therefore, distortion is generally defined not using a metric over
databases, but a {\em quasi-metric} (which is  not required 
to be symmetric).\footnote{A function $\dn:\X\times\X\to[0,\infty)$ is called a quasi-metric, if for every $\x,\y,\z$, we have {\sf (i)} $\dn(\x,\y)=0\iff\x=\y$ and {\sf (ii)} $\dn(\x,\y)\leq\dn(\x,\z)+\dn(\z,\y)$.} 

\subsection{Measure of Distortion}\label{sec:distortion-measures}
We shall use quasi-metrics with range $\Rplus\cup\{\infty\}$ to define a measure of distortion,
where $\infty$ indicates that one database cannot be distorted into another one.
As we shall need distortion measure between two distributions in our accuracy guarantees and also in the definitions of distortion and error sensitivities, it will be useful to extend the distortion measure to distributions. This can be done in same way as $\Winfz$, but with respect to a quasi-metric rather than a metric.
\begin{defn}[Measure of Distortion]\label{def:distortion}
A \emph{measure of distortion} on a set $\X$ is a function $\dn:\X\times \X
\rightarrow \Rplus \cup \{\infty\}$ which forms a quasi-metric over $\X$. 
We also define \dnx as the extension to $\dn$ to distribution, which maps a pair
of distributions $P,Q$ over \X to a real number as
    \[
       \dnx(P, Q) := \inf_{\phi\in\Phi^0(P, Q)} \sup_{(x,y)\leftarrow\phi}\dn(x,y).
    \]
If $P$ is a point distribution with all its mass on a point $x$, we denote
$\dnx(P,Q)$ as $\dnx(x,Q)$, which can be simplified as $\dnx(x,Q)=\sup_{x'\in\support(Q)}\dn(x,x')$.
Furthermore, if both $P$ and $Q$ are point distributions on $x$ and $y$, respectively, then $\dnx(P,Q)=\dn(x,y)$, and we will write $\dnx(P,Q)$ simply by $\dn(x,y)$.
\end{defn}
It is easy to verify that if $\dn$ is a quasi-metric, so is $\dnx$. We prove this in \Lemmaref{dnx-quasi-metric} in \Appendixref{beyond-drop}.

\paragraph{Examples of measures of distortion.}
We formally define three measures of distortion: $\drop$ for dropping elements, $\move$ for perturbing/moving elements, and $\dropmove\eta$ for a combination of dropping and moving elements.
These are defined when each element in $\X$ is a finite multiset over a ground set \G. Formally, 
$\x\in\X$ is a function $\x:\G\rightarrow\N$ (where $\mathbb{N}$ denotes the set of all non-negative integers, including zero) that outputs the multiplicity
of each element of \G in $\x$. We denote the {\emph size} of $\x$ by $|\x| := \sum_{i\in\G} \x(i)$.

\begin{enumerate}
\item {\bf Dropping elements:} 
For finite $\x,\x' \in \X$, we define $\drop$, a measure of distortion for dropping elements, as follows:
\begin{equation}\label{eq:drop_defn}
\drop(\x,\x') := 
\begin{cases}
\frac{\sum_{g\in G} \x(g)-\x'(g)}{\sum_{g\in G} \x(g)} & \text{ if } \forall g\in\G, \x(g) \ge \x'(g), \\
\infty & \text{ otherwise.}
\end{cases}
\end{equation}
That is, $\drop(\x,\x')$ measures the fraction of elements in \x that are to
be dropped for it to become $\x'$ (unless $\x'$ cannot be derived thus).
It is easy to see that $\drop$ is a quasi-metric.

\item {\bf Perturbing/Moving elements:} 
For finite $\x,\y \in \X$, we define $\move$, a measure of distortion for moving elements, as follows:
\begin{equation}\label{eq:move_defn}
\move(\x,\y) = 
\begin{cases}
\Winf{}(\frac{\x}{|\x|},\frac{\y}{|\y|}) & \text{ if } |\x|=|\y|, \\
\infty & \text{ otherwise},
\end{cases}
\end{equation}
where $\frac{\x}{|\x|}$ (similarly, $\frac{\y}{|\y|}$) is treated as a probability vector of size $|\G|$, indexed by the elements of $\G$; the $i$'th element of $\frac{\x}{|\x|}$ is equal to $\frac{\x(i)}{|\x|}$.
We show in that \Claimref{move-metric} in \Appendixref{beyond-drop} that $\move$ is a metric.

\item {\bf Both dropping and moving elements:} For finite $\x,\y \in \X$, we define $\dropmove\eta$, a measure of distortion for both moving and dropping elements, as follows:
\begin{equation}\label{eq:drop_move_defn}
\dropmove\eta(\x,\y) = \inf_{\z}  \left(\drop(\x,\z) + \eta \cdot \move(\z,\y)\right).
\end{equation}
We show in  \Claimref{drop-move-quasi-metric} in \Appendixref{beyond-drop} that $\dropmove\eta$ is a quasi-metric.\footnote{While showing that $\drop$ is a quasi-metric is trivial, it is not always so with other measures of distortion; in particular, showing that $\dropmove\eta$ is a quasi-metric is non-trivial.} 
\end{enumerate}
Most of the results in this paper are derived w.r.t.\ the distortion $\drop$, but they can also be extended to the distortion $\dropmove\eta$; see \Sectionref{beyond-drop} for the extension.

\subsection{Defining Flexible Accuracy}\label{sec:defn-fa}
Informally, flexible accuracy with a distortion bound $\alpha$ guarantees
that on an input \x, a mechanism shall produce an output that corresponds to
$f(\x')$ for some $\x'$ such that $\dn(\x,\x')\leq \alpha$. In addition to
such input distortion, we may allow the output to be also probably
approximately correct, with an approximation error parameter $\beta$ and an
error probability parameter $\gamma$. Formally, the probabilistic
approximation guarantee of the output is given as a bound of $\beta$ on a
$\gamma$-lossy $\infty$-Wasserstein distance.

\begin{defn}[$(\alpha,\beta,\gamma)$-accuracy]\label{def:alpha-beta-gamma-accu}
Let \dn be a measure of distortion on a set $\X$ and  $f:\X\to\Y$ be a randomized function 
such that \Y admits a metric. A mechanism $\M$ is said to
be \emph{$(\alpha,\beta,\gamma)$-accurate for $f$ with respect to $\dn$}, if
\begin{equation}\label{eq:fa-defn}
\left(\sup_{\x\in \X}\inf_{X':\dnx(\x,\prob{X'})\leq\alpha}\Winf{\gamma}(\M(\x),f(X'))\right) \leq \beta.
\end{equation}
In other words, for each $x\in\X$, there is a random variable $X'$ satisfying $\dnx(\x,\prob{X'})\leq\alpha$ (i.e., $\dn(\x,\x')\leq\alpha$ for all $\x'\in\support(X')$) such that $\Winf{\gamma}(\M(\x),f(X'))\leq\beta$.
\end{defn}

See \Figureref{fa} on page~\pageref{fig:fa} for an illustration of flexible accuracy using a pebbling game.

\paragraph{Flexible accuracy generalizes existing accuracy definitions.}
It should be noted that flexible accuracy is not a completely disparate notion but a more generalized form of the standard accuracy guarantees. In particular: 

\begin{itemize}
\item As mentioned in \Sectionref{lossy-wass}, $(0,\beta,\gamma)$-accuracy already extends the PAC guarantees. For example, the Laplace mechanism (see \cite[Chapter 3]{DworkRo14}) for a function $f:\X\to\R$ that achieves $\epsilon$-DP is $(0,\frac{\nabla_f}{\eps}\ln(1/\gamma),\gamma)$-accurate for any $\gamma>0$, where $\nabla_f$ is the sensitivity of $f$.
\item Blum et al.~\cite{BLR} introduced \emph{usefulness} to
measure accuracy with respect to a ``perturbed'' function.  While adequate
for the function classes they considered (half-space queries, range queries
etc.), it is not applicable to queries like maximum. Flexible accuracy 
generalizes usefulness (see \Appendixref{Comparison_BLR}).

\end{itemize}

As we show later, flexible accuracy lets us develop DP mechanisms for highly
sensitive functions ({\em e.g.}, $\max$), for which existing DP mechanisms offered only limited,
if not vacuous, guarantees.

\section{Composition Theorems}\label{sec:composition-theorems} 

It is often convenient to design a mechanism as the function composition
of two mechanisms, $\M = \M_2 \circ \M_1$. We present ``composition theorems''
which yield flexible accuracy and differential privacy guarantees for \M in terms of those for $\M_1$ and $\M_2$.

\subsection{Flexible Accuracy Under Composition}\label{sec:comp-accuracy}

In order to give our composition theorem for flexible accuracy, we need to define two new sensitivity notions: {\em distortion sensitivity} for a function and {\em error sensitivity} for a mechanism. 
We give motivation behind each of these sensitivity notions when defining them in their respective subsections below.

\subsubsection{Distortion Sensitivity}\label{sec:dist-sens}
When we compose two flexibly accurate mechanisms $M_1$ and $M_2$ for $f_1:A\to B$ and $f_2:B\to C$, respectively, to obtain the flexible accuracy guarantee of $M_2\circ M_1$ for $f_2\circ f_1:A\to C$, we would like to attribute all the distortion made in $A$ and $B$ (for measuring the output error of $M_1$ and $M_2$, respectively) to the distortion in $A$. This requires transferring the input distortion from $B$ back into $A$, and the notion of distortion sensitivity allows us to quantify this. 
Informally, distortion sensitivity of a function $f$ (denoted by $\distsens{f}{}$) captures the amount of distortion required in the domain of $f$ to capture a certain amount of distortion in the codomain of $f$. We formalize this intuition below.
\begin{defn}[Distortion sensitivity]\label{def:dist-sens}
Let $f: A \to B$ be a randomized function where $B$ admits Wasserstein distances.
Let $\dn_1,\dn_2$ be measures of distortion on $A, B$, respectively. 
Then, the \emph{distortion-sensitivity} of $f$ w.r.t.\ $(\dn_1,\dn_2)$ is defined as the function
$\distsens{f}{}:\Rplus \cup \{\infty\}\to\Rplus \cup \{\infty\}$ given by 
\begin{equation}\label{eq:dist-sens}
    \distsens{f}{}(\alpha) =
    \sup_{\substack{x,Y:\\\dnx_2(f(x),\prob{Y}) \le \alpha}} \inf_{\substack{X:\\ f(X) = Y}} \dnx_1(x, \prob{X})
\end{equation}
where $x\in A$, and the random variables $X$ and $Y$ are distributed over $A$ and $B$,
respectively. 
Above, infimum over an empty set is defined to be $\infty$.
\end{defn}

See \Figureref{dist-sens} on page~\pageref{fig:dist-sens} for an illustration of distortion sensitivity using a pebbling game.

\paragraph{Distortion sensitivity at $\alpha=0$.} It is easy to verify that for any randomized function $f$, we have $\distsens{f}{}(0)=0$. We will use this for deriving flexible accuracy guarantees for any histogram-based-statistic in \Sectionref{HBS}.

\paragraph {Distortion sensitivity of deterministic bijective functions:} When $f:A\to B$ is a deterministic and bijective map, then for every $x,Y$ such that $\dnx_2(f(x),\prob{Y})\leq\alpha$, there is only one choice of $X$ for which $f(X)= Y$ holds, that is $X=f^{-1}(Y)$. 
Since for any point $x\in A$ and distribution $P$ over $A$, we have $\dnx_1(x,P)=\sup_{x'\in\support(P)}\dn_1(x,x')$, it follows that
\begin{align}\label{bijective-distsens}
\distsens{f}{}(\alpha)\quad = \sup_{\substack{x,Y:\\\dnx_2(f(x),\prob{Y}) \le \alpha}}\dnx_1(x,\prob{f^{-1}(Y)})
    \quad = \sup_{\substack{x\in A,y\in B:\\\dn_2(f(x),y) \le \alpha}} \dn_1(x, f^{-1}(y)).
\end{align}
In particular, if $f:A\to A$ is an identity function and $\dn_1=\dn_2$, then we we have $\distsens{f}{}(\alpha)\leq\alpha$. Many of our mechanisms in this paper for which we derive flexible accuracy guarantees are given for the identity function over the space of histograms; see, for example, the result for our basic histogram mechanism (\Theoremref{hist-priv-accu}), the bucking mechanism (see \Claimref{bucket-accuracy}), and their composition (\Theoremref{bucketing-hist}), etc.

\paragraph{Distortion sensitivity of the histogram function w.r.t.\ $(\drop,\drop)$:} 
Let $\G=\{g_1,g_2,\hdots,g_k\}$ denote a finite set.
The histogram function $f_{\hist}$ takes an unordered dataset $\x=\{x_1,\hdots,x_n\}$ (where each $x_i\in\G$ and the ordering of $x_i$'s does not matter) as input and outputs the histogram of the dataset, i.e., a $k$-tuple $(\x(g_1),\hdots,\x(g_k))$ where $\x(g_i):=|\{j:x_j=g_i\}|$ denotes the multiplicity of $g_i$ in $\x$. Note that $f_{\hist}$ is a deterministic bijective function. It is easy to verify (from \eqref{bijective-distsens}) that $\distsens{f_{\hist}}{}(\alpha)\leq\alpha$ w.r.t.\ $(\drop,\drop)$. Later on, it will be convenient for us to represent a histogram over a finite set $\G$ as a map $\x:\G\to\N$, that outputs the multiplicity of any element from $\G$ in the dataset. 

\begin{remark}
Though the distortion sensitivity is bounded in many circumstances (including all the applications we consider in this paper); however, due to the strict requirement of having an $X$ such that $f(X)=Y$ (under infimum) in its definition, it may be infinite in other situations where this condition cannot be satisfied. To accommodate more functions, we can relax the definition with more parameters $\theta\in[0,1]$, $\omega\geq0$ as follows: 
\[    \distsens{f}{\gamma,\omega}(\alpha) =
    \sup_{\substack{x,Y:\\\dnx_2(f(x),\prob{Y}) \le \alpha}} \inf_{\substack{X:\\ \Winf{\gamma}(f(X), \prob{Y})\leq\omega}} \dnx_1(x, \prob{X}).\]
All the results in this paper can be extended to work with this more general definition of distortion sensitivity.
\end{remark}

\begin{figure}[t]
\centering
\begin{subfigure}{0.29\textwidth}
\centering
\includegraphics[width=\linewidth]{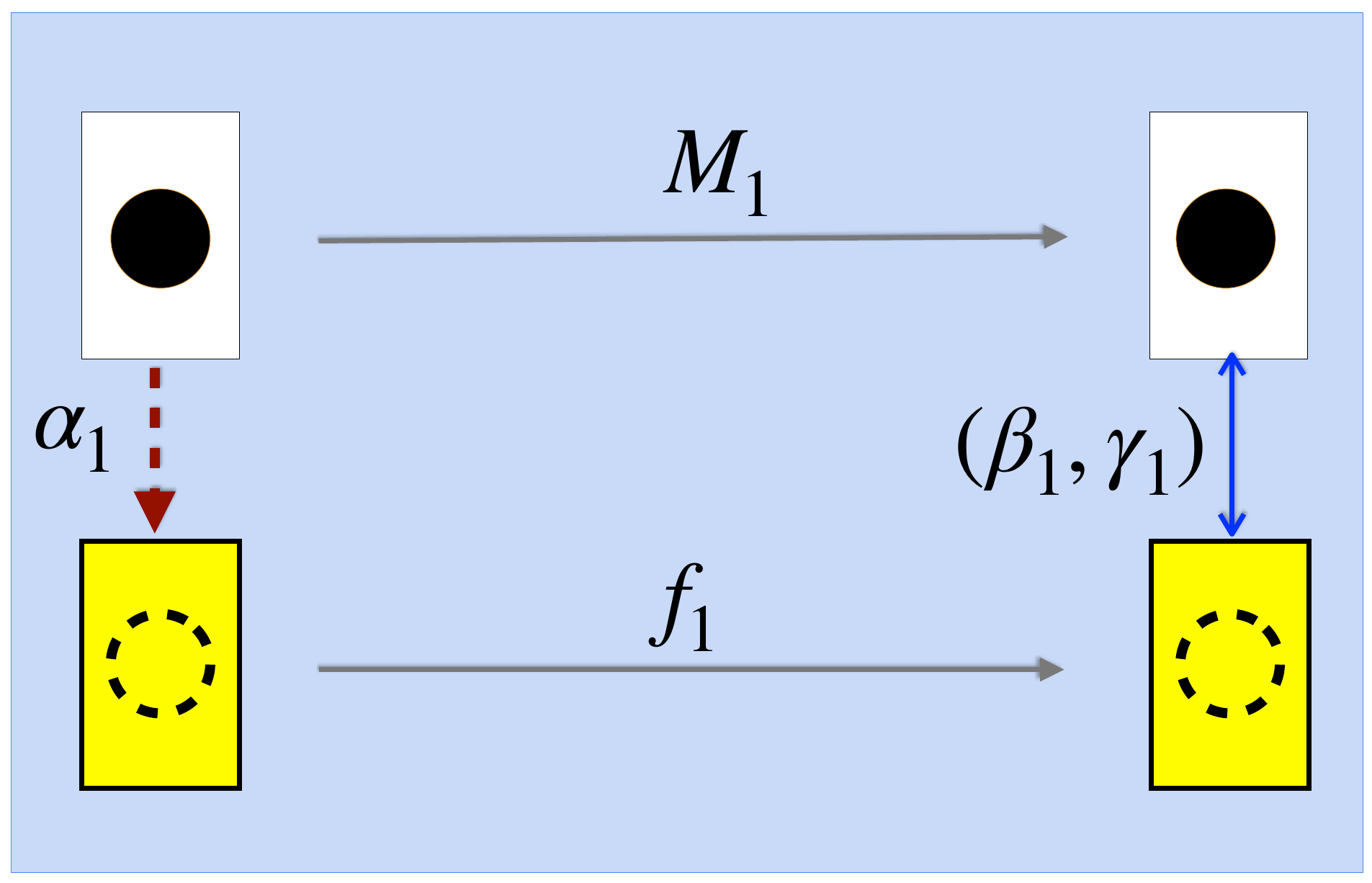}
\captionsetup{justification=centering}
\caption{Flexible Accuracy}
\label{fig:fa}
\end{subfigure}\hfill
\begin{subfigure}{0.29\textwidth}
\centering
\includegraphics[width=\linewidth]{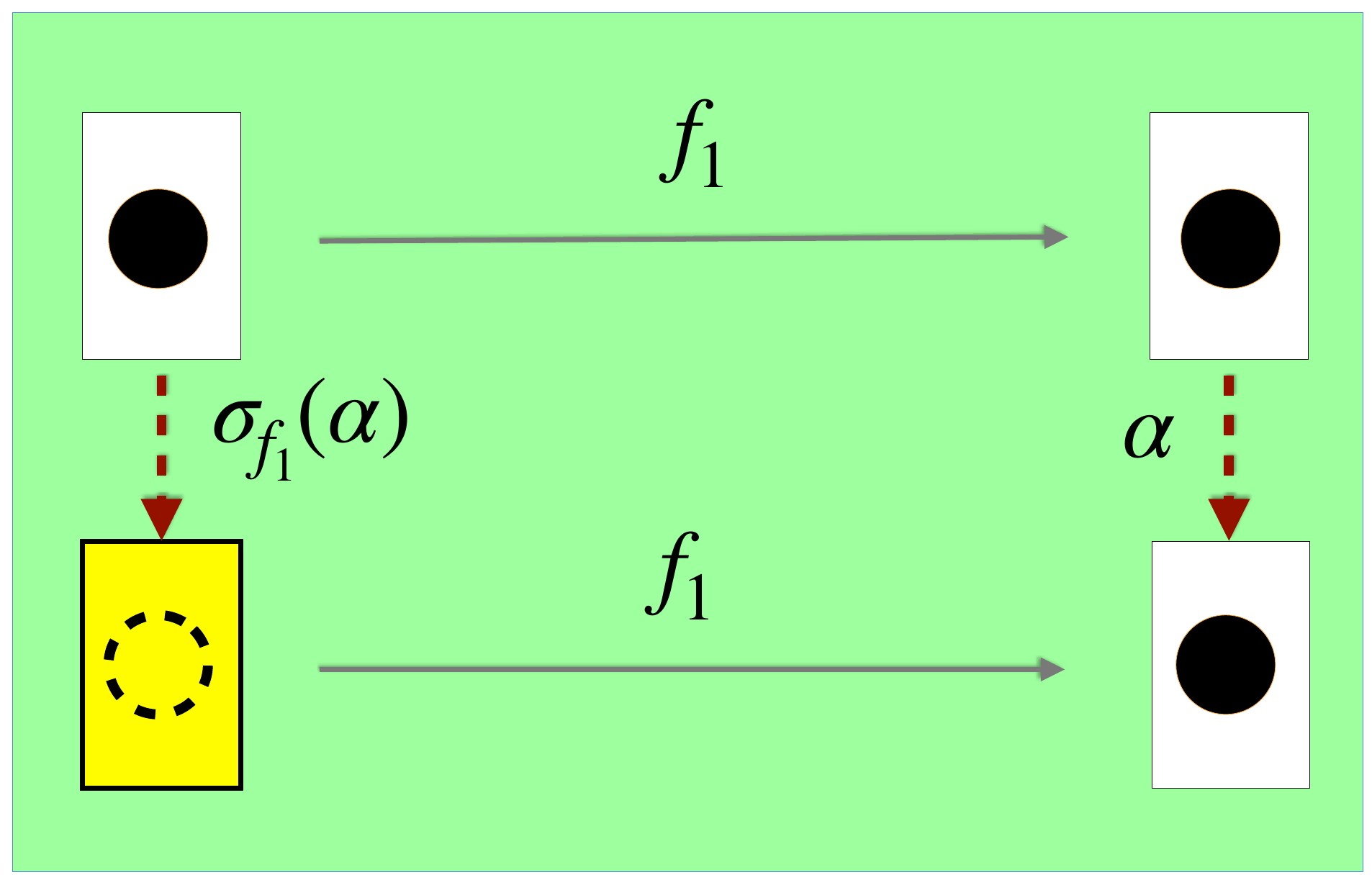}
\captionsetup{justification=centering}
\caption{Distortion Sensitivity}
\label{fig:dist-sens}
\end{subfigure}\hfill
\begin{subfigure}{0.40\textwidth}
\centering
\includegraphics[width=\linewidth]{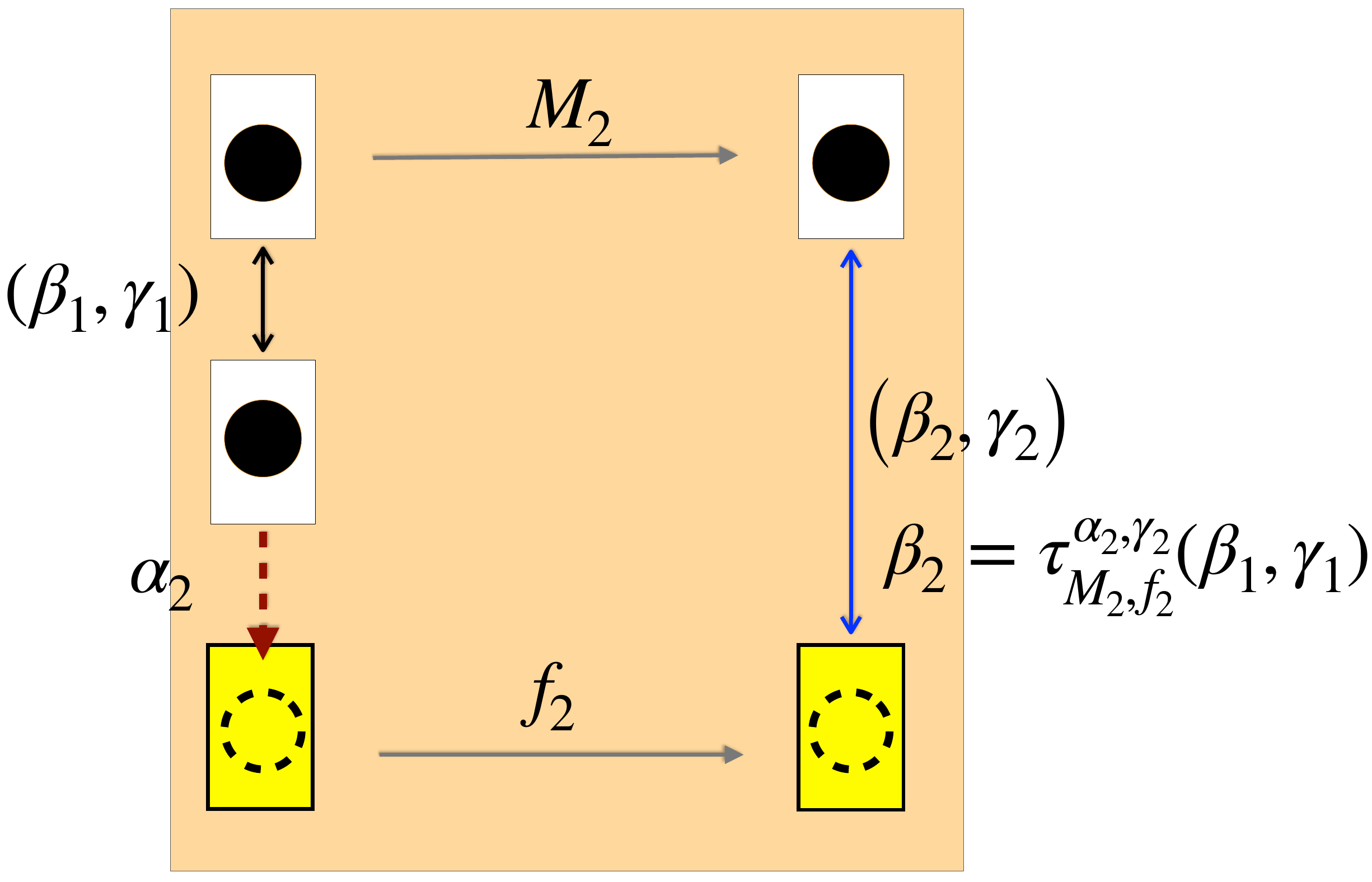}
\captionsetup{justification=centering}
\caption{Error Sensitivity}
\label{fig:err-sens}
\end{subfigure}
\caption{An illustration of the flexible accuracy, distortion sensitivity, and error sensitivity.
Dotted arrows
	indicate closeness in terms of distortion between histograms (or
	distributions thereof), and the solid two-sided arrows indicate closeness 
	in terms of the lossy Wasserstein distance. Each figure
	shows the corresponding guarantee (accuracy, error sensitivity or
	distortion sensitivity) as a pebbling game: The white boxes with
	black pebbles correspond to given histograms, and the yellow
	boxes indicate histograms that are guaranteed to exist, such that the
	given closeness relations hold. This allows those boxes to be pebbled.
	Accuracy guarantee of $M_2 \circ M_1$
	is derived by first applying the pebbling rule of accuracy of $M_1$
	(to obtain the purple pebbles), then that of the error sensitivity of
	$M_2$ (to get the pink pebbles) and finally using the pebbling rule of
	the distortion sensitivity of $f_1$ to pebble the remaining yellow box.}
\label{fig:fa-dist-err-sens}
\end{figure}

\begin{SCfigure}
  \centering
    \includegraphics[width=0.55\textwidth]{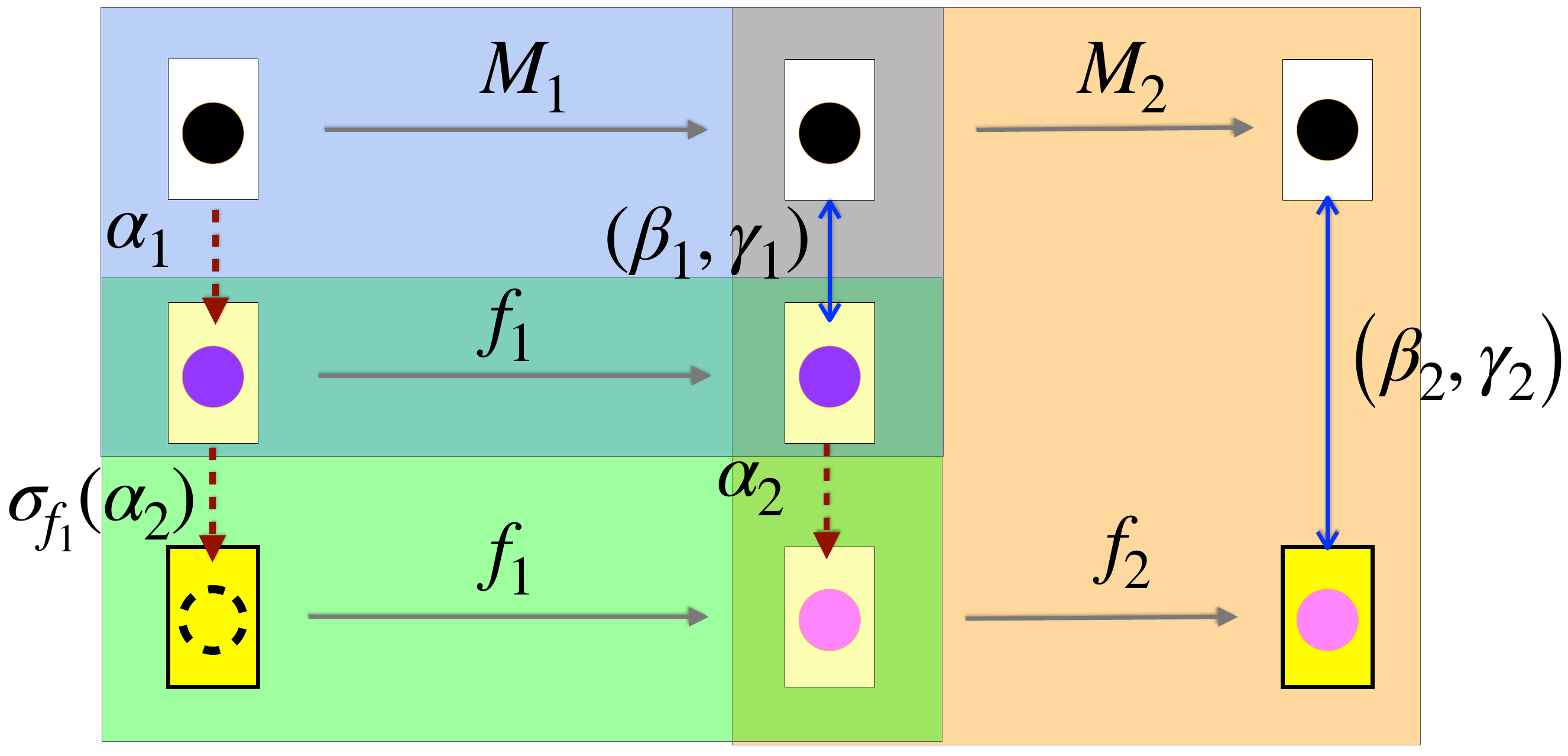}
  \caption{An illustration of the composition theorem, \Theoremref{compose-accuracy}. Accuracy guarantee of $M_2 \circ M_1$
	is derived by first applying the pebbling rule of accuracy of $M_1$
	(to obtain the purple pebbles), then that of the error sensitivity of
	$M_2$ (to get the pink pebbles), and finally using the pebbling rule of
	the distortion sensitivity of $f_1$ to pebble the remaining yellow box.
	The final parameters are $\alpha=\alpha_1+\distsens{f_1}(\alpha_2)$, $\beta=\tau_{M_2,f_2}^{\alpha_2,\gamma_2}(\beta_1,\gamma_1)$, and $\gamma=\gamma_2$.}
		\label{fig:composition}
\end{SCfigure}

\subsubsection{Error Sensitivity}\label{sec:err-sens}

Suppose we want to compose an $(\alpha_1,\beta_1,\gamma_1)$-accurate mechanism $M_1$ for $f_1:A\to B$ with another flexibly accurate mechanism $M_2$ for $f_2:B\to C$ to obtain flexible accuracy guarantee of the composed mechanism $M_2\circ M_1$ for $f_2\circ f_1:A\to C$. 
For this, on any input $x\in A$, first we measure the output error of $M_1$ on input $x$ in terms of $\Winf{\gamma_1}(M_1(x),f_1(X'))$, where $X'$ is an $\alpha_1$-distortion of the {\em same} $x$ on which we run the mechanism $M_1$; see \eqref{eq:fa-defn}. 
Now, for composition, we need to run $M_2$ on $M_1(x)$ and distort $f_1(X')$ to obtain another r.v.\ $Y$, and the output error of the composed mechanism is given by $\Winf{\gamma}(M_2(M_1(x)),f_2(Y))$. The problem here is that since the input (distribution) $f(X')$ that we distort is {\em not the same} as the input (distribution) $M_1(x)$ that we run $M_2$ on, we cannot directly obtain the output error guarantee of the composed mechanism from that of $M_2$. Therefore, we need a way to generalize the measure of accuracy (output error) of a flexible accurate mechanism when the input (distribution) to the mechanism is not the same as the input (distribution) that we distort, but they are at a bounded distance from each other (as measured in terms on the lossy $\infty$-Wasserstein distance). The notion of error sensitivity formalizes this intuition. Informally, it captures the sensitivity of the output error of a flexibly accurate mechanism in such situations.

\begin{defn}[Error sensitivity]\label{def:err-sens}
Let $\M: B \to C$ be any mechanism for a function $f:B\to C$,
where both $B$ and $C$ have associated Wasserstein distances.
Let $\dnx$ be a measure of distortion on $B$.  Then, for $\alpha_2,
\gamma_2\geq0$, the error-sensitivity $\tau_{\M,f}^{\alpha_2, \gamma_2}:\Rplus
\times [0,1] \to \Rplus$ of \M w.r.t.\ $f$ is defined as:
\begin{align}\label{eq:err-sens}
\tau_{\M,f}^{\alpha_2, \gamma_2}(\beta_1, \gamma_1) = \sup_{\substack{X,X': \\ \Winf{\gamma_1}(\prob{X},\prob{X'})\leq \beta_1}} \ \inf_{\substack{Y: \\ \dnx(X', Y) \le \alpha_2}} \Winf{\gamma_2}(\M(X),f(Y)).
\end{align}
\end{defn}
In other words, if $\tau_{\M,f}^{\alpha_2, \gamma_2}(\beta_1, \gamma_1) = \beta_2$, then for distributions $X,X'$ over $A$ such that $\Winf{\gamma_1}(\prob{X},\prob{X'})\leq \beta_1$, one can $\alpha_2$-distort $X'$ to $Y$ in such a way that $\Winf{\gamma_2}(\M(X),f(Y))\leq\beta_2$. 
See \Figureref{err-sens} on page~\pageref{fig:err-sens} for an illustration of err sensitivity using a pebbling game.
\begin{remark}
As mentioned earlier, the notion of error sensitivity generalizes the definition of flexible accuracy. In other words, if a mechanism $\M$ for computing a function $f$ is $(\alpha,\beta,\gamma)$-accurate, then $\beta=\tau_{\M,f}^{\alpha, \gamma}(0, 0)$. 
\end{remark}
We can simplify the expression of error sensitivity in some special cases that arise later on in \Sectionref{mechanisms}; we discuss these after stating our composition theorem for flexible accuracy in the next subsection.

\subsubsection{Composition Theorem for Flexible Accuracy}\label{sec:compostion_FA}
Having defined the distortion and error sensitivities, we shall now see how they play in a composition $M_2 \circ M_1$ for $f_2 \circ f_1$, where $M_1, M_2$ are mechanisms with flexible accuracy guarantees.
\begin{thm}[Flexible Accuracy Composition]\label{thm:compose-accuracy}
Let  $\M_1 : A \to B$  and $\M_2 : B \to C$ be mechanisms, respectively, with $(\alpha_1,\beta_1,\gamma_1)$-accuracy for $f_1: A \to B$ and $\tau_{\M_2,f_2}$ error sensitivity for $f_2: B \to C$, w.r.t.\ measures of distortion $\dn_1$, $\dn_2$ defined on $A, B$ and metrics $\met_1, \met_2$ defined on $B, C$, respectively.
Suppose $f_1,\alpha_2$ are such that $\distsens{f_1}{}(\alpha_2)$ is finite.
Then, for any $\alpha_2 \ge 0$ and $\gamma_2 \in [0,1]$, the mechanism $\M_2 \circ \M_1 : A \to C$ is $(\alpha, \beta ,\gamma)$-accurate for the function $f_2\circ f_1$ w.r.t.~$\dn_1$ and $\met_2$, where 
$\alpha = \alpha_1 + \distsens{f_1}{}(\alpha_2)$, $\beta = \tau_{\M_2,f_2}^{\alpha_2, \gamma_2}(\beta_1, \gamma_1)$, and $\gamma = \gamma_2$. 
\end{thm}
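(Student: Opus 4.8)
The plan is to unpack Definition~\ref{def:alpha-beta-gamma-accu} for the composed mechanism, and to "pebble" the required witness by chaining the three pebbling rules illustrated in Figure~\ref{fig:fa-dist-err-sens}: accuracy of $\M_1$, then error sensitivity of $\M_2$, and finally distortion sensitivity of $f_1$. Fix an arbitrary input $x\in A$. We must exhibit a random variable $Z'$ over $A$ with $\dnx_1(x,\prob{Z'})\le\alpha_1+\distsens{f_1}{}(\alpha_2)$ such that $\Winf{\gamma_2}((\M_2\circ\M_1)(x),\,(f_2\circ f_1)(Z'))\le\tau_{\M_2,f_2}^{\alpha_2,\gamma_2}(\beta_1,\gamma_1)$. (Below I reuse the symbols $\alpha,\beta,\gamma$ from the statement.)

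First I would invoke $(\alpha_1,\beta_1,\gamma_1)$-accuracy of $\M_1$: there is a random variable $X'$ over $A$ with $\dnx_1(x,\prob{X'})\le\alpha_1$ and $\Winf{\gamma_1}(\M_1(x),f_1(X'))\le\beta_1$. Set $W:=\M_1(x)$ and $W':=f_1(X')$, so $\Winf{\gamma_1}(\prob{W},\prob{W'})\le\beta_1$ — this is exactly the hypothesis of the error-sensitivity expression~\eqref{eq:err-sens} with $(W,W')$ in the roles of $(X,X')$. Applying the definition of $\tau_{\M_2,f_2}^{\alpha_2,\gamma_2}$, there is a random variable $Y$ over $B$ with $\dnx_2(W',Y)\le\alpha_2$ — i.e.\ $\dnx_2(f_1(X'),\prob{Y})\le\alpha_2$ — and $\Winf{\gamma_2}(\M_2(W),f_2(Y))\le\tau_{\M_2,f_2}^{\alpha_2,\gamma_2}(\beta_1,\gamma_1)=\beta$. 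Since $\M_2(W)=\M_2(\M_1(x))=(\M_2\circ\M_1)(x)$, the output-error requirement is already met once we produce a $Z'$ with $f_1(Z')$ distributed as... well, that is the subtlety handled next.

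Now I would apply distortion sensitivity of $f_1$ to push the distortion in $B$ back into $A$. We have a point $x'\in\support(X')$ for each... more carefully: by definition~\eqref{eq:dist-sens}, for the pair $(X',Y)$ satisfying $\dnx_2(f_1(X'),\prob{Y})\le\alpha_2$ — here I would want to condition on each realization $x'\leftarrow X'$, obtaining $\dnx_2(f_1(x'),\prob{Y_{x'}})\le\alpha_2$ for a suitable conditional $Y_{x'}$ — Definition~\ref{def:dist-sens} (and its finiteness hypothesis) yields a random variable $Z'_{x'}$ over $A$ with $f_1(Z'_{x'})$ distributed as $Y_{x'}$ and $\dnx_1(x',\prob{Z'_{x'}})\le\distsens{f_1}{}(\alpha_2)$. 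Assembling $Z'$ by drawing $x'\leftarrow X'$ and then $Z'_{x'}$, we get $f_1(Z')$ distributed as $Y$, and by the triangle inequality for the quasi-metric $\dnx_1$ (Lemma on $\dnx$ being a quasi-metric), for every $z'\in\support(Z')$ there is an $x'\in\support(X')$ with $\dn_1(x,z')\le\dn_1(x,x')+\dn_1(x',z')\le\alpha_1+\distsens{f_1}{}(\alpha_2)=\alpha$; hence $\dnx_1(x,\prob{Z'})\le\alpha$. Finally $(f_2\circ f_1)(Z')=f_2(Y)$ in distribution, so $\Winf{\gamma_2}((\M_2\circ\M_1)(x),(f_2\circ f_1)(Z'))=\Winf{\gamma_2}(\M_2(\M_1(x)),f_2(Y))\le\beta$. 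Taking the supremum over $x\in A$ gives $(\alpha,\beta,\gamma_2)$-accuracy of $\M_2\circ\M_1$, as claimed.

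The main obstacle I anticipate is the bookkeeping in the third step: the definitions of error sensitivity and distortion sensitivity are stated with marginal distributions and existential witnesses, but to glue the $B$-side distortion $f_1(X')\rightsquigarrow Y$ together with the $A$-side witness from distortion sensitivity I need these witnesses to be chosen compatibly with the coupling to $X'$ (so that composing "draw $x'\leftarrow X'$, then draw the $f_1$-preimage" yields exactly $Y$). This requires a careful argument that one may take the distortion-sensitivity witness "pointwise in $x'$" and that the resulting mixture has the right marginal — essentially a measurable-selection / disintegration argument, or, in the discrete setting that covers all our applications (histograms over a finite ground set), a finite convex-combination argument. I would also need to double-check that $\gamma$ does not accumulate: it does not, because $\gamma_1$ is fully "spent" inside the error-sensitivity bound (which outputs a $\gamma_2$-lossy guarantee), and the distortion-sensitivity step is lossless ($\Phi^0$ couplings only). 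Everything else is a direct substitution into Definition~\ref{def:alpha-beta-gamma-accu} plus one application of the quasi-metric triangle inequality for $\dnx_1$.
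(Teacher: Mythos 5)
Your proof is correct and follows essentially the same three-step structure as the paper's proof: invoke $\M_1$'s accuracy to get $X'$, then $\M_2$'s error sensitivity to get $Y$, then push the $B$-side distortion back into $A$ via distortion sensitivity of $f_1$, and finish with the quasi-metric triangle inequality for $\dnx_1$. The ``bookkeeping'' you flag --- gluing the pointwise distortion-sensitivity witnesses $Z'_{x'}$ into a single $Z'$ whose $f_1$-image has exactly the distribution of $Y$ --- is precisely what the paper isolates and proves as Lemma~\ref{lem:distsens-distrib}, via the same disintegration-and-mixture argument you sketch.
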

We prove \Theoremref{compose-accuracy} in \Sectionref{compose-accuracy_proof}.
An illustration of how the composition theorem works is given as a pebbling game in \Figureref{composition}.

\Theoremref{compose-accuracy} requires computing/bounding the error sensitivity of $\M_2$ in order to compute the flexible accuracy parameter $\beta$ of the composed mechanism $\M_2\circ\M_1$. 
Now we show that the expression of error sensitivity can be simplified in some important special cases.

\paragraph{$\bullet$ When $\M_1,f_1$ are deterministic maps and $\M_1$ is $(0,\beta_1,0)$-accurate.}
This setting arises when we compute the flexible accuracy parameters of our bucketed histogram mechanism $\mBhist{} = \mtrlap{} \circ \mbuc{}$ (\Algorithmref{histogram-mech}) while proving \Theoremref{bucketing-hist}. 

In this case, for any $x\in\A$, both $\M_1(x),f_1(x)$ are point distributions. This means that in order to compute the error sensitivity of $\M_2$, we only need to take the supremum in \eqref{eq:err-sens} over point distributions $\prob{x},\prob{x'}$ over $\B$ (where $\prob{x},\prob{x'}$ can be thought of being supported on $x:=\M_1(x)$ and $x':=f_1(x)$, respectively) such that $\Winf{}(\prob{x},\prob{x'})\leq\beta_1$.
Since $\Winf{}(\prob{x},\prob{x'})=\dBx(x,x')$, we only need to take the supremum in \eqref{eq:err-sens} over $x,x'\in\B$ such that $\dBx(x,x')\leq\beta_1$.

\paragraph{$\bullet$ When $\M_2,f_2$ are deterministic maps and $\M_1$ is $(\alpha_1,\beta_1,0)$-accurate and $\alpha_2=\gamma_2=0$.} 
This setting arises in the case of histogram-based-statistics (denoted by a deterministic function $\fhbs$) in \Sectionref{HBS}, in which we use the composed mechanism $\fhbs \circ \mBhist{}$ for computing $\fhbs$, where \mBhist{} is our final histogram mechanism that is $(\alpha,\beta,0)$-accurate (see \Theoremref{bucketing-hist}) and $\fhbs$ (as a mechanism) is $(0,0,0)$-accurate for computing $\fhbs$. 

Upon substituting these parameters in \eqref{eq:err-sens}, the expression for the error sensitivity reduces to computing $\tau_{\M_2,f_2}^{0,0}(\beta_1,0) = \sup_{X,X': \Winf{}(\prob{X},\prob{X'})\leq \beta_1} \Winf{}(\M_2(X),f_2(X'))$, which can be simplified further as shown in the lemma below, which we prove in \Appendixref{err-sens-deterministic_proof}.
\begin{lem}\label{lem:err-sens-deterministic}
Let $\M:\B\to\C$ be a deterministic mechanism for a deterministic function $f:\B\to\C$. Then, for any $\beta_1\geq0$, we have
\begin{align}\label{eq:err-sens-deterministic}
\tau_{\M,f}^{0,0}(\beta_1,0) \quad = \sup_{\substack{X,X': \\ \Winf{}(\prob{X},\prob{X'})\leq \beta_1}} \Winf{}(\M(X),f(X')) \quad = \sup_{\substack{x,x'\in\A : \\ \dB{x}{x'} \leq \beta_1}} \dC{\M(x)}{f(x')}.
\end{align}
\end{lem}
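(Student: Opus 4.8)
The plan is to prove the two asserted equalities in turn. The first is a direct unfolding of the definition of error sensitivity; the second is the substantive part, where the key idea is to push couplings forward through the deterministic maps $\M$ and $f$.

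\textbf{First equality.} Substitute $\alpha_2=\gamma_2=\gamma_1=0$ into \Equationref{err-sens}. The outer supremum then ranges over distributions $X,X'$ on $\B$ with $\Winfz(\prob{X},\prob{X'})\le\beta_1$, and the inner infimum is over $Y$ with $\dnx(X',Y)\le0$. Since $\dnx$ is a quasi-metric on distributions (\Lemmaref{dnx-quasi-metric}) and is non-negative, $\dnx(X',Y)\le0$ is equivalent to $\prob{Y}=\prob{X'}$, so $f(Y)=f(X')$ as distributions and the inner infimum collapses to the single term $\Winfz(\M(X),f(X'))$. (Note that for $\alpha_2=0$ the particular measure of distortion on $\B$ plays no role.) This gives $\tau_{\M,f}^{0,0}(\beta_1,0)=\sup_{X,X':\ \Winfz(\prob{X},\prob{X'})\le\beta_1}\Winfz(\M(X),f(X'))$, the middle expression.

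\textbf{Second equality, the ``$\ge$'' direction.} For any $x,x'\in\B$ with $\dB{x}{x'}\le\beta_1$, take $X,X'$ to be the point distributions at $x$ and $x'$; their unique coupling forces $\Winfz(\prob{X},\prob{X'})=\dB{x}{x'}\le\beta_1$. Since $\M,f$ are deterministic, $\M(X)$ and $f(X')$ are the point distributions at $\M(x)$ and $f(x')$, so $\Winfz(\M(X),f(X'))=\dC{\M(x)}{f(x')}$. Taking the supremum over such $x,x'$ shows the middle quantity is at least the right-hand one. For the ``$\le$'' direction, fix $X,X'$ with $\Winfz(\prob{X},\prob{X'})\le\beta_1$ and let $\phi\in\Phi^0(\prob{X},\prob{X'})$ be an optimal coupling, so $\dB{u}{v}\le\beta_1$ for $\phi$-almost every $(u,v)$. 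Push $\phi$ forward under $(u,v)\mapsto(\M(u),f(v))$ to get a distribution $\psi$ over $\C\times\C$; because $\M,f$ are deterministic maps, the marginals of $\psi$ are exactly $\M(X)$ and $f(X')$, i.e.\ $\psi\in\Phi^0(\M(X),f(X'))$, and every $(p,q)$ in the support of $\psi$ equals $(\M(u),f(v))$ for some $(u,v)$ in the support of $\phi$ (hence with $\dB{u}{v}\le\beta_1$). Therefore
\[
\Winfz(\M(X),f(X')) \le \sup_{(p,q)\leftarrow\psi}\dC{p}{q} \le \sup_{\substack{u,v\in\B:\\ \dB{u}{v}\le\beta_1}}\dC{\M(u)}{f(v)},
\]
and taking the supremum over all admissible $X,X'$ gives ``$\le$''.

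\textbf{Main obstacle.} The one point requiring care is the existence of an optimal coupling $\phi$ that actually attains $\sup_{(u,v)\leftarrow\phi}\dB{u}{v}\le\beta_1$ (rather than only $\le\beta_1+\eps$). This holds in every setting where we apply the lemma, since there $\B$ is finite (bounded histograms over a finite ground set), so $\Phi^0(\prob{X},\prob{X'})$ is compact and $\phi\mapsto\sup_{(u,v)\leftarrow\phi}\dB{u}{v}$ is lower semicontinuous, and more generally whenever $\B$ is Polish. If one prefers to avoid this, running the same pushforward argument with a near-optimal $\phi$ yields the bound with $\beta_1$ replaced by $\beta_1+\eps$ on the right, and letting $\eps\to0$ then suffices.
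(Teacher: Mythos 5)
Your proof is correct and follows essentially the same route as the paper's: you establish the ``$\ge$'' direction by restricting to point distributions, and the ``$\le$'' direction by pushing an optimal coupling $\phi$ forward under $(u,v)\mapsto(\M(u),f(v))$ to obtain a coupling of $\M(X)$ and $f(X')$, exactly as the paper does (it writes this pushforward explicitly as $\phi_2(\mathbf a,\mathbf b)=\sum_{\M(\x)=\mathbf a,\,f(\x')=\mathbf b}\phi_1(\x,\x')$). Your closing remark about attaining the infimum is also consistent with the paper, which declares at the start of \Sectionref{proofs} the blanket convention that infima/suprema are assumed attained, with a limiting argument available otherwise.
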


\subsection{Differential Privacy Under Composition}
First we formally define the notion of differential privacy.

\paragraph{Differential Privacy.}
Let \X denote a universe of possible ``databases'' with a
symmetric neighborhood relation $\sim$. In typical applications, two
databases \x and $\x'$ are considered neighbors if one is obtained from the
other by removing the data corresponding to a single ``individual.''
A \emph{mechanism} \M over \X is an algorithm which takes $\x\in\X$ as
input and samples an output from an output space \Y, according to some
distribution. We shall denote this distribution by $\M(\x)$.
\begin{defn}[Differential Privacy \cite{DworkMNS06,DworkKMMN06}]\label{def:epsdel-DP}
A randomized algorithm $\M:\X\to\Y$ is $(\eps,\delta)$-differentially private (DP), if for all neighboring databases $\x,\x'\in\X$
and all measurable subsets $S\subseteq\Y$, we have
$\Pr[\M(\x)\in S] \leq e^{\eps}\Pr[\M(\x')\in S] + \delta$.
\end{defn}
A simple but very useful result in differential privacy is the ``post-processing'' theorem for DP (see \cite[Proposition 2.1]{DworkRo14}), which states that if $\M_1$ is $(\epsilon,\delta)$-DP,  then for any mechanism $\M_2$, the composed mechanism $\M_2 \circ \M_1$ would remain $(\epsilon,\delta)$-DP. 
We prove a ``pre-processing'' theorem for differential privacy, which can be viewed as complementing the ``post-processing'' theorem for DP.
Our pre-processing theorem for DP states that if $\M_2$ is private, 
then so would $\M_2 \circ \M_1$ be (i.e.,
pre-processing does not hurt privacy), provided that $\M_1$ is well-behaved.
The following notion of being well-behaved suffices for our purposes.

\begin{defn}[Neighborhood preserving Mechanism]
A mechanism $\M:A\rightarrow B$ is \emph{neighborhood preserving} w.r.t.\
neighborhood relations $\sim_A$ over $A$ and  $\sim_B$ over $B$, if for all
$x,y \in A$ s.t. $x \sim_A y$, there exists a pair of jointly distributed
random variables $(X,Y)$ s.t. $\prob{X}=\M(x)$, $\prob{Y}=\M(y)$, and
$\Pr[X\sim_B Y] = 1$.  
\end{defn}
The following theorem states our pre-processing theorem for DP, which we prove in \Appendixref{comp-privacy}.
\begin{thm}[Differential Privacy Composition]\label{thm:compose-DP}
Let $\M_1:A\to B$ and $\M_2:B\to C$ be any two mechanisms.
If $\M_1$ is neighborhood-preserving w.r.t.\
neighborhood relations $\sim_A$ and $\sim_B$ over $A$ and $B$, respectively,
and $\M_2$ is $(\epsilon, \delta)$-DP w.r.t.\ $\sim_B$, 
then $\M_2\circ \M_1:A\to C$ is $(\epsilon, \delta)$-DP w.r.t.\ $\sim_A$.
\end{thm}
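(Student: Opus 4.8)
The plan is to reduce the claim to the defining inequality of $(\epsilon,\delta)$-DP for $\M_2$, applied ``pointwise'' along a coupling supplied by the neighborhood-preserving property of $\M_1$. Fix an arbitrary pair of neighbors $x \sim_A y$ in $A$ and an arbitrary measurable set $S \subseteq C$; by \Definitionref{epsdel-DP} it suffices to show $\Pr[(\M_2\circ\M_1)(x)\in S] \le e^\epsilon\,\Pr[(\M_2\circ\M_1)(y)\in S] + \delta$.

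First I would invoke the neighborhood-preserving hypothesis on the pair $x \sim_A y$ to obtain a pair of jointly distributed random variables $(X,Y)$ over $B\times B$ with marginals $\prob{X}=\M_1(x)$ and $\prob{Y}=\M_1(y)$, and with $\Pr[X\sim_B Y]=1$. Next, define $g:B\to[0,1]$ by $g(b):=\Pr[\M_2(b)\in S]$, where the probability is over the internal randomness of $\M_2$ alone. Since the coins of $\M_2$ are independent of the draw from $\M_1(\cdot)$, the law of total probability gives $\Pr[(\M_2\circ\M_1)(x)\in S] = \E_{b\leftarrow\M_1(x)}[g(b)] = \E[g(X)]$, and likewise $\Pr[(\M_2\circ\M_1)(y)\in S] = \E[g(Y)]$, where both expectations are now taken under the single joint law of $(X,Y)$ furnished by the coupling.

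Then I would apply the $(\epsilon,\delta)$-DP guarantee of $\M_2$ w.r.t.\ $\sim_B$: for every fixed pair $b\sim_B b'$ we have $g(b)\le e^\epsilon g(b') + \delta$. Because $X\sim_B Y$ holds with probability $1$ under the coupling, the inequality $g(X)\le e^\epsilon g(Y) + \delta$ holds almost surely; taking expectations and using linearity and monotonicity of expectation yields $\E[g(X)]\le e^\epsilon\,\E[g(Y)] + \delta$, which is exactly the desired bound. As $x\sim_A y$ and $S$ were arbitrary, $\M_2\circ\M_1$ is $(\epsilon,\delta)$-DP w.r.t.\ $\sim_A$.

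The only slightly delicate point — the one I would be careful about — is the measure-theoretic bookkeeping behind $\Pr[(\M_2\circ\M_1)(x)\in S] = \E[g(X)]$: one must check that $g$ is a measurable function of $b$ (so that $g(X)$ is a genuine random variable and the expectation is well defined) and that the interchange of the randomness of $\M_1$ and of $\M_2$ is justified by Fubini--Tonelli. This is routine once a mechanism is fixed to be a Markov kernel, and indeed it is the same convention already used implicitly when composing mechanisms elsewhere in the paper. Everything else is a one-line pointwise application of the DP inequality followed by taking expectations, so I do not expect a genuine obstacle here.
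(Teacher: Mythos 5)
Your proof is correct and follows essentially the same route as the paper's: invoke the neighborhood-preserving coupling, apply the $(\epsilon,\delta)$-DP inequality of $\M_2$ pointwise on the coupled pair (which lies in $\sim_B$ almost surely), and average. The paper writes this out with explicit sums in the discrete case, whereas you phrase it with expectations $\E[g(X)]$, but the argument is the same.
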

It is important to note here is that we are not releasing the output of the neighboring-preserving mechanism $\M_1$; we only release the output of $\M_2\circ \M_1$. 

Looking ahead, we will require \Theoremref{compose-DP} to establish the DP guarantee of our bucketed-histogram mechanism (\Algorithmref{histogram-mech}) which is obtained by pre-processing our $(\eps,\delta)$-DP histogram mechanism (\Algorithmref{hist-mech}) with the neighborhood-preserving bucketing mechanism (\Algorithmref{bucketing}).

\section{Mechanisms That Exploit Flexible Accuracy}\label{sec:mechanisms}
In this section, we propose and analyze concrete mechanisms for several
important functions. First, we present a new DP mechanism for the histogram
function with flexible accuracy in \Sectionref{histogram_mechs} and then extend it to any ``histogram based
statistic'' ({\em e.g.}, max and support) in \Sectionref{HBS}.
In \Sectionref{beyond-drop}, we show our results for other measures of distortion, beyond just dropping elements. 
Also, in \Appendixref{Comparison_BLR}, we note that the mechanisms (e.g., for half-space
queries) which required \cite{BLR} to introduce the accuracy notion of
\emph{usefulness} can be cast in the framework of flexible accuracy.

\subsection{A Private Mechanism for Releasing Histograms with Flexible Accuracy}\label{sec:histogram_mechs}
Before describing our new mechanism for releasing histograms with flexible accuracy, let us consider a simpler Boolean task of privately reporting whether a given set is empty or not. Deriving a solution to this simpler problem will pave a way towards our new histogram mechanism. 

\paragraph{Private mechanism for determining whether a given set is empty or not.}
For this, the only input distortion we are allowed is to drop some elements -- i.e., we cannot
report an empty set as non-empty. Since we seek to limit the extent of
distortion, let us add a constraint that if a set has $q$ or more elements,
then with probability 1 (or very close to 1) we should report the set as
being non-empty.  Let $p_k$ denote the probability that a set of size $k\in[0,q]$ is
reported as being non-empty, so that $p_0=0$ and $p_q=1$.

For our scheme to be \epdel{}-differential private, we require
\begin{align*}
p_k  \leq p_{k+1} \eeps + \delta, &\qquad
p_{k+1} \leq p_k \eeps + \delta, \\
(1 - p_k) \leq (1 - p_{k+1}) \eeps + \delta, &\qquad
(1 - p_{k+1})  \leq (1 - p_k)  \eeps + \delta,
\end{align*}
for $0\le k < q$, with boundary conditions
$p_0 = 0$ and $p_q = 1$.
We are interested in simultaneously reducing $\epsilon$ and $\delta$
subject to the above constraints. 
The pareto-optimal \epdel{} turn out to be given by
$\delta \eepsratio{(q/2) \eps} = \frac{1}{2}$, with corresponding
values of $p_k$ being given by
\begin{align}
\label{eq:optprob}
    p_{k} = \delta \eepsratio{{k}\eps}\text{ for } k \le \nicefrac{q}{2} \quad
\text{ and }\quad
    p_{k} = 1 - p_{q-k}\text{ for } k \ge \nicefrac{q}{2}.
\end{align}

The condition $\delta \eepsratio{(q/2) \eps} = \frac{1}{2}$ implies that we can achieve $(\eps,\eps e^{-\Omega(\eps q)})$-differential privacy.
In particular, we may choose $\eps = O\left(\frac1{\sqrt{q}}\right)$,  and
$\displaystyle \delta = O\left(\frac{e^{-\sqrt{q}/2}}{\sqrt{q}}\right)$, providing a
useful privacy guarantee when $q$ is sufficiently large.

In \Figureref{truncLap}, on the left, we plot the probabilities $p_k$ against $\nicefrac{k}q$ for this choice of \epdel.

\begin{figure}[!htbp]
\begin{center}
\includegraphics[width=0.4\textwidth]{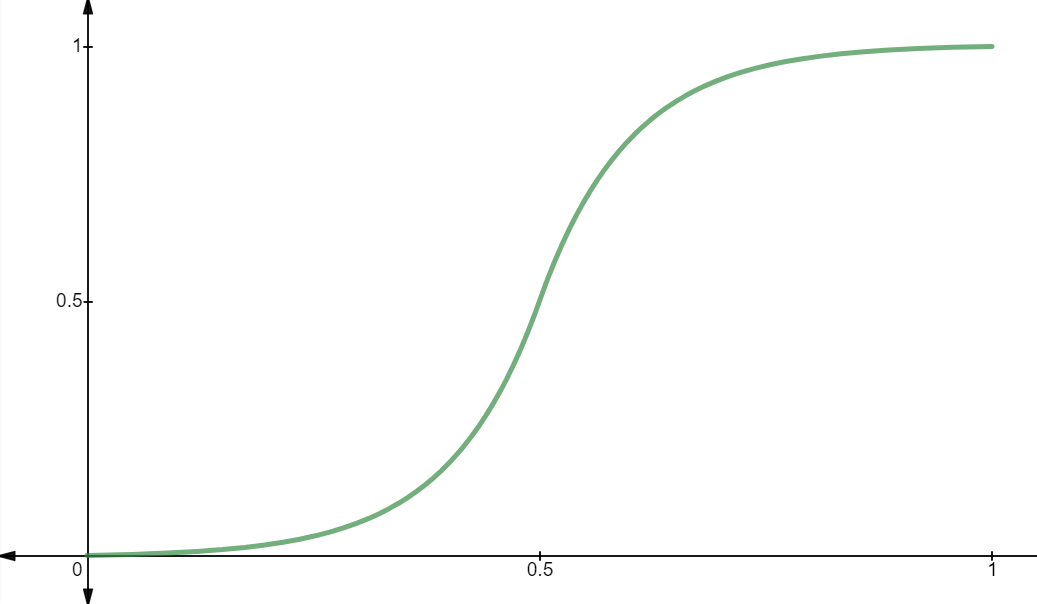}
\includegraphics[width=0.4\textwidth]{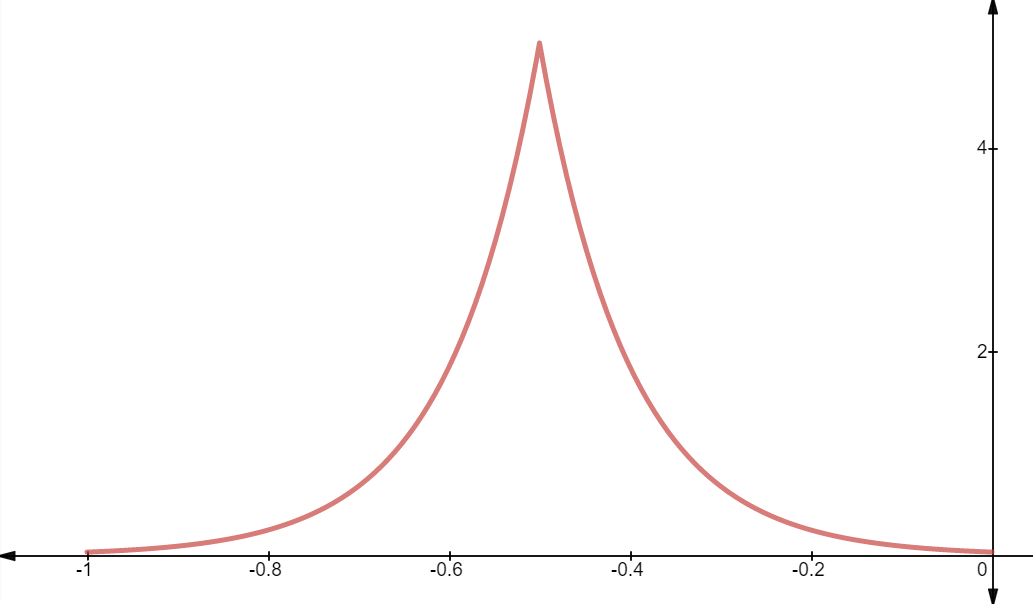}
\end{center}
\caption{The probability function in the optimal mechanism for reporting whether a set is empty or not
(left), which can be interpreted as adding a noise according to a truncated
Laplace distribution with a negative mean (right). \label{fig:truncLap}}
\end{figure}

\paragraph{Towards a private mechanism for histograms.} To generalize this Boolean mechanism to a full-fledged histogram mechanism, we reinterpret it. In a histogram mechanism,
where again, the distortion allowed in the input is to only drop elements,
we can add a \emph{negative noise} to the count in each ``bar'' of the
histogram. (If the reduced count is negative, we report it as 0.) We seek a noise function such that the probability of the
reported count being 0 (when the actual count is $k\in[0,q]$) is the same as
that of the above mechanism reporting that a set of size $k$ is empty.
That is, the probability of adding a noise $\nu \le -k$ should be $1-p_k$.
That is, if the noise distribution is given by the density function $\sigma$, 
we require that
\begin{align*}
\int_{-q}^{-k} \sigma(t) \cdot dt = 1-p_k \qquad \text{ and } \qquad
\sigma(t) = 0  \text{ for } t\not\in[-q,0] \notag.
\end{align*}
Substituting the expression for $p_k$ from \eqref{eq:optprob}, and then differentiating this identity with respect to $k$, 
we obtain the following expression for $\sigma(t)$: 
\begin{equation}\label{eq:trun_Lap_noise}
     \sigma(t) = 
\begin{cases}
\frac{1}{1 - e^{-\eps q/2}}\Lap(t \mid -\frac{q}{2}, \frac1\eps), & \text{ if } t\in[-q,0], \\
0, & \text{ otherwise},
\end{cases}
\end{equation}
where $\Lap$ is the Laplace noise distribution with mean $-\frac{q}2$ and
scale parameter $\nicefrac1\eps$.\footnote{The Laplace distribution over \R, with \emph{scaling parameter} $b > 0$ and mean $\mu$, is defined by the density function $\Lap(x|\mu,b) := \frac{1}{2b}e^{\frac{-|x-\mu|}{b}}$ for all $x\in\R$.
We denote a random variable that is distributed according to the Laplace distribution with the scaling parameter $b$ and mean 0 by $\Lap(b)$. 
} 
We call $\sigma(t)$ the shifted-truncated Laplace distribution, which is equal to the (normalized) Laplace distribution with mean $-\frac{q}2$ and scale parameter $\frac1\eps$ when $t\in[-q,0]$, and equal to zero when $t\notin[-q,0]$.

\begin{algorithm}
\caption{Shifted and Truncated Laplace Mechanism, \mtrlap{\thresh,\eps,\G}}\label{algo:hist-mech}
{\bf Parameter:} Threshold $\thresh\in[0,1)$; ground set \G; and $\eps>0$. \\
{\bf Input:} A histogram, $\x:\G\rightarrow\N$. \\
{\bf Output:} A histogram, $\y:\G\rightarrow\N$. \\
\vspace{-0.3cm}
\begin{algorithmic}[1] 
\ForAll{$g\in\G$} 
    \State $z_g \leftarrow \Lnoise$, where $q:=\thresh|\x|$ and
    $\displaystyle
	\Lnoise(z) = 
 \begin{cases} 
      \frac{1}{1-e^{-\eps{q}/{2}}} \Lap(z \mid -\frac{q}{2},\frac{1}{\eps}) & \text{ if } z\in[-q,0], \\
      0 & \text{ otherwise. }\\
     \end{cases}
     $
     \State $\y(g) := \max(0,\lfloor \x(g) + z_g \rceil)$
	 \Comment{$z_g$ need not be computed for $g$ s.t.\ $\x(g)=0$}
\EndFor
\State Return $\y$.
\end{algorithmic}
\end{algorithm}

\paragraph{The shifted-truncated Laplace mechanism for releasing histograms with flexible accuracy.}
Our final histogram mechanism is derived by adding the noise distributed according to $\sigma(t)$ from \eqref{eq:trun_Lap_noise} with appropriate parameter $q$ to each bar of the histogram, followed by rounding to the nearest integer (or to $0$, if it is negative). 
Before describing the mechanism, we need some notation.

Datasets can be abstractly represented by multi-sets, and each element in the multi-set belongs to a ground set $\G$.
Formally, a multi-set $\x$ over the ground set $\G$
is a function $\x:\G\rightarrow\N$ that outputs the multiplicity of elements
in \G. The \emph{size} and \emph{support} of \x are defined as $|\x| :=
\sum_{i\in\G} \x(i)$ and $\support(\x):=\{i\in\G:\x(i)\neq0\}$,
respectively. We shall be interested in finite-sized multi-sets, which we
refer to as histograms. We denote the domain of all histograms over \G by
\Hspace\G. For DP, the standard notion of neighborhood among histograms is
defined as $\x\nhist\x'$ iff $\sum_{i\in\G}|\x(i)-\x'(i)|\le 1$. 
Later, we shall also require \G to be a metric space, endowed with a metric
\met.

We describe our shifted-truncated Laplace mechanism for the identity function (which maps histograms to histograms and is denoted by $\mtrlap{\thresh,\eps,\G}:\Hspace\G\to\Hspace\G$) in \Algorithmref{hist-mech}.
It simply \emph{decreases} the multiplicity of each element by adding a bounded quantity
sampled from the shifted-truncated Laplace distribution. The following theorem, proven in \Sectionref{hist-priv-accu-proof}, summarizes the privacy and flexible accuracy guarantees achieved by \mtrlap{\thresh,\eps,\G} for a particular choice of $\eps$.

\begin{thm}\label{thm:hist-priv-accu}
On inputs $\x$ of size $n$, \mtrlap{\thresh,\eps,\G} from  \Algorithmref{hist-mech} satisfies the following guarantees: 
\begin{itemize}
\item \underline{Privacy:} For any $\eps,\tau$ such that $\eps\tau n \geq 2$, \mtrlap{\thresh,\eps,\G} is $\left(\eps,\eps e^{-\Omega(\eps\tau n)}\right)$-DP w.r.t.\ \nhist.
\item \underline{Flexible accuracy:} If $|\support(\x)|\le t$, then for any $\eps>0$, \mtrlap{\thresh,\eps,\G} is $(\thresh t,0,0)$-accurate for the identity function, w.r.t.\ the distortion measure \drop.
\end{itemize}
\end{thm}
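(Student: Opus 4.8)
The plan is to establish the two bullets separately, as they concern orthogonal aspects of the mechanism.

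For the \emph{flexible accuracy} part, I would argue directly from \Definitionref{alpha-beta-gamma-accu}. Fix an input histogram $\x$ with $|\support(\x)|\le t$ and size $n$; here $q=\thresh n$. The key observation is that the noise $z_g$ added to bar $g$ is supported on $[-q,0]$, so the output satisfies $\y(g)=\max(0,\lfloor\x(g)+z_g\rceil)\le\x(g)$ for every $g$, i.e.\ $\y$ is obtained from $\x$ by dropping elements only. Moreover $\x(g)-\y(g)\le\lceil q\rceil$ when $\x(g)>0$ (actually $\le q$ up to rounding), but the cruder bound I need is that the total number of dropped elements is $\sum_g(\x(g)-\y(g))\le\sum_{g\in\support(\x)}\x(g)=n$ only trivially; to get $\thresh t\cdot n$ I instead bound $\x(g)-\y(g)\le q=\thresh n$ for each of the at most $t$ nonzero bars, giving $\sum_g(\x(g)-\y(g))\le \thresh n\cdot t$, hence $\drop(\x,\y)\le\thresh t$ deterministically. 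Since $\mtrlap{}$ is the identity on the distortion side, I take $X'$ to be the random variable $\y=\mtrlap{\thresh,\eps,\G}(\x)$ itself: then $\dnx(\x,\prob{X'})\le\thresh t$ because every $\x'\in\support(X')$ satisfies $\drop(\x,\x')\le\thresh t$, and the identity function applied to $X'$ gives back $X'$, so $\Winf{0}(\mtrlap{}(\x),\id(X'))=\Winf{0}(\prob{X'},\prob{X'})=0$. This yields $(\thresh t,0,0)$-accuracy.

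For the \emph{privacy} part, I would reduce to a per-coordinate analysis and then combine. Since neighboring $\x\nhist\x'$ differ in a single coordinate $g^\ast$ by one, and the mechanism treats bars independently, it suffices (by a standard product-distribution argument, taking care that the noise \emph{magnitude} $q=\thresh|\x|$ shifts by $\thresh$ between neighbors) to show that for the single coordinate the output distributions are $(\eps,\delta)$-close with $\delta=\eps e^{-\Omega(\eps\thresh n)}$. This is exactly the one-dimensional calculation motivating the mechanism: the shifted-truncated Laplace density $\Lnoise$ with scale $1/\eps$ has the property that $\Lnoise(z)/\Lnoise(z+1)\le e^{\eps}$ on the interior, and the only loss of the multiplicative guarantee happens at the truncation boundary, which carries mass $\approx e^{-\eps q/2}/(1-e^{-\eps q/2})$; after discretizing (the $\lfloor\cdot\rceil$ and $\max(0,\cdot)$ steps are post-processing and cannot hurt), this boundary mass is what becomes $\delta$. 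Quantitatively, using the pareto-optimal relation $\delta=\tfrac12(\eeps-1)/(e^{\eps q/2}-1)\le \eps e^{-\eps q/2}$ for $\eps q\ge 2$, and $q=\thresh n$, gives the claimed $\left(\eps,\eps e^{-\Omega(\eps\thresh n)}\right)$-DP.

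The main obstacle I anticipate is the privacy argument when the two neighbors have \emph{different} noise parameters $q$ and $q\pm\thresh$ (because $|\x|\ne|\x'|$): the truncated-Laplace densities are not just shifts of each other but have different supports and different normalizing constants, so the clean $e^\eps$ ratio bound must be re-derived accounting for both effects, and one must verify that the extra slack introduced is absorbed into the $\Omega(\cdot)$ in the exponent. A secondary subtlety is handling the discretization and the clamping at $0$ cleanly — I would phrase these as post-processing of a continuous $(\eps,\delta)$-DP mechanism so that no separate argument is needed. The combination across the (possibly many) bars is routine once the single-bar bound is in hand, since only the one bar $g^\ast$ where $\x,\x'$ differ contributes to the divergence, and the identical-count bars contribute nothing beyond the shared parameter change, which I fold into the single-coordinate analysis.
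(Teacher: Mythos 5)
Your flexible accuracy argument is correct and matches the paper's: the noise added to a bar is supported on $[-q,0]$ with $q=\tau n$, so each of the at most $t$ nonzero bars loses at most $q$ mass, giving $\drop(\x,\y)\le\tau t$ deterministically; instantiating $X'=\mtrlap{\thresh,\eps,\G}(\x)$ itself in \Definitionref{alpha-beta-gamma-accu} then gives $(\tau t,0,0)$-accuracy.

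The privacy sketch has a genuine gap, at exactly the point you flag as the ``main obstacle'' and then set aside as routine. The claim that only bar $g^*$ contributes to the divergence, the other bars ``contribut[ing] nothing beyond the shared parameter change,'' does not hold: the scale $q=\tau|\x|$ is set from the total database size, so passing from $\x$ to a neighbor $\x'$ with $|\x'|=|\x|\pm1$ changes both the support and the normalizing constant of the truncated Laplace on \emph{every} nonzero bar, not just at $g^*$. The per-bar ratio $\Lnoise(z)/\Dnoise(z)$ is therefore not identically $1$ on the common support --- it lies on either side of $1$ depending on $z$ --- and these factors need not cancel across the other $t-1$ bars, so a one-dimensional DP lemma plus a ``standard product-distribution argument'' does not deliver the theorem on its own. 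The paper does not use that reduction; its \Lemmaref{hist-epdel} compares the full product densities over $\Hspace\G$ directly, partitioning the output set by the value at the differing bar $i^*$ into a zone where $\Dnoise$ vanishes, a boundary zone that carries the $\delta$ mass (\Claimref{hist-epdel-c3}), and an interior zone where a pointwise density-ratio bound $e^{(1+\nu)\eps}$ is established subject to a quantitative lower bound on $n$ (\Claimref{hist-epdel-c2}); the hypothesis $\eps\tau n\ge 2$ in the theorem is exactly what lets one instantiate this with $\nu=0$ (\Corollaryref{simpler-dp-hist}). Your observation that the slack ``must be absorbed into the $\Omega(\cdot)$'' names the right target, but carrying out that bookkeeping is the substance of the privacy proof, not a combination step to be deferred.
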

\begin{remark}\label{remark:hist-params}
There are many choices of $\eps,\tau$ for which we get favorable privacy parameters in \Theoremref{hist-priv-accu}. For instance, choosing $\eps=\frac{1}{\sqrt{\tau n}}$ gives that \mtrlap{\thresh,\eps,\G} is $\left(\frac{1}{\sqrt{\tau n}},\frac{e^{-\Omega(\sqrt{\tau n})}}{\sqrt{\tau n}}\right)$-DP, provided $\tau$ is such that $\sqrt{\tau n} \geq 2$. Note that $\tau$ is the maximum overall fraction of elements we drop from each bar of the histogram. For example, by choosing $\tau=\frac{1}{n^{1/2}}$, we get that \mtrlap{\thresh,\eps,\G} is $\left(\frac{1}{n^{1/4}},\frac{e^{-\Omega(n^{1/4})}}{n^{1/4}}\right)$-DP and $(\frac{t}{n^{1/4}},0,0)$-accurate. See also \Sectionref{choosing-params} for more discussion.
\end{remark}

\Remarkref{hist-params} shows that the privacy parameters of \mtrlap{\thresh,\eps,\G} improve as the database size $|\x|$ grows, by dropping only a small number of elements, provided that the support size $t$ is small. To handle larger supports, this mechanism can be composed with a simple fixed width $w$ bucketing mechanism to give small support size, as described next. 

\paragraph{Bucketed shifted-truncated Laplace mechanism.}
In order to explain the idea behind our bucketing mechanism, for simplicity, we consider the ground set $\G=[0,B)$.\footnote{\label{foot:dim-d}We also present the general results for $\G=[0,B)^d$ (which is a $d$-dimensional cube with side-length equal to $B$) in \Appendixref{d-dim-results}. Also see \Remarkref{d-dim} in \Sectionref{beyond-drop}.} In our bucketing mechanism, we divide the interval $[0,B)$ into $t=\lceil \frac{B}{w}\rceil$ sub-intervals (buckets) of length $w$, and map each input point to the center of the nearest sub-interval (bucket). This mapping of input points to the nearest bucket introduces error in the output space, and the value of $w$ depends on the amount of error we want to tolerate in the output space. In our bucketed shifted-truncated Laplace mechanism, we run our shifted-truncated Laplace mechanism (\Algorithmref{hist-mech}) on the bucketed histogram.

\begin{algorithm}
\caption{Bucketing Mechanism, \mbuc{w,[0,B)}}\label{algo:bucketing}
{\bf Parameter:} Bucket width $w$; ground set $[0,B)$. \\
{\bf Input:} A histogram $\x$ over $[0,B)$. \\
{\bf Output:} A histogram $\y$ over $S =\{ w(i-\frac12) : i \in [t], t = \lceil \frac{B}{w} \rceil \}$, and $|\y| = |\x|$. \\
\vspace{-0.3cm}
\begin{algorithmic}[1] 
\ForAll{$s \in S $}
\State $\y(s) := \sum_{g:g-s \in [\frac{-w}2,\frac{w}2)} \; \x(g)$
\EndFor
\State Return \y
\end{algorithmic}
\end{algorithm}

\begin{algorithm}
\caption{BucketHist Mechanism, \mBhist{\alpha,\beta,[0,B)}}\label{algo:histogram-mech}
{\bf Parameter:} Accuracy parameters $\alpha,\beta$; ground set $[0,B)$. \\
{\bf Input:} A histogram \x over $[0,B)$. \\
{\bf Output:} A histogram \y over $[0,B)$. \\
\vspace{-0.3cm}
\begin{algorithmic}[1] 
\State $w := 2\beta$, $t := \lceil \frac{B}{w}\rceil$, $\thresh := \alpha/t$
\State Return $\mtrlap{\thresh,\eps,[0,B)} \circ \mbuc{w,[0,B)} (\x)$
\Comment{where \mbuc{w,[0,B)} is in \Algorithmref{bucketing}}
\end{algorithmic}
\end{algorithm}

Our bucketing mechanism $\mbuc{w,[0,B)}$ and the final bucketed-histogram mechanism $\mBhist{\alpha,\beta,[0,B)}$ are presented in \Algorithmref{bucketing} and \Algorithmref{histogram-mech}, respectively.

Since \mbuc{w,[0,B)} introduces error in the output space, we need a metric over $\Hspace{[0,B)}$ to analyze its flexible accuracy. We use the following natural metic \dhistx over $\Hspace{[0,B)}$, which is defined as $\dhist{\y}{\y'}:=\Winf{}(\frac{\y}{|\y|},\frac{\y'}{|\y'|})$. Here, $\frac{\y}{|\y|}$ is treated as a probability distribution and the underlying metric for $\Winf{}$ is the standard distance metric over $\R$.

The following theorem presents the accuracy and privacy guarantees of $\mBhist{\alpha,\beta,[0,B)}$, which we prove in \Sectionref{bucketHist-priv-accu-proof}. 

\begin{thm}\label{thm:bucketing-hist}
On inputs of size $n$, $\mBhist{\alpha,\beta,[0,B)}$ is $(\alpha, \beta, 0)$-accurate for the identity function, w.r.t.~the distortion measure $\drop$ and metric $\dhistx$. Furthermore, for any $\eps>0$, and $\tau=\alpha(\frac{2\beta}B)$, if $\eps\tau n \geq 2$, then $\mBhist{\alpha,\beta,[0,B)}$ is $\left(\eps,\eps e^{-\Omega(\eps\tau n)}\right)$-DP w.r.t.\ \nhist.
\end{thm}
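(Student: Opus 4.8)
The plan is to obtain both claims by instantiating the two composition theorems (\Theoremref{compose-accuracy} for flexible accuracy and \Theoremref{compose-DP} for DP) on the decomposition $\mBhist{\alpha,\beta,[0,B)} = \mtrlap{\thresh,\eps,[0,B)}\circ\mbuc{w,[0,B)}$, with the parameter settings $w=2\beta$, $t=\lceil B/w\rceil$, $\thresh=\alpha/t$ prescribed in \Algorithmref{histogram-mech}. Each of the two ingredient mechanisms contributes one side of the analysis: $\mbuc{w,[0,B)}$ is the ``cheap'' deterministic preprocessing step whose flexible-accuracy and neighborhood-preservation properties are easy to verify directly, while $\mtrlap{\thresh,\eps,[0,B)}$ carries the real content via \Theoremref{hist-priv-accu}.

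\textbf{Flexible accuracy.} First I would establish the flexible-accuracy guarantee of the bucketing map $\mbuc{w,[0,B)}$ itself as a mechanism for the identity function on $\Hspace{[0,B)}$ w.r.t.\ $\drop$ and the metric $\dhistx$: since bucketing moves every element by less than $w/2 = \beta$ in the ground set $[0,B)$ and preserves size, the identity coupling shows $\dhist{\x}{\mbuc{}(\x)}\le\beta$, so $\mbuc{}$ is $(0,\beta,0)$-accurate for the identity (this is the ``$\mathcal{M}_1$ is $(0,\beta_1,0)$-accurate, deterministic'' special case flagged in \Sectionref{comp-accuracy}). Its output histograms have support size at most $t$. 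Next, applying the flexible-accuracy clause of \Theoremref{hist-priv-accu} to $\mtrlap{\thresh,\eps,[0,B)}$ on inputs of support size $\le t$ gives that it is $(\thresh t,0,0)$-accurate $=(\alpha,0,0)$-accurate for the identity w.r.t.\ $\drop$. Then I would invoke \Theoremref{compose-accuracy} with $f_1=f_2=\id$, $\alpha_1=0$, $\beta_1=\beta$, $\gamma_1=0$, $\alpha_2=\alpha$, $\gamma_2=0$: the composed distortion parameter is $\alpha_1+\distsens{\id}{}(\alpha_2)\le 0+\alpha=\alpha$ (using $\distsens{\id}{}(\alpha)\le\alpha$ from the identity-function discussion in \Sectionref{dist-sens}), the composed $\gamma$ is $0$, and the composed $\beta$ equals the error sensitivity $\tau^{\alpha,0}_{\mtrlap{},\id}(\beta,0)$. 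Here I need to check this error sensitivity is $\le\beta$: using the simplification for deterministic $\mathcal{M}_1$, it suffices to show that for histograms $x,x'$ over the bucketed ground set with $\dhist{x}{x'}\le\beta$, one can $\alpha$-drop from $x'$ to land within $\Winf{}$-distance $\beta$ (actually $0$, after accounting) of $\mtrlap{}(x)$; concretely, $\mtrlap{}$ only drops elements and never moves them within the bucketed grid, so $\mtrlap{}(x)$ is an $\alpha$-dropped version of $x$, and $\dhist{}$-closeness of $x,x'$ plus size-normalization should give the bound. The routine bookkeeping is matching the $\Winf{}$ on normalized histograms with the dropping, which I would handle via an explicit coupling.

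\textbf{Privacy.} For the DP claim, I would first show $\mbuc{w,[0,B)}$ is neighborhood-preserving w.r.t.\ $\nhist$ on both domains: if $\x\nhist\x'$ they differ in the multiplicity of a single ground-set element by $1$, and bucketing maps that element to some bucket center $s$, so $\mbuc{}(\x)$ and $\mbuc{}(\x')$ differ by $1$ in the count of $s$ — hence $\mbuc{}(\x)\nhist\mbuc{}(\x')$; since $\mbuc{}$ is deterministic the required coupling is trivial. Then, by the privacy clause of \Theoremref{hist-priv-accu}, $\mtrlap{\thresh,\eps,[0,B)}$ is $(\eps,\eps e^{-\Omega(\eps\thresh n)})$-DP w.r.t.\ $\nhist$ whenever $\eps\thresh n\ge 2$; note $\thresh=\alpha/t=\alpha/\lceil B/(2\beta)\rceil$, and the theorem's stated $\tau=\alpha\cdot(2\beta/B)\le\thresh$ (up to the ceiling), so $\eps\tau n\ge 2$ gives the hypothesis and $e^{-\Omega(\eps\thresh n)}\le e^{-\Omega(\eps\tau n)}$. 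Finally \Theoremref{compose-DP} yields that $\mtrlap{}\circ\mbuc{}=\mBhist{\alpha,\beta,[0,B)}$ is $(\eps,\eps e^{-\Omega(\eps\tau n)})$-DP w.r.t.\ $\nhist$.

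\textbf{Main obstacle.} I expect the bulk of the real work is in the flexible-accuracy half — specifically verifying that the error sensitivity $\tau^{\alpha,0}_{\mtrlap{},\id}(\beta,0)\le\beta$, i.e., correctly composing ``$\mbuc{}$ introduces $\beta$ worth of \emph{moving} error in the output metric $\dhistx$'' with ``$\mtrlap{}$ introduces $\alpha$ worth of \emph{dropping} distortion.'' The subtlety is that these live in different ``currencies'' (Wasserstein output-distance vs.\ drop-distortion), the mechanism $\mtrlap{}$ acts on a histogram that is close to but not equal to the one being distorted, and the rounding step $\max(0,\lfloor\cdot\rceil)$ in \Algorithmref{hist-mech} must be shown not to inflate the output error. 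The DP half and the bucketing bookkeeping, by contrast, are essentially immediate given the cited theorems.
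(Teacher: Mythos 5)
Your proposal is correct and follows essentially the same route as the paper's own proof: it decomposes $\mBhist{}=\mtrlap{}\circ\mbuc{}$, verifies the bucketing step's $(0,\beta,0)$-accuracy (\Claimref{bucket-accuracy}) and neighborhood preservation directly, and reduces everything via \Theoremref{compose-DP} and \Theoremref{compose-accuracy} to the error-sensitivity bound $\tau^{\alpha,0}_{\mtrlap{},\id}(\beta,0)\le\beta$, which the paper establishes in \Lemmaref{hist-error-sens} by precisely the explicit coupling you anticipate. One minor slip: you write $\tau\le\thresh$ ``up to the ceiling,'' but since $t=\lceil B/(2\beta)\rceil\ge B/(2\beta)$ one actually has $\thresh=\alpha/t\le\alpha\cdot 2\beta/B=\tau$ — a discrepancy the paper's own theorem statement also glosses over, so it does not reflect a flaw in your approach.
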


We can instantiate \Theoremref{bucketing-hist} with different parameter settings to achieve favorable privacy-accuracy tradeoffs. See \Sectionref{choosing-params} for more details.

\subsection{Histogram-Based-Statistics}\label{sec:HBS}
\Theoremref{bucketing-hist} provides a powerful tool to obtain a DP mechanism
for \emph{any deterministic} histogram-based-statistic $\fhbs: \Hspace{[0,B)} \rightarrow \A$, simply
by defining 
\begin{align} 
\M_{\fhbs}^{\alpha, \beta, [0,B)} = \fhbs \circ \mBhist{\alpha,\beta,[0,B)}. \label{fhbs-mech-1dim}
\end{align}
To analyze the flexible accuracy of $\M_{\fhbs}$, we 
define the \emph{metric sensitivity} function of $\fhbs$.
\begin{defn}\label{def:metric-sensitivity}
The \emph{metric sensitivity} of a histogram-based-statistic $\fhbs: \Hspace{[0,B)} \rightarrow \A$, is given by $\Delta_{\fhbs}: \Rplus \rightarrow \Rplus$,
in terms of a metric \dAx over $\A$,
\begin{align}\label{eq:fhbs-sens}
 \Delta_\fhbs(\beta) = \sup_{\substack{\x, \x'\in\Hspace{[0,B)} : \\ \dhist{\x}{\x'} \leq \beta}} \dA{\fhbs(\x)}{\fhbs(\x')}.
\end{align}
\end{defn}
The privacy and accuracy guarantees of our HBS mechanism are stated in the following theorem, which we prove in \Sectionref{proof_bucketing-general}.
\begin{thm}\label{thm:bucketing-general}
On inputs of size $n$, $\M_{\fhbs}^{\alpha, \beta, [0,B)}$ is $(\alpha, \Delta_{\fhbs}(\beta), 0)$-accurate for \fhbs w.r.t.\ distortion \drop and metric \dAx. Furthermore, for any $\eps>0$, and $\tau=\alpha(\frac{2\beta}B)$, if $\eps\tau n \geq 2$, then $\M_{\fhbs}^{\alpha, \beta, [0,B)}$ is $\left(\eps,\eps e^{-\Omega(\eps\tau n)}\right)$-DP.
\end{thm}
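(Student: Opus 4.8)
The plan is to obtain both guarantees of $\M_{\fhbs}^{\alpha,\beta,[0,B)} = \fhbs \circ \mBhist{\alpha,\beta,[0,B)}$ by reading them off from the already-established properties of the two constituent stages, using the post-processing theorem for DP for the privacy part and \Theoremref{compose-accuracy} for the flexible accuracy part. The privacy part is immediate: $\fhbs$ is a deterministic post-processing applied on top of $\mBhist{\alpha,\beta,[0,B)}$, which by \Theoremref{bucketing-hist} is $\left(\eps,\eps e^{-\Omega(\eps\tau n)}\right)$-DP w.r.t.\ \nhist whenever $\tau = \alpha(2\beta/B)$ and $\eps\tau n \ge 2$; since post-processing preserves $(\eps,\delta)$-DP, the same guarantee holds for $\M_{\fhbs}^{\alpha,\beta,[0,B)}$, and nothing further is needed for this half.

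For flexible accuracy I would instantiate \Theoremref{compose-accuracy} with $\M_1 := \mBhist{\alpha,\beta,[0,B)}$ for $f_1 := \id$ on $\Hspace{[0,B)}$, and $\M_2 := \fhbs$ regarded as a mechanism for the deterministic function $f_2 := \fhbs$. The required ingredients are in place: \Theoremref{bucketing-hist} says $\M_1$ is $(\alpha,\beta,0)$-accurate for $\id$ w.r.t.\ \drop and $\dhistx$, so $(\alpha_1,\beta_1,\gamma_1) = (\alpha,\beta,0)$; the deterministic $\fhbs$ is trivially $(0,0,0)$-accurate for itself and has a well-defined error sensitivity function $\tau_{\fhbs,\fhbs}^{\alpha_2,\gamma_2}(\beta_1,\gamma_1)$ by \Definitionref{err-sens}; and I would take the free parameters $\alpha_2 = \gamma_2 = 0$. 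The sole precondition of \Theoremref{compose-accuracy}, that $\distsens{f_1}{}(\alpha_2)$ be finite, holds because $\distsens{f_1}{}(0)=0$ (as noted right after \Definitionref{dist-sens}, $\distsens{f}{}(0)=0$ for every $f$). This is exactly the second special case discussed after \Theoremref{compose-accuracy}.

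Reading off the conclusion of \Theoremref{compose-accuracy} then gives distortion parameter $\alpha_1 + \distsens{f_1}{}(\alpha_2) = \alpha + 0 = \alpha$, error-probability parameter $\gamma_2 = 0$, and approximation parameter $\tau_{\fhbs,\fhbs}^{0,0}(\beta, 0)$, all w.r.t.\ the distortion \drop on the domain and the metric \dAx on $\A$. The last step is to evaluate this error sensitivity: since $\M_2$ and $f_2$ are both the deterministic map $\fhbs$, \Lemmaref{err-sens-deterministic} applies and yields $\tau_{\fhbs,\fhbs}^{0,0}(\beta,0) = \sup_{x,x':\dhist{x}{x'}\le\beta} \dA{\fhbs(x)}{\fhbs(x')}$, which is precisely $\Delta_{\fhbs}(\beta)$ by \Definitionref{metric-sensitivity}. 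Hence $\M_{\fhbs}^{\alpha,\beta,[0,B)}$ is $(\alpha, \Delta_{\fhbs}(\beta), 0)$-accurate for $\fhbs$ w.r.t.\ \drop and \dAx, which completes the proof.

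Since every component has already been packaged as a lemma or theorem in the excerpt, I do not expect a genuine obstacle; the only real work is careful bookkeeping — keeping the distortion measure (\drop) and the two metrics ($\dhistx$ on the intermediate histogram space, \dAx on the output space) consistent across the two stages, checking finiteness of $\distsens{f_1}{}$ at $0$, and confirming that the $\alpha_2=\gamma_2=0$ instantiation collapses the error sensitivity exactly to the quantity computed in \Lemmaref{err-sens-deterministic}, which in turn matches the definition of metric sensitivity verbatim.
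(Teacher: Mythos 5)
Your proof is correct and follows essentially the same route as the paper's: decompose $\M_{\fhbs}$ as $\fhbs \circ \mBhist{}$, read off privacy from post-processing, and obtain flexible accuracy by instantiating \Theoremref{compose-accuracy} with $\alpha_2=\gamma_2=0$, then collapsing the error sensitivity via \Lemmaref{err-sens-deterministic} to recover $\Delta_{\fhbs}(\beta)$. (Incidentally, the paper writes $\tau_{\fhbs,\fhbs}^{0,0}(0,\beta)$ where the argument order per \Definitionref{err-sens} should be $\tau_{\fhbs,\fhbs}^{0,0}(\beta,0)$ — your version has it right.)
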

We can instantiate \Theoremref{bucketing-general} with different parameter settings to achieve favorable privacy-accuracy tradeoffs. See \Sectionref{choosing-params} for more details.

\Theoremref{bucketing-general} has direct applications to functions
which have high sensitivity (defined w.r.t.\ the neighborhood relation
$\sim$), but low metric sensitivity. We point out two such examples,
for which no solutions with non-trivial guarantees were previously offered.

\subsubsection{Computing the Maximum or Minimum Element of a Multi-set}\label{sec:max}
We define \fmax (or simply $\max$) for histograms over real numbers as $\fmax(\x):=\max\{g:\x(g)>0\}$.
Similarly, we can define \fmin (or simply $\min$) as $\fmin(\x):=\min\{g:\x(g)>0\}$. 
We give our result for $\fmax$ only; the same result holds for $\fmin$ as well.
\begin{corol}\label{corol:bucketing-max}
On inputs of size $n$, $\mmax{\alpha, \beta, [0,B)}$ is $(\alpha, \beta, 0)$-accurate for $\fmax$ w.r.t.\ the distortion \drop and the standard distance metric over \R. Furthermore, for any $\eps>0$, and $\tau=\alpha(\frac{2\beta}B)$, if $\eps\tau n \geq 2$, then $\mmax{\alpha, \beta, [0,B)}$ is $\left(\eps,\eps e^{-\Omega(\eps\tau n)}\right)$-DP.
\end{corol}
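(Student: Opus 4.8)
The plan is to obtain \Corollaryref{bucketing-max} as an immediate instance of \Theoremref{bucketing-general}, with the only real work being a bound on the metric sensitivity of $\fmax$. Observe first that $\mmax{\alpha,\beta,[0,B)}$ is, by definition (see \eqref{fhbs-mech-1dim}), exactly $\M_{\fhbs}^{\alpha,\beta,[0,B)}$ with $\fhbs=\fmax$, and that $\fmax$ is a deterministic histogram-based-statistic mapping $\Hspace{[0,B)}$ into $\R$. Hence \Theoremref{bucketing-general} applies verbatim and already yields the DP claim of the corollary (for $\tau=\alpha(2\beta/B)$ whenever $\eps\tau n\ge 2$), as well as $(\alpha,\Delta_{\fmax}(\beta),0)$-accuracy for $\fmax$ w.r.t.\ \drop and the standard metric on $\R$. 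Since $(\alpha,\beta',0)$-accuracy implies $(\alpha,\beta,0)$-accuracy whenever $\beta'\le\beta$ (immediate from \Definitionref{alpha-beta-gamma-accu}, a larger output-error bound being a weaker requirement), it therefore suffices to prove $\Delta_{\fmax}(\beta)\le\beta$.

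To establish this, fix any $\x,\x'\in\Hspace{[0,B)}$ with $\dhist{\x}{\x'}=\Winf{}(\x/|\x|,\x'/|\x'|)\le\beta$, write $M:=\fmax(\x)$ and $M':=\fmax(\x')$, and assume by symmetry that $M\ge M'$; the goal is $M-M'\le\beta$. Given an arbitrary $\eta>0$, the infimum in \Definitionref{infty-delta-wass-dist} lets me pick a coupling $\phi\in\Phi^0(\x/|\x|,\x'/|\x'|)$ with $\sup_{(a,b)\leftarrow\phi}|a-b|\le\beta+\eta$ (finitely-supported histograms give a finitely-supported discrete $\phi$, so the supremum is over a finite set of pairs). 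Since $M\in\support(\x)$ carries positive mass under $\x/|\x|$, there is some $b\in\support(\x')$ with positive $\phi$-mass on the pair $(M,b)$, so $|M-b|\le\beta+\eta$, i.e.\ $b\ge M-\beta-\eta$. As $M'=\fmax(\x')$ is the largest element of $\support(\x')$, we get $M'\ge b\ge M-\beta-\eta$, hence $M-M'\le\beta+\eta$; letting $\eta\to0$ gives $M-M'\le\beta$. Taking the supremum over all admissible pairs yields $\Delta_{\fmax}(\beta)\le\beta$, and feeding this into \Theoremref{bucketing-general} finishes the proof. The identical argument, with the inequality reversed and "largest element" replaced by "smallest element", gives the stated analogue for $\fmin$.

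The step I expect to be the crux — and the only one that is not bookkeeping — is the metric-sensitivity bound, specifically the coupling argument: the point mass at the extremal value $M$ of one histogram can be transported only a distance at most $\beta$, which forces the other histogram to have a support point (and therefore its maximum) within $\beta$ of $M$. The mild technical wrinkle that the infimum defining $\Winf{}$ need not be attained is dispatched cleanly by the auxiliary slack $\eta\to0$; everything else is a direct appeal to \Theoremref{bucketing-general} together with monotonicity of $(\alpha,\beta,\gamma)$-accuracy in $\beta$.
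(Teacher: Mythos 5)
Your proof is correct and follows essentially the same route as the paper's: reduce to bounding $\Delta_{\fmax}(\beta)\le\beta$ and then invoke \Theoremref{bucketing-general} with $\fhbs=\fmax$. The only difference is that you spell out the coupling argument (mass at the extremal point $M$ must be transported to some $b\in\support(\x')$ within distance $\beta$, forcing $\fmax(\x')\ge M-\beta$) and handle the non-attainment of the infimum via an explicit $\eta\to 0$ limit, both of which the paper states are left implicit throughout its proofs.
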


The proof of \Corollaryref{bucketing-max} is straight-forward, and we prove it in \Sectionref{bucketing-max_proof}.

\subsubsection{Computing the Support of a Multi-set}\label{sec:support}
\fsupp (or simply \support) is defined as $\fsupp(\x):=\{g:\x(g)>0\}$,
which maps a multiset to the set that forms its
support. To measure accuracy, we use a metric \dsuppx over
the set of finite subsets of \R: for any two finite subsets $\cS_1,\cS_2\subseteq \R$, define 
\[\dsupp{\cS_1}{\cS_2} := \max\left\{
\max_{s_1\in \cS_1} \min_{s_2\in \cS_2} |s_1-s_2|,\, \max_{s_2 \in \cS_2}\min_{s_1 \in \cS_1} |s_2-s_1|\right\}.\]
\dsuppx measures the farthest that a point in one of the sets is from any
point on the other set. For example, if $s_i^{\min}:=\min_{s\in\cS_i}\{s\}$ and $s_i^{\max}:=\max_{s\in\cS_i}\{s\}$ denote the minimum and the maximum elements of the set $\cS_i$ (for $i=1,2$), respectively, then it can be verified that $\dsupp{\cS_1}{\cS_2}=\max\{|s_1^{\min}-s_2^{\min}|,|s_1^{\max}-s_2^{\max}|\}$.

\begin{corol}\label{corol:bucketing-supp}
On inputs of size $n$, $\msupp{\alpha, \beta, [0,B)}$ is $(\alpha, \beta, 0)$-accurate for $\fsupp$ w.r.t.\ the distortion \drop and metric $\dsuppx$. Furthermore, for any $\eps>0$, and $\tau=\alpha(\frac{2\beta}B)$, if $\eps\tau n \geq 2$, then $\msupp{\alpha, \beta, [0,B)}$ is $\left(\eps,\eps e^{-\Omega(\eps\tau n)}\right)$-DP.
\end{corol}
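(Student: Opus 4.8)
The plan is to obtain \Corollaryref{bucketing-supp} directly from \Theoremref{bucketing-general}. Since $\msupp{\alpha,\beta,[0,B)} = \fsupp \circ \mBhist{\alpha,\beta,[0,B)}$ by definition, instantiating \Theoremref{bucketing-general} with $\fhbs = \fsupp$, with $\A$ the collection of nonempty finite subsets of $[0,B)$, and with $\dAx = \dsuppx$ (which, as noted in the text, is a metric on this collection), immediately yields the privacy claim verbatim, as well as $(\alpha, \Delta_{\fsupp}(\beta), 0)$-accuracy for $\fsupp$ w.r.t.\ \drop and $\dsuppx$. Thus the whole statement reduces to showing that the metric sensitivity of $\fsupp$ satisfies $\Delta_{\fsupp}(\beta) \leq \beta$ (and in particular is finite), after which monotonicity of $(\alpha,\cdot,0)$-accuracy in the second parameter finishes the job.

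By \Definitionref{metric-sensitivity},
\[
\Delta_{\fsupp}(\beta) = \sup_{\substack{\x,\x'\in\Hspace{[0,B)}:\\ \dhist{\x}{\x'}\leq\beta}} \dsupp{\support(\x)}{\support(\x')},
\]
and recall $\dhist{\x}{\x'} = \Winf{}(\frac{\x}{|\x|},\frac{\x'}{|\x'|})$ with the underlying metric for $\Winf{}$ being the usual distance on $\R$. So it suffices to prove: if $\Winf{}(\frac{\x}{|\x|},\frac{\x'}{|\x'|}) \leq \beta$ then $\dsupp{\support(\x)}{\support(\x')} \leq \beta$. Fix a coupling $\phi \in \Phi^0(\frac{\x}{|\x|},\frac{\x'}{|\x'|})$ with $\sup_{(g,g')\leftarrow\phi}|g-g'| \leq \beta$; such a $\phi$ exists because $\support(\x)$ and $\support(\x')$ are finite, so the infimum defining $\Winf{}$ is attained (alternatively, run the argument with $\beta$ replaced by any $r > \beta$ and let $r \downarrow \beta$ at the end). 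For any $g \in \support(\x)$, the first marginal of $\phi$ places positive mass at $g$, so $\phi$ places positive mass on some pair $(g,g')$; then the second marginal places positive mass at $g'$, i.e.\ $g' \in \support(\x')$, and $(g,g')$ lies in the support of $\phi$, so $|g-g'| \leq \beta$. Hence $\max_{g\in\support(\x)}\min_{g'\in\support(\x')}|g-g'| \leq \beta$, and the symmetric argument via the other marginal gives the same bound with $\x$ and $\x'$ interchanged; taking the maximum of the two directed distances gives $\dsupp{\support(\x)}{\support(\x')} \leq \beta$.

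Substituting $\Delta_{\fsupp}(\beta) \leq \beta$ into the accuracy conclusion of \Theoremref{bucketing-general} gives $(\alpha,\beta,0)$-accuracy, which together with the privacy conclusion completes the proof. There is no genuinely hard step here: beyond invoking \Theoremref{bucketing-general}, the only content is the elementary coupling argument above, and the closest thing to an obstacle is the possible non-attainment of the infimum in $\Winf{}$, which the finiteness of the supports (or the trivial limiting argument) disposes of. Incidentally, the bound $\Delta_{\fsupp}(\beta) \leq \beta$ is tight, as seen by taking $\x$ and $\x'$ to be single points at distance $\beta$.
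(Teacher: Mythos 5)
Your proof is correct, and while it follows the same skeleton as the paper's — instantiate \Theoremref{bucketing-general} with $\fhbs=\fsupp$ and then bound the metric sensitivity $\Delta_{\fsupp}(\beta)\le\beta$ — the bound on $\Delta_{\fsupp}$ is established by a genuinely different argument. The paper reduces to the max and min cases: it asserts that $\dsuppx$ is determined by the extremal points, so $\dsupp{\fsupp(\y)}{\fsupp(\y')}\leq\max\{|\fmax(\y)-\fmax(\y')|,\,|\fmin(\y)-\fmin(\y')|\}$, and then invokes the bounds already proved for \fmax and \fmin in \Corollaryref{bucketing-max}. You instead argue directly from a (near-)optimal coupling $\phi$ of the normalized histograms: every $g$ in $\support(\x)$ is paired under $\phi$ with some $g'\in\support(\x')$ with $|g-g'|\le\beta$, and symmetrically, which bounds the full Hausdorff distance $\dsupp{\support(\x)}{\support(\x')}\leq\beta$ without any structural claim about $\dsuppx$. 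Your route is in fact the safer one: the identity $\dsupp{\cS_1}{\cS_2}=\max\{|s_1^{\min}-s_2^{\min}|,|s_1^{\max}-s_2^{\max}|\}$ quoted in \Sectionref{support}, on which the paper's reduction leans, fails for general finite sets (take $\cS_1=\{0,10\}$ and $\cS_2=\{0,5,10\}$, where the Hausdorff distance is $5$ but both extremes coincide), so the paper's argument as written is not airtight, whereas your coupling argument sidesteps the issue entirely and controls the full Hausdorff distance for supports of any shape. The paper's version is shorter if one grants the extremal-point characterization; yours is self-contained and robust.
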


The proof of \Corollaryref{bucketing-supp} is straight-forward, and we prove it in \Sectionref{bucketing-supp_proof}.

\subsubsection{Choosing the Parameters}\label{sec:choosing-params}
As mentioned in \Remarkref{hist-params} for \Theoremref{hist-priv-accu}, there are many choices of $\eps,\tau$ for which we can get favorable privacy, accuracy parameters in Theorems~\ref{thm:bucketing-hist},~\ref{thm:bucketing-general}, and Corollaries~\ref{corol:bucketing-max},~\ref{corol:bucketing-supp}. 
For concreteness, in the following, we illustrate the privacy accuracy trade-off by choosing parameters for the $\mmax{\alpha, \beta, [0,B)}$ mechanism in \Corollaryref{bucketing-max}; the same result applies to Theorems~\ref{thm:bucketing-hist},~\ref{thm:bucketing-general}, and \Corollaryref{bucketing-supp} as well. 

If we choose $\eps=\frac{1}{\sqrt{\tau n}}$ and $\tau$ is such that $\frac{1}{\eps}=\sqrt{\tau n} \geq 2$, then by dropping only $\alpha n = \frac{1}{\eps^2}\frac{2\beta}{B}$ elements from the entire dataset, the mechanism $\mmax{\alpha,\beta,[0,B)}$ achieves $\left(\frac{1}{\sqrt{\tau n}},\frac{e^{-\Omega(\sqrt{\tau n})}}{\sqrt{\tau n}}\right)$-differential privacy. 
If $\beta/B$ is a small constant (say, $1/100$), which corresponds to perturbing the output by a small constant fraction of the whole range $B$, then by dropping only $\alpha n = O(\frac{1}{\eps^2})$ elements, $\mmax{\alpha,\beta,[0,B)}$ achieves $(\eps,\eps e^{-\Omega(\frac{1}{\eps})})$-differential privacy. 
We can set any $\tau$ that satisfies $\frac{1}{\eps}=\sqrt{\tau n}\geq2$ in this result. For example,\\

\parbox{15.5cm}{By setting $\eps=\frac{1}{(\log n)^2}$, we get that by dropping only $O((\log n)^4)$ elements from the entire dataset, $\mmax{\alpha,\beta,[0,B)}$ achieves $(\frac{1}{(\log n)^2},\frac{n^{-\Omega(\log n)}}{(\log n)^2})$-differential privacy while incurring only a small constant error (of the entire range) in the output.\\} 

\noindent Note that in the above setting of parameters, we take $\eps=\frac{1}{\sqrt{\tau n}}$, which implies that the bound on $\delta$ can at best be a small constant for any constant $\eps$. This is because $\eps\tau n=\sqrt{\tau n} = \frac{1}{\eps}$ is a constant, which implies that $\delta=\eps e^{-\Omega(\frac{1}{\eps})}$ will be a constant too. Therefore, for getting privacy guarantees with small constant $\eps$ such that $\delta$ (exponentially) decays with $n$, we will work with the general privacy result of $(\eps,\eps e^{-\Omega(\eps\tau n)})$-DP as in \Corollaryref{bucketing-max}. 
For example, \\

\parbox{15.5cm}{By setting $\eps=0.1$ and $\tau=\frac{1}{n^c}$ (for any $c\in(0,1)$), we get that by dropping only $\alpha n = \tau n \frac{B}{2\beta} = O(n^{1-c})$ elements from the entire dataset, $\mmax{\alpha,\beta,[0,B)}$ achieves $(0.1,e^{-\Omega(n^{1-c})})$-differential privacy while incurring only a small constant error (of the entire range) in the output.\\} 

\noindent For other parameter settings, see the result on page~\pageref{informal-result-max} after we stated our informal result for max.

\subsection{Further Applications: Beyond \drop}\label{sec:beyond-drop}
Useful variants of \Theoremref{bucketing-general} can be obtained with
measures of distortion other than \drop. In particular, in \eqref{eq:move_defn} and \eqref{eq:drop_move_defn}, we defined the distortions $\move$ and $\dropmove\eta$, respectively, where $\move$ allows moving/perturbing of data points and $\dropmove\eta$ allows both dropping and moving.

The following theorem provides the privacy and accuracy guarantees of $\M_{\fhbs}^{\alpha, \beta, [0,B)}$ (defined in \eqref{fhbs-mech-1dim}) w.r.t.\ the distortion measure $\drme$, and we prove it in \Sectionref{beyond-drop_proofs}.

\begin{thm}\label{thm:bucketing-general-drmv}
On inputs of size $n$, $\M_{\fhbs}^{\alpha, \beta, [0,B)}$ is $(\alpha+\eta\beta, 0, 0)$-accurate for \fhbs w.r.t.\ the distortion measure $\dropmove\eta$. Furthermore, for any $\eps>0$, and $\tau=\alpha(\frac{2\beta}B)$, if $\eps\tau n \geq 2$, then $\M_{\fhbs}^{\alpha, \beta, [0,B)}$ is $\left(\eps,\eps e^{-\Omega(\eps\tau n)}\right)$-DP.
\end{thm}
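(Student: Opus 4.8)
The plan is to leverage the already-established \Theoremref{bucketing-general} (which gives the $\drop$-version of the statement) together with the composition machinery of \Theoremref{compose-accuracy}, or more directly, to argue that the output guarantee of $\M_{\fhbs}^{\alpha,\beta,[0,B)}$ can be ``folded into'' the input distortion when the distortion measure is allowed to include moving. The privacy claim is immediate: it is literally the same mechanism $\M_{\fhbs}^{\alpha,\beta,[0,B)} = \fhbs \circ \mBhist{\alpha,\beta,[0,B)}$ as in \Theoremref{bucketing-general}, and privacy is unaffected by which distortion measure we use to state accuracy (privacy only depends on $\mBhist{}$ and the post-processing theorem), so this half is inherited verbatim. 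All the work is in the accuracy claim, which asserts a \emph{zero} output error ($\beta_{\mathrm{out}}=0$, $\gamma=0$) at the price of an enlarged input distortion budget $\alpha + \eta\beta$ w.r.t.\ $\drme$.

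For the accuracy claim, the key observation is that \Theoremref{bucketing-general} already tells us that for every input $\x$ of size $n$ there exists a random variable $X'$ with $\dropx(\x,\prob{X'})\le\alpha$ (in fact, since the bucketing step is deterministic and $\mBhist{}$ is $(\alpha,\beta,0)$-accurate for the identity, $X'$ is actually a point distribution on some $\x'$ with $\drop(\x,\x')\le\alpha$) such that $\dA{\M_{\fhbs}^{\alpha,\beta,[0,B)}(\x)}{\fhbs(\x')}\le\Delta_\fhbs(\beta)$, i.e.\ $\Winf{0}(\M_{\fhbs}(\x),\fhbs(\x'))\le\Delta_\fhbs(\beta)$. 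To upgrade this to a \emph{zero}-error guarantee, I would show that the residual output error $\Delta_\fhbs(\beta)$ can be absorbed by an additional $\move$-type distortion of magnitude at most $\beta$. Concretely: unwinding the proof of \Theoremref{bucketing-general}, the output $\M_{\fhbs}(\x)$ equals $\fhbs$ applied to $\mtrlap{}\circ\mbuc{}(\x)$, and the bucketed-then-noised histogram $\y := \mtrlap{}\circ\mbuc{}(\x)$ satisfies $\dhist{\y}{\x'}\le\beta$ for a suitable $\x'$ obtained from $\x$ by dropping $\le\alpha n$ elements — this is exactly what \Theoremref{bucketing-hist} gives. Now $\dhist{\y}{\x'}=\Winf{}(\frac{\y}{|\y|},\frac{\x'}{|\x'|})\le\beta$ means there is a coupling moving mass in $\frac{\x'}{|\x'|}$ by at most $\beta$ to reach $\frac{\y}{|\y|}$; since $|\y|=|\x'|$ (bucketing and the normalization conventions preserve size, after reconciling the rounding/zeroing in \Algorithmref{hist-mech} — this needs a small check), this coupling witnesses $\move(\x',\y)\le\beta$, hence $\dropmove\eta(\x,\y)\le\drop(\x,\x')+\eta\cdot\move(\x',\y)\le\alpha+\eta\beta$ by the definition \eqref{eq:drop_move_defn}. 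Since $\fhbs$ is deterministic, $\M_{\fhbs}(\x)=\fhbs(\y)$ is a point distribution exactly equal to $\fhbs$ of this $(\alpha+\eta\beta)$-distortion $\y$ of $\x$, giving $\Winf{0}(\M_{\fhbs}(\x),\fhbs(\y))=0$, i.e.\ $(\alpha+\eta\beta,0,0)$-accuracy w.r.t.\ $\drme$.

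The main obstacle I anticipate is the size/rounding bookkeeping: \Algorithmref{hist-mech} adds a \emph{negative} truncated-Laplace noise to each bar and then applies $\max(0,\lfloor\cdot\rceil)$, so the output histogram $\y$ does \emph{not} literally have $|\y|=|\x|$, and $\y$ is not literally obtained from $\mbuc{}(\x)$ purely by dropping (rounding up to the nearest integer can \emph{increase} a bar by up to $1/2$). I would handle this exactly as the proof of \Theoremref{bucketing-hist}/\Theoremref{hist-priv-accu} does — by choosing an appropriate intermediate $\x'$ that accounts both for the bars driven to zero (absorbed into $\drop$, contributing to the $\alpha$ budget via $\thresh t$) and for the small rounding perturbations (absorbed into the $\dhistx$/$\move$ term, contributing to the $\beta$ budget) — and then invoke the triangle inequality for quasi-metrics (\Claimref{drop-move-quasi-metric}) to combine the drop-distortion and move-distortion legs into a single $\drme$-distance of at most $\alpha+\eta\beta$. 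The remaining steps (monotonicity of $\Delta_\fhbs$, the fact that $\fhbs$ being deterministic makes all output distributions point masses so $\Winf{0}=0$ is achievable exactly) are routine. Everything else follows the template of \Sectionref{proof_bucketing-general} essentially verbatim, only replacing the final ``output error $\le\Delta_\fhbs(\beta)$'' step with ``fold that $\beta$ into an extra $\eta\beta$ of $\move$-distortion.''
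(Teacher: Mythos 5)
Your privacy argument is identical to the paper's and correct: $\M_{\fhbs}^{\alpha,\beta,[0,B)}$ is literally the same mechanism, and the accuracy measure is irrelevant to the DP guarantee.

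For the accuracy claim, however, there is a gap at the step where you assert $\move(\x',\y)\le\beta$. You take $\x'$ from the \emph{statement} of Theorem~\ref{thm:bucketing-hist}, which only guarantees $\drop(\x,\x')\le\alpha$ and $\dhist{\y}{\x'}\le\beta$ for a suitably coupled pair $(\y,\x')$. But $\dhistx$ is a distance between \emph{normalized} histograms and puts no constraint on $|\x'|$ versus $|\y|$, while $\move$ is $\infty$ unless the sizes agree exactly. So $\move(\x',\y)\le\beta$ does not follow from the theorem's statement. Moreover, the reason you offer for $|\y|=|\x'|$ (``normalization conventions preserve size'') is not the right mechanism: the truncated-Laplace step strictly decreases mass, and the $\x'$ produced by the accuracy proof is an $\alpha$-drop of $\x$, so both sides lose mass. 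They do in fact end up equal, but that is a consequence of the specific $f_\phi$ construction inside Lemma~\ref{lem:hist-error-sens} (where $|f_\phi(\z)|=\tfrac{|\x'|}{|\x|}|\z|$ and $|\mbuc{}(\x)|=|\x|$), not of any convention. Rescuing your argument would require re-opening that proof and re-deriving the coupling, which is substantially more than a ``small check'' and ends up reproducing the paper's machinery anyway. (Also, your worry about rounding inflating bars is unfounded: the noise in Algorithm~\ref{algo:hist-mech} is supported on $[-q,0]$, so $\lfloor \x(g)+z_g\rceil\le\x(g)$ always, and the trLap step is a pure drop.)

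The paper avoids the size bookkeeping entirely with a cleaner, more modular decomposition. It writes $\mBhist{}=\mtrlap{}\circ\mbuc{}$, notes that $\z:=\mbuc{}(\x)$ satisfies $\move(\x,\z)\le\beta$ (here $\move$ is finite precisely because bucketing preserves $|\cdot|$), and that $\y:=\mtrlap{}(\z)$ satisfies $\drop(\z,\y)\le\alpha$. This produces a \emph{move-then-drop} chain, which is the wrong order for the infimum defining $\drme$. The key tool you are missing is Lemma~\ref{lem:drop-move-switch}, which commutes the two: it yields $\s$ with $\drop(\x,\s)\le\alpha$ and $\move(\s,\y)\le\beta$, whence $\drme(\x,\y)\le\alpha+\eta\beta$ by taking $\z=\s$ in the definition of $\drme$. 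You cite Claim~\ref{clm:drop-move-quasi-metric} (the quasi-metric property of $\drme$, which uses the switch lemma internally), but the triangle inequality for $\drme$ does not by itself reorder a move-then-drop chain into drop-then-move; the switch lemma is the crux of the theorem and needs to be invoked explicitly.
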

\begin{remark}
This is analogous to \Theoremref{bucketing-general}, 
but with the important difference that it does
not refer to the metric sensitivity of the function \fhbs, and does not even
require a metric over its codomain \cA. This makes this result applicable to
complex function families like maximum-margin separators or neural net
classifiers. However, the accuracy notion uses a measure of distortion that
allows dropping a (small) fraction of the data \emph{and} (slightly) moving all
data points, which may or may not be acceptable to all applications.
\end{remark}
\begin{remark}[Extending the results from $[0,B)$ to $[0,B)^d$]\label{remark:d-dim}
Note that the bucketing mechanism $\mbuc{w,[0,B)}$ (\Algorithmref{bucketing}) and the bucketed-histogram mechanism $\mBhist{\alpha,\beta,[0,B)}$ (\Algorithmref{histogram-mech}) are given for the ground set $\G=[0,B)$.
However, as mentioned in \Footnoteref{dim-d}, they can easily be extended to the $d$-dimensional ground set $\G=[0,B)^d$, and we present the $d$-dimensional analogues of the above two mechanisms in \Appendixref{d-dim-results}. All our results in \Theoremref{bucketing-hist}, \Theoremref{bucketing-general}, and \Theoremref{bucketing-general-drmv} will hold verbatim with these generalized mechanisms, except for the value of $\tau$, which will be replaced by $\tau=\alpha(\frac{2\beta}{B\sqrt{d}})^d$; see \Appendixref{d-dim-results} for a proof of this.
\end{remark}

\section{Proofs}\label{sec:proofs}
In our proofs, when dealing with infimum/supremum (for example, in the definitions of the lossy Wasserstein distance, measure of distortion, distortion and error sensitivities, etc.), for simplicity, we assume that the infimum/supremum is always achieved; all our proofs can be easily extended to work without this assumption by taking appropriate limits when working with infinitesimal quantities. 

\subsection{Proof of \Lemmaref{wass-triangle} -- Triangle Inequality for $\Winf{\gamma}$}\label{sec:triangle-ineq_Wass}
In this section we prove \Lemmaref{wass-triangle}, and along the way derive useful properties about lossy Wasserstein distance, that may be of independent interest. 

The following lemma is crucial to proving \Lemmaref{wass-triangle}.
\begin{lem}\label{lem:wass-marginal-loss}
Let $P$ and $Q$ be any two distributions over a metric space
$(\Omega,\met)$. If $\Winf{\gamma}(P, Q) = \beta$, then for all $\gamma_1 \in [0, \gamma]$, there exist distributions $P'$ and $Q'$ s.t. $\Delta(P, P') \le \gamma_1$, $\Delta(Q, Q') \le \gamma - \gamma_1$, and $\Winf{}(P', Q') = \beta$.
\end{lem}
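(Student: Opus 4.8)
The idea is to take an optimal $\gamma$-lossy coupling of $P,Q$ and then \emph{rebalance} how its total-variation budget is split between the two sides, transferring budget away from whichever side currently overshoots its target, all while leaving the transport cost unchanged. By the standing convention that infima are attained, fix $\phi^\ast\in\Phi^{\gamma}(P,Q)$ with marginals $\phi_1,\phi_2$, with $\sup_{(x,y)\leftarrow\phi^\ast}\met(x,y)=\beta$, and write $s_1:=\Delta(P,\phi_1)$, $s_2:=\Delta(Q,\phi_2)$, so $s_1+s_2\le\gamma$. Fix $\gamma_1\in[0,\gamma]$ and set $\gamma_2:=\gamma-\gamma_1$. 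If already $s_1\le\gamma_1$ and $s_2\le\gamma_2$, take $P'=\phi_1$, $Q'=\phi_2$ (the equality $\Winf{0}(P',Q')=\beta$ is argued at the end). Since $s_1+s_2\le\gamma_1+\gamma_2$, the two target budgets cannot both be exceeded, so after possibly swapping the roles of $P$ and $Q$ (the statement is symmetric in them) we may assume $s_2>\gamma_2$, hence $s_1\le\gamma_1$. Let $\theta:=s_2-\gamma_2>0$ and record the key arithmetic fact $s_1+\theta=s_1+s_2-\gamma_2\le\gamma_1$: the $P$-side has slack to absorb an extra $\theta$ of distortion.

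\textbf{The rebalancing construction.} Write $\phi_2=Q-r+a$, where $r,a\ge0$ are the negative and positive parts of $\phi_2-Q$ (disjoint supports, each of total mass $s_2$). Choose sub-measures $\tilde r\le r$ and $\tilde a\le a$, each of total mass $\theta$: the plan is to ``restore'' $\tilde r$ (mass that $\phi_2$ had deleted from $Q$) and to ``un-create'' $\tilde a$ (mass that $\phi_2$ had added). Disintegrating $\phi^\ast$ along its second coordinate produces, for the sub-measure $\tilde a\le\phi_2$ of its second marginal, a sub-coupling $\eta\le\phi^\ast$ with second marginal $\tilde a$; its first marginal $\mu$ satisfies $\mu\le\phi_1$ and $|\mu|=\theta$. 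Let $\nu$ be the diagonal (zero-cost) coupling of $\tilde r$ with itself, i.e.\ the pushforward of $\tilde r$ under $y\mapsto(y,y)$. Define $\phi':=\phi^\ast-\eta+\nu$, a nonnegative measure of total mass $1$, whose marginals are $P':=\phi_1-\mu+\tilde r$ and $Q':=\phi_2-\tilde a+\tilde r$.

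\textbf{Verification.} Three things must be checked. First, $Q'=Q-(r-\tilde r)+(a-\tilde a)$ with $r-\tilde r$ and $a-\tilde a$ still disjointly supported, so $\Delta(Q,Q')=|r-\tilde r|=s_2-\theta=\gamma_2$. Second, by the triangle inequality for $\Delta$, $\Delta(P,P')\le\Delta(P,\phi_1)+\Delta(\phi_1,\phi_1-\mu+\tilde r)\le s_1+\theta\le\gamma_1$. Third, $\support(\phi')\subseteq\support(\phi^\ast)\cup\support(\nu)$, on which $\met\le\beta$, and $\phi'\in\Phi^{0}(P',Q')$, so $\Winf{0}(P',Q')\le\beta$. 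For the matching lower bound: if $\Winf{0}(P',Q')<\beta$, then an optimal $0$-lossy coupling of $P',Q'$ would lie in $\Phi^{\gamma}(P,Q)$ (because $\Delta(P,P')+\Delta(Q,Q')\le\gamma_1+\gamma_2=\gamma$) and witness $\Winf{\gamma}(P,Q)<\beta$, a contradiction; the same remark disposes of the trivial case $P'=\phi_1$, $Q'=\phi_2$. Hence $\Winf{0}(P',Q')=\beta$.

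\textbf{Main obstacle.} The only delicate point is preserving the bound $\le\beta$ on the transport cost while rebalancing. The naive move --- delete $\eta$ and re-insert its $P$-mass $\mu$ now paired with $\tilde r$ --- fails, since $\mu$ (a piece of $\phi_1$) and $\tilde r$ (a piece of the mass deleted from $Q$) can be arbitrarily far apart. The resolution is to spend the $P$-side slack $s_1+\theta\le\gamma_1$: we let the $P$-marginal drift, replacing $\mu$ by $\tilde r$, so that the re-inserted mass is coupled to itself along the diagonal at cost $0$. The remaining ingredients --- the disintegration and the disjoint-support bookkeeping for the total-variation identities --- are routine.
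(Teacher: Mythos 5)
Your proof is correct, and it takes a genuinely different route from the paper. The paper constructs $P'$ by convexly interpolating between $P$ and the optimal coupling's first marginal, $P' = \frac{\gamma_1}{\gamma_{opt}}P_{opt} + (1-\frac{\gamma_1}{\gamma_{opt}})P$, and then reweights the coupling pointwise by the ratio $P'(x)/P_{opt}(x)$ (falling back to a Dirac diagonal where $P_{opt}(x)=0$); the TV bound on the $Q$ side is then verified by a somewhat lengthy direct computation of $\Delta(Q,Q')$. You instead split the budget overshoot $\theta$ on the $Q$ side into its positive and negative TV parts $a,r$, prune the sub-coupling $\eta$ responsible for the $\tilde a$ excess, and restore $\tilde r$ at zero cost along the diagonal; the TV bounds on both sides then fall out by disjoint-support bookkeeping and the triangle inequality. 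Both arguments hinge on the same key idea -- any mass that is ``restored'' must be re-coupled on the diagonal so the transport bound $\le\beta$ is preserved -- but yours is more combinatorial (a surgical sub-coupling transplant) while the paper's is more analytic (a global reweighting). A minor merit of your version is that the equality $\Delta(Q,Q')=\gamma_2$ is immediate from the disjointness of $r-\tilde r$ and $a-\tilde a$, whereas the paper needs a separate computational claim; a minor cost is that your $\Delta(P,P')$ is only bounded by $s_1+\theta\le\gamma_1$ rather than realized exactly, which is fine since the lemma only asks for an inequality. Your preliminary reduction by symmetry (swapping both the roles of $P,Q$ and of $\gamma_1,\gamma_2$) and your shared lower-bound argument via $\Winf{\gamma}(P,Q)\le\Winf{}(P',Q')$ are both sound.
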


\begin{proof}
Let $P$ and $Q$ be any two distributions over a metric space
$(\Omega,\met)$. Let us assume that the optimal $\Winf{\gamma}(P, Q)$ (= $\beta$) is obtained at the joint distribution $\phi_{opt}$. Let the first and the second marginal distributions of $\phi_{opt}$ be $P_{opt}$ and $Q_{opt}$, respectively. Let $\Delta(P, P_{opt}) = \gamma_{opt}$, which implies that $\Delta(Q, Q_{opt}) \le \gamma - \gamma_{opt}$. 
Define a function $R_{opt}:\Omega\to\R$ as $R_{opt}(\omega) := \Prob{P_{opt}}{\omega} - \Prob{P}{\omega}$ for all $\omega\in\Omega$. Clearly, $\int_{\Omega} R_{opt}(\omega)\dd\omega= 0$ and $\int_{\Omega} |R_{opt}(\omega)|\dd\omega= 2\gamma_{opt}$.

In the discussion below, we shall take a general $\gamma_1 \in [0,\gamma_{opt})$ and construct distributions $P'$ and $Q'$ s.t.\ $\Delta(P, P') \le \gamma_1$, $\Delta(Q, Q') \le \gamma - \gamma_1$, and $\Winf{}(P', Q') = \beta$, as required in the conclusion of \Lemmaref{wass-marginal-loss}. 
We can show a similar result for the other case also when $\gamma_1 \in (\gamma_{opt}, \gamma]$ (by swapping the roles of $P$ and $Q$ in the above as well as in the argument below). This will complete the proof of \Lemmaref{wass-marginal-loss}.

Define a function $R':\Omega\to\R$ as $R'(\omega) := \frac{\gamma_1}{\gamma_{opt}}R_{opt}(\omega)$. 
For any $\omega\in\Omega$, let $P'(\omega)=\Prob{P}{\omega} + R'(\omega)$. 
After substituting the value of $R_{opt}(\omega) = \Prob{P_{opt}}{\omega} - \Prob{P}{\omega}$, we get $P'(\omega)=\frac{\gamma_1}{\gamma_{opt}}\Prob{P_{opt}}{\omega} + \left(1-\frac{\gamma_1}{\gamma_{opt}}\right)\Prob{P}{\omega}$. Since $P'$ is a convex combination of two distributions, it is also a valid distribution.
It is easy to see that $\Delta(P, P') = \gamma_1$. 
Define a joint distribution $\phi'$ as follow: for every $(x,y)\in\Omega\times\Omega$, define

\begin{align*}
    \phi'(x, y) := \begin{cases}
    \phi_{opt}(x, y)\frac{\Prob{P'}{x}}{\Prob{P_{opt}}{x}} & \text{if } \Prob{P_{opt}}{x} > 0\\
    \Prob{P'}{x}\delta(x -  y) & \text{otherwise}
    \end{cases}
\end{align*}
where $\delta(\cdot)$ is the Dirac delta function. 
It follows from the definition that $\int_{\Omega} \phi'(x, y)\dd y = \Prob{P'}{x}$, i.e., the first marginal of $\phi'$ is $\Prob{P'}{\cdot}$. This also implies that $\phi'$ is a valid joint distribution because 
{\sf (i)} $\phi'(x,y)\geq0$ for all $(x,y)\in\Omega\times\Omega$, and
{\sf (ii)} $\int_{\Omega\times\Omega}\phi'(x,y)\dd x \dd y = \int_{\Omega}\Prob{P'}{x}\dd x = 1$. 

Let the second marginal of $\phi'$ be $Q'$. We show in \Claimref{TV_dist_Q-Qprime} in \Appendixref{wasserstein} that $\Delta(Q, Q')\leq\gamma-\gamma_1$.

The only thing left to prove is to show that $\Winf{}(P', Q') = \beta$ for the above constructed $P'$ and $Q'$. 
First we show $\Winf{}(P', Q') \ge \beta$ and then show $\Winf{}(P', Q') \le \beta$.
\begin{itemize}
\item {\bf Showing $\Winf{}(P', Q') \ge \beta$:} This follows from the following claim, which we prove in \Appendixref{wasserstein}.
\begin{claim}\label{clm:wass_alternate}
For distributions $P$ and $Q$ over a metric space $(\Omega,\met)$ and $\gamma\in [0,1]$, we have
\begin{align}\label{eq:wass_alternate}
\Winf{\gamma}(P,Q) \quad= \displaystyle \inf_{\substack{\hat{P},\hat{Q}:\\ \Delta(P,\hat{P}) + \Delta(Q,\hat{Q})\leq\gamma}} \Winf{} (\hat{P},\hat{Q}).
\end{align}
\end{claim}

Now, since $P',Q'$ satisfy $\Delta(P,P') + \Delta(Q,Q')\leq\gamma$, we have $\Winf{\gamma}(P,Q)\leq\Winf{}(P',Q')$. Since $\Winf{\gamma}(P,Q)=\beta$, we have shown that $\Winf{}(P',Q')\geq\beta$.

\item {\bf Showing $\Winf{}(P', Q') \le \beta$:}
For the sake of contradiction, let us assume that $\Winf{}(P', Q') > \beta$. 
Then there is a pair $(x, y) \in \Omega^2$ such that $\phi'(x, y) > 0$ and $\met(x, y) > \beta$. 
This implies that $\phi_{opt}(x, y) = 0$, because, otherwise, we would have $\Winf{\gamma}(P, Q) > \beta$, 
which contradicts our hypothesis that $\Winf{\gamma}(P, Q) = \beta$. 
So, we know that $\phi'(x, y) > 0$ and $\phi_{opt}(x, y) = 0$. From the definition of $\phi'$, this is only possible if $\Prob{P_{opt}}{x} = 0$ and $\Prob{P'}{x}\delta(x -  y) > 0$. This can happen only if $x =  y$, but this implies $\met(x, y) = 0 \le \beta$, which is a contradiction. Hence $\Winf{}(P', Q') \le \beta$.
\end{itemize}
This completes the proof of \Lemmaref{wass-marginal-loss}.
\end{proof}

Now we are ready to prove \Lemmaref{wass-triangle}.
Let $\Winf{\gamma_1}(P, Q) = \beta_1$ and $\Winf{\gamma_2}(Q, R) = \beta_2$.
It follows from \Lemmaref{wass-marginal-loss} that there exists a distribution $P'$ such that $\Delta(P, P') \le \gamma_1$ and $\Winf{}(P', Q) = \beta_1$. Similarly, there exists a distribution $R'$ such that $\Delta(R, R') \le \gamma_2$ and $\Winf{}(Q, R') = \beta_2$.
Using these, we have from \Lemmaref{Winf_triangle} that $\Winf{}(P', R') \leq \beta_1 + \beta_2$.

Now, the result follows from the following set of inequalities.
\begin{align*}
\Winf{\gamma_1+\gamma_2}(P,R) 
\ \stackrel{\text{(d)}}{=}\hspace{-0.5cm}\displaystyle \inf_{\substack{\hat{P},\hat{R}:\\ \Delta(P,\hat{P}) + \Delta(R,\hat{R})\leq\gamma_1+\gamma_2}}\hspace{-0.5cm} \Winf{} (\hat{P},\hat{R}) 
\ \stackrel{\text{(e)}}{\leq} \ \Winf{}(P',R') 
\ \leq\ \beta_1 + \beta_2 
\ = \ \Winf{\gamma_1}(P, Q) + \Winf{\gamma_2}(Q, R),
\end{align*}
where (d) follows from \Claimref{wass_alternate}
and (e) follows because $P',R'$ satisfy $\Delta(P,P') + \Delta(R,R')\leq\gamma_1+\gamma_2$.

This concludes the proof of \Lemmaref{wass-triangle}.

\subsection{Proof of \Theoremref{compose-accuracy} -- Composition Theorem for Flexible Accuracy}\label{sec:compose-accuracy_proof}

The following lemma will be useful in proving \Theoremref{compose-accuracy}. 
It translates the definition of distortion sensitivity
(\Definitionref{dist-sens}) to apply to distortion of input distributions.
We prove it in \Appendixref{comp-accuracy}.

\begin{lem}\label{lem:distsens-distrib}
Suppose $f: A \to B$ has distortion sensitivity \distsens{f}{} w.r.t.\ $(\dn_1,\dn_2)$. 
For all r.v.s $X_0$ over $A$ and $Y$ over $B$ such that $\dnx_2(f(X_0),\prob{Y}) \le \alpha$ for some $\alpha\geq0$, there must exist a r.v.\ $X$ over $A$ such that $Y=f(X)$ and $\dnx_1(\prob{X_0}, \prob{X}) \le \distsens{f}{}(\alpha)$, provided $\distsens{f}{}(\alpha)$ is finite.
\end{lem}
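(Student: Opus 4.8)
The plan is to reduce this distributional claim to the pointwise statement of \Definitionref{dist-sens} by disintegrating the hypothesis along $X_0$: for each value $x$ that $X_0$ takes I will extract a conditional version $Y_x$ of $Y$ that is an $\alpha$-distortion of $f(x)$, invoke the definition of $\distsens{f}{}$ at that single point $x$ to obtain a witness r.v.\ $X_x$, and then stitch the $X_x$ together into one r.v.\ $X$. Throughout I follow the convention of Section~\ref{sec:proofs} that the relevant infima/suprema are attained (and suppress measurability of the selection $x\mapsto X_x$).

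First I would unpack the hypothesis. Write $P:=\prob{f(X_0)}$ and $Q:=\prob{Y}$; since $\dnx_2(f(X_0),\prob{Y})\le\alpha$, there is a coupling $\psi\in\Phi^0(P,Q)$ with $\dn_2(u,v)\le\alpha$ for every $(u,v)\in\support(\psi)$. Running $f$ on $X_0$ also yields a joint law of a pair $(X_0,U)$ whose $U$-marginal is $P$ and for which $U$ conditioned on $X_0=x$ is distributed as $f(x)$. As this joint and $\psi$ share the $U$-marginal $P$, the standard gluing lemma for couplings produces a joint law $\nu$ of a triple $(X_0,U,Y)$ whose $(X_0,U)$-marginal is the former and whose $(U,Y)$-marginal is $\psi$ (sample $U\sim P$, then sample $X_0$ and $Y$ conditionally independently given $U$). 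Under $\nu$, $X_0$ has law $\prob{X_0}$, $Y$ has law $Q$, and every realization satisfies $\dn_2(U,Y)\le\alpha$.

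Next I would disintegrate $\nu$ along $X_0$. Fix $x\in\support(X_0)$, let $Y_x$ denote the law of $Y$ conditioned on $X_0=x$, and consider the conditional law of $(U,Y)$ given $X_0=x$. From the form of $\nu$, its $U$-marginal is $f(x)$ and its $Y$-marginal is $Y_x$, and it is supported on pairs with $\dn_2\le\alpha$; hence it witnesses $\dnx_2(f(x),\prob{Y_x})\le\alpha$. Moreover, averaging $Y_x$ over the law of $X_0$ returns $Y$. Since $\distsens{f}{}(\alpha)$ is finite, \Definitionref{dist-sens} applied at the point $x$ gives a r.v.\ $X_x$ over $A$ with $f(X_x)=Y_x$ and $\dnx_1(x,\prob{X_x})\le\distsens{f}{}(\alpha)$; because $x$ is a point distribution, the last inequality means $\dn_1(x,x')\le\distsens{f}{}(\alpha)$ for every $x'\in\support(X_x)$.

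Finally I would re-assemble: define $X$ by sampling $X_0=x$ and then $X\sim\prob{X_x}$, and put $Y:=f(X)$. Then $\prob{Y}=\sum_x\prob{X_0}(x)\,\prob{f(X_x)}=\sum_x\prob{X_0}(x)\,\prob{Y_x}=Q$, so $Y$ has the prescribed law and $Y=f(X)$ holds by construction; and $(X_0,X)$ is a coupling of $\prob{X_0}$ with $\prob{X}$ supported on pairs $(x,x')$ with $\dn_1(x,x')\le\distsens{f}{}(\alpha)$, whence $\dnx_1(\prob{X_0},\prob{X})\le\distsens{f}{}(\alpha)$, as required. The main obstacle is the gluing/disintegration step that localizes the hypothesis to individual points of $A$ (constructing $\nu$ and reading off $\dnx_2(f(x),\prob{Y_x})\le\alpha$ for each $x$); once the problem is pointwise, invoking \Definitionref{dist-sens} and recombining the witnesses is routine bookkeeping.
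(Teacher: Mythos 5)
Your proof is correct and takes essentially the same route as the paper's: fix a coupling witnessing $\dnx_2(f(X_0),\prob{Y})\le\alpha$, disintegrate along $X_0$ to localize the hypothesis to $\dnx_2(f(x_0),\prob{Y_{x_0}})\le\alpha$ at each $x_0\in\support(X_0)$, apply the pointwise definition of $\distsens{f}{}$ to obtain witnesses $X_{x_0}$, and recombine them into $X$ via a mixture. The only difference is that you spell out the gluing step that builds the joint law of the triple $(X_0,U,Y)$ before disintegrating, whereas the paper compresses this into the notation $\phi_{x_0}=\phi\mid\{X_0=x_0\}$.
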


Now we prove \Theoremref{compose-accuracy}, which is essentially formalizing the pictorial proof given in \Figureref{composition}.

For a given element $x \in A$, since $\M_1$ is $(\alpha_1 , \beta_1, \gamma_1)$-accurate mechanism for $f_1$, we have from \Definitionref{alpha-beta-gamma-accu} that there exists a r.v.\ $X'$ such that
\begin{align}
\dnx_1(x, \prob{X'}) &\le \alpha_1,            \label{eq:compose-accuracy-dnx1}\\
\Winf{\gamma_1}(f_1(X'),\M_1(x))&\leq\beta_1.  \label{eq:compose-accuracy-W1}
\end{align}
Now, applying the mechanism $\M_2$ on $\M_1(x)$, we incur an overall error of at most $\tau_{\M_2,f_2}^{\alpha_2, \gamma_2}(\beta_1,\gamma_1)$ to the output of function $f_2$ over a distorted input (see \Definitionref{err-sens}). Therefore, there exists a r.v.\ $Y^*$ such that,
\begin{align}
    \dnx_2( f_1(X'), \prob{Y^*}) &\leq \alpha_2,          \label{eq:compose-accuracy-dnx2}\\
    \Winf{\gamma_2}(f_2 (Y^*),\M_2(\M_1(x))) &\leq \tau_{\M_2,f_2}^{\alpha_2, \gamma_2}(\beta_1,\gamma_1).   \label{eq:compose-accuracy-W2}
\end{align}
Since $\distsens{f_1}{}(\alpha_2)$ is finite (by assumption), it follows from \eqref{eq:compose-accuracy-dnx2} and \Lemmaref{distsens-distrib} that there exists a r.v.\ $X$ over $A$ such that
\begin{align}
\dnx_1(\prob{X'}, \prob{X}) &\leq \distsens{f_1}{}(\alpha_2), \label{eq:compose-accuracy_interim 2} \\
Y^* &= f_1(X). \label{eq:compose-accuracy_interim 3}
\end{align}
Since $\dn_1$ is a quasi-metric, it follows that $\dnx_1$ is also a quasi-metric; see \Lemmaref{dnx-quasi-metric} in \Appendixref{comp-accuracy} for a proof. This, together with
\eqref{eq:compose-accuracy-dnx1} and \eqref{eq:compose-accuracy_interim 2}, implies that
\begin{align}
\dnx_1(x, \prob{X}) \leq \alpha_1 + \distsens{f_1}{}(\alpha_2). \label{eq:compose-accuracy_interim1}
\end{align}
Substituting $Y^* = f_1(X)$ from \eqref{eq:compose-accuracy_interim 3} into \eqref{eq:compose-accuracy-W2} gives
\begin{align}\label{eq:compose-accuracy_interim5}
\Winf{\gamma_2}(f_2 (f_1(X)),\M_2(\M_1(x))) \leq \tau_{\M_2,f_2}^{\alpha_2, \gamma_2}(\beta_1,\gamma_1).
\end{align}
\eqref{eq:compose-accuracy_interim1} and \eqref{eq:compose-accuracy_interim5} imply that $\M_2\circ\M_1$ is $(\alpha,\beta,\gamma)$-accurate for $f_2\circ f_1$ w.r.t.\ the distortion measure $\dn_1$ on $A$ and metric $\met_2$ on $C$, where $\alpha=\alpha_1 + \distsens{f_1}{}(\alpha_2)$, $\beta=\tau_{\M_2,f_2}^{\alpha_2, \gamma_2}(\beta_1,\gamma_1)$, and $\gamma=\gamma_2$.

This concludes the proof of \Theoremref{compose-accuracy}.

\subsection{Proof of \Theoremref{hist-priv-accu} -- Truncated Laplace Mechanism for Histograms}\label{sec:hist-priv-accu-proof}
First we prove the flexible accuracy part, which is easy, and then we will move on to proving the privacy part, which is more involved than the existing privacy analysis of differentially-private histogram mechanisms.
We also note that the requirement of $|\support(\x)|\le t$ is only needed the accuracy result.

\paragraph{Flexible accuracy.}
Note that the noise added by \mtrlap{\thresh,\eps,\G} in each bar of the histogram is bounded by $-q=-\thresh |\x|$, which can lead to a drop of at most $\thresh$ fraction of total number of elements from each bar. Combined with the fact that $|\support(\x)|\le t$, the fraction of the maximum fraction of elements that can be dropped is $\thresh t$. Hence, \mtrlap{\thresh,\eps,\G} is $(\thresh t,0,0)$-accurate. \\

\paragraph{Differential privacy.} Our proof of the privacy part of \Theoremref{hist-priv-accu} depends on the following lemma.
\begin{lem}\label{lem:hist-epdel}
For any $\nu\geq0,\eps>0$ and on inputs $\x$ s.t.\ $|\x| \ge \frac{2}{\eps\thresh} \ln\left(1 + \frac{1 -  e^{-\frac{\eps\thresh}{2}}}{e^{\eps(\nu + \frac{\thresh}{2})} - 1} \right)$, \mtrlap{\thresh,\eps,\G} is 
$\left((1 + \nu)\eps,\frac{e^{\eps} - 1}{2(e^{\nicefrac{\eps q}{2}} - 1)}\right)$-DP w.r.t.\ \nhist, where $q=\tau|\x|$.
\end{lem}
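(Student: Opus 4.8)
The plan is to reduce the privacy analysis of the full histogram mechanism to a one-dimensional analysis of the shifted-truncated Laplace noise on a single bar, and then bound the worst-case privacy loss between two neighboring histograms. Since neighboring histograms $\x\nhist\x'$ differ by a single element, there is exactly one ground-set element $g$ whose count changes, say from $k$ to $k+1$ (all other bars are identical, but note the noise distribution $\Lnoise$ depends on $q=\tau|\x|$, which also changes by $\tau$, so the noise law is slightly different on $\x$ and $\x'$ across \emph{all} bars). First I would address this subtlety: since $|\x|$ and $|\x'|$ differ by $1$, the parameters $q=\tau|\x|$ and $q'=\tau|\x'|$ differ by $\tau$, so I need to compare, bar by bar, the output of $\max(0,\lfloor x(g)+z_g\rceil)$ under $z_g\sim\Lnoise$ versus under $z_g\sim\Dnoise$ where $\Dnoise$ has the shifted parameter. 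For bars where the count is unchanged this is a ``same input, slightly different noise'' comparison; for the one bar where the count changes it is a ``shifted input, slightly different noise'' comparison. I would handle the many-unchanged-bars contribution by showing the per-bar privacy loss is small enough that it sums (via basic/group composition or a direct product argument over independent coordinates) to within $(1+\nu)\eps$ total — actually, more carefully, since only one individual moved, the standard DP accounting says we only pay for the bars that differ, but here the noise law changes everywhere, so the cleanest route is a direct likelihood-ratio computation on the joint distribution over all bars. I expect this bookkeeping — isolating how much the change in $q$ (versus the change in a single count) contributes to the overall privacy loss — to be the main obstacle.

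The core computation is the single-bar analysis. Fix a bar with true count $k$ and compare the distribution of $Y:=\max(0,\lfloor k+z\rceil)$ (with $z\sim\Lnoise$, $q=\tau|\x|$) against the distribution on the neighbor. For the bar whose count increments to $k+1$, I would compute, for each possible output value $y$, the ratio of $\Pr[Y=y\mid k]$ to $\Pr[Y=y\mid k+1]$ and vice versa. The shifted-truncated Laplace density is $\frac{1}{1-e^{-\eps q/2}}\Lap(z\mid -q/2,1/\eps)$ on $[-q,0]$; on the bulk where the two supports overlap and both densities are in the ``exponentially decaying'' regime, the ratio of densities at a point and its shift-by-one is $e^{\pm\eps}$ up to the normalization factor $\frac{1}{1-e^{-\eps q/2}}$, which (since $q=\tau|\x|\ge 2/\eps\cdot(\dots)$) is close to $1$. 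The delicate output value is $y=0$: here $\Pr[Y=0\mid k]=\int_{-q}^{-k}\sigma=1-p_k$ and on the incremented bar it is $1-p_{k+1}$, and by the construction in \eqref{eq:optprob} these satisfy exactly the DP-type inequalities with slack $\delta=\frac{e^\eps-1}{2(e^{\eps q/2}-1)}$ — this is precisely where the additive $\delta$ enters. I would carefully verify that the mass lost near the truncation boundary and the discrepancy caused by $q\ne q'$ is absorbed either into the $(1+\nu)$ factor on $\eps$ (hence the hypothesis lower-bounding $|\x|$ in terms of $\nu$) or into $\delta$.

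Concretely, the steps in order: (1) write the output distribution of \mtrlap{\thresh,\eps,\G} on input $\x$ as a product over bars of the law of $\max(0,\lfloor\x(g)+z_g\rceil)$; (2) for a neighboring pair $\x\nhist\x'$, identify the single bar $g^\star$ where the count differs and note $q'=q\pm\tau$; (3) bound the likelihood ratio contributed by each unchanged bar — here the ratio comes purely from $\Lnoise$ vs.\ $\Dnoise$ (the shift in $q$), and I would show it is at most $e^{\nu\eps/(\text{something})}$ per bar or, better, bound the \emph{total} contribution of all unchanged bars by $e^{\nu'\eps}$ for a suitable split $\nu=\nu'+1$ or similar — this is exactly why the hypothesis has the precise logarithmic lower bound on $|\x|$; (4) bound the likelihood ratio from the changed bar $g^\star$, splitting into the output-is-$0$ case (giving the $\delta=\frac{e^\eps-1}{2(e^{\eps q/2}-1)}$ additive term via \eqref{eq:optprob}) and the output-is-positive case (giving a multiplicative $e^\eps$ up to normalization corrections); (5) combine via the standard argument $\Pr[\M(\x)\in S]\le e^{(1+\nu)\eps}\Pr[\M(\x')\in S]+\delta$, absorbing all normalization and boundary corrections into the stated parameters. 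The main obstacle remains Step (3): controlling the privacy cost of the ``$q$ shifts by $\tau$ on every bar'' effect, which is the genuinely new difficulty compared to a fixed-noise histogram mechanism and is what forces the lower bound on the dataset size.
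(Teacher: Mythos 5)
Your overall strategy matches the paper's: isolate the single bar $i^\star$ where the counts differ, bound the density ratio for the ``bulk'' of the output space to get the $e^{(1+\nu)\eps}$ multiplicative factor (with the hypothesis on $|\x|$ doing exactly the bookkeeping you anticipate), and show that the remaining boundary region has mass at most $\delta = \frac{e^\eps - 1}{2(e^{\eps q/2}-1)}$. One simplification you miss: the paper first observes that it suffices to prove DP for the \emph{un-rounded, un-truncated} mechanism $\hat\y(i) := \x(i) + z_i$, since $\max(0,\lfloor\cdot\rceil)$ is post-processing. This lets it work directly with the continuous noise densities and partition the output space by the value of the real number $s_{i^\star} - x'_{i^\star}$ (into three ranges $S_0, S_1, S_2$), rather than reasoning about the discrete event $\{Y=0\}$ as you do. You should adopt this reduction; your ``for each possible output value $y$'' bookkeeping with the rounded variable is strictly more painful and introduces spurious edge cases. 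Also note that the paper's $\delta$-contributing region ($S_1$, where $-q' \le s_{i^\star}-x'_{i^\star} < -q'+(1-\tau)$) is the set on which the $\x'$-side density $\Dnoise$ is positive but the shifted argument $s_{i^\star}-x_{i^\star}$ falls strictly below the $\x$-side support $[-q,0]$; this is a cleaner characterization than ``output is $0$,'' and it makes the integral bound in the $\delta$ step immediate.

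The obstacle you rightly single out as the crux --- that $q = \tau|\x|$ changes by $\tau$ between $\x$ and $\x'$, so the noise law changes on \emph{every} bar, not just $i^\star$ --- is indeed where the argument is delicate, and your proposal does not actually resolve it; you only describe where you expect it to go. It is worth noting that the paper's own treatment of this point is extremely terse: in the proof of its key claim, the factor $\prod_{i\neq i^\star}\Dnoise(s_i-x'_i)$ is replaced by $\prod_{i\neq i^\star}\Lnoise(s_i-x_i)$ with the one-line justification ``$x_i = x'_i$ for $i\neq i^\star$.'' That justification only equates the \emph{arguments}; it does not address the change in the density itself (from parameter $q'$ to $q$), and a direct computation shows $\Dnoise(z) > \Lnoise(z)$ at $z=0$ (and in a neighborhood of the mode), so the pointwise inequality as stated does not hold. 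So your instinct that Step~(3) is the genuine difficulty is vindicated even by the paper's own write-up, but your proposal stops at naming the difficulty rather than producing the needed bound. Concretely, what is missing is an argument controlling $\prod_{i\neq i^\star}\Dnoise(s_i - x_i)/\Lnoise(s_i - x_i)$ (or its integral over $S_0\cup S_2$) uniformly, and folding whatever excess arises from the per-bar normalization ratio $\frac{1-e^{-\eps q/2}}{1-e^{-\eps q'/2}}$ raised to the power $|\support(\x)|$ into either the $(1+\nu)$ slack or into $\delta$. Without this, the proposal has a gap at precisely the step you yourself flag as the main obstacle.
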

\begin{proof}
We shall in fact prove that a mechanism that outputs $\hat\y$ with $\hat\y(i) :=
\x(i)+z_i$ (without rounding, and without replacing negative values with 0)
is already differentially private as desired. Then, since the actual mechanism is a
post-processing of this mechanism, it will also be differentially
private with the same parameters.

Let $\x$ and $\x'$ be two neighbouring histograms.
For simplicity, for every $i\in\G$, define $x_i:=\x(i)$ and $x_i':=\x'(i)$.
Since $\x\sim\x'$, there exists an
$i^*\in\G$ such that $|x_{i^*}-x_{i^*}'|= 1$ and that $x_i=x_i'$ for every $i\in \G\setminus\{i^*\}$. 
Without loss of generality, assume that $x_{i^*}=x_{i^*}'+1$, which implies $|\x| = |\x'|+1 = n + 1$.
Let $q = \thresh (n + 1)$ and $q' = \thresh n$. 
For simplicity of notation, we will denote $\support(\y)$ by $\G_{\y}$ for any $\y\in\{\x,\x'\}$.

In order to prove the lemma, for every subset $S\subseteq\Hspace\G$, we need to show that 
\begin{align}
\Pr[\mtrlap{\thresh,\eps,\G}(\x')\in S] \leq e^{(1 + \nu)\eps}\Pr[\mtrlap{\thresh,\eps,\G}(\x)\in S]+\delta, \label{eq:hist-dp_interim2} \\
\Pr[\mtrlap{\thresh,\eps,\G}(\x)\in S] \leq e^{(1 + \nu)\eps}\Pr[\mtrlap{\thresh,\eps,\G}(\x')\in S]+\delta, \label{eq:hist-dp_interim1}
\end{align}
where $\delta=\frac{e^{\eps} - 1}{2(e^{\eps \nicefrac{q}{2}} - 1)}$.
We only prove \eqref{eq:hist-dp_interim2}; \eqref{eq:hist-dp_interim1} can be shown similarly.

Fix an arbitrary subset $S\subseteq\Hspace\G$.
Since $\mtrlap{\thresh,\eps,\G}$ adds independent noise to each bar of the histogram according to $\lnoise{z}$, we have that for every $\s\in\Hspace\G$, we have $\prob{\mtrlap{\thresh,\eps,\G}(\x)}(\s)=\prod_{i\in\G_{\x}} \lnoise{s_i - x_i}$ where $s_i = \s(i)$. Thus, we have
\begin{align}
\Pr[\mtrlap{\thresh,\eps,\G}(\x)\in S] &= \int_{S} \big[\prod_{i\in\G_{\x}} \lnoise{s_i - x_i}\big]\dd\s, \label{eq:app-mech2-eps-delta-dp2} \\
\Pr[\mtrlap{\thresh,\eps,\G}(\x')\in S] &= \int_{S} \big[\prod_{i\in\G_{\x'}} \dnoise{s_i - x_i'}\big]\dd\s. \label{eq:app-mech2-eps-delta-dp1}
\end{align}
Now, using the fact that $\forall k \neq i^*, x_k = x_k'$ and $x_{i^*} = x_{i^*}' + 1$, we partition $S$ into three disjoint sets:
\begin{enumerate}
\item $S_0:=\{\s\in\Hspace\G: s_{i^*} - x_{i^*}' < -q'\} \cup \{\s\in\Hspace\G: 0 < s_{i^*} - x_{i^*}'\}$.
\item $S_1:=\{\s\in\Hspace\G: -q' \le s_{i^*} - x_{i^*}' < -q' + (1 - \thresh)\}$.
\item $S_2:=\{\s\in\Hspace\G: -q' + (1 - \thresh) \le s_{i^*} - x_{i^*}' \le 0)\}$.
\end{enumerate}
The proof of \eqref{eq:hist-dp_interim2} is a simple corollary of the following two claims, which we prove in \Appendixref{histogram}.
\begin{claim}\label{clm:hist-epdel-c2}
$\Pr[\mtrlap{\thresh,\eps,\G}(\x')\in S_0\cup S_2]\leq e^{(1 + \nu)\eps}\Pr[\mtrlap{\thresh,\eps,\G}(\x)\in S_0\cup S_2]$, provided $n \ge \frac{2}{\eps\thresh} \ln\left(1 + \frac{1 -  e^{-\frac{\eps\thresh}{2}}}{e^{\eps(\nu + \frac{\thresh}{2})} - 1} \right)$.
\end{claim}
\begin{claim}\label{clm:hist-epdel-c3}
$\Pr[\mtrlap{\thresh,\eps,\G}(\x')\in S_{1}]\leq \delta$, where $\delta=\frac{e^{\eps} - 1}{2(e^{\nicefrac{\eps q}{2}} - 1)}$.
\end{claim}

The above two claims together imply \eqref{eq:hist-dp_interim2} as follows:
\begin{align}
\Pr[\mtrlap{\thresh,\eps,\G}(\x')\in S] &= \Pr[\mtrlap{\thresh,\eps,\G}(\x')\in S_0\cup S_2] + \Pr[\mtrlap{\thresh,\eps,\G}(\x')\in S_1]\nonumber\\
&\leq e^{(1+\nu)\eps}\Pr[\mtrlap{\thresh,\eps,\G}(\x)\in S_0\cup S_2] + \delta \nonumber\\
&\leq e^{(1+\nu)\eps}\Pr[\mtrlap{\thresh,\eps,\G}(\x)\in S] + \delta. \tag{Since $S_0\cup S_2\subseteq S$}
\end{align}
This completes the proof of \Lemmaref{hist-epdel}.
\end{proof}

In \Lemmaref{hist-epdel}, $\nu$ is a free variable. By taking $\nu=0$, we get the following result in \Corollaryref{simpler-dp-hist}. We can also get different guarantees by restricting to $\nu>0$; see \Remarkref{priv-nu-bigger-0} below for this.

\begin{corol}\label{corol:simpler-dp-hist}
For any $\eps,\tau,\x$ such that $\tau|\x|\eps \ge 2$, \mtrlap{\thresh,\eps,\G} is 
$\left(\eps,\frac{e^{\eps} - 1}{2(e^{\nicefrac{\eps q}{2}} - 1)}\right)$-DP w.r.t.\ \nhist, where $q=\tau|\x|$.
\end{corol}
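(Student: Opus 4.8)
The plan is to obtain \Corollaryref{simpler-dp-hist} as the special case $\nu = 0$ of \Lemmaref{hist-epdel}. Substituting $\nu = 0$ into \Lemmaref{hist-epdel} immediately yields the privacy parameters $\left(\eps, \frac{e^{\eps}-1}{2(e^{\eps q/2}-1)}\right)$ with $q = \tau|\x|$ (here the algorithm's threshold parameter $\thresh$ plays the role of $\tau$). The only remaining task is to verify that the corollary's hypothesis $\tau|\x|\eps \ge 2$ implies the lemma's hypothesis $|\x| \ge \frac{2}{\eps\thresh}\ln\left(1 + \frac{1 - e^{-\eps\thresh/2}}{e^{\eps(\nu+\thresh/2)}-1}\right)$ in the case $\nu = 0$.

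The key (and essentially the only) computation is to simplify the argument of the logarithm at $\nu = 0$. Writing $a := e^{\eps\thresh/2}$, the fraction becomes $\frac{1 - a^{-1}}{a-1} = \frac{(a-1)/a}{a-1} = a^{-1} = e^{-\eps\thresh/2}$, so the lemma's condition collapses to $|\x| \ge \frac{2}{\eps\thresh}\ln\left(1 + e^{-\eps\thresh/2}\right)$. Since $\eps\thresh > 0$ we have $e^{-\eps\thresh/2} < 1$, hence $\ln(1 + e^{-\eps\thresh/2}) < \ln 2 < 1$, and therefore $\frac{2}{\eps\thresh}\ln(1 + e^{-\eps\thresh/2}) < \frac{2}{\eps\thresh}$. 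Consequently, whenever $\eps\tau|\x| \ge 2$, i.e.\ $|\x| \ge \frac{2}{\eps\thresh}$, the lemma's hypothesis holds, and \Lemmaref{hist-epdel} with $\nu = 0$ gives exactly the statement of the corollary.

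I do not anticipate any real obstacle: the entire substance of \Corollaryref{simpler-dp-hist} is carried by \Lemmaref{hist-epdel} (and in turn by \Claimref{hist-epdel-c2} and \Claimref{hist-epdel-c3}), and the derivation above amounts to a one-line algebraic simplification plus the observation $\ln 2 < 1$. The only points warranting a careful second look are the bookkeeping identifications — that $\thresh$ is being used interchangeably with $\tau$, and that the symbol $q$ in both statements consistently denotes the perturbed database size $\tau|\x|$ — so that the two $\delta$ expressions match verbatim.
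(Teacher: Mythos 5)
Your proof is correct and follows essentially the same route as the paper's: both start from \Lemmaref{hist-epdel} at $\nu=0$, simplify $\frac{1-e^{-\eps\thresh/2}}{e^{\eps\thresh/2}-1}=e^{-\eps\thresh/2}$, and then check that $\tau|\x|\eps\ge 2$ dominates the lemma's lower bound on $|\x|$. The only cosmetic difference is how the logarithm is bounded — you use $\ln(1+e^{-\eps\thresh/2})<\ln 2<1$, while the paper uses $\ln(1+x)\le x\le 1$ — and both are equally valid.
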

\begin{proof}
Substituting $\nu=0$ in \Lemmaref{hist-epdel} gives that when $\x$ satisfies $|\x| \ge \frac{2}{\eps\thresh} \ln\left(1 + \frac{1 -  e^{-\frac{\eps\thresh}{2}}}{e^{\frac{\eps\thresh}{2}} - 1} \right)$, we have that \mtrlap{\thresh,\eps,\G} is $\left(\eps,\frac{e^{\eps} - 1}{2(e^{\nicefrac{\eps n\thresh}{2}} - 1)}\right)$-DP w.r.t.\ \nhist. Now, the corollary follows because 
$\frac{2}{\eps\tau}\geq\frac{2}{\eps\thresh} e^{-\frac{\eps\thresh}{2}} \geq \frac{2}{\eps\thresh} \ln\left(1 + e^{-\frac{\eps\thresh}{2}} \right) = \frac{2}{\eps\thresh} \ln\left(1 + \frac{1 -  e^{-\frac{\eps\thresh}{2}}}{e^{\frac{\eps\thresh}{2}} - 1} \right)$,
where the first inequality uses $x \ge \ln(1+x)$ for $x>0$.
\end{proof}

\begin{remark}\label{remark:priv-nu-bigger-0}
We show in \Lemmaref{hist-priv-nu-bigger-0} in \Appendixref{histogram} that by restricting \Lemmaref{hist-epdel} to $\nu>0$, we can get a weaker condition than what we have in \Corollaryref{simpler-dp-hist} with a slight increase in the privacy parameter $\eps$. In particular, we show that for all $\eps,\x$ such that $\eps\nu\geq\ln\left(1+\frac{1}{|\x|}\right)$, \mtrlap{\thresh,\eps,\G} is $\left((1+\nu)\eps,\frac{e^{\eps} - 1}{2(e^{\nicefrac{\eps q}{2}} - 1)}\right)$-DP w.r.t.\ \nhist. We can take $\nu=1$ here.
\end{remark}

Now the privacy part of \Theoremref{hist-priv-accu} follows because $q=\tau|\x|$ and $\tau|\x|\eps\geq2$ (note that $\tau|\x|\eps$ is typically a much bigger number than $2$ as it scales with the size of the dataset), which implies that $\frac{e^{\eps} - 1}{2(e^{\nicefrac{\eps q}{2}} - 1)}= \eps e^{-\Omega(\eps\tau|\x|)}$. Hence, \mtrlap{\thresh,\eps,\G} is $(\eps,\eps e^{-\Omega(\eps\tau|\x|)})$-DP.

This completes the proof of \Theoremref{hist-priv-accu}.

\subsection{Proof of \Theoremref{bucketing-hist} -- Bucketed Truncated Laplace Mechanism}\label{sec:bucketHist-priv-accu-proof}
Note that $\mBhist{\alpha,\beta,[0,B)} = \mtrlap{\thresh,\eps,[0,B)} \circ \mbuc{w,[0,B)}$, with $w = 2\beta$ and $\thresh = \frac{\alpha}{t}$, where $t = \lceil \frac{B}{2\beta} \rceil$. 
We will use \Theoremref{compose-DP} to show the DP guarantee and \Theoremref{compose-accuracy} to show the flexible accuracy guarantee of $\mBhist{\alpha,\beta,[0,B)}$.

\paragraph{Differential privacy.}
First note that $\mbuc{w,[0,B)^d}$ is a neighborhood-preserving mechanism w.r.t.\ the neighborhood relation \nhist. This follows because adding/removing any one element changes the output of bucketing by at most one element; hence, neighbors remain neighbors after bucketing. Now, since $\mbuc{w,[0,B)}$ outputs a histogram whose support size is at most $t=\lceil\frac{B}{w}\rceil$, and $\mtrlap{\thresh,\eps,[0,B)}$ on input histograms with support size at most $t$ is $\left(\eps,\eps e^{-\Omega(\eps\tau n)}\right)$-differentially private w.r.t.\ \nhist, it follows from \Theoremref{compose-DP} that $\mBhist{\alpha,\beta,[0,B)}$ is also differentially private w.r.t.\ \nhist with the same parameters.

\paragraph{Flexible accuracy.}
First we show in the following claim that the flexible accuracy guarantee of the bucketing mechanism $\mbuc{w,[0,B)}(\x)$, and we prove it in \Appendixref{bucketHist_proofs}.

\begin{claim}\label{clm:bucket-accuracy}
$\mbuc{w,[0,B)}$ is $\left(0, \frac{w}{2}, 0\right)$-accurate for the identity function $f_{\emph{id}}$ over $\Hspace{[0,B)}$ w.r.t~ the metric $\dhistx$. 
\end{claim}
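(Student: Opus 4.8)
The plan is to unwind the definition of $(0,\tfrac w2,0)$-accuracy until it reduces to a single inequality about the $\infty$-Wasserstein distance over $\R$, and then to establish that inequality by exhibiting one explicit coupling coming from the bucketing map itself.

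First I would note that since $\alpha=0$, for every input histogram $\x$ the only random variable $X'$ with $\dnx(\x,\prob{X'})\le 0$ is the point distribution on $\x$ (because $\dn$ is a quasi-metric, so $\dn(\x,\x')=0$ forces $\x'=\x$), and hence $f_{\mathrm{id}}(X')$ is the point distribution on $\x$; likewise $\mbuc{w,[0,B)}(\x)$ is the point distribution on $\y:=\mbuc{w,[0,B)}(\x)$ since the bucketing map is deterministic. Using the fact (recorded after \Definitionref{distortion}, and equally valid for $\Winfz$) that $\Winfz$ between two point distributions is just the metric distance, the accuracy condition \eqref{eq:fa-defn} collapses to showing $\sup_{\x\in\Hspace{[0,B)}}\dhist{\y}{\x}\le \tfrac w2$, i.e.\ (unpacking $\dhistx$) to showing $\Winf{}(\tfrac{\y}{|\y|},\tfrac{\x}{|\x|})\le \tfrac w2$ with the standard metric on $\R$, where $|\y|=|\x|=:n$ and the degenerate case $n=0$ is vacuous.

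Next I would build the coupling $\phi$ on $\R\times\R$ that places mass $\tfrac{\x(g)}{n}$ on the pair $(s_g,g)$ for each $g\in\support(\x)$, where $s_g\in S=\{w(i-\tfrac12):i\in[t]\}$ is the unique bucket center with $g-s_g\in[-\tfrac w2,\tfrac w2)$ (such $s_g$ exists and is unique because $[0,B)\subseteq[0,wt)$ with $t=\lceil B/w\rceil$). Checking the two marginals is routine: for a fixed $s\in S$, $\sum_g\phi(s,g)=\sum_{g:\,g-s\in[-w/2,w/2)}\tfrac{\x(g)}{n}=\tfrac{\y(s)}{n}$ by the definition of $\mbuc{w,[0,B)}$ in \Algorithmref{bucketing}, so the first marginal is $\tfrac{\y}{n}$; and each $g$ lies in exactly one bucket, so $\sum_s\phi(s,g)=\tfrac{\x(g)}{n}$, giving second marginal $\tfrac{\x}{n}$. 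Since every pair $(s,g)$ in the support of $\phi$ satisfies $|s-g|\le\tfrac w2$ by construction, $\sup_{(s,g)\leftarrow\phi}|s-g|\le\tfrac w2$, hence $\Winf{}(\tfrac{\y}{n},\tfrac{\x}{n})\le\tfrac w2$, which is exactly what was needed.

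I do not expect a genuine obstacle here; the only points needing a touch of care are the half-open bucket convention (which is why the bound is $\tfrac w2$ rather than a strict inequality) and the empty-histogram case (handled trivially). I would also remark that the argument uses nothing about the distortion measure beyond it being a quasi-metric, so \Claimref{bucket-accuracy} holds w.r.t.\ any measure of distortion on $\Hspace{[0,B)}$, which is convenient when later instantiating it with $\drop$.
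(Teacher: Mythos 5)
Your proof is correct and takes essentially the same approach as the paper: both reduce the claim to bounding $\Winf{}(\tfrac{\y}{|\y|},\tfrac{\x}{|\x|})$ by the maximum displacement of any data point under bucketing, which is $\tfrac w2$. The paper states this bound informally (``the maximum distance any point in $\x$ moves to form $\y$''), whereas you make the coupling explicit and verify its marginals; that is a welcome level of detail but not a different argument.
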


Note that when we apply \Theoremref{hist-priv-accu} to compute the flexible accuracy parameters of the composed mechanism $\mtrlap{\thresh,\eps,[0,B)} \circ \mbuc{w,[0,B)}$, the parameters of the composed mechanism depend on the distortion sensitivity $\distsens{f_1}{}(\alpha_2)$ and the error sensitivity of $\mtrlap{\thresh,\eps,[0,B)}$. We compute them below. \\

\noindent $\bullet$ {\it Distortion sensitivity of $f_1$:}
Since $f_1$ is the identity function $f_{\text{id}}$ over $\Hspace{[0,B)}$, we have (as noted in the first example in \Sectionref{dist-sens}) that $\distsens{f_1}{}(\alpha_2)\leq\alpha_2$. \\

\noindent $\bullet$ {\it Error sensitivity of $\mtrlap{\thresh,\eps,[0,B)}$:}
Note that the bucketing mechanism $\mbuc{w,[0,B)}:\Hspace{[0,B)}\to\Hspace{[0,B)}$ is a deterministic map, and is $(0,\beta,0)$-accurate (see \Claimref{bucket-accuracy}) for computing the identity function $f_{\text{id}}$, where $\beta=\frac{w}{2}$. As mentioned in the first bullet after the statement of \Theoremref{compose-accuracy}, this implies that when computing the error sensitivity of $\mtrlap{\thresh,\eps,[0,B)}$ (which is required for calculating the output error $\beta$ of the composed mechanism $\mtrlap{\thresh,\eps,[0,B)} \circ \mbuc{w,[0,B)}$), we only need to take supremum in \eqref{eq:err-sens} over point distributions $\x,\x'$ such that $\dhist{\x}{\x'}\leq\beta$, where $\dhist{}{}$ is the metric that we use over $\Hspace{[0,B)}$. In other words, in order to compute the error sensitivity of $\mtrlap{\thresh,[0,B)}$, we only need to bound $\sup_{\x, \x': \\ \dhist{\x}{\x'}\leq \beta} \ \inf_{Y:\\ \dnx(\x',Y) \le \alpha}\Winf{}(\mtrlap{\thresh,\eps,\G}(\x),\prob{Y})$. We bound this in \Lemmaref{hist-error-sens} below.
\begin{lem}\label{lem:hist-error-sens}
For any $\alpha,\beta \geq0$, we have 
\[\tau^{\alpha, 0}_{\mtrlap{\thresh,\eps,[0,B)}, f_\id} (\beta, 0)\quad = \sup_{\substack{\x, \x': \\ \dhist{\x}{\x'}\leq \beta}} \
 \inf_{\substack{Y:\\ \dnx(\x',Y) \le \alpha}}
\Winf{}(\mtrlap{\thresh,\eps,\G}(\x),\prob{Y}) \leq \beta\]
w.r.t.\ the distortion \drop and the metric $\dhistx$. Here, input histograms to the mechanism $\mtrlap{\thresh,\eps,\G}$ are restricted to $t$ bars and $\tau = \alpha/t$. 
\end{lem}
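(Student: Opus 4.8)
The plan is to show that on any admissible pair the output of $\mtrlap{\thresh,\eps,[0,B)}$ is, unit for unit, only a slight drop of the input $\x$, and then to carry that drop over to $\x'$ along an optimal $\dhistx$-transport plan so as to construct the required random histogram $Y$. By the reduction stated just before the lemma it suffices to take $\x,\x'$ to be histograms with $\dhist{\x}{\x'}\le\beta$, where $\x$ is supported on the $t$-bar grid; moreover the pairs that actually arise are $\x=\mbuc{w,[0,B)}(x)$ and $\x'=x$, so, bucketing being size-preserving, we may assume $|\x|=|\x'|=:n$. We must produce a random histogram $Y$ such that every realization $\y$ satisfies $\drop(\x',\y)\le\alpha$ and such that $\mtrlap{\thresh,\eps,[0,B)}(\x)$ and $Y$ admit a coupling of $\dhistx$-cost at most $\beta$.

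First I would record the structure of the output. On input $\x$, $\mtrlap{\thresh,\eps,[0,B)}$ returns $\mathbf{v}$ with $\mathbf{v}(g)=\max\bigl(0,\lfloor\x(g)+z_g\rceil\bigr)$ and $z_g\in[-q,0]$ for $q=\thresh n$; hence almost surely $\mathbf{v}\le\x$ pointwise and the total count removed is
\[\textstyle\sum_{g}\bigl(\x(g)-\mathbf{v}(g)\bigr)\ \le\ t\cdot q\ =\ t\,\thresh\,n\ =\ \alpha n,\]
using $\thresh=\alpha/t$ and that $\x$ has at most $t$ bars. Viewing $\x$ as a collection of $n$ ``units'' (each carrying a location in $[0,B)$, with multiplicity), the mechanism thus retains a sub-collection $K$ of units --- those not dropped, whose histogram is $\mathbf{v}$ --- with $|K|\ge(1-\alpha)n$, so in particular $\drop(\x,\mathbf{v})\le\alpha$.

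Next I would carry the drop over to $\x'$. Since $\dhist{\x}{\x'}=\Winf{}(\x/n,\x'/n)\le\beta$ and the two sides are empirical measures on $\R$ of equal weight $n$, an optimal $\infty$-Wasserstein coupling can be realized by a bijection $\sigma$ pairing each unit of $\x$ with a unit of $\x'$ at distance at most $\beta$: a fractional perfect matching that uses only edges of length $\le\beta$ exists by hypothesis, and then an integral one exists because the bipartite transportation polytope with integral margins has integral vertices. Fix such a $\sigma$ together with an arbitrary ordering of the units inside each bar of $\x$, and let $\y$ be the histogram of the units $\sigma(K)$; this makes $\y$ a deterministic function of $\mathbf{v}$. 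Then $\y\le\x'$ and $|\y|=|K|\ge(1-\alpha)n=(1-\alpha)|\x'|$, so $\drop(\x',\y)\le\alpha$; letting $Y$ be the law of $\y$ induced by the mechanism's randomness, $\dnx(\x',Y)=\sup_{\y\in\support(Y)}\drop(\x',\y)\le\alpha$, as needed.

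Finally I would bound the output error. For each realization the restriction $\sigma|_K$ is a bijection between the units of $\mathbf{v}$ and the units of $\y$ that pairs units at distance $\le\beta$; since $|\mathbf{v}|=|\y|$, normalizing both by the same factor turns it into an $\infty$-Wasserstein coupling of $\mathbf{v}/|\mathbf{v}|$ and $\y/|\y|$ of cost $\le\beta$, i.e.\ $\dhist{\mathbf{v}}{\y}\le\beta$. Hence the joint law of $(\mathbf{v},\y)$ couples $\mtrlap{\thresh,\eps,[0,B)}(\x)$ with $Y$ and has $\dhistx$-cost at most $\beta$ almost surely, so $\Winf{}(\mtrlap{\thresh,\eps,[0,B)}(\x),Y)\le\beta$; taking the supremum over all admissible $\x,\x'$ gives $\tau^{\alpha,0}_{\mtrlap{\thresh,\eps,[0,B)},f_\id}(\beta,0)\le\beta$. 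The main obstacle is precisely the integrality step --- ensuring that the drop carried across from $\x$ to $\x'$ produces an honest sub-histogram $\y\le\x'$ rather than a fractional measure, which is where the equality $|\x|=|\x'|$ and the integrality of the transportation polytope are essential; the accompanying bookkeeping (that the rounding $\lfloor\cdot\rceil$ inside the mechanism cannot push the per-bar drop above $q$) is a lower-order effect handled by a routine $O(1)$-per-bar adjustment.
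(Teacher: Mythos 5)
Your proof is correct and takes a genuinely different route from the paper's. The paper pushes $\mtrlap{\thresh,\eps,\G}(\x)$ through a fractional mass-reallocation operator $f_\phi(\z)(b) := |\x'|\sum_{a\in\G}\z(a)\phi(a,b)/\x(a)$ (with $\phi$ an optimal $\Winf{}$-coupling of $\x/|\x|$ and $\x'/|\x'|$), sets $Y := f_\phi(\mtrlap{\thresh,\eps,\G}(\x))$, and then verifies the two required properties of $Y$ as separate claims. You instead replace the fractional coupling by an integral unit-level matching $\sigma$ between the $n$ elements of $\x$ and the $n$ elements of $\x'$, obtained via integrality of the bipartite transportation polytope, and push the set $K$ of retained units through $\sigma$. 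This unit-level view buys rigor in a spot the paper glosses over: $f_\phi(\z)$ is generically fractional, so the paper's $Y$ is not strictly supported on $\Hspace{\G}$, whereas your $\y$ is an honest integer sub-histogram of $\x'$; it also makes both bounds $\drop(\x',\y)\le\alpha$ and $\dhist{\mathbf{v}}{\y}\le\beta$ immediate rather than requiring separate verification. The trade-off is that the matching argument needs $|\x|=|\x'|$, while the lemma's supremum ranges over all $\x,\x'$ with $\dhist{\x}{\x'}\le\beta$; you correctly note that equality of sizes holds for the pairs $(\mbuc{w,[0,B)}(x),x)$ arising in the proof of \Theoremref{bucketing-hist}, so the restriction is harmless for the intended use, but as a standalone statement your proof covers a proper subset of what the lemma asserts. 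Both you and the paper treat the rounding $\lfloor\cdot\rceil$ in the mechanism informally when bounding the per-bar drop by $q$; you at least flag it.
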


\begin{proof}
For simplicity, we denote $[0,B)$ by $\G$.
For any two histograms $\x,\x'\in\Hspace\G$ such that $\dhist{\x}{\x'} \le \beta$, we will construct a r.v.\ $Y$ over $\Hspace\G$ such that $\dnx_{\text{drop}}(\x',\prob{Y}) \le \alpha$ and $\Winf{}(\mtrlap{\thresh,\eps,\G}(\x),\prob{Y}) \leq \beta$. The claim then immediately follows from this. Details follow.

Consider any two histograms $\x,\x'\in\Hspace\G$ such that $\dhist{\x}{\x'} \le \beta$.
Let $\dG{\cdot}{\cdot}$ denote the underlying metric over $\G$ (consists of $t$ elements) and $|\x|$ denote number of elements in the histogram $\x$. 
By definition of $\dhist\cdot\cdot$, we have $\dhist{\x}{\x'}=\Winf{}(\frac{\x}{|\x|},\frac{\x'}{|\x'|})$.
Let $\phi$ be an optimal coupling of $\frac{\x}{|\x|}$ and $\frac{\x'}{|\x'|}$ such that
\begin{align}
\dhist{\x}{\x'} = \Winf{}(\frac{\x}{|\x|},\frac{\x'}{|\x'|}) = \sup_{\substack{(a, b) \leftarrow \phi}}\dG{a}{b} \leq \beta.
\end{align}

Using $\phi$ we define a transformation $f_{\phi}$, which, when given a histogram $\z$ that is $\alpha$-distorted from $\x$, returns $f_{\phi}(\z)$ that is an $\alpha$-distorted histogram from $\x'$. Recall that for a histogram $\x$ and $a\in\G$, we denote by $\x(a)$ the multiplicity of $a$ in $\x$. Now, for any $b\in[0,B)$, we define $f_{\phi}(\z)(b)$ as follows:
\[f_{\phi}(\z)(b) := |\x'|\sum_{a\in\G} \frac{\z(a) \phi(a , b)}{\x(a)}.\]
The following claim is proved in \Appendixref{bucketHist_proofs}.
\begin{claim}\label{clm:fphi_distort}
For any $\x\in\Hspace\G$, if $\z$ is $\alpha$-distorted from $\x$, then $f_{\phi}(\z)$ is $\alpha$-distorted from $\x'$.
\end{claim}
Recall that $\mtrlap{\thresh,\eps,\G}(\x)$ outputs $\alpha$-distorted histograms from $\x$. This suggests 
defining a r.v.\ $Y:=f_{\phi}\left(\mtrlap{\thresh,\eps,\G}(\x)\right)$ over $\Hspace\G$, whose distribution is given as follows: 
\[\text{For }\y\in\Hspace\G, \text{ define }\Pr[Y=\y]:=\Pr[\mtrlap{\thresh,\eps,\G}(\x)\in f_{\phi}^{-1}(\y)],\]
where $f_{\phi}^{-1}(\y):=\{\z\in\Hspace\G:f_{\phi}(\z)=\y\}$ is the inverse mapping of $f_{\phi}$.

In the following two claims (which we prove in \Appendixref{bucketHist_proofs}), we show that the above defined $Y$ satisfies $\dnx_{\text{drop}}(\x',\prob{Y}) \le \alpha$ and $\Winf{}(\mtrlap{\thresh,\eps,\G}(\x),\prob{Y}) \leq \beta$.
\begin{claim}\label{clm:distortion_bw_xprime-Y}
$\widehat{\partial}_{\emph{drop}}(\x',\prob{Y}) \leq \alpha$.
\end{claim}
\begin{claim}\label{clm:Winf_Y_trLap-bound}
$\Winf{}(\mtrlap{\thresh,\eps,\G}(\x),\prob{Y}) \leq \beta$.
\end{claim}
It follows from \Claimref{distortion_bw_xprime-Y} and \Claimref{Winf_Y_trLap-bound} that $\inf_{Y: \dnx(\x',Y) \le \alpha}\Winf{}(\mtrlap{\thresh,\eps,\G}(\x),\prob{Y}) \leq \beta$. Since this holds for any two histograms $\x,\x'\in\Hspace\G$ such that $\dhist{\x}{\x'}\leq \beta$, we have proved \Lemmaref{hist-error-sens}.
\end{proof}

Now, applying \Theoremref{compose-accuracy} to $\mBhist{\alpha,\beta,[0,B)} = \mtrlap{\thresh,\eps,[0,B)} \circ \mbuc{t,[0,B)}$, we get that $\mBhist{\alpha,\beta,[0,B)}$ is $(\alpha, \beta, 0)$-accurate. 

This completes the proof of \Theoremref{bucketing-hist}.

\subsection{Omitted Proofs from \Sectionref{HBS}}\label{sec:proof_bucketing-general}
In this section, we will prove \Theoremref{bucketing-general}, \Corollaryref{bucketing-max}, and \Corollaryref{bucketing-supp}.
\subsubsection{Proof of \Theoremref{bucketing-general} -- Any Histogram-Based-Statistic}\label{sec:proof_bucketing-general}
First we show the flexible accuracy and then the differential privacy guarantee of our composed mechanism $\M_{\fhbs}^{\alpha,\beta,[0,B)} = \fhbs \circ \mBhist{\alpha,\beta,[0,B)}$.

\paragraph{Flexible accuracy.} 
Note that $\fhbs$ (as a mechanism) for computing $\fhbs$ is $(0,0,0)$-accurate, and we have from \Theoremref{bucketing-hist} that $\mBhist{\alpha,\beta,[0,B)}$ is $(\alpha,\beta,0)$-accurate for the identity function $f_{\id}$ w.r.t.\ the distortion measure \drop and the metric $\dhistx$. 
Applying \Theoremref{compose-accuracy}, we get that $\M_{\fhbs}^{\alpha,\beta,[0,B)}$ is $(\alpha+\distsens{f_{\id}}{}(0),\tau_{\fhbs,\fhbs}^{0,0}(0,\beta),0)$-accurate.
It follows from \eqref{eq:err-sens-deterministic} (by substituting $\M=\fhbs$ as a mechanism for $f=\fhbs$) and the definition of the metric sensitivity \eqref{eq:fhbs-sens}, that $\tau_{\fhbs,\fhbs}^{0,0}(0,\beta)=\Delta_{{\fhbs}}(\beta)$. We have also noted after \eqref{eq:dist-sens} that the distortion sensitivity of any randomized function at zero is equal to zero; in particular, $\distsens{f_{\id}}{}(0)=0$. Substituting these in the flexible accuracy parameters of $\M_{\fhbs}^{\alpha,\beta,[0,B)}$, we get that $\M_{\fhbs}^{\alpha,\beta,[0,B)}$ is $(\alpha,\Delta_{{\fhbs}}(\beta),0)$-accurate for \fhbs w.r.t.\ distortion \drop and metric \dAx.

\paragraph{Differential privacy.} Since $\mBhist{}$ is $\left(\eps,\eps e^{-\Omega(\eps\tau n)}\right)$-DP, and $\M_{\fhbs}^{}$ is a post-processing of $\mBhist{}$, it follows that $\M_{\fhbs}^{}$ is also differentially private with the same parameters.

This completes the proof of \Theoremref{bucketing-general}.

\subsubsection{Proof of \Corollaryref{bucketing-max} -- Computing the Maximum}\label{sec:bucketing-max_proof}
For any two histograms \y, $\y'$, by definition of $\dhist{\y}{\y'}=\Winf{}(\frac{\y}{|\y|},\frac{\y'}{|\y'|})$ and $\fmax$, it follows that 
$|\fmax(\y) -\fmax(\y')| \leq \dhist{\y}{\y'}$. 
Using this in \eqref{eq:fhbs-sens} implies that $\Delta_{\fmax}(\beta) \leq \beta$ for every $\beta\geq0$. 
Then, the corollary follows from \Theoremref{bucketing-general}, with $\fhbs = \fmax$. 

\subsubsection{Proof of \Corollaryref{bucketing-supp} -- Computing the Support}\label{sec:bucketing-supp_proof}
Since $\dsupp{\cS_1}{\cS_2}$ is the difference between the maximum or the minimum elements of $\cS_1$ and $\cS_2$, it follows that for any two histograms \y and $\y'$, we have $\dsupp{\fsupp(\y)}{\fsupp(\y')} \leq \max\{|\fmax(\y) -\fmax(\y')|,|\fmin(\y) -\fmin(\y')|\}$, where $|\fmax(\y) -\fmax(\y')|\leq \dhist{\y}{\y'}$ (from \Corollaryref{bucketing-max}), and similarly, $|\fmin(\y) -\fmin(\y')|\leq \dhist{\y}{\y'}$.
Using this in \eqref{eq:fhbs-sens} implies that $\Delta_{\fsupp}(\beta) \leq \beta$ for every $\beta\geq0$. 
Then, the corollary follows from \Theoremref{bucketing-general}, with $\fhbs = \fsupp$.

\subsection{Proof of \Theoremref{bucketing-general-drmv}}\label{sec:beyond-drop_proofs}
Since $\M_{\fhbs}^{\alpha, \beta, [0,B)}$ is the same mechanism for which the results in \Theoremref{bucketing-hist} hold, the same privacy results as in \Theoremref{bucketing-hist} will also hold here. In the rest of this proof, we prove the flexible accuracy part.

Since $\fhbs$ is a $(0,0,0)$-accurate mechanism for $\fhbs$ (which implies that $\Delta_{\fhbs}(0)=0$), in order to prove the accuracy guarantee of $\M_{\fhbs}^{\alpha, \beta, [0,B)}$, it suffices to show that $\mBhist{\alpha,\beta,[0,B)}$ is $(\alpha+\eta\beta,0,0)$-accurate w.r.t.\ $\drme$.
Note that $\M_{\fhbs}^{\alpha, \beta, [0,B)} = \mtrlap{\thresh,\eps,[0,B)} \circ \mbuc{w,[0,B)}$.
On any input $\x$, first we produce an intermediate bucketed output $\z:=\mbuc{w,[0,B)}(\x)$ and then produce $\y:=\mtrlap{\thresh,\eps,[0,B)}(\z)$ as the final output. 
We have shown in \Claimref{bucket-accuracy} in the proof of \Theoremref{bucketing-hist} that the output $\z$ produced by $\mbuc{w,[0,B)}$ on input $\x$ satisfies $\Winf{}(\x,\z)\leq\beta$. This, by definition of the distortion $\move$, implies $\move(\x,\z)\leq\beta$. 
We have also shown in the proof of \Theoremref{hist-priv-accu} that the output $\y$ produced by $\mtrlap{\thresh,\eps,[0,B)}$ on input $\z$ satisfies $\drop(\z,\y)\leq\alpha$. So, we have $\move(\x,\z)\leq\beta$ and $\drop(\z,\y)\leq\alpha$. This, together with \Lemmaref{drop-move-switch}, implies an existence of a histogram $\s$ such that $\drop(\x,\s)\leq\alpha$ and $\move(\s,\y)\leq\beta$. Using these in the definition of $\dropmove\eta$ in \eqref{eq:drop_move_defn} implies that $\dropmove\eta(\x,\y)\leq\alpha+\eta\beta$. Since we have attributed all the error to the input distortion, we have shown that $\mBhist{\alpha,\beta,[0,B)}$ is $(\alpha+\eta\beta,0,0)$-accurate w.r.t.\ the distortion $\dropmove\eta$.

This completes the proof of \Theoremref{bucketing-general-drmv}.

\section{Experimental Evaluations}\label{sec:eval}
We empirically compare our basic mechanism \mtrlap{\thresh,\eps,\G} (\Algorithmref{hist-mech})
on a ground set $\G=\{1,\cdots,B\}$, against various
competing mechanisms, for accuracy on a few histogram-based statistics
computed on it. We plot average
errors (actual and flexible), on different histogram distributions%
\footnote{For each data distribution, the plots were averaged over 100 data sets, with 100 runs each for each mechanism.}
for functions $\max_k(\x) := \max \{ i \mid \x(i) \ge k \}$, $\max:=\max_1$,
and $\mode(\x) := \arg\max_i \x(i)$; note that $\mode(\x)$ is equal to the most frequently occurring data item in $\x$. 
The parameters for \mtrlap{\thresh,\eps,\G} that we will use in the section are given in \Corollaryref{simpler-dp-hist}.

We emphasize that the plots are only indicative of the performance of our algorithm on specific histograms, and do not suggest \emph{worst-case accuracy guarantees}. On the other hand, our theorems do provide worst-case accuracy guarantees. 

We will empirically compare our results against the Exponential Mechanism \cite{ExponentialMech}, Propose-Test-Release Mechanism \cite{DworkL09}, Smooth-sensitivity Mechanism \cite{NissimRS07}, Stability-Based Sanitized Histogram \cite{BNS}, and Choosing-Based Histogram Mechanism \cite{BeimelNiSt16}.
First we present the comparison of our mechanisms against all these on different data distributions in \Sectionref{evals-carried-out} and then described these mechanisms briefly in \Sectionref{compared-mechs}.
We point out one notable omission from our plots: the Encode-Shuffle-Analyze
histogram mechanism \cite{erlingsson2020encode}, which appeared
independently and concurrently to our mechanism,%
\footnote{Preliminary versions of the current work were available online and had been presented publicly (as an invited talk \cite{prelimversion19}) before \cite{erlingsson2020encode} was available.}
also uses a shifted (but not truncated) Laplace mechanism, and in all the
examples plotted, yields a behavior that is virtually identical to our
mechanism's. However, we emphasize that \cite{erlingsson2020encode} claim
accuracy only for the histogram itself, and indeed, for the functions that
we consider, it does not enjoy the \emph{worst-case accuracy guarantees}
that we provide.

\begin{figure}
\centering
 \includegraphics[scale=0.55]{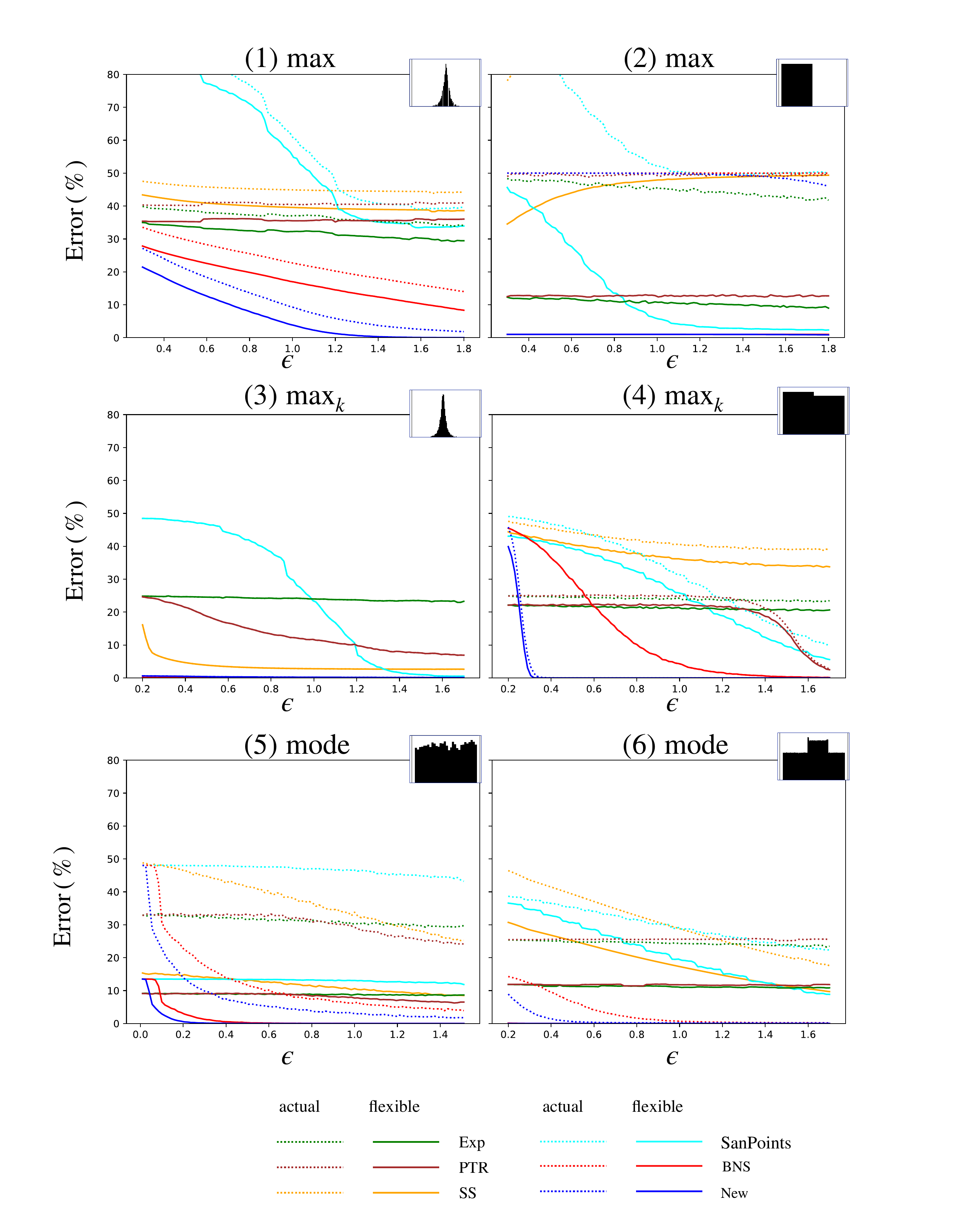}
\caption{For each evaluation, a typical histogram used is shown in inset. 
The different data distributions elicit a variety of behaviors of the
different mechanisms. Experiment (2) shows an instance which is hard for all
the mechanisms without considering flexible accuracy; on the other hand,
in Experiment (3), flexible accuracy makes no difference (the plots
overlap). In these two experiments, BNS and our new mechanism match each
other. In all the other experiments, our new mechanism dominates the others,
with or without considering flexible accuracy.
\label{fig:app-eval}}
\end{figure}

\subsection{Evaluations Carried Out}\label{sec:evals-carried-out}

In each of the following empirical evaluations, a histogram distribution and
one of the following functions were fixed: $\max_k(\x) := \max \{ i \mid
\x(i) \ge k \}$, $\max:=\max_1$, and $\mode(\x) := \arg\max_i \x(i)$.

\begin{enumerate}
\item[\textbf{(1)}] Function $\max$. Histogram of about 10,000 items drawn i.i.d.\ from a Cauchy distribution with median $45$ and scale $4$, restricted to 100 bars, with the last 10 set to empty bars. 

\item[\textbf{(2)}] Function $\max$. Step histogram with two steps (height $\times$ width) : [$1000 \times 50$, $1 \times 50$].

\item[\textbf{(3)}] Function $\max_{500}$. Same histogram distribution as in (1) above,
but without zeroing out the right-most bars.

\item[\textbf{(4)}] Function $\max_{500}$. Step histogram with 100 bars, with two steps (height $\times$ width) : [$540 \times 50$, $490 \times 50$]. 

\item[\textbf{(5)}] Function \mode. Histogram of 30 bars, each bar has height drawn from i.i.d.\ Poisson with mean 250.

\item[\textbf{(6)}] Function \mode. Noisy step histogram, with steps 
[$130 \times 120$, $200 \times 5$,  $185 \times 85$, $190 \times 10$, $130\times80$].

\end{enumerate}

The results are shown in \Figureref{app-eval}. In each experiment, a range of values for $\epsilon$ are chosen, while we fixed $\delta = 2^{-20}$. Errors are shown in the y-axis as a percentage of the full range $[0,B)$. In all experiments, for each mechanism we also compute flexible accuracy allowing distortion of $\drop=0.005$. 

\subsection{Description of the Compared Mechanisms}\label{sec:compared-mechs}

\paragraph{Exponential Mechanism.}
The Exponential Mechanism \cite{ExponentialMech} can be tailored for an
abstract utility function. We consider the negative of the error as the utility of a response $y$ on input histogram \x, i.e., $q(\x,y)=-\err(\x,y)=-|\max(\x)-y|$.
However, for both $\max_k$ and \mode, error has high sensitivity -- changing
a single element in the histogram can change the error by as much as the
number of bars in the histogram. Since the mechanism produces an output $r$
with probability proportional to $e ^ {\frac{\epsilon q(\x, r)}{2
\Delta_\err}}$, where $\Delta_\err$ is the sensitivity of $\err(\cdot,\cdot)$, having a
large sensitivity has the effect of moving the output distribution close to
a uniform distribution. This is reflected in the performance of this
mechanism in all our plots.

\paragraph{Propose-Test-Release Mechanism (PTR).}
We consider the commonly used form of the PTR mechanism of Dwork and Lei
\cite{DworkL09}, namely, ``releasing stable values'' (see \cite[Section 3.3]{VadhanSurvey}). On input \x, the
mechanism either releases the correct result $f(\x)$ or refuses to do so (replacing it with a random output value), depending on whether the radius of the neighborhood of \x where it remains constant is sufficiently large (after adding some noise). For computing a function $f$ and a setting of parameter $\beta = 0$ and privacy parameters $\epsilon, \delta$, the mechanism calculates this radius for an input $\x$ as,
$r = d(\x, \{\x' : \text{LS}_f(\x') > 0\}) + \text{Lap}(\nicefrac{1}{\epsilon})$, where $d(\x,\mathcal{S})$ is the minimum Hamming distance between $\x$ and any point in the set $\mathcal{S}$ and $\text{LS}_f(\y):=\max\{|f(\y)-f(\z)|:\y\sim\z\}$ is local sensitivity of function $f$ at $\y$. 
If this radius $r$ is greater than $\nicefrac{\ln{(\nicefrac{1}{\delta})}}{\epsilon}$, the
mechanism will output the exact answer $f(\x)$, otherwise it outputs a random value from the domain. 
For the functions we consider, this radius of stable region can be computed efficiently and is typically small or even zero for input distributions considered which is reflected in our plots. 

\paragraph{Smooth-sensitivity Mechanism (SS).}
This mechanism, due to Nissim et al.\ \cite{NissimRS07}, uses
the smooth sensitivity of a function $f$, defined as $SS_{f}^{\epsilon}(\x) = \max \{ LS_f(\x') e^{-\epsilon d(\x, \x')} | \x' \in \Hspace\G \}$, where $LS_f(\x')$ denotes the \emph{local sensitivity} of $f$ at $\x'$, and $d(\cdot,\cdot)$ is the Hamming distance. Given an input histogram \x, the mechanism adds noise $O(SS_f^{\beta}(\x)/\alpha)$ to $f(x)$ for appropriate values of $\alpha$ and $\beta$ to obtain $(\epsilon, \delta)$-DP. 
For functions like $\max_k$ and \mode, like sensitivity, local sensitvity (and hence smooth sensitivity) also tends to
be large on many histograms, which leads this mechanism to add a large
noise.

\paragraph{Stability-Based Sanitized Histogram Mechanism.}
This mechanism was proposed by Bun et al. \cite{BNS} (also see \cite[Theorem 3.5]{VadhanSurvey}) for
releasing histograms with provable worst-case guarantees. However, these
guarantees are in terms of the errors in the individual bar heights of the
histogram, and does not necessarily translate to the histogram based functions, as we
consider. Nevertheless, this mechanism provides a potential candidate for
a mechanism for any histogram based statistic.

For each bar of the histogram, the mechanism adds Laplace noise to the bar
height, and the resulting value is reported only if it is more than a
threshold, and
otherwise a $0$ is reported. By treating empty bars differently, this
mechanism achieves comparable flexible accuracy as our mechanism in the case
of $\max$. However, this does not generalize to $\max_k$. In particular, in
the example in (4) in \Figureref{app-eval}, by adding (possibly positive)
noise to histogram bars of height lower than $k$, the mechanism is very
likely to find a bar which is much further to the right than the point where
the bar heights cross $k$.

\paragraph{Choosing-Based Histogram Mechanism.}
Beimel et al.~\cite{BeimelNiSt16} presented a mechanism SanPoints for
producing a sanitized histogram, with formal PAC-guarantees for the height
of each bar of the histogram. The mechanism involves iteratively choosing
bars from the histogram, without replacement, and adding some noise
to the bar heights. The bars are chosen according to a DP mechanism for picking the tallest bar
(which in turn uses the exponential mechanism).

For the functions we consider, SanPoints yields mixed results, but is
dominated by BNS and our new mechanism.

\section*{Acknowledgements}
The work of Deepesh Data was supported in part by NSF grants \#1740047, \#2007714, and UC-NL grant LFR-18-548554.
The work of Manoj Prabhakaran was supported in part by the Joint Indo-Israel Project DST/INT/ISR/P-16/2017 and the Ramanujan Fellowship of Dept. of Science and Technology, India.

\bibliographystyle{alpha}
\bibliography{bib}

\newcommand{\etalchar}[1]{$^{#1}$}
\begin{thebibliography}{DKM{\etalchar{+}}06}

\bibitem[ACG{\etalchar{+}}16]{AbadiCGMMTZ16}
Martin Abadi, Andy Chu, Ian Goodfellow, Brendan McMahan, Ilya Mironov, Kunal
  Talwar, and Li~Zhang.
\newblock Deep learning with differential privacy.
\newblock In {\em CCS}, pages 308--318, 2016.

\bibitem[BCDP19]{prelimversion19}
Aman Bansal, Rahul Chunduru, Deepesh Data, and Manoj~M. Prabhakaran.
\newblock Extending the foundations of differential privacy: Flexibility and
  robustness.
\newblock Available from \url{https://www.cse.iitb.ac.in/~mp/research.html},
  2019.
\newblock Invited Talk at INDOCRYPT 2019.

\bibitem[BCS15]{BorgsCS15}
Christian Borgs, Jennifer~T. Chayes, and Adam~D. Smith.
\newblock Private graphon estimation for sparse graphs.
\newblock In {\em NIPS}, pages 1369--1377, 2015.

\bibitem[BCSZ18]{BorgsCSZ18}
Christian Borgs, Jennifer~T. Chayes, Adam~D. Smith, and Ilias Zadik.
\newblock Revealing network structure, confidentially: Improved rates for
  node-private graphon estimation.
\newblock In {\em FOCS}, pages 533--543, 2018.

\bibitem[BLR13]{BLR}
Avrim Blum, Katrina Ligett, and Aaron Roth.
\newblock A learning theory approach to noninteractive database privacy.
\newblock {\em J. {ACM}}, 60(2):12:1--12:25, 2013.

\bibitem[BNS16]{BeimelNiSt16}
Amos Beimel, Kobbi Nissim, and Uri Stemmer.
\newblock Private learning and sanitization: Pure vs. approximate differential
  privacy.
\newblock {\em Theory of Computing}, 12(1):1--61, 2016.

\bibitem[BNS19]{BNS}
Mark Bun, Kobbi Nissim, and Uri Stemmer.
\newblock Simultaneous private learning of multiple concepts.
\newblock {\em Journal of Machine Learning Research}, 20:94:1--94:34, 2019.

\bibitem[CDM{\etalchar{+}}05]{ChawlaDMSW05}
Shuchi Chawla, Cynthia Dwork, Frank McSherry, Adam~D. Smith, and Hoeteck Wee.
\newblock Toward privacy in public databases.
\newblock In {\em TCC}, pages 363--385, 2005.

\bibitem[CDMT05]{ChawlaDMT05}
Shuchi Chawla, Cynthia Dwork, Frank McSherry, and Kunal Talwar.
\newblock On privacy-preserving histograms.
\newblock In {\em UAI}, 2005.

\bibitem[DKM{\etalchar{+}}06]{DworkKMMN06}
Cynthia Dwork, Krishnaram Kenthapadi, Frank McSherry, Ilya Mironov, and Moni
  Naor.
\newblock Our data, ourselves: Privacy via distributed noise generation.
\newblock In {\em EUROCRYPT}, pages 486--503, 2006.

\bibitem[DL09]{DworkL09}
Cynthia Dwork and Jing Lei.
\newblock Differential privacy and robust statistics.
\newblock In {\em STOC}, pages 371--380, 2009.

\bibitem[DMNS06]{DworkMNS06}
Cynthia Dwork, Frank McSherry, Kobbi Nissim, and Adam~D. Smith.
\newblock Calibrating noise to sensitivity in private data analysis.
\newblock In {\em TCC}, pages 265--284, 2006.

\bibitem[DPT17]{AppleDP17}
Apple Differential Privacy~Team.
\newblock Learning with privacy at scale.
\newblock {\em Apple Machine Learning Journal}, December 2017.
\newblock Available online:
  \url{https://machinelearning.apple.com/docs/learning-with-privacy-at-scale/appledifferentialprivacysystem.pdf}.

\bibitem[DR14]{DworkRo14}
Cynthia Dwork and Aaron Roth.
\newblock The algorithmic foundations of differential privacy.
\newblock {\em Found. Trends Theor. Comput. Sci.}, 9(3-4):211--407, Aug 2014.

\bibitem[EFM{\etalchar{+}}20]{erlingsson2020encode}
{\'U}lfar Erlingsson, Vitaly Feldman, Ilya Mironov, Ananth Raghunathan, Shuang
  Song, Kunal Talwar, and Abhradeep Thakurta.
\newblock Encode, shuffle, analyze privacy revisited: Formalizations and
  empirical evaluation.
\newblock {\em arXiv preprint arXiv:2001.03618}, 2020.

\bibitem[EPK14]{Rappor14}
{\'U}lfar Erlingsson, Vasyl Pihur, and Aleksandra Korolova.
\newblock Rappor: Randomized aggregatable privacy-preserving ordinal response.
\newblock In {\em CSS}, pages 1054--1067, 2014.

\bibitem[KM14]{KiferM14}
Daniel Kifer and Ashwin Machanavajjhala.
\newblock Pufferfish: {A} framework for mathematical privacy definitions.
\newblock {\em {ACM} Trans. Database Syst.}, 39(1):3:1--3:36, 2014.

\bibitem[MT07]{ExponentialMech}
Frank McSherry and Kunal Talwar.
\newblock Mechanism design via differential privacy.
\newblock In {\em FOCS}, pages 94--103, 2007.

\bibitem[NRS07]{NissimRS07}
Kobbi Nissim, Sofya Raskhodnikova, and Adam~D. Smith.
\newblock Smooth sensitivity and sampling in private data analysis.
\newblock In {\em STOC}, pages 75--84, 2007.

\bibitem[SWC17]{SongWC17}
Shuang Song, Yizhen Wang, and Kamalika Chaudhuri.
\newblock Pufferfish privacy mechanisms for correlated data.
\newblock In {\em SIGMOD}, pages 1291--1306, 2017.

\bibitem[TS13]{Stable1}
Abhradeep Thakurta and Adam~D. Smith.
\newblock Differentially private feature selection via stability arguments, and
  the robustness of the lasso.
\newblock In {\em COLT}, volume~30, pages 819--850, 2013.

\bibitem[Vad17]{VadhanSurvey}
Salil~P. Vadhan.
\newblock The complexity of differential privacy.
\newblock In {\em Tutorials on the Foundations of Cryptography}, pages
  347--450. Springer International Publishing, 2017.

\bibitem[Vil08]{Villani_OptimalTransport08}
Cedric Villani.
\newblock {\em Optimal transport: old and new}.
\newblock Springer Verlag, 2008.

\end{thebibliography}

\newpage
\appendix
\section{Details Omitted from \Sectionref{lossy-wass}}
\subsection{Lossy $\infty$-Wasserstein Distance}\label{app:wasserstein}
\begin{lem}[\Lemmaref{wass-triangle} at $\gamma_1=\gamma_2=0$]\label{lem:Winf_triangle}
For distributions $P$, $Q$, and $R$ over a metric space $(\Omega,\met)$, we have 
\begin{align*}
\Winf{}(P, R) &\leq \Winf{}(P, Q) + \Winf{}(Q, R). 
\end{align*}
\end{lem}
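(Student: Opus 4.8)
The plan is to invoke the standard \emph{gluing lemma} for couplings (see \cite[Lemma~7.6]{Villani_OptimalTransport08}), in its $\infty$-cost form. Write $\beta_1 := \Winf{}(P,Q)$ and $\beta_2 := \Winf{}(Q,R)$. Using \Definitionref{infty-delta-wass-dist} with $\gamma=0$ (and the convention of \Sectionref{proofs} that the infimum is attained — otherwise one works with $\varepsilon$-optimal couplings and lets $\varepsilon\to0$), pick $\phi_{12}\in\Phi^0(P,Q)$ achieving $\sup_{(x,y)\leftarrow\phi_{12}}\met(x,y)=\beta_1$ and $\phi_{23}\in\Phi^0(Q,R)$ achieving $\sup_{(y,z)\leftarrow\phi_{23}}\met(y,z)=\beta_2$. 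In particular $\met(x,y)\le\beta_1$ for all $(x,y)$ in the support of $\phi_{12}$, and $\met(y,z)\le\beta_2$ for all $(y,z)$ in the support of $\phi_{23}$.

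Next I would glue $\phi_{12}$ and $\phi_{23}$ along their common marginal $Q$: disintegrate $\phi_{12}$ as $Q(\dd y)\,\phi_{12}(\dd x\mid y)$ and $\phi_{23}$ as $Q(\dd y)\,\phi_{23}(\dd z\mid y)$ using regular conditional distributions, and define a joint distribution $\psi$ on $\Omega^3$ by $\psi(\dd x,\dd y,\dd z):=Q(\dd y)\,\phi_{12}(\dd x\mid y)\,\phi_{23}(\dd z\mid y)$. By construction, the $(x,y)$-marginal of $\psi$ is $\phi_{12}$ and the $(y,z)$-marginal is $\phi_{23}$; hence the $x$-marginal is $P$ and the $z$-marginal is $R$. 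Letting $\phi_{13}$ denote the $(x,z)$-marginal of $\psi$, we have $\phi_{13}\in\Phi^0(P,R)$, so $\phi_{13}$ is a valid candidate in the infimum defining $\Winf{}(P,R)$.

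Finally I would bound the cost of $\phi_{13}$. Every triple $(x,y,z)$ in the support of $\psi$ has $(x,y)$ in the support of $\phi_{12}$ and $(y,z)$ in the support of $\phi_{23}$, so $\met(x,y)\le\beta_1$ and $\met(y,z)\le\beta_2$ hold simultaneously; by the triangle inequality for the metric $\met$, $\met(x,z)\le\beta_1+\beta_2$. Since the support of $\phi_{13}$ is the image of the support of $\psi$ under the $(x,z)$-projection, this gives $\met(x,z)\le\beta_1+\beta_2$ for every $(x,z)$ in the support of $\phi_{13}$, whence $\Winf{}(P,R)\le\sup_{(x,z)\leftarrow\phi_{13}}\met(x,z)\le\beta_1+\beta_2=\Winf{}(P,Q)+\Winf{}(Q,R)$.

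The main (and essentially only) obstacle is making the gluing step rigorous: one needs existence of regular conditional probabilities, which is available since $\Omega$ is a (suitably nice, e.g.\ Polish) metric space as is implicit throughout, and one needs to check that the pointwise/almost-everywhere bounds on $\met(x,y)$ and $\met(y,z)$ for $\psi$ push forward to the $(x,z)$-marginal $\phi_{13}$. Everything else is just the triangle inequality for $\met$ and the definition of $\Winf{}$, so under the simplifying conventions of \Sectionref{proofs} the argument can be presented quite compactly.
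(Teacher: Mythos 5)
Your argument matches the paper's proof of \Lemmaref{Winf_triangle} essentially verbatim: both invoke Villani's Gluing Lemma to build a joint law $\psi$ on $\Omega^3$ whose $(x,y)$- and $(y,z)$-marginals are the optimal couplings, project to the $(x,z)$-marginal to obtain a candidate coupling of $P$ and $R$, and then apply the triangle inequality for $\met$ pointwise on the support. The only difference is that you unpack the gluing step via regular conditional probabilities, which is exactly what the cited lemma does internally, so the proofs coincide.
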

\begin{proof}
Let $\phi_2\in\Phi(P, Q)$ and $\phi_3\in\Phi(Q, R)$ denote the optimal couplings for $\Winf{}(P, Q)$ and $\Winf{}(Q, R)$, respectively, i.e., $\Winf{}(P, Q)=\sup_{\substack{(x,y):\\\phi_2(x,y)\neq 0}}\met(x,y)$ and $\Winf{}(Q,R)=\sup_{\substack{(y,z):\\\phi_3(y,z)\neq 0}}\met(y,z)$.
It follows from the Gluing Lemma \cite{Villani_OptimalTransport08} that we can find a coupling $\phi'$ over $\Omega\times \Omega\times \Omega$ such that the projection of $\phi'$ onto its first two coordinates is equal to $\phi_2$ and its last two coordinates is equal to $\phi_3$.
Let $\phi_1$ denote the projection of $\phi'$ onto its first and the third coordinates. Note that $\phi_1\in\Phi(P, R)$, but it may not be an optimal coupling for $\Winf{}(P, R)$. 
Now the triangle inequality follows from the following set of inequalities:
\begin{align*}
\Winf{}(P, R) &= \inf_{\phi\in\Phi(P,R)} \sup_{\substack{(x,z):\\\phi(x,z)\neq 0}}\met(x,z) 
\quad\leq \sup_{\substack{(x,z):\\\phi_1(x,z)\neq 0}}\met(x,z) 
\quad= \sup_{\substack{(x,y,z):\\\phi'(x,y,z)\neq 0}}\met(x,z) \\
&\stackrel{\text{(a)}}{\leq} \sup_{\substack{(x,y,z):\\\phi'(x,y,z)\neq 0}}\met(x,y) + \met(y,z) \\
&= \sup_{\substack{(x,y,z):\\\phi'(x,y,z)\neq 0}}\met(x,y) \quad+ \sup_{\substack{(x,y,z):\\\phi'(x,y,z)\neq 0}} \met(y,z) \\
&= \sup_{\substack{(x,y):\\\phi_2(x,y)\neq 0}}\met(x,y) \quad+ \sup_{\substack{(y,z):\\\phi_3(y,z)\neq 0}} \met(y,z) \\
&= \Winf{}(P, Q) + \Winf{}(Q, R),
\end{align*}
where (a) follows from the fact that $\met$ is a metric, and so it satisfies the triangle inequality.
\end{proof}

\begin{claim}\label{clm:TV_dist_Q-Qprime}
$\Delta(Q, Q')\leq\gamma-\gamma_1$.
\end{claim}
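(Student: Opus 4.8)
The plan is to bound $\Delta(Q,Q')$ by interposing $Q_{opt}$ and invoking the triangle inequality for total variation distance, which reduces the task to showing $\Delta(Q_{opt},Q') \le \gamma_{opt}-\gamma_1$. Indeed, we have already noted $\Delta(Q,Q_{opt})\le\gamma-\gamma_{opt}$, so from $\Delta(Q,Q')\le\Delta(Q,Q_{opt})+\Delta(Q_{opt},Q')$ the claimed bound $\gamma-\gamma_1$ follows at once.

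To control $\Delta(Q_{opt},Q')$, I would use that $Q_{opt}$ and $Q'$ are the second marginals of the joint distributions $\phi_{opt}$ and $\phi'$, which are close coordinatewise. Writing $|\Prob{Q_{opt}}{y}-\Prob{Q'}{y}| \le \int_\Omega |\phi_{opt}(x,y)-\phi'(x,y)|\dd x$ and integrating over $y$ (Fubini), we get $\Delta(Q_{opt},Q') \le \frac12\int_\Omega\int_\Omega|\phi_{opt}(x,y)-\phi'(x,y)|\dd x\dd y$. The main computation is the inner integral $\int_\Omega|\phi_{opt}(x,y)-\phi'(x,y)|\dd y$ for fixed $x$, handled by the two cases in the definition of $\phi'$. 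When $\Prob{P_{opt}}{x}>0$, the integrand equals $\phi_{opt}(x,y)\,|1-\Prob{P'}{x}/\Prob{P_{opt}}{x}|$, and since $\int_\Omega\phi_{opt}(x,y)\dd y=\Prob{P_{opt}}{x}$ its $y$-integral is $|\Prob{P_{opt}}{x}-\Prob{P'}{x}|$. When $\Prob{P_{opt}}{x}=0$, we have $\phi_{opt}(x,\cdot)\equiv0$ and $\phi'(x,y)=\Prob{P'}{x}\,\delta(x-y)$, so the $y$-integral is $\Prob{P'}{x}=|\Prob{P_{opt}}{x}-\Prob{P'}{x}|$. In both cases it equals $|\Prob{P'}{x}-\Prob{P_{opt}}{x}|$, hence $\Delta(Q_{opt},Q') \le \frac12\int_\Omega|\Prob{P'}{x}-\Prob{P_{opt}}{x}|\dd x = \Delta(P',P_{opt})$.

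Finally I would compute $\Delta(P',P_{opt})$ from the explicit form derived earlier, $P'=\frac{\gamma_1}{\gamma_{opt}}P_{opt}+\bigl(1-\frac{\gamma_1}{\gamma_{opt}}\bigr)P$, which gives $P'-P_{opt}=-\bigl(1-\frac{\gamma_1}{\gamma_{opt}}\bigr)R_{opt}$ and therefore $\Delta(P',P_{opt})=\bigl(1-\frac{\gamma_1}{\gamma_{opt}}\bigr)\cdot\frac12\int_\Omega|R_{opt}(\omega)|\dd\omega=\bigl(1-\frac{\gamma_1}{\gamma_{opt}}\bigr)\gamma_{opt}=\gamma_{opt}-\gamma_1$, using $\int_\Omega|R_{opt}|\dd\omega=2\gamma_{opt}$. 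Combining the three estimates yields $\Delta(Q,Q')\le(\gamma-\gamma_{opt})+(\gamma_{opt}-\gamma_1)=\gamma-\gamma_1$.

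The step I expect to require the most care is the case $\Prob{P_{opt}}{x}=0$: one must verify that the Dirac-delta branch of $\phi'$ contributes exactly the missing marginal mass $\Prob{P'}{x}$ — which on that set equals $\bigl(1-\frac{\gamma_1}{\gamma_{opt}}\bigr)\Prob{P}{x}$ since $R_{opt}(x)=-\Prob{P}{x}$ there — so that the coordinatewise $L^1$ bound is applied correctly. Everything else is routine manipulation of total variation distance.
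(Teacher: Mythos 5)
Your proposal is correct and follows essentially the same route as the paper: triangle inequality through $Q_{opt}$, then a coordinatewise $L^1$ bound on the joint densities $\phi_{opt},\phi'$, splitting the $x$-integral by whether $\Prob{P_{opt}}{x}>0$ or $=0$, reducing to $\Delta(P_{opt},P')$, and finishing with the explicit convex-combination form of $P'$. The only (cosmetic) difference is that you present the case split via Fubini on the inner $y$-integral while the paper partitions $\Omega$ into $\Omega_1$ and $\overline\Omega_1$ before integrating; the computations are identical.
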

\begin{proof}
The claim follows from the following set of inequalities.
\begin{align}
    \Delta(Q, Q') &\le \Delta(Q, Q_{opt}) + \Delta(Q_{opt}, Q') \notag \\
    &\le \gamma - \gamma_{opt} + \frac{1}{2}\int_{\Omega}\left|\Prob{Q_{opt}}{y} - \Prob{Q'}{y}\right|\dd y \tag{Since $\Delta(Q, Q_{opt})\leq\gamma - \gamma_{opt}$} \notag \\
    &= \frac{1}{2}\int_{\Omega} \left|\int_{\Omega}\phi_{opt}(x, y)\dd x -  \int_{\Omega}\phi'(x, y)\dd x\right| \dd y + (\gamma - \gamma_{opt}) \notag \\
    &\le \frac{1}{2}\int_{\Omega} \int_{\Omega} \left|\phi_{opt}(x, y) - \phi'(x, y)\right|\dd x\dd y + (\gamma - \gamma_{opt}) \notag
    \end{align}
    Define $\Omega_1:=\{x\in\Omega:\Prob{P_{opt}}{x}>0\}$ and $\overline{\Omega}_1:=\Omega\setminus\Omega_1$. 
    Since $\Prob{P_{opt}}{x} = 0$ for all $x\in\overline{\Omega}_1$ and $\Prob{{P}_{opt}}$ is the first marginal of $\phi_{opt}$, we have that $\phi_{opt}(x,y)=0$ for all $x\in\overline{\Omega}_1$ and $y\in\Omega$.
     Now, continuing from above, we get
    \begin{align}
    \Delta(Q, Q') &\leq \frac{1}{2}\int_{\Omega} \int_{x\in\Omega_1} \left|\phi_{opt}(x, y) - \phi'(x, y)\right|\dd x\dd y  + \frac{1}{2}\int_{\Omega} \int_{x\in\overline{\Omega}_1} \left|\phi_{opt}(x, y) - \phi'(x, y)\right|\dd x\dd y + (\gamma - \gamma_{opt}) \notag \\ 
    &= \frac{1}{2}\int_{\Omega}\int_{\Omega_1}\phi_{opt}(x, y)\left|1 - \frac{\Prob{P'}{x}}{\Prob{P_{opt}}{x}}\right|\dd x\dd y + \frac{1}{2}\int_{\Omega}\int_{\overline{\Omega}_1}\left|\phi'(x, y)\right|\dd x\dd y + (\gamma - \gamma_{opt}) \notag \\
    &= \frac{1}{2}\int_{\Omega_1}\left|1 - \frac{\Prob{P'}{x}}{\Prob{P_{opt}}{x}}\right|\dd x \int_{\Omega}\phi_{opt}(x, y)\dd y + \frac{1}{2}\int_{\Omega}\int_{\overline{\Omega}_1}\Prob{P'}{x}\delta(x -  y)\dd x\dd y + (\gamma - \gamma_{opt}) \tag{Since $\phi'(x, y) = \Prob{P'}{x}\delta(x -  y)$ for $x\in\overline{\Omega}_1$} \\
    &= \frac{1}{2}\int_{\Omega_1}\left|\Prob{P_{opt}}{x} - \Prob{P'}{x}\right|\dd x + \frac{1}{2}\int_{\overline{\Omega}_1}\Prob{P'}{x}\dd x + (\gamma - \gamma_{opt}) \tag{Since $\int_{\Omega}\phi_{opt}(x, y)\dd y = \Prob{P_{opt}}{x}$ and $\int_{\Omega}\delta(x-y)\dd y = 1$ for any $x$} \\
    &= \frac{1}{2}\int_{\Omega_1}\left|\Prob{P_{opt}}{x} - \Prob{P'}{x}\right|\dd x + \frac{1}{2}\int_{\overline{\Omega}_1}\left|\Prob{P_{opt}}{x} - \Prob{P'}{x}\right|\dd x + (\gamma - \gamma_{opt}) \tag{Since $\Prob{P_{opt}}{x}=0$ whenever $x\in\overline{\Omega}_1$} \notag \\
    &= \frac{1}{2}\int_{\Omega}\left|\Prob{P_{opt}}{x} - \Prob{P'}{x}\right|\dd x + (\gamma - \gamma_{opt}) \notag \\
        &\stackrel{\text{(a)}}{=} \frac{1}{2}\int_{\Omega}\left|\Big(1-\frac{\gamma_1}{\gamma_{opt}}\Big)R_{opt}(x)\right|\dd x + (\gamma - \gamma_{opt}) \notag \\
        &= \frac{(\gamma_{opt}-\gamma_1)}{2\gamma_{opt}} \int_{\Omega}\left|R_{opt}(x)\right|\dd x + (\gamma - \gamma_{opt}) \notag \\
        &= \gamma_{opt} - \gamma_1 \tag{Since $\int_{\Omega} |R_{opt}(\omega)|\dd\omega= 2\gamma_{opt}$} + (\gamma - \gamma_{opt}) \notag \\
        &= \gamma - \gamma_1
\end{align}
Here (a) follows because for every $x\in\Omega$, we have $\Prob{P_{opt}}{x} - \Prob{P'}{x} = R_{opt}(x) + \Prob{P}{x} - \Prob{P'}{x} = R_{opt}(x) - R'(x) = R_{opt}(x) - \frac{\gamma_1}{\gamma_{opt}}R_{opt}(x)$.
\end{proof}

\begin{claim*}[Restating \Claimref{wass_alternate}]
For distributions $P$ and $Q$ over a metric space $(\Omega,\met)$ and $\gamma\in [0,1]$, we have
\begin{align*}
\Winf{\gamma}(P,Q) \quad= \displaystyle \inf_{\substack{\hat{P},\hat{Q}:\\ \Delta(P,\hat{P}) + \Delta(Q,\hat{Q})\leq\gamma}} \Winf{} (\hat{P},\hat{Q}).
\end{align*}
\end{claim*}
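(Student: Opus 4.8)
The plan is to prove equality of the two infima by showing they range over essentially the same set of couplings. The key observation is the set identity
\[
\Phi^{\gamma}(P,Q) \;=\; \bigcup_{\substack{\hat{P},\hat{Q}:\\ \Delta(P,\hat{P})+\Delta(Q,\hat{Q})\le\gamma}} \Phi^{0}(\hat{P},\hat{Q}).
\]
Once this is established, rewriting $\Winf{\gamma}(P,Q)=\inf_{\phi\in\Phi^{\gamma}(P,Q)}\sup_{(x,y)\leftarrow\phi}\met(x,y)$ as a nested infimum — first over the pair $(\hat{P},\hat{Q})$, then over $\phi\in\Phi^{0}(\hat{P},\hat{Q})$ — and recognizing the inner infimum as $\Winfz(\hat{P},\hat{Q})$ immediately yields the claim, since an infimum over a union equals the infimum of the infima over the pieces.

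For the ``$\supseteq$'' inclusion in the set identity: if $\phi\in\Phi^{0}(\hat{P},\hat{Q})$ for some feasible pair, then $\phi$ has marginals exactly $\hat{P}$ and $\hat{Q}$, so $\Delta(\phi_1,P)+\Delta(\phi_2,Q)=\Delta(\hat{P},P)+\Delta(\hat{Q},Q)\le\gamma$, i.e.\ $\phi\in\Phi^{\gamma}(P,Q)$. For ``$\subseteq$'': given $\phi\in\Phi^{\gamma}(P,Q)$ with marginals $\phi_1,\phi_2$, set $\hat{P}:=\phi_1$ and $\hat{Q}:=\phi_2$; then $\Delta(P,\hat{P})+\Delta(Q,\hat{Q})\le\gamma$ by membership in $\Phi^{\gamma}(P,Q)$, and $\phi\in\Phi^{0}(\hat{P},\hat{Q})$ since its marginals are exactly $\hat{P},\hat{Q}$. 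Chaining everything:
\[
\Winf{\gamma}(P,Q)=\inf_{\phi\in\Phi^{\gamma}(P,Q)}\sup_{(x,y)\leftarrow\phi}\met(x,y)=\inf_{\hat{P},\hat{Q}}\;\inf_{\phi\in\Phi^{0}(\hat{P},\hat{Q})}\sup_{(x,y)\leftarrow\phi}\met(x,y)=\inf_{\hat{P},\hat{Q}}\Winfz(\hat{P},\hat{Q}),
\]
where the outer infima run over all $\hat{P},\hat{Q}$ with $\Delta(P,\hat{P})+\Delta(Q,\hat{Q})\le\gamma$.

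I do not anticipate a genuine obstacle here; this is essentially an unwinding of definitions. The one point that must be handled with care is that $\Phi^{0}(\hat{P},\hat{Q})$ is, by definition, the set of joint distributions whose marginals are \emph{exactly} $\hat{P}$ and $\hat{Q}$, so both inclusions above must match the ``marginal'' conditions precisely (which is why we are free to take the marginals of an arbitrary $\phi\in\Phi^{\gamma}$ as our witness pair, and conversely why an optimal $\Winfz$-coupling of a feasible pair is automatically a $\gamma$-lossy coupling of $P,Q$). If one prefers to avoid the union rewriting, the identical argument can be presented as two separate inequalities — ``$\le$'' by taking an optimal $\Winfz$-coupling of any feasible $(\hat{P},\hat{Q})$ and observing it lies in $\Phi^{\gamma}(P,Q)$, and ``$\ge$'' by using the marginals of any $\phi\in\Phi^{\gamma}(P,Q)$ as a feasible pair — using the paper's convention that the relevant infima/suprema are attained.
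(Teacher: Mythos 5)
Your proof is correct and follows essentially the same route as the paper: the paper also rewrites $\inf_{\phi\in\Phi^{\gamma}(P,Q)}$ as a nested infimum over marginal pairs $(\hat{P},\hat{Q})$ and then over $\Phi^{0}(\hat{P},\hat{Q})$, remarking that this re-parametrization of the infimum set "trivially holds." You simply supply the set identity (and its two inclusions) that the paper leaves implicit.
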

\begin{proof}
This claim simply follows by by viewing the infimum set in the definition of $\gamma$-Lossy $\infty$-Wasserstein distance differently.
\begin{align}
\Winf{\gamma}(P,Q) \quad &\stackrel{\text{(a)}}{=} \displaystyle \inf_{\phi\in\Phi^{\gamma}(P,Q)} \max_{(x,y)\leftarrow\phi}\met(x, y) \notag \\
&\stackrel{\text{(b)}}{=} \displaystyle \inf_{\substack{\hat{P},\hat{Q}:\\ \Delta(P,\hat{P}) + \Delta(Q,\hat{Q})\leq\gamma}} \displaystyle \inf_{\phi\in\Phi^0(\hat{P},\hat{Q})} \max_{(x,y)\leftarrow\phi}\met(x, y) \notag \\
&\stackrel{\text{(c)}}{=} \displaystyle \inf_{\substack{\hat{P},\hat{Q}:\\ \Delta(P,\hat{P}) + \Delta(Q,\hat{Q})\leq\gamma}} \Winf{} (\hat{P},\hat{Q}). \notag
\end{align}
where (a) follows from the definition of $\gamma$-Lossy $\infty$-Wasserstein distance; 
(b) trivially holds by viewing the infimum set differently; in (c) we substituted the definition of $W_{\infty}$;
and (d) follows because $P',Q'$ satisfies $\Delta(P,P') + \Delta(Q,Q')\leq\gamma$.
\end{proof}

\subsection{Average Version of Lossy Wasserstein Distance}\label{app:average_lossy-wass}
Our definition of \Winf\theta uses a worst case notion of distance.
Many of the results using this notion have analogues using an average
case version. We formally present this definition below, as it may be of interest
elsewhere. 
\begin{defn}[$\theta$-Lossy Average Wasserstein Distance]\label{def:gamma-wass-dist}
Let $P$ and $Q$ be two probability distributions over a metric space
$(\Omega,\met)$, and let $\theta\in[0,1]$.
The \emph{$\theta$-lossy average Wasserstein distance} between $P$ and $Q$ is defined as:
\begin{equation}\label{eq:gamma-wass-dist}
\W{\theta}(P, Q) = \inf_{\phi\in\Phi^{\theta}(P, Q)}\E_{(x,y)\leftarrow\phi}[\met(x,y)]. 
\end{equation}
\end{defn}

The following lemma relates lossy average Wasserstein and lossy $\infty$-Wasserstein
distances.
\begin{lem}\label{lem:W-Winf}
For any two distributions $P, Q$, and $0 \le \beta' < \beta \le 1$,
\[ \W{\beta}(P, Q) \le \Winf{\beta}(P, Q) \le \frac{\W{\beta'}(P, Q)}{(\beta-\beta')}. \]
\end{lem}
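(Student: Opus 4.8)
\textbf{Proof plan for Lemma~\ref{lem:W-Winf}.}

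The plan is to handle the two inequalities separately; the left one is immediate and the right one is a Markov-type ``mass surgery'' argument. For $\W{\beta}(P,Q)\le\Winf{\beta}(P,Q)$, observe that the two quantities are infima of the \emph{same} objective-free feasible set $\Phi^{\beta}(P,Q)$, and for every coupling $\phi$ in it we trivially have $\E_{(x,y)\leftarrow\phi}[\met(x,y)]\le\sup_{(x,y)\leftarrow\phi}\met(x,y)$; taking the infimum over $\phi\in\Phi^{\beta}(P,Q)$ on both sides gives the claim.

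For the right inequality, fix a coupling $\phi'\in\Phi^{\beta'}(P,Q)$ attaining $\W{\beta'}(P,Q)=r$ (using the paper's standing assumption that infima are achieved; otherwise take an $\varepsilon$-approximate minimizer and let $\varepsilon\to0$ at the end). If $r=0$ then $\phi'$ is supported on the diagonal, so $\Winf{}(\phi'_1,\phi'_2)=0$, and since $\Delta(\phi'_1,P)+\Delta(\phi'_2,Q)\le\beta'\le\beta$, \Claimref{wass_alternate} gives $\Winf{\beta}(P,Q)=0$, settling this case. Assume henceforth $r>0$ and set $t:=r/(\beta-\beta')$. Applying Markov's inequality to the nonnegative random variable $\met(x,y)$ under $\phi'$, the ``bad set'' $E:=\{(x,y):\met(x,y)>t\}$ has $\phi'$-mass $m:=\phi'(E)\le\E_{\phi'}[\met(x,y)]/t=\beta-\beta'$.

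Now perform surgery on $\phi'$: delete the mass on $E$, obtaining the sub-probability measure $\phi'|_{E^c}$ of total mass $1-m$, and re-insert the missing mass $m$ along the diagonal according to the first marginal of the deleted piece, i.e.\ put $\phi:=\phi'|_{E^c}+A$ where $A$ places mass $(\phi'|_E)_1(x)$ at $(x,x)$ for each $x$. Then $\phi$ is a genuine probability distribution on $\Omega^2$; its first marginal is exactly $\phi'_1$, and its second marginal is $\phi'_2-(\phi'|_E)_2+(\phi'|_E)_1$, which differs from $\phi'_2$ by a signed measure whose total-variation norm is at most $\frac12\big(\int(\phi'|_E)_1+\int(\phi'|_E)_2\big)=m$, so $\Delta(\phi_2,\phi'_2)\le m$. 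Hence $\Delta(\phi_1,P)+\Delta(\phi_2,Q)\le\Delta(\phi'_1,P)+\Delta(\phi'_2,Q)+m\le\beta'+(\beta-\beta')=\beta$, i.e.\ $\phi\in\Phi^{\beta}(P,Q)$. Since $\phi$ is supported on $\Omega^2\setminus E$ (where $\met\le t$) together with the diagonal (where $\met=0$), we get $\sup_{(x,y)\leftarrow\phi}\met(x,y)\le t$, and therefore $\Winf{\beta}(P,Q)\le t=\W{\beta'}(P,Q)/(\beta-\beta')$.

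The only delicate point is the surgery step: after discarding the part of the coupling where $\met$ is large one must restore a probability measure without letting the marginals drift by more than $\beta-\beta'$ in total variation; re-inserting the discarded mass on the diagonal \emph{using the first marginal of the discarded piece} is exactly the right move, since it freezes one marginal and perturbs the other by at most the discarded mass $m\le\beta-\beta'$, which is precisely the slack available. The $r=0$ case (and, if one does not assume attainment, the limiting argument) are minor technicalities.
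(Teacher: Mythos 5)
Your proof is correct and follows essentially the same approach as the paper's: apply Markov's inequality to bound the mass transported farther than $t=\W{\beta'}(P,Q)/(\beta-\beta')$, and discard that mass. You make the discard step fully rigorous by re-inserting the lost mass on the diagonal according to the first marginal of the discarded piece (freezing $\phi_1$ and bounding the drift of $\phi_2$ by $m$), which fills a small gap that the paper glosses over when it says one simply ``loses'' that mass without explaining how the resulting sub-coupling is turned back into a genuine probability distribution in $\Phi^{\beta}(P,Q)$.
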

\begin{proof}
Clearly from the definitions, $\W{\beta}(P, Q) \le \Winf{\beta}(P, Q)$. 

Suppose $\W{\beta'}(P, Q)=\gamma$ and $\phi\in\Phi^{\beta'}(P, Q)$ is an optimal
coupling that realizes this. Then, in $\phi$, the total mass that is
transported more than a distance $\gamma'$ is at most $\gamma/\gamma'$ and
the total mass that is lost is at most $\beta'$. By choosing to simply not
transport this mass at all, one loses $\beta'+\gamma/\gamma'$ mass, but no
mass is transported more than a distance $\gamma'$.  Choosing
$\gamma'=\gamma/(\beta-\beta')$ this upper bound on loss is $\beta$, and
hence this modified coupling shows that $\Winf{\beta}(P, Q) \le \gamma'$.
\end{proof}

\subsection{$\gamma$-Lossy $\infty$-Wasserstein Distance Generalizes Existing Notions}\label{app:wass-generalizes}
\begin{lem}\label{lem:generalize-pac}
Let $(\Omega,\met)$ be a metric space. Let $F_f$ be a point distribution on some $f\in\Omega$ and $G$ be a distribution over $\Omega$. Then for any $\gamma\in[0,1]$ and $\beta\geq0$, we have
\[\Winf{\gamma}(F_f,G)\leq\beta\quad \Longleftrightarrow \quad \Pr_{g\leftarrow G}[\met(f,g) > \beta] \le \gamma.\]
\end{lem}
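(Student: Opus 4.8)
The plan is to prove the two implications separately. In both directions the leverage comes from the fact that $F_f$ is a point mass, so the coupling constraint degenerates to the exact identity $\Delta(\phi_1,F_f)=1-\phi_1(\{f\})$; I will repeatedly write $E:=\{g\in\Omega:\met(f,g)>\beta\}$, which is measurable because $\met(f,\cdot)$ is $1$-Lipschitz.

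For the direction ($\Leftarrow$): assume $\Pr_{g\leftarrow G}[\met(f,g)>\beta]\le\gamma$, and set $\gamma':=G(E)\le\gamma$. I would exhibit an explicit coupling $\phi\in\Phi^{\gamma}(F_f,G)$: collapse the first coordinate to $f$, i.e.\ send the event $\{g\}$ with $\met(f,g)\le\beta$ to the pair $(f,g)$, and divert the leftover mass $\gamma'$ (the part of $G$ living on $E$) onto the atom at $(f,f)$. The first marginal is then exactly $F_f$, so $\Delta(\phi_1,F_f)=0$; the second marginal is $G$ with its mass on $E$ relocated to the atom at $f$, and since $f\notin E$ (as $\met(f,f)=0\le\beta$) the diverted mass $\gamma'$ and the removed sub-measure $G|_E$ are supported disjointly, giving $\Delta(\phi_2,G)=\gamma'\le\gamma$. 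Hence $\phi\in\Phi^{\gamma}(F_f,G)$, and every pair in its support has the form $(f,g)$ with $\met(f,g)\le\beta$, so $\sup_{(x,y)\leftarrow\phi}\met(x,y)\le\beta$ and therefore $\Winf{\gamma}(F_f,G)\le\beta$.

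For the direction ($\Rightarrow$): assume $\Winf{\gamma}(F_f,G)\le\beta$, and let $\phi\in\Phi^{\gamma}(F_f,G)$ realize (or, up to an $\varepsilon$, approach) the infimum, so that $\met(x,y)\le\beta$ for all $(x,y)$ in the support of $\phi$, and $\Delta(\phi_1,F_f)+\Delta(\phi_2,G)\le\gamma$. The key step is a triangle-inequality observation: if $(x,y)$ lies in the support of $\phi$ with $y\in E$, then $\met(f,x)\ge\met(f,y)-\met(x,y)>\beta-\beta=0$, so $x\ne f$. Consequently $\phi_2(E)=\phi(\Omega\times E)\le\phi_1(\Omega\setminus\{f\})=1-\phi_1(\{f\})=\Delta(\phi_1,F_f)$, using that $F_f$ is a point mass at $f$ for the last equality. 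Combining with $G(E)\le\phi_2(E)+\Delta(\phi_2,G)$ (total-variation bound on the measurable set $E$) yields $G(E)\le\Delta(\phi_1,F_f)+\Delta(\phi_2,G)\le\gamma$, i.e.\ $\Pr_{g\leftarrow G}[\met(f,g)>\beta]\le\gamma$.

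I do not expect a serious obstacle here. The only mildly delicate points are (i) keeping the total-variation bookkeeping for $\phi_2$ exact in the ($\Leftarrow$) direction, which I would handle by arguing with signed measures rather than densities so that atoms and continuous parts of $G$ are treated uniformly; and (ii) matching the open/closed boundaries in the statement ($\met>\beta$ versus $\le\gamma$), which is precisely what the triangle inequality with $\met(x,y)\le\beta$ delivers in the ($\Rightarrow$) direction. Invoking the paper's standing convention that the infima/suprema are attained removes the need for an extra $\varepsilon$-limiting step.
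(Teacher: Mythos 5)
Your proof is correct, and the $\Rightarrow$ direction is a genuinely different argument from the paper's. The paper first invokes its Lemma~\ref{lem:wass-marginal-loss} (at $\gamma_1=0$) to extract a distribution $G'$ with $\Delta(G,G')\leq\gamma$ and $\Winf{}(F_f,G')\leq\beta$, and then splits $\Pr_{g\leftarrow G}[\met(f,g)>\beta]$ by conditioning on whether $g$ falls in $\support(G')$, bounding the off-support mass by $\Delta(G,G')$. You instead argue directly from a near-optimal coupling $\phi$: the triangle inequality $\met(f,x)\ge\met(f,y)-\met(x,y)$ shows that any pair $(x,y)$ in $\support(\phi)$ with $y\in E$ must have $x\ne f$, so $\phi_2(E)\le\phi_1(\Omega\setminus\{f\})=\Delta(\phi_1,F_f)$, and then $G(E)\le\phi_2(E)+\Delta(\phi_2,G)\le\gamma$. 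This avoids Lemma~\ref{lem:wass-marginal-loss} entirely and is arguably more elementary, though it trades the paper's modularity (reusing a lemma proved once) for a self-contained computation. Your $\Leftarrow$ direction is also a slight variant: the paper renormalizes $G$ conditioned on $\Omega\setminus E$ as the second marginal, while you keep $G$ on $\Omega\setminus E$ and divert the excess mass $G(E)$ to the atom at $(f,f)$; both yield $\Delta(\phi_2,G)\le\gamma$, with yours giving the exact value $G(E)$ rather than the bound $1-\eta\le\gamma$. Both are fine. Your appeal to the paper's standing convention that infima are attained (stated at the start of \Sectionref{proofs}) disposes of the one technical point you flag.
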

\begin{proof}
We show both the directions below.
\begin{itemize}
\item {\bf Only if part ($\Rightarrow$):} Suppose $\Winf{\gamma}(F_f,G)\leq\beta$. It follows from \Lemmaref{wass-marginal-loss} that there exists a distribution $G'$ such that $\Delta(G',G)\leq\gamma$ and $\Winf{}(F_f,G')\leq\beta$. Since $F_f$ is a point distribution, all couplings $\phi\in\Phi^0(F_f,G')$ will be such that $\phi_1=F_f$ and $\phi_2=G'$, which implies that $\Winf{}(F_f,G')=\sup_{g'\leftarrow G'}\met(f,g')\leq\beta$. 
Now we show that, together with $\Delta(G',G)\leq\gamma$, this implies $\Pr_{g\leftarrow G}[\met(f,g)>\beta]\leq\gamma$:
\begin{align*}
\Pr_{g\leftarrow G}[\met(f,g)>\beta] &= \underbrace{\Pr_{g\leftarrow G}[\met(f,g)>\beta\ |\ g\in\support(G')]}_{=\ 0}\Pr_{g\leftarrow G}[g\in\support(G')] \notag \\
&\hspace{2cm} + \underbrace{\Pr_{g\leftarrow G}[\met(f,g)>\beta\ |\ g\notin\support(G')]}_{\leq\ 1}\Pr_{g\leftarrow G}[g\notin\support(G')] \notag \\
&\leq \Pr_{g\leftarrow G}[g\notin\support(G')] \notag \\
&= \int_{g\in\Omega:\ p_{G}(g)>0\ \&\ p_{G'}(g)=0}p_G(g)dg \notag \\
&= \int_{g\in\Omega:\ p_{G}(g)>0\ \&\ p_{G'}(g)=0}(p_G(g)-p_{G'}(g))dg \notag \\
&\stackrel{\text{(a)}}{\leq} \int_{g\in\Omega:\ p_G(g)>p_{G'}(g)}(p_G(g)-p_{G'}(g))dg \notag \\
&\stackrel{\text{(b)}}{=} \Delta(G,G') \leq \gamma,
\end{align*}
where (a) follows because $\{g\in\Omega:p_{G}(g)>0\ \&\ p_{G'}(g)=0\}\subseteq\{g\in\Omega:p_G(g)>p_{G'}(g)\}$, and (b) follows from the reasoning given below.

Define $\Omega_G^+ := \{g\in\Omega:p_G(g)>p_{G'}(g)\}$ and $\Omega_G^- := \{g\in\Omega:p_G(g)<p_{G'}(g)\}$. 
Since $\int_{g\in\Omega}p_G(g)dg = \int_{g\in\Omega}p_{G'}(g)dg$, it follows that $\int_{g\in\Omega_G^+}(p_G(g)-p_{G'}(g))dg = \int_{g\in\Omega_G^-}(p_{G'}(g)-p_G(g))dg$. 
Substituting this in the definition of $\Delta(G,G')$, we get $\Delta(G,G')=\int_{g\in\Omega_G^+}(p_G(g)-p_{G'}(g))dg$. 
\item {\bf If part ($\Leftarrow$):} Suppose $\Pr_{g\leftarrow G}[\met(f,g) > \beta] \le \gamma$. Let $\Omega'=\{g\in\Omega:\met(f,g)\leq\beta\}$ and $G'$ be a distribution supported on $\Omega'$ such that $p_{G'}(g)=\frac{1}{\eta}p_G(g)$ when $g\in\Omega'$, otherwise $p_{G'}(g)=0$. Here $\eta=\int_{g\in\Omega'}p_G(g)dg\geq(1-\gamma)$ is the normalizing constant.
First we show that $\Delta(G,G')\leq\gamma$.
\begin{align}
\Delta(G,G') &= \frac{1}{2}\int_{g\in\Omega}|p_{G'}(g)-p_G(g)|dg \notag \\
&= \frac{1}{2}\int_{g\in\Omega'}|p_{G'}(g)-p_G(g)|dg + \frac{1}{2}\int_{g\in\Omega\setminus\Omega'}p_G(g)dg \tag{Since $p_{G'}(g)=0$ when $g\in\Omega\setminus\Omega'$} \\
&= \frac{1}{2}\int_{g\in\Omega'}p_G(g)(\frac{1}{\eta}-1)dg + \frac{1}{2}\int_{g\in\Omega\setminus\Omega'}p_G(g)dg \notag \\
&= \frac{1}{2}(\frac{1}{\eta}-1)\eta + \frac{1}{2}(1-\eta) \tag{Since $\int_{g\in\Omega'}p_G(g)dg=\eta$} \\
&= 1-\eta \leq \gamma.
\end{align}
Now define a joint distribution $\phi$, whose first marginal is the point distribution $F_f$ and the second marginal is $G'$, which implies that $\sup_{(x,y)\leftarrow\phi}\met(x,y)=\sup_{g'\in\Omega'}\met(f,g')$. It follows from the argument above that $\phi\in\Phi^{\gamma}(F_f,G)$, which implies that $\Winf{\gamma}(F_f,G)\leq\sup_{(x,y)\leftarrow\phi}\met(x,y)=\sup_{g'\in\Omega'}\met(f,g')\leq\beta$, where the last inequality is by definition of $\Omega'$. Hence, we get $\Winf{\gamma}(F_f,G)\leq\beta$.
\end{itemize}
This completes the proof of \Lemmaref{generalize-pac}.
\end{proof}

\begin{lem}\label{lem:generalizing-tv}
For any two distributions $P,Q$ over a metric space $(\Omega,\met)$ and $\gamma\in[0,1]$, we have
\[\Winf{\gamma}(P,Q)=0 \quad \Longleftrightarrow \quad \Delta(P,Q)\leq\gamma.\]
\end{lem}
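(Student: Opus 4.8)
The plan is to prove the two implications directly from \Definitionref{infty-delta-wass-dist}, using two elementary ingredients. The first is that $\met$ is a metric, so $\met(x,y)=0$ iff $x=y$; consequently any coupling $\phi$ with $\sup_{(x,y)\leftarrow\phi}\met(x,y)=0$ must be supported entirely on the diagonal $\{(x,x):x\in\Omega\}$, which forces its two marginals $\phi_1,\phi_2$ to be equal. The second is the triangle inequality for total variation, $\Delta(P,Q)\le\Delta(P,R)+\Delta(R,Q)$. (An alternative route is via \Claimref{wass_alternate}, which recasts $\Winf{\gamma}(P,Q)=0$ as the existence of $\hat P,\hat Q$ with $\Delta(P,\hat P)+\Delta(Q,\hat Q)\le\gamma$ and $\Winf{}(\hat P,\hat Q)=0$; but turning $\Winf{}(\hat P,\hat Q)=0$ into $\hat P=\hat Q$ still needs the first ingredient, so the direct argument seems cleanest.)

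For the ``if'' direction I would assume $\Delta(P,Q)\le\gamma$ and exhibit an explicit coupling: the diagonal coupling of $P$ with itself, $\phi(x,y):=\Prob{P}{x}\,\delta(x-y)$, whose first and second marginals are both $P$. Then $\Delta(\phi_1,P)+\Delta(\phi_2,Q)=0+\Delta(P,Q)\le\gamma$, so $\phi\in\Phi^{\gamma}(P,Q)$, and $\sup_{(x,y)\leftarrow\phi}\met(x,y)=0$ since all mass sits on the diagonal; hence $\Winf{\gamma}(P,Q)=0$.

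For the ``only if'' direction I would assume $\Winf{\gamma}(P,Q)=0$ and let $\phi\in\Phi^{\gamma}(P,Q)$ be an optimal coupling attaining the infimum (as assumed throughout \Sectionref{proofs}), so that $\sup_{(x,y)\leftarrow\phi}\met(x,y)=0$. By the first ingredient $\phi$ is supported on the diagonal, so $\phi_1=\phi_2=:R$. Membership in $\Phi^{\gamma}(P,Q)$ gives $\Delta(P,R)+\Delta(Q,R)\le\gamma$, and the triangle inequality for total variation then yields $\Delta(P,Q)\le\Delta(P,R)+\Delta(R,Q)\le\gamma$.

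I do not expect a genuine obstacle here: the lemma is essentially routine. The only step requiring (minimal) care is the passage, in the ``only if'' direction, from zero transport cost to equality of marginals — this leans on $\met$ separating points together with the standing convention that the infimum in \Definitionref{infty-delta-wass-dist} is attained; absent that convention, one recovers the same conclusion by the limiting argument noted at the start of \Sectionref{proofs}.
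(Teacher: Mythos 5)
Your proof is correct and follows essentially the same route as the paper's: both directions use the observation that a zero-cost coupling must be diagonal (so the two marginals coincide), combined with the TV triangle inequality for the forward direction and an explicit coupling for the converse. One small point in your favor: in the ``if'' direction you explicitly take the diagonal coupling $\phi(x,y)=\Prob{P}{x}\,\delta(x-y)$, which is what makes $\sup_{(x,y)\leftarrow\phi}\met(x,y)=0$ hold; the paper writes $\phi := P\times P$ and then asserts the support is diagonal because the marginals agree, which, read literally as a product distribution, does not justify that claim — your formulation closes that gap.
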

\begin{proof} We show both the directions below.
\begin{itemize}
\item {\bf Only if part ($\Rightarrow$):} Suppose $\Winf{\gamma}(P,Q)=0$. This implies that there exists a joint distribution $\phi\in\Phi^{\gamma}(P,Q)$ such that $\sup_{(x,y)\leftarrow\phi}\met(x,y)=0$. Since $\met$ is a metric, this implies that for all $(x,y)\leftarrow\phi$, we have $x=y$. Hence, the first marginal $\phi_1$ and the second marginal $\phi_2$ of $\phi$ are equal, which implies that $\Delta(\phi_1,P)+\Delta(\phi_2,Q)\leq\gamma$. Then, by triangle inequality and that $\phi_1=\phi_2$, we get $\Delta(P,Q)\leq\gamma$.

\item {\bf If part ($\Leftarrow$):} Suppose $\Delta(P,Q)\leq\gamma$. Define a joint distribution $\phi:=P\times P$. Since $\phi_1=\phi_2=P$, we have $\phi\in\Phi^{\gamma}(P,Q)$. This, by definition, implies $\Winf{\gamma}(P,Q)\leq\sup_{(x,y)\leftarrow\phi}\met(x,y)$. Since both the marginals of $\phi$ are the same, we have $\met(x,y)=0$ for every $(x,y)\leftarrow\phi$. This, by  the non-negativity of $\Winf{\gamma}(P,Q)$, gives $\Winf{\gamma}(P,Q)=0$. 
\end{itemize}
\end{proof}

\section{Details Omitted from \Sectionref{defn-fa} -- Usefulness~\cite{BLR} vs.\ Flexible Accuracy}\label{app:Comparison_BLR}
To express accuracy
guarantees of their mechanisms, Blum et al.~\cite{BLR} introduced a notion of
\emph{$(\beta,\gamma,\psi)$-usefulness} that parallels
$(\alpha,\beta,\gamma)$-accuracy, except that $\psi$ measures perturbation
of the function rather than input distortion.  Note that this is a reasonable
notion for the function classes they considered (half-space queries, range queries etc.),
but it is not applicable to queries like maximum.

Flexible accuracy generalizes the notion of usefulness.
Firstly, mechanisms which are
$(\beta,\gamma,0)$-useful are $(0,\beta,\gamma)$-accurate (in \cite{BLR},
such mechanisms were given for interval queries). But even general
usefulness can be translated to flexible accuracy generically, by redefining
the function to have an extra input parameter that specifies perturbation.
Further, the specific $(\beta,\gamma,\psi)$-useful DP mechanism of
\cite{BLR} for half-space counting queries -- with data points on a unit
sphere, and the perturbation of the function corresponded to rotating the
half-space by $\psi$ radians -- is $(\psi,\beta,\gamma)$-accurate for the
same functions, w.r.t.\ the distortion \move. This is because, the rotation
of the half-space can be modeled as moving all the points on the unit sphere
by a distance of at most $\psi$.

\section{Details Omitted from \Sectionref{compostion_FA} -- Proof of \Lemmaref{err-sens-deterministic}}\label{app:err-sens-deterministic_proof}
For convenience, we write the lemma statement below.
\begin{lem*}[Restating \Lemmaref{err-sens-deterministic}]
Let $\M:\B\to\C$ be a deterministic mechanism for a deterministic function $f:\B\to\C$. Then, for any $\beta_1\geq0$, we have
\begin{align*} 
\tau_{\M,f}^{0,0}(\beta_1,0) \quad = \sup_{\substack{X,X': \\ \Winf{}(\prob{X},\prob{X'})\leq \beta_1}} \Winf{}(\M(X),f(X')) \quad = \sup_{\substack{x,x'\in\A : \\ \dB{x}{x'} \leq \beta_1}} \dC{\M(x)}{f(x')}.
\end{align*}
\end{lem*}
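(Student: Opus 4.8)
The plan is to prove the two claimed equalities separately, though they are closely related. The first equality is essentially just unfolding the definition of error sensitivity \eqref{eq:err-sens} with the substituted parameters $\alpha_2=\gamma_2=0$, $\gamma_1=0$: the inner infimum over $Y$ with $\dnx(X',Y)\le 0$ forces $Y=X'$ (since a measure of distortion is a quasi-metric, so $\dnx(X',Y)=0 \iff X'=Y$ as distributions), and $\Winf{0}=\Winfz$; this collapses $\tau_{\M,f}^{0,0}(\beta_1,0)$ to $\sup_{X,X':\Winfz(\prob X,\prob{X'})\le\beta_1}\Winfz(\M(X),f(X'))$, which is the middle expression. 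So the real content is the second equality: reducing the supremum over pairs of \emph{distributions} $X,X'$ to a supremum over pairs of \emph{points} $x,x'\in\B$.

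For the second equality I would show both inequalities. The direction ``$\ge$'' is immediate: for any $x,x'\in\B$ with $\dB{x}{x'}\le\beta_1$, take $X,X'$ to be point distributions on $x,x'$; then $\Winfz(\prob X,\prob{X'})=\dB x{x'}\le\beta_1$, and since $\M,f$ are deterministic, $\M(X),f(X')$ are point distributions on $\M(x),f(x')$, so $\Winfz(\M(X),f(X'))=\dC{\M(x)}{f(x')}$, showing the middle expression dominates the right one. For the direction ``$\le$'', fix $X,X'$ with $\Winfz(\prob X,\prob{X'})\le\beta_1$ and let $\phi\in\Phi^0(\prob X,\prob{X'})$ be an optimal coupling, so $\dB x{x'}\le\beta_1$ for every $(x,x')$ in the support of $\phi$. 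Push $\phi$ forward through the deterministic map $(x,x')\mapsto(\M(x),f(x'))$ to obtain a coupling $\psi$ of $\M(X)$ and $f(X')$ (here I use that $\M,f$ deterministic means $\M(X)$ is the pushforward of $\prob X$ under $\M$, etc.). Every point $(u,v)$ in the support of $\psi$ is of the form $(\M(x),f(x'))$ for some $(x,x')$ in the support of $\phi$, hence $\met_\C(u,v)=\dC{\M(x)}{f(x')}\le \sup_{a,a':\dB a{a'}\le\beta_1}\dC{\M(a)}{f(a')}$. Taking the supremum over the support of $\psi$ bounds $\Winfz(\M(X),f(X'))$ by the right-hand side; taking the supremum over all such $X,X'$ finishes.

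I do not expect a serious obstacle here — the only subtlety is bookkeeping about pushforwards of couplings and the fact that $\Winfz$ is defined via an infimum over couplings (so exhibiting one coupling $\psi$ only gives an upper bound on $\Winfz(\M(X),f(X'))$, which is exactly the direction we want). As stated in the preamble to \Sectionref{proofs}, I may assume the optimal coupling $\phi$ is attained; without that assumption one takes a sequence of couplings approaching the infimum and passes to the limit, which changes nothing essential. The mildly delicate point worth stating carefully is that for a deterministic mechanism $\M$, the output distribution $\M(X)$ on input distribution $X$ is by definition the law of $\M(x)$ for $x\leftarrow X$, i.e.\ the pushforward measure $\M_*\prob X$; this is what makes the pushforward-of-coupling argument go through.
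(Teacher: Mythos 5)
Your proof is correct and follows essentially the same route as the paper: the ``$\ge$'' direction by restricting to point distributions, and the ``$\le$'' direction by pushing an optimal coupling of $\prob{X},\prob{X'}$ forward through $(x,x')\mapsto(\M(x),f(x'))$ to obtain a coupling of $\M(X)$ and $f(X')$, which is exactly the paper's construction of $\phi_2$. Your elaboration of why the first equality follows from \Definitionref{err-sens} (the infimum over $Y$ collapsing to $Y=X'$) is a useful addition that the paper leaves implicit.
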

\begin{proof}
The first equality follows from the definition of error sensitivity. We only need to prove the second equality.
\begin{itemize}
\item {\it LHS $\geq$ RHS:} This is the easy part.
\begin{align*}
    \sup_{\substack{X,X': \\ \Winf{}(\prob{X},\prob{X'})\leq \beta_1}} \Winf{}(\M(X),f(X')) \ \ \ge \sup_{\substack{x,x'\in\B: \\ \Winf{}(\prob{x},\prob{x'})\leq \beta_1}} \Winf{}(\M(x),f(x')) \ \
    = \sup_{\substack{x,x'\in\B: \\ \dB{\prob{x}}{\prob{x'}}\leq \beta_1}} \dC{\M(x)}{f(x')},
\end{align*}
where the inequality holds because considering only point distributions restricts the set over which we take supremum and the equality holds because the $\infty$-Wasserstein distance between any two point distributions in any metric is just the distance between the points on which the distributions are supported in that metric.

\item {\it LHS $\leq$ RHS:}
Consider any two distributions $\prob{X}, \prob{X'}$ over $\B$ s.t.\ $\Winf{}(\prob{X},\prob{X'})\leq \beta$. Let $\phi_1$ be the optimal coupling between $\prob{X}, \prob{X'}$ such that 
\[\Winf{}(\prob{X},\prob{X'}) \quad = \sup_{(\x, \x') \leftarrow \phi_1} \dB{\x}{\x'} \leq \beta_1. \]
Using $\phi_1,\M,f$, we define a joint distribution $\phi_2$ over $\C\times\C$ as follows: For any ${\bf a,b}\in\C$, define
\[\phi_2({\bf a}, {\bf b}) \quad := \sum_{\substack{\x, \x' : \\
\M(\x) = {\bf a}, f(\x') = {\bf b}}} \phi_1(\x, \x').\]
It can be verified that $\phi_2\in\Phi(\M(X),f(X'))$, i.e., $\phi_2$ is a valid coupling between $\M(X), f(X')$.
Now
\begin{align*}
\Winf{}(\M(X),f(X')) 
\ \ \leq \sup_{({\bf a},{\bf b})\leftarrow\phi_2}\dC{\bf a}{\bf b} 
\ \ = \sup_{(\x,\x')\leftarrow\phi_1}\dC{\M(\x)}{f(\x')} 
\ \ \leq \sup_{\substack{\x,\x'\in\B: \\ \dB{\x}{\x'}\leq \beta_1}}\dC{\M(\x)}{f(\x')},
\end{align*}
where the last inequality holds because $\{(\x,\x'):(\x,\x')\leftarrow\phi_1\}\subseteq\{(\x,\x'):\dB{\x}{\x'}\leq\beta_1\}$. 

Note that the RHS of the last inequality does not depend on $X,X'$. So, taking supremum over all distributions $X, X'$ such that $\Winf{}(\prob{X},\prob{X'})\leq\beta_1$ gives the required result.
\end{itemize}
This completes the proof of \Lemmaref{err-sens-deterministic}.
\end{proof}

\section{Omitted Details from \Sectionref{compose-accuracy_proof} -- Flexible Accuracy Under Composition}\label{app:comp-accuracy}
In this section, we prove \Lemmaref{distsens-distrib}.

\begin{lem*}[Restating \Lemmaref{distsens-distrib}]
Suppose $f: A \to B$ has distortion sensitivity \distsens{f}{} w.r.t.\ $(\dn_1,\dn_2)$. 
For all r.v.s $X_0$ over $A$ and $Y$ over $B$ such that $\dnx_2(f(X_0),\prob{Y}) \le \alpha$ for some $\alpha\geq0$, there must exist a r.v.\ $X$ over $A$ such that $Y=f(X)$ and $\dnx_1(\prob{X_0}, \prob{X}) \le \distsens{f}{}(\alpha)$, provided $\distsens{f}{}(\alpha)$ is finite.
\end{lem*}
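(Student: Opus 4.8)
The plan is to reduce this distributional statement back to the \emph{pointwise} definition of distortion sensitivity in \Definitionref{dist-sens} by disintegrating the witnessing coupling along $X_0$. Throughout I use the paper's convention that the relevant infima/suprema are attained.

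\textbf{Step 1 (lift the coupling).} Since $\dnx_2(f(X_0),\prob{Y})\le\alpha$, by \Definitionref{distortion} there is a coupling $\psi\in\Phi^0(f(X_0),\prob{Y})$ with $\dn_2(u,v)\le\alpha$ for every $(u,v)\in\support(\psi)$. Because $f(X_0)$ is a deterministic image of $X_0$, I would form the joint law $\rho$ of $(X_0,Y)$ by $\rho(a,b):=\Pr[X_0{=}a]\cdot\psi(b\mid f(a))$, i.e.\ taking $Y$ conditionally independent of $X_0$ given $f(X_0)$, with the conditional law of $Y$ given $f(X_0)$ read off from $\psi$. A short check of marginals shows the $A$-marginal of $\rho$ is $\prob{X_0}$ and (using that the first marginal of $\psi$ is $\prob{f(X_0)}$) the $B$-marginal is $\prob{Y}$; and if $\rho(a,b)>0$ then $(f(a),b)\in\support(\psi)$, so $\dn_2(f(a),b)\le\alpha$.

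\textbf{Step 2 (apply the definition fibrewise).} For each $a\in\support(X_0)$ let $Y_a$ be the conditional distribution of $Y$ given $X_0{=}a$ under $\rho$. Since the first argument is a point, $\dnx_2(f(a),\prob{Y_a})=\sup_{b\in\support(Y_a)}\dn_2(f(a),b)\le\alpha$, so the pair $(a,Y_a)$ lies in the range of the supremum in \eqref{eq:dist-sens}. Hence $\inf_{X:\,f(X)=Y_a}\dnx_1(a,\prob{X})\le\distsens{f}{}(\alpha)$, which is finite by hypothesis; in particular this infimum is over a nonempty set, so there is a r.v.\ $X_a$ over $A$ with $f(X_a)=Y_a$ and $\dnx_1(a,\prob{X_a})\le\distsens{f}{}(\alpha)$, i.e.\ $\dn_1(a,a')\le\distsens{f}{}(\alpha)$ for every $a'\in\support(X_a)$.

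\textbf{Step 3 (glue back) and the main difficulty.} Finally, define $X$ by drawing $a\leftarrow X_0$ and then outputting a fresh sample from $X_a$ (a measurable selection $a\mapsto X_a$ is used when $X_0$ is non-atomic). Then $\prob{f(X)}=\sum_a\Pr[X_0{=}a]\,\prob{f(X_a)}=\sum_a\Pr[X_0{=}a]\,\prob{Y_a}=\prob{Y}$, so $f(X)=Y$ as required; and the natural coupling of $X_0$ and $X$ — pair each drawn $a$ with the sample it produced from $X_a$ — has $A$-marginal $\prob{X_0}$, second marginal $\prob{X}$, and $\dn_1\le\distsens{f}{}(\alpha)$ on its entire support, whence $\dnx_1(\prob{X_0},\prob{X})\le\distsens{f}{}(\alpha)$. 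This proves the lemma. The only place that needs care is the disintegration/gluing of distributions in Steps~1 and 3 — the existence of regular conditional distributions and of a measurable family $a\mapsto X_a$ — which is a routine application of the same tools (regular conditioning, the Gluing Lemma) already used for \Lemmaref{Winf_triangle}; everything else is a direct unwinding of the definitions of $\dnx_i$ and of distortion sensitivity.
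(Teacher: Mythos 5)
Your proof is correct and takes essentially the same route as the paper's: disintegrate the optimal coupling of $f(X_0)$ and $Y$ along $X_0$, apply the pointwise definition of $\distsens{f}{}$ to each fibre $(x_0, Y_{x_0})$, and glue the resulting $X_{x_0}$'s back into a mixture $X$ with the natural coupling. Your Step~1 merely makes explicit the lifting of $\phi$ to a joint law of $(X_0,Y)$ (via conditional independence given $f(X_0)$) that the paper's $\phi\mid\{X_0=x_0\}$ notation leaves implicit.
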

\begin{proof}
Fix random variables $X_0$ over $A$ and $Y$ over $B$ such that $\dnx_2(f(X_0),\prob{Y}) \le \alpha$.
Let $\phi$  be an optimal coupling that achieves the infimum in the definition of $\dnx_2(f(X_0),\prob{Y})$, i.e., 
\begin{equation}\label{eq:distsens-distrib-interim1}
\dnx_2(f(X_0),\prob{Y}) = \sup_{(u,y)\leftarrow\phi}\dn_2(u,y) \leq \alpha. 
\end{equation}
For each $x_0 \in \support(X_0)$, consider the conditional distribution $\phi_{x_0} = \phi | \{X_0 = x_0\}$. 
Clearly, the first marginal of $\phi_{x_0}$ is a point distribution supported at $f(x_0)$. Let its second marginal be denoted by $\prob{Y_{x_0}}$. First we show that for each $x_0\in\support(X_0)$, we have $\dnx_2(f(x_0),\prob{Y_{x_0}})\leq \alpha$.
\begin{align*}
    \dnx_2(f(x_0),\prob{Y_{x_0}}) = \inf_{\phi\in\Phi^0(f(x_0),\prob{Y_{x_0}})} \sup_{(u,y)\leftarrow\phi}\dn_2(u,y) 
    \leq \sup_{(u,y)\leftarrow\phi_{x_0}}\dn_2(u,y) 
    \overset{\text{(a)}}{\leq} \sup_{(u,y)\leftarrow\phi}\dn_2(u,y) 
    \overset{\text{(b)}}{\leq} \alpha.
\end{align*}
Here $(a)$ follows from the fact that $\support(\phi_{x_0}) \subseteq \support(\phi)$ and (b) follows from \eqref{eq:distsens-distrib-interim1}.
Thus for each $x_0 \in \support(X_0)$, we have $\dnx_2(f(x_0),\prob{Y_{x_0}}) \leq \alpha$.
Since $\distsens{f}{}(\alpha)$ is finite, by the definition of $\distsens{f}{}$, there exist a r.v.\ $X_{x_0}$ such that
\begin{align}
    Y_{x_0} &= f(X_{x_0}), \label{distsens-distrib-interim3} \\
    \dnx_1(x_0,\prob{X_{x_0}}) &\le \distsens{f}{}(\alpha). \label{distsens-distrib-interim4}
\end{align} 
Define $X = \sum_{x_0\in\support(X_0)} \prob{X_0}(x_0) X_{x_0}$. 
Now we show that $Y = f(X)$ and $\dnx_1(\prob{X_0},\prob{X})\leq \distsens{f}{}(\alpha)$.
\begin{itemize}
\item {\bf Showing $Y = f(X)$:}
Note that $Y=\sum_{x_0\in\support(X_0)} \prob{X_0}(x_0) Y_{x_0}$ and $f(X)=\sum_{x_0\in\support(X_0)} \prob{X_0}(x_0) f(X_{x_0})$. Now the claim follows because because $Y_{x_0} = f(X_{x_0})$  for each $x_0\in\support(X_0)$ (from \eqref{distsens-distrib-interim3}).

\item {\bf Showing $\dnx_1(\prob{X_0},\prob{X})\leq \distsens{f}{}(\alpha)$:}
For each $x_0\in\support(X_0)$, let $\psi_{x_0}$ be the optimal coupling that achieves the infimum in the definition of 
$\dnx_1(x_0,\prob{X_{x_0}})$. That is, for each $x_0$,  $\psi_{x_0} \in \Phi^{0}(x_0,\prob{X_{x_0}})$ and 
$\dnx_1(x_0,\prob{X_{x_0}})=\sup_{(a,b)\leftarrow\psi_{x_0}}\dn_1(a,b)$.
Let $\psi$ be defined by $\psi(a,b)=\prob{X_0}(x_0)\psi_{x_0}(a,b)$. 
It is easy to verify that $\psi\in\Phi^{0}(\prob{X_0},\prob{X})$.
Further, 
\begin{align*}
\dnx_1(\prob{X_0},\prob{X}) &\leq \sup_{(a,b)\leftarrow\psi}\dn_1(a,b) 
= \sup_{x_0\leftarrow \prob{X_0}} \sup_{(a,b)\leftarrow\psi_{x_0}}\dn_1(a,b) 
= \sup_{x_0\leftarrow \prob{X_0}} \dnx_1(x_0,\prob{X_{x_0}}) 
\leq \distsens{f}{}(\alpha),
\end{align*}
where the last inequality follows from \eqref{distsens-distrib-interim4}.
\end{itemize}
This completes the proof of \Lemmaref{distsens-distrib}.
\end{proof}

\section{Proof of \Theoremref{compose-DP} -- Differential Privacy Under Composition}\label{app:comp-privacy}

\begin{thm*}[Restating \Theoremref{compose-DP}] 
Let $\M_1:A\to B$ and $\M_2:B\to C$ be any two mechanisms.
If $\M_1$ is neighborhood-preserving w.r.t.\
neighborhood relations $\sim_A$ and $\sim_B$ over $A$ and $B$, respectively,
and $\M_2$ is $(\epsilon, \delta)$-DP w.r.t.\ $\sim_B$, 
then $\M_2\circ \M_1:A\to C$ is $(\epsilon, \delta)$-DP w.r.t.\ $\sim_A$.
\end{thm*}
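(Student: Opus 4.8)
The plan is to go directly from the definitions of ``neighborhood preserving'' and ``$(\epsilon,\delta)$-DP'', using the joint distribution promised by $\M_1$ as a bridge. Fix neighbors $x,y \in A$, i.e.\ $x \sim_A y$. Since $\M_1$ is neighborhood preserving, there exist jointly distributed random variables $(X,Y)$ with $\prob{X} = \M_1(x)$, $\prob{Y} = \M_1(y)$, and $\Pr[X \sim_B Y] = 1$. The goal is to show that for every measurable $S \subseteq C$, $\Pr[(\M_2 \circ \M_1)(x) \in S] \le e^{\epsilon}\Pr[(\M_2 \circ \M_1)(y) \in S] + \delta$; the reverse inequality follows by symmetry of $\sim_A$ and the symmetry of the neighborhood-preserving definition in $x,y$.

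The key step is to write $\Pr[(\M_2\circ\M_1)(x)\in S] = \E_{b \leftarrow \M_1(x)}\big[\Pr[\M_2(b)\in S]\big]$, and likewise for $y$. Using the coupling $(X,Y)$, rewrite the $x$-expression as $\E_{(a,b)\leftarrow (X,Y)}[\Pr[\M_2(a)\in S]]$ and the $y$-expression as $\E_{(a,b)\leftarrow (X,Y)}[\Pr[\M_2(b)\in S]]$ — the point being that both are now expectations over the \emph{same} joint distribution. Since $\Pr[X \sim_B Y] = 1$, for (almost) every realized pair $(a,b)$ we have $a \sim_B b$, so the $(\epsilon,\delta)$-DP guarantee of $\M_2$ gives $\Pr[\M_2(a)\in S] \le e^{\epsilon}\Pr[\M_2(b)\in S] + \delta$ pointwise. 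Taking expectations over $(X,Y)$ on both sides and using linearity and monotonicity of expectation yields $\Pr[(\M_2\circ\M_1)(x)\in S] \le e^{\epsilon}\Pr[(\M_2\circ\M_1)(y)\in S] + \delta$, as desired.

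I expect the main (mild) obstacle to be bookkeeping around measurability: one must check that $b \mapsto \Pr[\M_2(b) \in S]$ is a measurable function so that the expectations make sense, and that ``$\Pr[X\sim_B Y]=1$'' lets us discard a measure-zero set of bad pairs without affecting the expectations. In keeping with the paper's stated convention of glossing over such measure-theoretic technicalities (and the fact that infima/suprema are assumed attained), this can be handled briefly or deferred to a remark. The only other thing to note is that the neighborhood-preserving definition is symmetric in its two arguments, so the pair $(Y,X)$ witnesses the bound in the other direction, completing the proof that $\M_2\circ\M_1$ is $(\epsilon,\delta)$-DP w.r.t.\ $\sim_A$. $\Box$
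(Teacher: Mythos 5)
Your proof is correct and follows essentially the same route as the paper's: use the coupling $(X,Y)$ guaranteed by neighborhood-preservation to rewrite both $\Pr[(\M_2\circ\M_1)(x)\in S]$ and $\Pr[(\M_2\circ\M_1)(y)\in S]$ as expectations over the same joint distribution, apply the $(\eps,\delta)$-DP guarantee of $\M_2$ pointwise (valid almost surely since $\Pr[X\sim_B Y]=1$), and take expectations. The paper writes this out as discrete sums ``for simplicity'' whereas you phrase it directly via expectations, and you explicitly flag the measurability bookkeeping and the role of symmetry of $\sim_A$; these are cosmetic differences, not a different argument.
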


\begin{proof}
For simplicity, we consider the case when $B$ is discrete. The proof can be
generalized to the continuous setting.

Since the mechanism $\M_1$ is neighborhood preserving, for $x, x' \in A$
s.t.\ $x_1 \sim_A x_2$, there exists a pair of jointly distributed random
variables $(X_1,X_2)$ over $B\times B$ s.t, $\prob{X_1} = \M_1(x)$,
$\prob{X_2} = \M_1(x')$ and $\Pr[X_1 \sim_B X_2] = 1$.  So, for all $(x_1,
x_2)$ such that $\prob{X_1,X_2}(x_1,x_2)>0$, we have $x_1 \sim_B x_2$ and
hence, by the $(\epsilon , \delta)$-differential privacy of the mechanism
$\M_2$, for all subsets $S \subseteq C$, we have, 
\begin{align*}
    \Pr(\M_2(x_1) \in S) &\leq e^{\epsilon} \Pr(\M_2(x_2) \in S) + \delta.
\end{align*}
Thus, if $x\sim_A x'$, then for any subset $S\subseteq C$, we have,
\begin{align*}
        \Pr[\M_2(\M_1(x)) \in S]
        &= \sum_{x_1} \prob{X_1}(x_1)\Pr[\M_2(x_1) \in S]  \\
        &= \sum_{(x_1, x_2)} \prob{X_1,X_2}(x_1, x_2) \Pr[\M_2(x_1) \in S]  \\ 
        &\le \sum_{(x_1, x_2)} \prob{X_1,X_2}(x_1, x_2) \left(e^{\epsilon} \Pr[\M_2(x_2) \in S] + \delta\right) \\ 
        &= e^\epsilon \left( \sum_{(x_1, x_2)} \prob{X_1,X_2}(x_1, x_2) \Pr[\M_2(x_2) \in S] \right) + \delta \\ 
        &= e^\epsilon \left( \sum_{x_2} \prob{X_2}(x_2) \Pr[\M_2(x_2) \in S] \right) + \delta \\ 
        &= e^{\epsilon} \Pr[\M_2(\M_1(x')) \in S] + \delta 
\end{align*}
This completes the proof of \Theoremref{compose-DP}.
\end{proof}

\section{Details Omitted from \Sectionref{hist-priv-accu-proof} -- Shifted-Truncated Laplace Mechanism}\label{app:histogram}

\begin{claim*}[Restating \Claimref{hist-epdel-c2}]
$\Pr[\mtrlap{\thresh,\eps,\G}(\x')\in S_0\cup S_2]\leq e^{(1 + \nu)\eps}\Pr[\mtrlap{\thresh,\eps,\G}(\x)\in S_0\cup S_2]$, provided $n \ge \frac{2}{\eps\thresh} \ln\left(1 + \frac{1 -  e^{-\frac{\eps\thresh}{2}}}{e^{\eps(\nu + \frac{\thresh}{2})} - 1} \right)$.
\end{claim*}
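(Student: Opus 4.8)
\textbf{Proof proposal for Claim~\ref{clm:hist-epdel-c2}.}
The plan is to bound the output density ratio pointwise on $S_2$, and to dispose of $S_0$ for free. As in the proof of \Lemmaref{hist-epdel} I would work with the un-rounded, un-clipped mechanism that outputs $\hat\y(i)=\x(i)+z_i$; privacy then transfers to \mtrlap{\thresh,\eps,\G} by post-processing. Write $u:=s_{i^*}-x'_{i^*}$, and recall $x_{i^*}=x'_{i^*}+1$, $q=\thresh(n+1)$, $q'=\thresh n$, $q=q'+\thresh$. The first observation is that $\Pr[\mtrlap{\thresh,\eps,\G}(\x')\in S_0]=0$: the noise $\pi_{q'}$ added to bar $i^*$ under $\x'$ is supported on $[-q',0]$, while by definition every $\s\in S_0$ has $u\notin[-q',0]$. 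Hence it suffices to show $\Pr[\mtrlap{\thresh,\eps,\G}(\x')\in S_2]\le e^{(1+\nu)\eps}\Pr[\mtrlap{\thresh,\eps,\G}(\x)\in S_0\cup S_2]$, and for this I would prove the pointwise statement: for every $\s$ lying in $S_2$ and in the common support of the two output distributions (so $u\in[-q'+1-\thresh,0]$),
\[
\frac{\prob{\mtrlap{\thresh,\eps,\G}(\x')}(\s)}{\prob{\mtrlap{\thresh,\eps,\G}(\x)}(\s)}
=\underbrace{\frac{\pi_{q'}(u)}{\pi_q(u-1)}}_{\text{bar }i^*}\cdot\prod_{i\neq i^*}\frac{\pi_{q'}(s_i-x_i)}{\pi_q(s_i-x_i)}\ \le\ e^{(1+\nu)\eps},
\]
using $x_i=x_i'$ for $i\neq i^*$ and $s_{i^*}-x_{i^*}=u-1$.

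The heart of the argument is the bar-$i^*$ factor, and this is precisely where the hypothesis on $n$ is consumed. Substituting the closed form $\pi_q(t)=\frac{1}{1-e^{-\eps q/2}}\cdot\frac{\eps}{2}e^{-\eps|t+q/2|}$ (and the analogue for $\pi_{q'}$), and using $\eps\thresh n\geq 2$ so that $q/2>1$, a direct computation over $u\in[-q'+1-\thresh,0]$ shows the ratio $\pi_{q'}(u)/\pi_q(u-1)$ is maximized at the left endpoint $u=-q'+1-\thresh$ (where $u-1=-q$), giving
\[
\frac{\pi_{q'}(u)}{\pi_q(u-1)}\ \le\ \frac{1-e^{-\eps q/2}}{1-e^{-\eps q'/2}}\,e^{\eps(1-\thresh/2)}.
\]
Now I would rewrite $\frac{1-e^{-\eps q/2}}{1-e^{-\eps q'/2}}=1+\frac{1-e^{-\eps\thresh/2}}{e^{\eps q'/2}-1}$ (using $q=q'+\thresh$), and note that the stated bound $n\ge\frac{2}{\eps\thresh}\ln\!\big(1+\frac{1-e^{-\eps\thresh/2}}{e^{\eps(\nu+\thresh/2)}-1}\big)$ is equivalent, after exponentiating and using $q'=\thresh n$, to $e^{\eps q'/2}-1\ge\frac{1-e^{-\eps\thresh/2}}{e^{\eps(\nu+\thresh/2)}-1}$, i.e.\ to $\frac{1-e^{-\eps q/2}}{1-e^{-\eps q'/2}}\le e^{\eps(\nu+\thresh/2)}$. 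Plugging this in bounds the bar-$i^*$ factor by $e^{\eps(\nu+\thresh/2)}\cdot e^{\eps(1-\thresh/2)}=e^{(1+\nu)\eps}$, exactly matching the target. The same inequality $\frac{1-e^{-\eps q/2}}{1-e^{-\eps q'/2}}\le e^{\eps(\nu+\thresh/2)}$ also controls each factor for $i\neq i^*$: the explicit form of $\pi_{q'}/\pi_q$ shows each such factor lies within a small multiplicative window around $1$ (its maximum over $[-q',0]$ being $\frac{1-e^{-\eps q/2}}{1-e^{-\eps q'/2}}e^{\eps\thresh/2}$), so these factors only perturb the estimate by a negligible amount; one also uses the ``slack'' in $S_0$ (the extra $\x$-mass with $u\in(0,1]$, which has no $\x'$-counterpart) to absorb this. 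Finally I would dispose of the degenerate case $x'_{i^*}=0$ (so $i^*\notin\G_{\x'}$) separately, where the bar $i^*$ contributes only on the $\x$ side and the bound is immediate.

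The step I expect to be the main obstacle is the bar-$i^*$ estimate and, with it, verifying that the left endpoint $u=-q'+1-\thresh$ is genuinely the worst case: the ratio $\pi_{q'}(u)/\pi_q(u-1)$ is a piecewise-exponential function of $u$ (the breakpoints being where $u$ or $u-1$ crosses the respective Laplace means $-q'/2$, $-q/2$), so one has to track a few cases and check monotonicity of $\ln\big(\pi_{q'}(u)/\pi_q(u-1)\big)$ on each piece. Everything else — the reduction to the un-rounded mechanism, killing $S_0$, and the bookkeeping over the remaining bars — is routine given the earlier development.
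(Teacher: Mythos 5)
Your bar-$i^*$ analysis is essentially the paper's, reached by a slightly different route. The paper bounds $|s_{i^*}-x_{i^*}+\tfrac{q}{2}|-|s_{i^*}-x_{i^*}'+\tfrac{q'}{2}|\le 1-\tfrac{\thresh}{2}$ uniformly over $S_2$ with a single application of $|a|-|b|\le|a-b|$ (using $q-q'=\thresh$ and $x_{i^*}-x_{i^*}'=1$), thereby avoiding the piecewise monotonicity casework you flag as the main obstacle; you instead identify $u=-q'+1-\thresh$ as a maximizer and land on the same value $\tfrac{1-e^{-\eps q/2}}{1-e^{-\eps q'/2}}e^{\eps(1-\thresh/2)}$. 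The reduction to the un-rounded mechanism, the disposal of $S_0$ on the $\x'$ side, and the equivalence of the hypothesis on $n$ with $\tfrac{1-e^{-\eps q/2}}{1-e^{-\eps q'/2}}\le e^{\eps(\nu+\thresh/2)}$ (this is the paper's \Claimref{equiv-ineq_claim-hist}) all match the paper.

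Where your proposal has a genuine gap is the $i\neq i^*$ product. You correctly observe that, because $q=\thresh(n+1)\neq q'=\thresh n$, each factor $\pi_{q'}(s_i-x_i)/\pi_q(s_i-x_i)$ is not $1$ and its supremum over $[-q',0]$ is $\tfrac{1-e^{-\eps q/2}}{1-e^{-\eps q'/2}}e^{\eps\thresh/2}>1$. But the bar-$i^*$ estimate already consumes the entire $e^{(1+\nu)\eps}$ budget, and your two mitigations do not close the gap: ``negligible'' is unsupported (for adversarial $S$ up to $t-1$ such factors can all sit at their maximum, giving a product as large as $\bigl(\tfrac{1-e^{-\eps q/2}}{1-e^{-\eps q'/2}}e^{\eps\thresh/2}\bigr)^{t-1}$), and the ``$S_0$ slack'' is an aggregate surplus of $\x$-mass that cannot be traded against a pointwise per-$\s$ density ratio. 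You should also be aware that the paper's own displayed chain has the same soft spot: it silently replaces $\prod_{i\neq i^*}\dnoise{s_i-x_i'}$ with $\prod_{i\neq i^*}\lnoise{s_i-x_i}$, citing only $x_i=x_i'$ --- which would require $\pi_{q'}(z)\le\pi_q(z)$ pointwise, and that already fails at $z=0$, where the ratio equals $\tfrac{1-e^{-\eps q/2}}{1-e^{-\eps q'/2}}e^{\eps\thresh/2}>1$. So you have correctly located a subtlety the paper elides, but neither your sketch nor the handwave ending resolves it; as written this is not a complete proof of the claim.
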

\begin{proof}
First we show that for $\s\in S_0\cup S_2$, we have, $\dnoise{s_{i^*} - x_{i^*}'}\leq e^{(1 + \nu)\eps}\lnoise{s_{i^*} - x_{i^*}}$, provided $n \ge \frac{2}{\eps\thresh} \ln\left(1 + \frac{1 -  e^{-\frac{\eps\thresh}{2}}}{e^{\eps(\nu + \frac{\thresh}{2})} - 1} \right)$, and then we show how this implies the result.

For $\s\in S_0$, $\dnoise{s_{i^*} - x_{i^*}'} = 0$ so the inequality trivially holds.
For $\s\in S_2$, both $\dnoise{s_{i^*} - x_{i^*}'} > 0$ and $\lnoise{s_{i^*} - x_{i^*}} > 0$; hence, we will be done if we show that 
$\frac{\dnoise{s_{i^*} - x_{i^*}'}}{\lnoise{s_{i^*} - x_{i^*}}} \leq e^{(1 + \nu)\eps}$. Note that we are given the following inequality:
\begin{align*}
    n &\ge \frac{2}{\eps\thresh} \ln\left(1 + \frac{1 -  e^{-\eps\frac{\thresh}{2}}}{e^{\eps(\nu + \frac{\thresh}{2})} - 1} \right),
\end{align*}
which can be rewritten as (which we show in \Claimref{equiv-ineq_claim-hist} after this proof):
\begin{align}
    \ln\left(\frac{1 -  e^{-\eps\frac{\thresh (n+1)}{2}}}{1 - e^{-\eps\frac{\thresh n}{2}}}\right) &\leq \eps(\nu + \frac{\thresh}{2}).  \label{eq:n-bound}
\end{align}
By substituting $q=\thresh (n+1)$ and $q'=\thresh n$, \eqref{eq:n-bound} is equivalent to
\[\frac{1}{\eps}\ln\left(\frac{1 -  e^{-\eps\frac{q}{2}}}{1 - e^{-\eps\frac{q'}{2}}}\right) + (1 - \frac{\thresh}{2}) \leq 1 + \nu.\]
This, using the triangle inequality, implies that
\[\frac{1}{\eps}\ln\left(\frac{1 -  e^{-\eps\frac{q}{2}}}{1 - e^{-\eps\frac{q'}{2}}}\right) + \left|s_{i^*} - x_{i^*} + \frac{q}{2}\right| - \left|s_{i^*} - x_{i^*} + \frac{q}{2} + (1 - \frac{\thresh}{2})\right| \leq 1 + \nu.\]
Putting $q'=q-\thresh$ and $x_{i^*}'=x_{i^*}-1$, we get
\[\frac{1}{\eps}\ln\left(\frac{1 -  e^{-\eps\frac{q}{2}}}{1 - e^{-\eps\frac{q'}{2}}}\right) + \left|s_{i^*} - x_{i^*} + \frac{q}{2}\right| - \left|s_{i^*} - x_{i^*}' + \frac{q'}{2}\right| \leq 1 + \nu.\]
By taking exponents of both sides, this is equivalent to showing
\[\frac{(1 -  e^{-\eps\frac{q}{2}})}{(1 - e^{-\eps\frac{q'}{2}})} \frac{e^{-\eps|s_{i^*} - x_{i^*}' + \frac{q'}{2}|}}{e^{-\eps|s_{i^*} - x_{i^*} + \frac{q}{2}|}} \leq e^{(1 + \nu)\eps}\]
By substituting the values of $\lnoise{s_{i^*} - x_{i^*}}$ and $\dnoise{s_{i^*} - x_{i^*}'}$, this can be equivalently written as
\begin{equation}\label{hist-epdel-c2_interim1}
\frac{\dnoise{s_{i^*} - x_{i^*}'}}{\lnoise{s_{i^*} - x_{i^*}}} \leq e^{(1 + \nu)\eps}.
\end{equation}

\noindent Now we show $\Pr[\mtrlap{\thresh,\eps,\G}(\x')\in S_0\cup S_2]\leq e^{(1 + \nu)\eps}\Pr[\mtrlap{\thresh,\eps,\G}(\x)\in S_0\cup S_2]$. Recall that $\G_{\x}=\support(\x)$ for any histogram $\x\in\Hspace\G$.
\begin{align*}
\Pr[\mtrlap{\thresh,\eps,\G}(\x')\in S_0\cup S_2] &= \int_{S_0\cup S_2} \big[\prod_{i\in\G_{\x'}} \dnoise{s_i - x_i'}\big]\dd\s \nonumber\\
&= \int_{S_0\cup S_2} \big[\prod_{\substack{i\in\G_{\x'} : i\neq i^*}} \dnoise{s_i - x_i'}\big]\dnoise{s_{i^*} - x_{i^*}'}\dd\s \nonumber\\
&\leq \int_{S_0\cup S_2} \big[\prod_{\substack{i\in\G_{\x} : i\neq i^*}} \lnoise{s_i - x_i}\big]e^{(1 + \nu)\eps}\lnoise{s_{i^*} - x_{i^*}}\dd\s \nonumber \tag{Using \eqref{hist-epdel-c2_interim1} and that $x_i=x_i', \forall i\neq i^*$}\\
&= e^{(1 + \nu)\eps}\int_{S_0\cup S_2} \big[\prod_{i\in\G_{\x}} \lnoise{s_i - x_i}\big]\dd\s \nonumber \\
&= e^{(1 + \nu)\eps} \Pr[\mtrlap{\thresh,\eps,\G}(\x)\in S_0\cup S_2]
\end{align*}
This completes the proof of \Claimref{hist-epdel-c2}.
\end{proof}

\begin{claim}\label{clm:equiv-ineq_claim-hist}
\[n \ge \frac{2}{\eps\thresh} \ln\left(1 + \frac{1 -  e^{-\eps\frac{\thresh}{2}}}{e^{\eps(\nu + \frac{\thresh}{2})} - 1} \right)\quad \Longleftrightarrow \quad \ln\left(\frac{1 -  e^{-\eps\frac{\thresh (n+1)}{2}}}{1 - e^{-\eps\frac{\thresh n}{2}}}\right) \leq \eps(\nu + \frac{\thresh}{2}).\]
\end{claim}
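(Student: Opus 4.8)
The plan is to reduce both inequalities to a common form by the substitutions $a := e^{-\eps\thresh/2}$ and $c := e^{\eps(\nu + \thresh/2)}$. The first thing I would record is that, since $\nu \ge 0$ and $\eps,\thresh > 0$, we have $0 < a < 1$ and $c \ge e^{\eps\thresh/2} > 1$; in particular both $c - 1 > 0$ and $c - a > 0$. These positivity facts are exactly what make the divisions and logarithms in the rest of the argument legitimate (and well-defined), and keeping track of them is the only place any care is needed.

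Next, noting $a^n = e^{-\eps\thresh n/2}$ and $a^{n+1} = e^{-\eps\thresh(n+1)/2}$, I would rewrite the right-hand inequality $\ln\!\big(\tfrac{1-a^{n+1}}{1-a^n}\big) \le \ln c$ by exponentiating and clearing the positive denominator $1-a^n$, obtaining the equivalent form $1 - a\cdot a^n \le c\,(1-a^n)$, i.e. $a^n(c-a) \le c-1$, i.e. (dividing by $c-a>0$) $a^n \le \tfrac{c-1}{c-a}$. I would then take logarithms of both sides; since $\ln a = -\eps\thresh/2 < 0$, this reverses the inequality and yields $n \ge \tfrac{1}{\ln a}\ln\tfrac{c-1}{c-a} = -\tfrac{2}{\eps\thresh}\ln\tfrac{c-1}{c-a} = \tfrac{2}{\eps\thresh}\ln\tfrac{c-a}{c-1}$.

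Finally I would simplify $\tfrac{c-a}{c-1} = 1 + \tfrac{1-a}{c-1}$ and unfold the definitions of $a$ and $c$, so the right-hand side becomes $\tfrac{2}{\eps\thresh}\ln\!\big(1 + \tfrac{1-e^{-\eps\thresh/2}}{e^{\eps(\nu+\thresh/2)}-1}\big)$, which is precisely the left-hand inequality of the claim. Every step above is reversible, so this establishes the ``$\Leftrightarrow$''. There is no genuine obstacle here — the statement is a routine algebraic rearrangement — the only two things to be careful about are (i) flipping the inequality when dividing by $\ln a < 0$, and (ii) checking $c-a>0$ so that $\ln\tfrac{c-1}{c-a}$ is defined and the division by $c-a$ preserves the inequality direction.
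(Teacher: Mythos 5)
Your proof is correct and follows essentially the same route as the paper's: a direct chain of reversible algebraic equivalences starting from the right-hand inequality, clearing the denominator, isolating $e^{\eps\thresh n/2}$ (your $a^{-n}$), and taking logarithms. The substitutions $a=e^{-\eps\thresh/2}$, $c=e^{\eps(\nu+\thresh/2)}$ only streamline the bookkeeping, and your explicit remarks on the sign conditions ($c-1>0$, $c-a>0$, $\ln a<0$) make the reversibility of each step more transparent than in the paper's write-up.
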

\begin{proof}
We will start with the RHS and show that it is equivalent to the LHS.
\begin{align*}
\frac{1 -  e^{-\eps\frac{\thresh (n+1)}{2}}}{1 - e^{-\eps\frac{\thresh n}{2}}} &\leq e^{\eps(\nu + \frac{\thresh}{2})} \\
\Longleftrightarrow 1 -  e^{-\eps\frac{\thresh (n+1)}{2}} &\leq e^{\eps(\nu + \frac{\thresh}{2})} - e^{\eps(\nu + \frac{\thresh}{2})}e^{-\eps\frac{\thresh n}{2}} \\
\Longleftrightarrow 1 -  e^{-\eps\frac{\thresh n}{2}}e^{-\eps\frac{\thresh}{2}} &\leq e^{\eps(\nu + \frac{\thresh}{2})} - e^{\eps(\nu + \frac{\thresh}{2})}e^{-\eps\frac{\thresh n}{2}} \\
\Longleftrightarrow e^{-\eps\frac{\thresh n}{2}}\left(e^{\eps(\nu + \frac{\thresh}{2})} - e^{-\eps\frac{\thresh}{2}}\right) &\leq e^{\eps(\nu + \frac{\thresh}{2})} - 1 \\
\Longleftrightarrow e^{\eps\frac{\thresh n}{2}} &\geq \frac{e^{\eps(\nu + \frac{\thresh}{2})} - e^{-\eps\frac{\thresh}{2}}}{e^{\eps(\nu + \frac{\thresh}{2})} - 1} \\
\Longleftrightarrow e^{\eps\frac{\thresh n}{2}} &\geq 1 + \frac{1 - e^{-\eps\frac{\thresh}{2}}}{e^{\eps(\nu + \frac{\thresh}{2})} - 1} \\
\Longleftrightarrow n &\ge \frac{2}{\eps\thresh} \ln\left(1 + \frac{1 -  e^{-\eps\frac{\thresh}{2}}}{e^{\eps(\nu + \frac{\thresh}{2})} - 1} \right).
\end{align*}
\end{proof}

\begin{claim*}[Restating \Claimref{hist-epdel-c3}]
$\Pr[\mtrlap{\thresh,\eps,\G}(\x')\in S_{1}]\leq \frac{e^{\eps} - 1}{2(e^{\nicefrac{\eps q}{2}} - 1)}$.
\end{claim*}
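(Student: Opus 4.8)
The plan is to reduce the event $\{\mtrlap{\thresh,\eps,\G}(\x')\in S_1\}$ to a one-dimensional statement about the noise in the single bar $i^*$, evaluate the resulting integral in closed form, and then compare it to $\delta$ by an elementary inequality. For the reduction, note that membership of an output $\s$ in $S_1$ depends only on $s_{i^*}-x_{i^*}'$, while $\mtrlap{\thresh,\eps,\G}$ adds \emph{independent} noise to every bar; integrating out all the other coordinates therefore gives $\Pr[\mtrlap{\thresh,\eps,\G}(\x')\in S_1]=\Pr[z_{i^*}\in[-q',-q'+(1-\thresh))]$, where $z_{i^*}$ is the noise added to bar $i^*$ when the mechanism is run on $\x'$. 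When $x_{i^*}'\ge1$ this noise is distributed according to $\pi_{q'}$ with $q'=\thresh|\x'|=\thresh n$; when $x_{i^*}'=0$ no noise is added there, so the event has probability $0$ for any dataset with $q'\ge1-\thresh$ and the claim is immediate. So it suffices to treat the main case $x_{i^*}'\ge1$.

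The next step is to evaluate $\int_{-q'}^{-q'+(1-\thresh)}\pi_{q'}(z)\,\dd z$. The interval $[-q',-q'+(1-\thresh))$ lies entirely to the left of the Laplace mean $-q'/2$ exactly when $q'\ge2(1-\thresh)$ — a mild ``dataset is large enough'' condition, which holds in the regime of interest (in particular under the size hypothesis of \Lemmaref{hist-epdel}) — and on that interval $|z+\tfrac{q'}2|=-z-\tfrac{q'}2$, so $\pi_{q'}(z)=\tfrac{\eps}{2(1-e^{-\eps q'/2})}e^{\eps z+\eps q'/2}$. A routine antiderivative then yields
\[
\Pr[\mtrlap{\thresh,\eps,\G}(\x')\in S_1]\;=\;\frac{e^{\eps(1-\thresh)}-1}{2\,(e^{\eps q'/2}-1)}.
\]

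It remains to compare this with $\delta=\dfrac{e^{\eps}-1}{2(e^{\eps q/2}-1)}$. Here $q=\thresh|\x|=\thresh(n+1)=q'+\thresh$, so the required inequality is $\dfrac{e^{\eps(1-\thresh)}-1}{e^{\eps q'/2}-1}\le\dfrac{e^{\eps}-1}{e^{\eps q/2}-1}$. Setting $Q:=e^{\eps q/2}$ and $r:=e^{\eps\thresh/2}\ge1$ (so that $e^{\eps q'/2}=Q/r$ and $e^{\eps(1-\thresh)}=e^{\eps}/r^2$), I would cross-multiply and collect terms to rewrite this as $Q(1-r)(e^{\eps}+r)\le e^{\eps}(1-r)(1+r)$; since $1-r\le0$ for $\thresh\ge0$ (the case $\thresh=0$ being an equality), dividing by $1-r$ reverses the inequality and leaves $e^{\eps q'/2}\,r(e^{\eps}+r)\ge e^{\eps}(1+r)$, i.e.\ $e^{\eps}e^{\eps q'/2}+r\,e^{\eps q'/2}\ge e^{\eps}+e^{\eps}/r$. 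The first summand on each side compares because $e^{\eps q'/2}\ge1$, and $r\,e^{\eps q'/2}\ge e^{\eps}/r$ is exactly $e^{\eps\thresh}e^{\eps q'/2}\ge e^{\eps}$, i.e.\ $q'\ge2(1-\thresh)$ — precisely the condition already used above. This establishes \Claimref{hist-epdel-c3}.

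The main obstacle is this last step. The closed-form probability pairs a ``smaller'' numerator ($e^{\eps(1-\thresh)}-1\le e^{\eps}-1$ since $\thresh\ge0$) with a ``smaller'' denominator ($e^{\eps q'/2}-1\le e^{\eps q/2}-1$ since $q'\le q$), so neither comparison alone proves the bound; one must combine them, be careful that multiplying through by $1-r\le0$ flips the direction of the inequality, and then observe that what is left collapses back to the very ``large dataset'' condition ($q'\ge2(1-\thresh)$, i.e.\ $\thresh n\ge2(1-\thresh)$) that made the explicit integral legitimate in the first place.
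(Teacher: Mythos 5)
Your proof follows the same overall route as the paper: integrate out all coordinates except $i^*$ (bounding the factor for the other bars by $1$), evaluate the remaining one-dimensional integral of $\pi_{q'}$ over $[-q',-q'+(1-\thresh))$ in closed form, and compare the result with $\delta$. You are more careful than the paper's displayed derivation in one important respect: the mechanism run on $\x'$ uses the parameter $q'=\thresh n$, so the correct closed form is $\frac{e^{\eps(1-\thresh)}-1}{2(e^{\eps q'/2}-1)}$, not the expression in $q$ that appears in the paper's derivation (which looks like a typo there — the paper also writes the integration limits with a sign flip). With $q'$ in the denominator, the final comparison against $\delta=\frac{e^{\eps}-1}{2(e^{\eps q/2}-1)}$ is genuinely not immediate from $\thresh>0$ alone, since the left side has both a smaller numerator \emph{and} a smaller denominator; your cross-multiplication / factorization via $Q=e^{\eps q/2}$, $r=e^{\eps\thresh/2}$ correctly reduces it to $q'\ge 2(1-\thresh)$. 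So you have identified and patched a step that the paper glosses over.

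The one thing to fix: you invoke ``$q'\ge 2(1-\thresh)$ holds under the size hypothesis of \Lemmaref{hist-epdel}'' without proof, and this is not automatic. You need this condition twice — once so that the interval $[-q',-q'+(1-\thresh))$ lies entirely to the left of the Laplace mean $-q'/2$ (so your antiderivative formula applies), and once again at the end of the algebra. But the hypothesis of \Lemmaref{hist-epdel} is $|\x|\ge\frac{2}{\eps\thresh}\ln\bigl(1+\frac{1-e^{-\eps\thresh/2}}{e^{\eps(\nu+\thresh/2)}-1}\bigr)$, which for $\nu=0$ is roughly $\eps\thresh n\gtrsim 2\ln 2$; your condition is $\thresh n\gtrsim 2$ (no $\eps$), and the former does not imply the latter when $\eps>1$. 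In those regimes $\delta$ tends to exceed $1$ (making the claim vacuous), and numerically the inequality still seems to hold even when the interval overlaps the mean, but your argument as written does not cover it. Either verify that the lemma's hypothesis (or the paper's downstream instantiation $\eps\tau n\ge 2$ with, say, $\eps\le 1$) forces $q'\ge 2(1-\thresh)$, or handle the case $q'<2(1-\thresh)$ separately by splitting the integral at the mean and redoing the comparison (or by observing that there $\delta\ge 1$). Without that, the proof is incomplete in the same place the paper's own proof silently cuts a corner.
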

\begin{proof}
Observe that, for every $\s\in S_1$, we have $-q' \le s_{i^*} - x_{i^*}' < -q' + (1 - \thresh)$.
Recall that $\G_{\x'}=\support(\x')$ and $|\x'|=n$. 
Let $|\G_{\x'}|=t$ for some $t\leq n$, and, for simplicity, assume that $\G_{\x'}=\{1,2,\hdots,t\}$.
For $i\in[t]$, define $S_1(i):=\{\hat{s}_i:\exists \s\in S_1 \text{ s.t. } \hat{s}_i=s_i\}$, which is equal to the collection of the multiplicity of $i$ in the histograms in $S_1$.
\begin{align*}
\Pr[\mtrlap{\thresh,\eps,\G}(\x')\in S_1] &= \int_{S_1} \big[\prod_{i=1}^t \dnoise{s_i - x_i'}\big]\dd\s \nonumber\\
& = \int_{S_1(1)}\ldots\int_{S_1(i^*)}\ldots\int_{S_1(t)} \big[\prod_{i=1}^t \dnoise{s_i - x_i'}\big]\dd s_t\ldots \dd s_{i^*}\ldots \dd s_1 \nonumber \\
& = \int_{S_1(i^*)}\dnoise{s_{i^*} - x_{i^*}'} \underbrace{\bigg(\int_{S_1(1)}\ldots \int_{S_1(t)} \big[\prod_{\substack{i=1 : i\neq i^*}}^t \dnoise{s_i - x_i'}\big]\dd s_t\ldots  \dd s_1\bigg)}_{\leq\ 1} \dd s_{i^*}  \nonumber\\
& \leq \int_{S_1(i^*)}\dnoise{s_{i^*} - x_{i^*}'} \dd s_{i^*}  \nonumber\\
&= \int_{q'}^{q' + (1 - \thresh)}\dnoise{z} \dd z \tag{Since $\forall \s\in S_1$, $(s_{i^*}-x_{i^*}')\in[-q',-q' + (1-\thresh))$} \nonumber \\
& =  \frac{e^{(1-\thresh)\eps} - 1}{2(1 - e^{-\eps q/2})}e^{-\eps \nicefrac{q}{2}} \nonumber\\
& \le  \frac{e^{\eps} - 1}{2(e^{\eps \nicefrac{q}{2}} - 1)}. \tag{Since $\thresh > 0$}
\end{align*}
This proves \Claimref{hist-epdel-c3}.
\end{proof}

\begin{lem}\label{lem:hist-priv-nu-bigger-0}
For any $\nu,\eps>0$ and $\x$ such that $\eps\nu>\ln\left(1+\frac{1}{|\x|}\right)$, \mtrlap{\thresh,\eps,\G} is 
$\left((1 + \nu)\eps,\frac{e^{\eps} - 1}{2(e^{\nicefrac{\eps q}{2}} - 1)}\right)$-DP w.r.t.\ \nhist, where $q=\tau|\x|$.
\end{lem}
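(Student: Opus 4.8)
The plan is to reduce \Lemmaref{hist-priv-nu-bigger-0} to \Lemmaref{hist-epdel}. Both lemmas have exactly the same conclusion — that \mtrlap{\thresh,\eps,\G} is $\left((1+\nu)\eps,\frac{e^{\eps}-1}{2(e^{\eps q/2}-1)}\right)$-DP w.r.t.\ \nhist\ with $q=\tau|\x|$ — so it suffices to show that the hypothesis $\eps\nu>\ln\!\left(1+\frac1{|\x|}\right)$ (together with $\nu,\eps>0$) implies the more cumbersome precondition $|\x|\ge\frac{2}{\eps\thresh}\ln\!\left(1+\frac{1-e^{-\eps\thresh/2}}{e^{\eps(\nu+\thresh/2)}-1}\right)$ required by \Lemmaref{hist-epdel}. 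Then invoking \Lemmaref{hist-epdel} finishes the proof immediately.

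To establish that implication, I would first apply \Claimref{equiv-ineq_claim-hist} with $n=|\x|$, which says that this precondition is equivalent to
\[
\ln\!\left(\frac{1-e^{-\eps\thresh(|\x|+1)/2}}{1-e^{-\eps\thresh|\x|/2}}\right)\;\le\;\eps\Big(\nu+\tfrac{\thresh}{2}\Big).
\]
The one nontrivial ingredient is then an elementary monotonicity fact: the map $x\mapsto 1-e^{-x}$ is concave on $[0,\infty)$ and vanishes at $0$, hence $x\mapsto\frac{1-e^{-x}}{x}$ is non-increasing on $(0,\infty)$. Applying this with the two arguments $\tfrac{\eps\thresh|\x|}{2}<\tfrac{\eps\thresh(|\x|+1)}{2}$ gives
\[
\frac{1-e^{-\eps\thresh(|\x|+1)/2}}{1-e^{-\eps\thresh|\x|/2}}\;\le\;\frac{|\x|+1}{|\x|}\;=\;1+\frac{1}{|\x|}.
\]

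Finally I would combine this bound with the hypothesis: $1+\frac1{|\x|}<e^{\eps\nu}\le e^{\eps(\nu+\thresh/2)}$, the last inequality using $\thresh>0$. Taking logarithms gives the displayed inequality above, which by \Claimref{equiv-ineq_claim-hist} is precisely the precondition of \Lemmaref{hist-epdel}; that lemma then yields the claimed guarantee. I do not expect a genuine obstacle here — the proof is essentially bookkeeping plus the concavity estimate on $1-e^{-x}$. (If one prefers to avoid invoking concavity, the same bound $\frac{1-e^{-\delta(m+1)}}{1-e^{-\delta m}}\le\frac{m+1}{m}$ can be obtained by the substitution $t=e^{-\delta}\in(0,1]$ and checking the polynomial inequality $m\,t\,(1-t^{m+1})\le(m+1)(1-t^m)$, e.g.\ by verifying equality and vanishing derivative at $t=1$ and controlling the second derivative; but the concavity argument is cleaner.)
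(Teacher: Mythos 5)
Your proof is correct, and it takes a genuinely different route from the paper's. The paper also reduces to \Lemmaref{hist-epdel}, but it proceeds by asserting (without proof) that the right-hand side of that lemma's precondition, $\frac{2}{\eps\thresh}\ln\bigl(1 + \frac{1-e^{-\eps\thresh/2}}{e^{\eps(\nu+\thresh/2)}-1}\bigr)$, is a decreasing function of $\thresh$, and then computing its limit as $\thresh\to 0^+$ to be $\frac{1}{e^{\eps\nu}-1}$, so that $|\x|>\frac{1}{e^{\eps\nu}-1}$ (equivalently $\eps\nu>\ln(1+1/|\x|)$) dominates the precondition uniformly in $\thresh$. Your approach instead invokes \Claimref{equiv-ineq_claim-hist} to restate the precondition and then directly bounds $\frac{1-e^{-\eps\thresh(|\x|+1)/2}}{1-e^{-\eps\thresh|\x|/2}}\le 1+\frac1{|\x|}$ via the monotonicity of $x\mapsto\frac{1-e^{-x}}{x}$, at the specific $\thresh$ in play. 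This avoids the limit computation entirely and, more importantly, replaces the unproved monotonicity-in-$\thresh$ claim with a clean, self-contained, and standard concavity estimate. Your argument is the more rigorous of the two; the paper's is perhaps a bit more revealing of why the threshold is exactly $\frac{1}{e^{\eps\nu}-1}$ (it is literally the $\thresh\to 0^+$ limit) but leaves a real gap in the monotonicity step.
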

\begin{proof}
We use \Lemmaref{hist-epdel} and put a restriction that $\nu$ should be $> 0$. We will analyze the effect of this restriction on the bound of $|\x|$. We restate the bound on $|\x|$ here again for convenience:
\[|\x| \ge \frac{2}{\eps\thresh} \ln\left(1 + \frac{1 -  e^{-\eps\frac{\thresh}{2}}}{e^{\eps(\nu + \frac{\thresh}{2})} - 1} \right)\]
It can be easily checked that for any fixed $\eps, \nu > 0$, the RHS is a decreasing function of $\thresh$.
Hence, if we set $\thresh$ to its minimum value, we get a lower bound on $|\x|$ which is independent of $\tau$.
Since this expression is not defined at $\thresh = 0$, we will take its one-sided limit as $\thresh \rightarrow 0^+$, i.e.,
\begin{align*}
    \lim_{\thresh \rightarrow 0^+}\frac{2}{\eps\thresh} \ln\left(1 + \frac{1 -  e^{-\eps\frac{\thresh}{2}}}{e^{\eps(\nu + \frac{\thresh}{2})} - 1} \right)
\end{align*}
We will replace $\frac{\eps\thresh}{2}$ with $l$. As $\thresh \rightarrow 0^+$, $l \rightarrow 0^+$, and we get
\begin{align*}
    \lim_{\thresh \rightarrow 0^+}\frac{2}{\eps\thresh} \ln\left(1 + \frac{1 -  e^{-\eps\frac{\thresh}{2}}}{e^{\eps(\nu + \frac{\thresh}{2})} - 1} \right) &= \lim_{l \rightarrow 0^+}\frac{1}{l} \ln\left(1 + \frac{1 -  e^{-l}}{e^{\eps\nu + l} - 1} \right)\\
    &= \lim_{l \rightarrow 0^+}\frac{1}{l} \ln\left(1 + \frac{1 -  e^{-l}}{e^{\eps\nu + l} - 1} \right)\left(\frac{1 -  e^{-l}}{e^{\eps\nu + l} - 1}\right)\left(\frac{e^{\eps\nu + l} - 1}{1 -  e^{-l}}\right)\\
    &= \lim_{l \rightarrow 0^+}\left(\frac{1}{e^{\eps\nu + l} - 1}\right)\left(\frac{1 -  e^{-l}}{l}\right) \left(\frac{\ln\left(1 + \frac{1 -  e^{-l}}{e^{\eps\nu + l} - 1} \right)}{\frac{1 -  e^{-l}}{e^{\eps\nu + l} - 1}}\right)\\
    &= \frac{1}{e^{\eps\nu} - 1} \tag{$\lim_{x \rightarrow 0^+} \frac{1-e^{-x}}{x} = 1$; $\lim_{x \rightarrow 0^+} \frac{\ln(1+x)}{x} = 1$}\\
\end{align*}
We have proved that on inputs $\x$ s.t.\ $|\x| > \frac{1}{e^{\eps\nu} - 1}$, which is equivalent to the condition that $\eps\nu>\ln\left(1+\frac{1}{|\x|}\right)$, \mtrlap{\thresh,\eps,\G} is
$\left((1 + \nu)\eps,\frac{e^{\eps} - 1}{2(e^{\nicefrac{\eps q}{2}} - 1)}\right)$-DP w.r.t.\ \nhist, where $q=\tau|\x|$. 
\end{proof}

\section{Details Omitted from \Sectionref{bucketHist-priv-accu-proof} -- Bucketed Histogram Mechanism}\label{app:bucketHist_proofs}

\begin{claim*}[Restating \Claimref{bucket-accuracy}]
$\mbuc{w,[0,B)}$ is $\left(0, \frac{w}{2}, 0\right)$-accurate for the identity function $f_{\emph{id}}$ over $\Hspace{[0,B)}$ w.r.t~ the metric $\dhistx$. 
\end{claim*}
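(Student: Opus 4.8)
The plan is to unwind Definition~\ref{def:alpha-beta-gamma-accu} and then exhibit one explicit coupling. We are claiming $(0,\frac w2,0)$-accuracy for $f=f_{\mathrm{id}}$ w.r.t.\ the distortion $\drop$ (which, since $\alpha=0$, enters only through the quasi-metric axiom $\drop(\x,\x')=0\iff\x=\x'$) and the metric $\dhistx$. So fix a histogram $\x\in\Hspace{[0,B)}$; we must produce a random variable $X'$ with $\dnx(\x,\prob{X'})\le 0$ and $\Winf0(\mbuc{w,[0,B)}(\x),f_{\mathrm{id}}(X'))\le\frac w2$. Since $\dnx(\x,\prob{X'})=\sup_{\x'\in\support(X')}\drop(\x,\x')$ and $\drop$ is a quasi-metric, the constraint forces $\support(X')=\{\x\}$, i.e.\ $X'$ is the point distribution at $\x$; and since $\mbuc{w,[0,B)}$ is deterministic, $\y:=\mbuc{w,[0,B)}(\x)$ is likewise a point distribution. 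With both arguments of $\Winf0$ being point distributions, the quantity to bound collapses to $\dhist{\y}{\x}$, so it suffices to show $\dhist{\y}{\x}\le\frac w2$.

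Next I would expand the metric. By definition $\dhist{\y}{\x}=\Winf{}\bigl(\tfrac{\y}{|\y|},\tfrac{\x}{|\x|}\bigr)$ with the standard metric on $\R$, and by construction of \Algorithmref{bucketing} we have $|\y|=|\x|=:n$. Hence it only remains to show that the probability vectors $\tfrac{\x}{n}$ and $\tfrac{\y}{n}$ are within $\tfrac w2$ in $\infty$-Wasserstein distance.

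The key step is the explicit coupling. For $g\in[0,B)$ let $s(g)\in S$ be the center of the (unique) bucket containing $g$, so that $g-s(g)\in[-\tfrac w2,\tfrac w2)$ and therefore $|g-s(g)|\le\tfrac w2$; by the definition of $\mbuc{w,[0,B)}$ in \Algorithmref{bucketing}, $\y(s)=\sum_{g:\,s(g)=s}\x(g)$. Define the coupling $\phi$ that places mass $\tfrac{\x(g)}{n}$ on the pair $(g,s(g))$ for every $g$ with $\x(g)>0$. Its first marginal is $\tfrac{\x}{n}$ by construction, and its second marginal assigns to each $s\in S$ the mass $\tfrac1n\sum_{g:\,s(g)=s}\x(g)=\tfrac{\y(s)}{n}$, which is $\tfrac{\y}{n}$ since $n=|\y|$; thus $\phi\in\Phi^0(\tfrac{\x}{n},\tfrac{\y}{n})$. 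Every pair in the support of $\phi$ has the form $(g,s(g))$ with $|g-s(g)|\le\tfrac w2$, so $\sup_{(a,b)\leftarrow\phi}|a-b|\le\tfrac w2$, giving $\Winf{}(\tfrac{\x}{n},\tfrac{\y}{n})\le\tfrac w2$ and hence $\dhist{\y}{\x}\le\tfrac w2$. Taking the supremum over $\x$ completes the argument, establishing $(0,\tfrac w2,0)$-accuracy.

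The proof is essentially bookkeeping and I do not expect a genuine obstacle. The only points needing minor care are: (i) verifying that the half-open bucket intervals $[s-\tfrac w2,s+\tfrac w2)$, for $s\in S$ with $|S|=\lceil B/w\rceil$, partition $[0,B)$ so that $s(g)$ is well defined and $|g-s(g)|\le\tfrac w2$ (equality only possible at the left endpoint); and (ii) noting that $\alpha=0$ trivializes the input-distortion part of flexible accuracy, so that no nontrivial dropping is used and the bound comes purely from the fixed-support-size, mass-preserving nature of the bucketing map.
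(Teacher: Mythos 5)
Your proof matches the paper's: both reduce, via $\alpha=\gamma=0$ together with the determinism of $\mbuc{w,[0,B)}$ and $f_{\mathrm{id}}$, to showing $\dhist{\mbuc{w,[0,B)}(\x)}{\x}\le\frac{w}{2}$, and both then bound $\Winf{}\bigl(\tfrac{\x}{n},\tfrac{\y}{n}\bigr)$ by $\frac{w}{2}$ using the natural ``assign each point to its bucket center'' coupling. The only difference is that you write this coupling out explicitly and verify its marginals, whereas the paper compresses the same step into the observation that the Wasserstein distance is at most the maximum distance any point moves under the bucketing map.
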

\begin{proof}
Since both $f_{\text{id}}$ and $\mbuc{w,[0,B)}$ are deterministic maps, on any input $\x\in\Hspace{[0,B)}$, we denote $\x$ (as the output of $f_{\text{id}}(\x)$) and $\mbuc{w,[0,B)}$ as point distributions over $\Hspace{[0,B)}$.
Now, in order to prove the claim, we need to show that $\Winf{}(\mbuc{w,[0,B)}(\x),\x)\leq\frac{w}{2}$ holds for any $\x\in\Hspace{[0,B)}$. 

Fix any $\x\in\Hspace{[0,B)}$ and define $\y:=\mbuc{w,[0,B)}(\x)$.
Since $\x,\y$ are point distributions and the underlying metric is $\dhistx$, we have $\Winf{}(\y,\x)=\dhistx(\y,\x)$, where $\dhistx$ is defined as $\dhist{\y}{\x}=\Winf{}(\frac{\y}{|\y|},\frac{\x}{|\x|})$. Since $\y$ is a deterministic function of $\x$, $\Winf{}(\frac{\y}{|\y|},\frac{\x}{|\x|})$ is upper bounded by the maximum distance any point in $\x$ moves to form $\y$, which is equal to the the maximum distance of the center of a bucket from any point in that bucket, which is $\frac{w}{2}$.
\end{proof}

\begin{claim*}[Restating \Claimref{fphi_distort}]
For any $\x\in\Hspace\G$, if $\z$ is $\alpha$-distorted from $\x$, then $f_{\phi}(\z)$ is $\alpha$-distorted from $\x'$.
\end{claim*}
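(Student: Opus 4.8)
The plan is to verify directly the two conditions that together are equivalent to $\drop(\x',f_\phi(\z))\le\alpha$. By the definition of $\drop$ in \eqref{eq:drop_defn}, this bound holds precisely when (a) $f_\phi(\z)(b)\le\x'(b)$ for every $b$ (so that $f_\phi(\z)$ is genuinely obtained from $\x'$ by deleting elements, making the distance finite), and (b) $|\x'|-|f_\phi(\z)|\le\alpha|\x'|$, i.e.\ $|f_\phi(\z)|\ge(1-\alpha)|\x'|$. The hypothesis that $\z$ is $\alpha$-distorted from $\x$ supplies exactly the two facts we shall use about $\z$: pointwise domination $\z(a)\le\x(a)$ for all $a$, and the mass bound $|\z|\ge(1-\alpha)|\x|$.

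The only other ingredient is the marginal structure of the coupling $\phi$ of $\frac{\x}{|\x|}$ and $\frac{\x'}{|\x'|}$ (its optimality, and hence the cost bound $\sup_{(a,b)\leftarrow\phi}\dG{a}{b}\le\beta$, is irrelevant for this claim): namely $\sum_b\phi(a,b)=\frac{\x(a)}{|\x|}$ and $\sum_a\phi(a,b)=\frac{\x'(b)}{|\x'|}$. For (a), from $f_\phi(\z)(b)=|\x'|\sum_a\frac{\z(a)}{\x(a)}\phi(a,b)$ and $\frac{\z(a)}{\x(a)}\le1$ we get $f_\phi(\z)(b)\le|\x'|\sum_a\phi(a,b)=\x'(b)$. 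For (b), summing over $b$ and interchanging the order of summation, $|f_\phi(\z)|=|\x'|\sum_a\frac{\z(a)}{\x(a)}\bigl(\sum_b\phi(a,b)\bigr)=|\x'|\sum_a\frac{\z(a)}{\x(a)}\cdot\frac{\x(a)}{|\x|}=\frac{|\x'|}{|\x|}\,|\z|\ge\frac{|\x'|}{|\x|}(1-\alpha)|\x|=(1-\alpha)|\x'|$. Combining (a) and (b) yields $\drop(\x',f_\phi(\z))\le\alpha$, which is the claim.

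The one point that needs care — and it is bookkeeping rather than a genuine obstacle — is the treatment of $a$ with $\x(a)=0$: then $\phi(a,b)=0$ for all $b$ (as $\frac{\x}{|\x|}$ is the first marginal) and also $\z(a)=0$ (since $\z\le\x$), so the summands $\frac{\z(a)\phi(a,b)}{\x(a)}$ are read as $0$ and contribute nothing; likewise for $b\notin\support(\x')$ one has $\x'(b)=0$ and $\sum_a\phi(a,b)=0$, so $f_\phi(\z)(b)=0=\x'(b)$, consistent with (a). A second minor issue is that $f_\phi(\z)$ need not be integer-valued; here we allow such fractional histograms, in line with the convention stated at the start of \Sectionref{proofs}, and it is harmless because the downstream \Claimref{distortion_bw_xprime-Y} and \Claimref{Winf_Y_trLap-bound} invoke only the drop-bound proved here and a Wasserstein-distance bound, neither of which uses integrality.
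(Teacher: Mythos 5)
Your proof is correct and follows the same route as the paper: verify the pointwise domination $f_\phi(\z)(b)\le\x'(b)$ from $\z(a)\le\x(a)$ and the second-marginal identity $\sum_a\phi(a,b)=\x'(b)/|\x'|$, and verify the mass bound $|f_\phi(\z)|\ge(1-\alpha)|\x'|$ by swapping sums, using the first-marginal identity and $|\z|\ge(1-\alpha)|\x|$. The extra bookkeeping you add for $\x(a)=0$ and for non-integrality of $f_\phi(\z)$ is sound but goes beyond what the paper records.
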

\begin{proof}
We need to show two things: 
{\sf (i)} $f_{\phi}(\z)(b) \leq \x'(b)$ holds for every $b\in\G$, and 
{\sf (ii)} $\sum_{b\in\G}f_{\phi}(\z)(b) \geq (1-\alpha)\sum_{b\in\G}\x'(b)$.
The first condition holds because $\z(a)\leq\x(a),\forall a\in\G$ (since $\z$ is $\alpha$-distorted from $\x$) and that $\sum_{a\in\G}\phi(a,b)=\frac{\x'(b)}{|\x'|}$.
For the second condition, 
\begin{align}
\sum_{b\in\G}f_{\phi}(\z)(b) &= \sum_{b\in\G}|\x'|\sum_{a\in\G} \frac{\z(a) \phi(a , b)}{\x(a)} 
= |\x'|\sum_{a\in\G}\frac{\z(a)}{\x(a)}\sum_{b\in\G}\phi(a,b) 
\stackrel{\text{(a)}}{=} \frac{|\x'|}{|\x|} \sum_{a\in\G}\z(a) 
\stackrel{\text{(b)}}{\geq} (1-\alpha)|\x'|, \label{fphi_dist_bound} 
\end{align}
where (a) follows from $\sum_{b\in\G}\phi(a,b)=\frac{\x(a)}{|\x|}$ and (b) follows because $\z$ is $\alpha$-distorted from $\x$, which implies that $\sum_{a\in\G}\z(a) = |\z| \geq (1-\alpha)|\x|$.
Therefore, $f_{\phi}(\z)$ is $\alpha$-distorted from $\x'$.
\end{proof}

\begin{claim*}[Restating \Claimref{distortion_bw_xprime-Y}]
$\widehat{\partial}_{\emph{drop}}(\x',\prob{Y}) \leq \alpha$.
\end{claim*}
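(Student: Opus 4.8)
The plan is to reduce the claim to a pointwise statement about the support of $Y$. Since $\x'$ is a point distribution, \Definitionref{distortion} gives $\dropx(\x',\prob{Y})=\sup_{\y\in\support(Y)}\drop(\x',\y)$, so it suffices to establish $\drop(\x',\y)\le\alpha$ for every $\y\in\support(Y)$.

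First I would describe the support of $Y$. By construction $Y=f_{\phi}\!\left(\mtrlap{\thresh,\eps,\G}(\x)\right)$, so every $\y\in\support(Y)$ equals $f_{\phi}(\z)$ for some $\z\in\support\!\left(\mtrlap{\thresh,\eps,\G}(\x)\right)$. Next, I would recall (from the flexible-accuracy part of the proof of \Theoremref{hist-priv-accu}, reused here with $\thresh=\tau=\alpha/t$) that every output $\z$ of $\mtrlap{\thresh,\eps,\G}$ on an input supported on $t$ bars is $\alpha$-distorted from $\x$: the noise in each bar lies in $[-q,0]$ with $q=\tau|\x|$, so after flooring and truncation at $0$ we still have $\z(g)\le\x(g)$ for every $g$, and the total mass removed is at most $t\cdot\tau|\x|=\alpha|\x|$, i.e.\ $\drop(\x,\z)\le\alpha$.

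With this in hand, \Claimref{fphi_distort} immediately gives that $f_{\phi}(\z)$ is $\alpha$-distorted from $\x'$ whenever $\z$ is $\alpha$-distorted from $\x$; that is, $\drop(\x',f_{\phi}(\z))\le\alpha$. Taking the supremum over $\z\in\support\!\left(\mtrlap{\thresh,\eps,\G}(\x)\right)$ --- equivalently over $\y\in\support(Y)$ --- yields $\dropx(\x',\prob{Y})=\sup_{\y\in\support(Y)}\drop(\x',\y)\le\alpha$, as required.

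There is no serious obstacle here: the substantive work sits in \Claimref{fphi_distort}, which is already available. The only points that need care are (i) invoking the point-distribution simplification of $\dropx$ from \Definitionref{distortion} correctly, and (ii) the bookkeeping that the mechanism's noise, even after flooring and truncating at zero, removes at most an $\alpha$ fraction of the total mass --- namely $t$ bars each losing at most $\tau|\x|=\tfrac{\alpha}{t}|\x|$ elements.
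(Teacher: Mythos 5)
Your argument mirrors the paper's own proof: both reduce the claim to a pointwise bound over $\support(Y)$, both observe that every element of $\support(\mtrlap{\thresh,\eps,\G}(\x))$ is $\alpha$-distorted from $\x$ (the paper states this tersely while you supply the short bookkeeping), and both then invoke \Claimref{fphi_distort} and take a supremum. The proposal is correct and essentially identical in structure to the paper's proof.
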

\begin{proof}
Note that the support of $\mtrlap{\thresh,\eps,\G}(\x)$ is the set of all $\alpha$-distorted histograms from $\x$. 
We have shown in \Claimref{fphi_distort} that for any $\z\in\Hspace\G$ such that $\drop(\x,\z)\leq\alpha$, we have $\drop(\x',f_{\phi}(\z))\leq\alpha$.
This implies that $\sup_{\y\in\support(f_{\phi}(\mtrlap{\thresh,\eps,\G}(\x)))}\drop(\x',\y)\leq\alpha$, which in turn implies that $\widehat{\partial}_{\text{drop}}(\x',\prob{Y})\leq\alpha$.
\end{proof}

\begin{claim*}[Restating \Claimref{Winf_Y_trLap-bound}]
$\Winf{}(\mtrlap{\thresh,\eps,\G}(\x),\prob{Y}) \leq \beta$.
\end{claim*}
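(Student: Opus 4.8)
The plan is to produce a single explicit coupling of $\mtrlap{\thresh,\eps,\G}(\x)$ and $\prob{Y}$ whose transportation cost, measured in the metric $\dhistx$ on $\Hspace\G$, is bounded by $\beta$ uniformly over the support. The key point is that, by construction, $Y = f_{\phi}\big(\mtrlap{\thresh,\eps,\G}(\x)\big)$ is the pushforward of $\mtrlap{\thresh,\eps,\G}(\x)$ under the deterministic map $f_{\phi}$, so the joint law of $\big(\z,\, f_{\phi}(\z)\big)$ with $\z\leftarrow\mtrlap{\thresh,\eps,\G}(\x)$ is automatically a valid ($0$-lossy) coupling of the two distributions. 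Hence it suffices to prove the pointwise bound $\dhist{\z}{f_{\phi}(\z)}\le\beta$ for every $\z$ in the support of $\mtrlap{\thresh,\eps,\G}(\x)$, which in particular satisfies $\z\le\x$ coordinatewise.

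To establish that pointwise bound I would fix such a $\z$, unfold $\dhist{\z}{f_{\phi}(\z)} = \Winf{}\!\big(\tfrac{\z}{|\z|},\, \tfrac{f_{\phi}(\z)}{|f_{\phi}(\z)|}\big)$ (using $|f_{\phi}(\z)| = \tfrac{|\x'|}{|\x|}|\z|$, as computed in \eqref{fphi_dist_bound}), and exhibit the coupling
\[
\psi(a,b) := \frac{|\x|}{|\z|}\cdot\frac{\z(a)\,\phi(a,b)}{\x(a)}
\]
on $\G\times\G$, where the summand is read as $0$ whenever $\x(a)=0$ (which forces $\z(a)=0$). A routine marginal check — using $\sum_{b}\phi(a,b)=\tfrac{\x(a)}{|\x|}$ for the first marginal, and $\sum_{a}\tfrac{\z(a)\phi(a,b)}{\x(a)}=\tfrac{f_{\phi}(\z)(b)}{|\x'|}$ together with the normalization of $f_{\phi}(\z)$ for the second — confirms that $\psi\in\Phi^{0}\!\big(\tfrac{\z}{|\z|},\tfrac{f_{\phi}(\z)}{|f_{\phi}(\z)|}\big)$.

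The crux is then the observation that $\psi$ is supported inside the support of $\phi$: $\psi(a,b)>0$ forces $\phi(a,b)>0$. Since $\phi$ was chosen as an optimal coupling witnessing $\dhist{\x}{\x'} = \Winf{}(\tfrac{\x}{|\x|},\tfrac{\x'}{|\x'|})\le\beta$, i.e.\ $\sup_{(a,b)\leftarrow\phi}\dG{a}{b}\le\beta$, we get $\sup_{(a,b)\leftarrow\psi}\dG{a}{b}\le\beta$ and hence $\dhist{\z}{f_{\phi}(\z)}\le\beta$. Feeding this pointwise bound back into the pushforward coupling of the first paragraph yields $\Winf{}(\mtrlap{\thresh,\eps,\G}(\x),\prob{Y})\le\beta$. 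I do not expect a genuine obstacle here: the only care needed is in the degenerate cases $\x(a)=0$ (handled by the convention above) and $|\z|=0$ (which does not arise, since the noise added in \Algorithmref{hist-mech} never empties all bars in the regime of interest), together with the marginal verification for $\psi$, which is purely mechanical.
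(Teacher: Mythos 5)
Your proposal is correct and follows essentially the same route as the paper: you use the pushforward coupling $(\z, f_\phi(\z))$ with $\z\leftarrow\mtrlap{\thresh,\eps,\G}(\x)$ (the paper's $\phi_\x$), reduce to the pointwise bound $\dhist{\z}{f_\phi(\z)}\le\beta$, and prove that via the auxiliary coupling $\psi(a,b)=\frac{|\x|}{|\z|}\cdot\frac{\z(a)\phi(a,b)}{\x(a)}$, which is exactly the paper's $\phi'$, together with the support-containment observation $\support(\psi)\subseteq\support(\phi)$. Your treatment is slightly more explicit than the paper's about the degenerate cases $\x(a)=0$ and $|\z|=0$, but the argument is otherwise identical.
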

\begin{proof}
Define a coupling $\phi_{\x}$ of $\mtrlap{\thresh,\eps,\G}(\x)$ and $\prob{Y}$ over $\Hspace\G\times\Hspace\G$ as follows:
\[\phi_{\x}(\z,\y) :=
\begin{cases}
\Pr[\mtrlap{\thresh,\eps,\G}(\x) = \z] & \text{ if } \y= f_{\phi}(\z), \\
0 & \text{ otherwise}. 
\end{cases}
\]
It is easy to verify that the above defined $\phi_{\x}$ is a valid coupling of $\mtrlap{\thresh,\eps,\G}(\x)$ and $\prob{Y}$, i.e., its first marginal is equal to $\mtrlap{\thresh,\eps,\G}(\x)$ and the second marginal is equal to $\prob{Y}$. Note that $\phi_{\x}(\z,\y)$ is non-zero only when $\y=f_{\phi}(\z)$.
This implies that 
\[\Winf{}(\mtrlap{\thresh,\eps,\G}(\x),\prob{Y}) \leq \sup_{(\z,\y)\leftarrow\phi_{\x}}\dhist{\z}{\y} = \sup_{(\z,f_{\phi}(\z))\leftarrow\phi_{\x}}\dhist{\z}{f_{\phi}(\z)} \leq \beta,\]
where the last inequality follows from \Claimref{f_phi_preserves_dhist} (stated and proven below) and using the fact that $\z\sim\mtrlap{\thresh,\eps,\G}(\x)$ is $\alpha$-distorted from $\x$.
\end{proof}

\begin{claim}\label{clm:f_phi_preserves_dhist}
Let $\x,\x'\in\Hspace\G$ be such that $\dhist{\x}{\x'}\leq\beta$. Then, for any $\z$ that is $\alpha$-distorted from $\x$, we have $\dhist{\z}{f_{\phi}(\z)} \leq \dhist{\x}{\x'}\leq\beta$.
\end{claim}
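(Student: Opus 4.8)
The plan is to prove the claim by exhibiting an explicit coupling of $\frac{\z}{|\z|}$ and $\frac{f_{\phi}(\z)}{|f_{\phi}(\z)|}$ whose support is contained in the support of the optimal coupling $\phi$ used to define $f_{\phi}$, and then invoking the definition of $\Winf{}$. Recall that $\phi$ is an optimal coupling of $\frac{\x}{|\x|}$ and $\frac{\x'}{|\x'|}$ with $\sup_{(a,b)\leftarrow\phi}\dG{a}{b}\leq\beta$, so every pair $(a,b)$ with $\phi(a,b)>0$ satisfies $\dG{a}{b}\leq\beta$; also its marginals satisfy $\sum_{b}\phi(a,b)=\frac{\x(a)}{|\x|}$ and $\sum_{a}\phi(a,b)=\frac{\x'(b)}{|\x'|}$. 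Since $\z$ is $\alpha$-distorted from $\x$ we have $\z(a)\leq\x(a)$ for all $a$, so any term of the form $\frac{\z(a)\phi(a,b)}{\x(a)}$ with $\x(a)=0$ also has $\z(a)=0$ and is read as $0$.

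First I would record the normalization of $f_{\phi}(\z)$, namely
$|f_{\phi}(\z)| = \sum_{b\in\G}|\x'|\sum_{a\in\G}\frac{\z(a)\phi(a,b)}{\x(a)} = |\x'|\sum_{a\in\G}\frac{\z(a)}{\x(a)}\cdot\frac{\x(a)}{|\x|} = \frac{|\x'|}{|\x|}\,|\z|$,
which is exactly the computation already performed in the proof of \Claimref{fphi_distort}. Next I would define the joint distribution $\psi$ over $\G\times\G$ by $\psi(a,b) := \frac{|\x|}{|\z|}\cdot\frac{\z(a)\phi(a,b)}{\x(a)}$ and check it is a coupling of $\frac{\z}{|\z|}$ and $\frac{f_{\phi}(\z)}{|f_{\phi}(\z)|}$: summing over $b$ gives $\frac{|\x|}{|\z|}\cdot\frac{\z(a)}{\x(a)}\cdot\frac{\x(a)}{|\x|}=\frac{\z(a)}{|\z|}$, and summing over $a$ gives $\frac{|\x|}{|\z|}\cdot\frac{f_{\phi}(\z)(b)}{|\x'|}=\frac{f_{\phi}(\z)(b)}{|f_{\phi}(\z)|}$ by the value of $|f_{\phi}(\z)|$ just computed. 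In particular $\psi$ is a nonnegative function summing to $1$, hence a valid coupling.

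Finally, since $\psi(a,b)>0$ forces $\phi(a,b)>0$, and hence $\dG{a}{b}\leq\beta$, the definition of $\Winf{}$ yields $\dhist{\z}{f_{\phi}(\z)} = \Winf{}(\tfrac{\z}{|\z|},\tfrac{f_{\phi}(\z)}{|f_{\phi}(\z)|}) \leq \sup_{(a,b)\leftarrow\psi}\dG{a}{b}\leq\beta = \dhist{\x}{\x'}$, as claimed. There is no real obstacle here: the only mildly delicate point is the bookkeeping around coordinates $a$ with $\x(a)=0$, and the conceptual content is simply that $\psi$ inherits $\phi$'s transport plan reweighted by the surviving-mass ratios $\z(a)/\x(a)$, which cannot transport any mass further than $\phi$ already does.
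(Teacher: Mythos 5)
Your proof is correct and follows essentially the same route as the paper's: you construct the reweighted coupling $\psi(a,b)=\frac{|\x|}{|\z|}\cdot\frac{\z(a)\phi(a,b)}{\x(a)}$ (identical to the paper's $\phi'$), verify its marginals, and observe $\support(\psi)\subseteq\support(\phi)$ to conclude. Your derivation of $|f_{\phi}(\z)|=\frac{|\x'|}{|\x|}|\z|$ is a bit more direct than the paper's detour through the parameter $c$ in the proof of \Claimref{fphi_distort}, but the content is the same, and your explicit note about the $\x(a)=0$ convention is a welcome clarification that the paper leaves implicit.
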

\begin{proof}
Define $\phi'(a, b) = \frac{\z(a)|\x|\phi(a , b)}{\x(a)|\z|}$.
For any $a\in\G$, its first marginal is equal to $\sum_{b\in\G}\phi'(a,b)=\frac{\z(a)}{|\z|}$.
For any $b\in\G$, its second marginal is equal to 
$\sum_{a\in\G}\phi'(a,b) = \frac{|\x|}{|\z|}\sum_{a\in\G}\frac{\z(a)}{\x(a)}\phi(a , b) = \frac{|\x|}{|\x'||\z|}f_{\phi}(\z)(b)$. We would like to say that the quantity on the RHS is equal to $\frac{f_{\phi}(\z)(b)}{|f_{\phi}(\z)|}$. We show this as follows:
Since $|\z|\geq(1-\alpha)|\x|$, there exists $c\geq0$ such that $|\z|=(1-\alpha+c)|\x|$. If we put this instead of $|\z|\geq(1-\alpha)|\x|$ in \eqref{fphi_dist_bound}, we would get $\sum_{b\in\G}f_{\phi}(\z)(b)=(1-\alpha+c)|\x'|$. With these substitutions, we get $\frac{|\x|}{|\x'||\z|}f_{\phi}(\z)(b)=\frac{f_{\phi}(\z)(b)}{\sum_{b\in\G}f_{\phi}(\z)(b)}$, which implies that the second marginal of $\phi'$ is equal to $\sum_{a\in\G}\phi'(a,b)=\frac{f_{\phi}(\z)(b)}{|f_{\phi}(\z)|}$ for any $b\in\G$.

This means that $\phi'(a,b)$ is a valid coupling of $\z, f_{\phi}(\z)$. This implies that
\begin{align*}
\dhist{\z}{f_{\phi}(\z)} = \Winf{}(\z,f_{\phi}(\z)) \leq \sup_{\substack{(a', b') \leftarrow \phi'}}\dG{a'}{b'} \stackrel{\text{(c)}}{\leq} \sup_{\substack{(a', b') \leftarrow \phi}}\dG{a'}{b'} = \dhist{\x}{\x'} \leq \beta,
\end{align*}
where (c) holds because $\text{support}(\phi') \subseteq \text{support}(\phi)$ (by the definition of $\phi'$). 
\end{proof}

\section{$d$-Dimensional Analogues of our Mechanisms/Results}\label{app:d-dim-results}

In our $d$-dimensional bucketing mechanism for $\G=[0,B)^d$, we divide $[0,B)^d$ into $t=\lceil \frac{B}{w}\rceil^d$ $d$-dimensional cubes (buckets), each of side length $w$, and map each input point to the center of the nearest cube (bucket). Note that the distance between any point in $[0,B)^d$ to the center of the nearest bucket is $\frac{w}{2}\sqrt{d}$. In the following, we will ignore the ceil/floor for simplicity.

\begin{algorithm}
\caption{Bucketing Mechanism over $[0,B)^d$, \mbuc{w,[0,B)^d}}\label{algo:app-bucketing}
{\bf Parameter:} Bucket (which is $d$-dimensional cube) side length $w$, ground set $[0,B)^d$. \\
{\bf Input:} A histogram $\x$ over $[0,B)$. \\
{\bf Output:} A histogram $\y$ over $S = T^d$ where $T=\{ w(i-\frac12) : i \in [t], t = \lceil \frac{B}{w} \rceil \}$, and $|\y| = |\x|$. \\
\vspace{-0.3cm}
\begin{algorithmic}[1] 
\ForAll{$s \in S $}
\State $\y(s) := \sum_{g:g-s \in [\frac{-w}2,\frac{w}2)^d} \; \x(g)$
\EndFor
\State Return \y
\end{algorithmic}
\end{algorithm}

\begin{algorithm}
\caption{BucketHist Mechanism over $[0,B)^d$, \mBhist{\alpha,\beta,[0,B)^d}}\label{algo:app-histogram-mech}
{\bf Parameter:} Accuracy parameters $\alpha,\beta$; ground set $[0,B)^d$. \\
{\bf Input:} A histogram \x over $[0,B)^d$. \\
{\bf Output:} A histogram \y over $[0,B)^d$. \\
\vspace{-0.3cm}
\begin{algorithmic}[1] 
\State $w := 2\beta$, $t := \lceil \frac{B}{w}\rceil^d$, $\thresh := \alpha/t$
\State Return $\mtrlap{\thresh,\eps,[0,B)^d} \circ \mbuc{w,[0,B)^d} (\x)$ \Comment{where \mbuc{w,[0,B)} is in \Algorithmref{app-bucketing}}
\end{algorithmic}
\end{algorithm}

Our $d$-dimensional bucketing mechanism $\mbuc{w,[0,B)^d}$ and the final $d$-dimensional bucketed-histogram mechanism $\mBhist{\alpha,\beta,[0,B)^d}$ are presented in \Algorithmref{app-bucketing} and \Algorithmref{app-histogram-mech}, respectively.

As mentioned in \Remarkref{d-dim} in \Sectionref{beyond-drop}, with these modified mechanisms, all our results in \Theoremref{bucketing-hist}, \Theoremref{bucketing-general}, and \Theoremref{bucketing-general-drmv} will hold verbatim, except for the value of $\tau$, which will be replaced by $\tau=\alpha(\frac{2\beta}{B\sqrt{d}})^d$. Note that for the one dimensional case, we have $\tau=\frac{\alpha}{t}=\alpha(\frac{w}{B})$, where $w=2\beta$, which comes from the $(0,\frac{w}{2},0)$-accuracy of the bucketing mechanism $\mbuc{w,[0,B)}$ (see \Claimref{bucket-accuracy} in \Sectionref{bucketHist-priv-accu-proof}).
The $d$-dimensional analogue of that result is stated in the following claim which can be proven along the lines of the proof of \Claimref{bucket-accuracy}..
\begin{claim}\label{clm:bucket-accuracy-d-dim}
$\mbuc{w,[0,B)^d}$ is $\left(0, \frac{w}{2}\sqrt{d}, 0\right)$-accurate for the identity function $f_{\emph{id}}$ over $\Hspace{[0,B)^d}$ w.r.t~ the metric $\dhistx$. 
\end{claim}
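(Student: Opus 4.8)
The plan is to mirror the proof of \Claimref{bucket-accuracy}, the only substantive change being the geometric bound on how far a data point travels under bucketing in $d$ dimensions. Since both $f_{\text{id}}$ and $\mbuc{w,[0,B)^d}$ are deterministic, for any input $\x\in\Hspace{[0,B)^d}$ both $f_{\text{id}}(\x)=\x$ and $\y:=\mbuc{w,[0,B)^d}(\x)$ are point distributions over $\Hspace{[0,B)^d}$. By \Definitionref{alpha-beta-gamma-accu}, proving $(0,\frac{w}{2}\sqrt{d},0)$-accuracy for $f_{\text{id}}$ w.r.t.\ $\dhistx$ reduces to showing $\Winfz(\y,\x)\le\frac{w}{2}\sqrt{d}$ for every such $\x$. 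Because $\x$ and $\y$ are point distributions, the unique element of $\Phi^0(\y,\x)$ has them as its marginals, so $\Winfz(\y,\x)=\dhist{\y}{\x}=\Winfz(\tfrac{\y}{|\y|},\tfrac{\x}{|\x|})$, where the inner $\Winfz$ is taken w.r.t.\ the Euclidean metric on $[0,B)^d$ and $|\y|=|\x|$ since bucketing preserves size.

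Next I would exhibit an explicit coupling of the normalized histograms that realizes the bound. Recall that $\mbuc{w,[0,B)^d}$ sends each $g\in\G=[0,B)^d$ to the center $s(g)\in S$ of the unique side-$w$ cube containing it, i.e.\ $\y(s)=\sum_{g:\,g-s\in[-w/2,w/2)^d}\x(g)$. Let $\phi$ be the coupling that transports, for each $g$, the mass $\tfrac{\x(g)}{|\x|}$ entirely onto $s(g)$. Its first marginal is $\tfrac{\x}{|\x|}$ by construction, and its second marginal is exactly $\tfrac{\y}{|\y|}$ precisely because of the defining equation for $\y$ above; hence $\phi\in\Phi^0(\tfrac{\x}{|\x|},\tfrac{\y}{|\y|})$. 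Its cost $\sup_{(a,b)\leftarrow\phi}\dG{a}{b}$ equals the largest Euclidean distance from a point of $\G$ to the center of its enclosing cube.

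Finally I would bound this distance by elementary geometry: a point of a cube of side $w$ lies at Euclidean distance at most $\sqrt{\sum_{i=1}^d (w/2)^2}=\frac{w}{2}\sqrt{d}$ from the cube's center (this is the supremum, approached at a corner of the half-open cube). Therefore $\Winfz(\tfrac{\y}{|\y|},\tfrac{\x}{|\x|})\le\frac{w}{2}\sqrt{d}$, hence $\Winfz(\y,\x)\le\frac{w}{2}\sqrt{d}$, which gives $(0,\frac{w}{2}\sqrt{d},0)$-accuracy. I do not expect a genuine obstacle here; the only points needing care are the $d$-cube geometry (ensuring the factor is $\sqrt{d}$ rather than $d$ or $d/2$) and spelling out that the coupling just defined has the correct marginals — both routine once the one-dimensional argument of \Claimref{bucket-accuracy} is in hand.
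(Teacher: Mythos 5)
Your proof is correct and follows exactly the route the paper intends: the paper states that \Claimref{bucket-accuracy-d-dim} ``can be proven along the lines of the proof of \Claimref{bucket-accuracy},'' and your argument is precisely that one-dimensional proof with the diagonal-of-a-$d$-cube bound $\frac{w}{2}\sqrt{d}$ replacing $\frac{w}{2}$. Making the transport coupling $\phi$ explicit is a small elaboration over the paper's terser phrasing (``upper bounded by the maximum distance any point moves''), but it does not change the approach and your Euclidean geometry is right.
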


It follows from \Claimref{bucket-accuracy-d-dim} that the output error of $\mbuc{w,[0,B)^d}$ is $\beta=\frac{w}{2}\sqrt{d}$. This implies $\tau=\frac{\alpha}{t}=\alpha(\frac{w}{B})^d=\alpha(\frac{2\beta}{B\sqrt{d}})^d$.

\section{Details Omitted from \Sectionref{distortion-measures}}\label{app:beyond-drop}
In this section, first we prove that $\dnx$ is a quasi-metric (assuming that $\dn$ is a quasi-metric), and then prove that our two distortions $\move$ and $\dropmove\eta$ (defined in \eqref{eq:move_defn} and \eqref{eq:drop_move_defn}, respectively) are metric and quasi-metric, respectively.

\begin{lem}\label{lem:dnx-quasi-metric}
If \dn is a quasi-metric, then \dnx is a quasi-metric. 
\end{lem}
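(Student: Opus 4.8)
I need to show that if $\dn:\X\times\X\to\Rplus\cup\{\infty\}$ is a quasi-metric, then its extension to distributions,
\[
\dnx(P,Q)=\inf_{\phi\in\Phi^0(P,Q)}\sup_{(x,y)\leftarrow\phi}\dn(x,y),
\]
is also a quasi-metric, i.e.\ satisfies (i) $\dnx(P,Q)=0\iff P=Q$ and (ii) $\dnx(P,R)\le\dnx(P,Q)+\dnx(Q,R)$. The approach mirrors the proof of the triangle inequality for $\Winf{}$ (Lemma~\ref{lem:Winf_triangle} in the appendix), but one must be careful not to invoke symmetry anywhere, since $\dn$ is only a quasi-metric.

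\textbf{Identity of indiscernibles.} For (i), if $P=Q$ then the diagonal coupling $\phi$ defined by $\phi(x,y)=P(x)\delta(x-y)$ lies in $\Phi^0(P,Q)$ and has $\dn(x,y)=\dn(x,x)=0$ on its support, so $\dnx(P,Q)=0$. Conversely, if $\dnx(P,Q)=0$, then (taking the infimum to be attained, per the blanket convention in Section~\ref{sec:proofs}) there is a coupling $\phi\in\Phi^0(P,Q)$ with $\dn(x,y)=0$ on its support; since $\dn$ is a quasi-metric, $\dn(x,y)=0$ forces $x=y$, so $\phi$ is supported on the diagonal, whence both marginals coincide: $P=Q$.

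\textbf{Triangle inequality.} For (ii), fix optimal couplings $\phi_{PQ}\in\Phi^0(P,Q)$ and $\phi_{QR}\in\Phi^0(Q,R)$ achieving $\dnx(P,Q)$ and $\dnx(Q,R)$ respectively. By the Gluing Lemma (as cited from \cite{Villani_OptimalTransport08} in the proof of Lemma~\ref{lem:Winf_triangle}), there is a joint distribution $\phi'$ on $\X^3$ whose $(1,2)$-marginal is $\phi_{PQ}$ and whose $(2,3)$-marginal is $\phi_{QR}$. Let $\phi$ be the $(1,3)$-marginal of $\phi'$; then $\phi\in\Phi^0(P,R)$. For any $(x,z)$ in the support of $\phi$ there is some $y$ with $(x,y,z)$ in the support of $\phi'$, hence $(x,y)$ in the support of $\phi_{PQ}$ and $(y,z)$ in the support of $\phi_{QR}$; the quasi-metric triangle inequality for $\dn$ gives $\dn(x,z)\le\dn(x,y)+\dn(y,z)\le\dnx(P,Q)+\dnx(Q,R)$. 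Taking the supremum over the support of $\phi$ and then the infimum over couplings of $P,R$ yields $\dnx(P,R)\le\dnx(P,Q)+\dnx(Q,R)$.

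\textbf{Main obstacle.} The only delicate point is the Gluing Lemma step: one must check that gluing is valid when the middle space is the same $\X$ with matching marginal $Q$, and that the construction does not secretly use symmetry of $\dn$ — it does not, since every inequality used is the forward quasi-metric inequality $\dn(x,z)\le\dn(x,y)+\dn(y,z)$ in exactly that order. A minor care point is the possibility $\dnx(P,Q)=\infty$ (when no finite-cost coupling exists), but then (ii) holds trivially; and measurability/attainment issues are dispatched by the convention in Section~\ref{sec:proofs} that suprema and infima are attained. I would present the proof in these three short blocks, reusing the gluing argument almost verbatim from Lemma~\ref{lem:Winf_triangle} and flagging that no use of symmetry is made.
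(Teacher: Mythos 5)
Your proof is correct and takes essentially the same approach as the paper: the paper handles identity of indiscernibles via the diagonal-coupling argument and then dispatches the triangle inequality by pointing to the Gluing-Lemma proof of Lemma~\ref{lem:Winf_triangle}, observing that no symmetry of the ground distance is ever used. You merely spell out that gluing argument explicitly (and sensibly flag the trivial $\infty$ case), which the paper leaves implicit.
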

\begin{proof}
We need to show that for any three distributions $P$, $Q$, and $R$ over the same space $A$, we have 
{\sf (i)} $\dnx(P, Q)\geq 0$, where the equality holds if and only if $P = Q$, and {\sf (ii)} $\dnx$ satisfies the triangle inequality: $\dnx(P, Q)\leq \dnx(P, R) + \dnx(R, Q)$. We show them one by one below:
\begin{enumerate}
\item The first property follows from the definition of $\dnx$ (see \Definitionref{distortion}): If $\dnx(P, Q)= 0$, then the optimal $\phi\in\Phi(P, Q)$ is a diagonal distribution, which means that $P = Q$. On the other hand, if $P = Q$, then there exists a coupling $\phi$ in $\Phi(P, Q)$, which is a diagonal distribution and hence $\dnx(P, Q)= 0$.
\item Since the definition of $\dnx$ is the same as that of $\Winf{}$, except for that the former is defined w.r.t.\ a quasi-metric, whereas, the latter is defined w.r.t.\ a metric, we can show the triangle inequality for $\dnx$ along the lines of the proof of \Lemmaref{Winf_triangle}. Note that we did not use the symmetric property of $\Winf{}$ while proving \Lemmaref{Winf_triangle}; we only used that the underlying metric $\met$ satisfies the triangle inequality, which also holds for $\dnx$ which is a quasi-metric.

\end{enumerate}
This completes the proof of \Lemmaref{dnx-quasi-metric}.
\end{proof}

In \Sectionref{distortion-measures}, we introduced two new distortions: $\move(\x,\y)$ in \eqref{eq:move_defn} and $\dropmove\eta(\x,\y)$ in \eqref{eq:drop_move_defn}. We prove that $\move$ is a metric in \Claimref{move-metric} and that $\dropmove\eta$ is a metric in \Claimref{drop-move-quasi-metric}.
We present the definitions of these distortions here again for convenience:

\begin{align*}
\move(\x,\y) &= \begin{cases}
\Winf{}(\frac{\x}{|\x|},\frac{\y}{|\y|}) & \text{ if } |\x|=|\y| \\
\infty & \text{ otherwise}
\end{cases}
&
\dropmove\eta(\x,\y) = \inf_{\z} \left(\drop(\x,\z) + \eta \cdot \move(\z,\y)\right).
\end{align*}

Note that in the definition of $\move$, when $|\x| = |\y| = 0$, we define $\move(\x, \y) = 0$.

\begin{claim}\label{clm:move-metric}
$\move(\cdot,\cdot)$ is a metric.
\end{claim}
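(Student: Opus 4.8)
The plan is to verify the four metric axioms for $\move:\X\times\X\to\Rplus\cup\{\infty\}$, leaning on the fact that $\Winfz$ (the ordinary $\infty$-Wasserstein distance, i.e.\ $\Winf{0}$) is already known to be a metric on probability distributions over $(\G,\met)$ — this is the $\gamma_1=\gamma_2=0$ case of \Lemmaref{wass-triangle}, proven as \Lemmaref{Winf_triangle} in \Appendixref{wasserstein}, together with the standard symmetry and definiteness of $\Winfz$. The key observation throughout is that $\move(\x,\y)$ is finite exactly when $|\x|=|\y|$, and on that set it reduces to $\Winfz$ applied to the normalized probability vectors $\tfrac{\x}{|\x|},\tfrac{\y}{|\y|}$ (with the convention $\move(\x,\y)=0$ when $|\x|=|\y|=0$).

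First I would handle non-negativity and definiteness: $\move(\x,\y)\ge 0$ is immediate since $\Winfz\ge 0$ and $\infty\ge 0$. For the identity of indiscernibles, if $\x=\y$ then $|\x|=|\y|$ and $\tfrac{\x}{|\x|}=\tfrac{\y}{|\y|}$, so $\move(\x,\y)=\Winfz(\tfrac{\x}{|\x|},\tfrac{\x}{|\x|})=0$; conversely if $\move(\x,\y)=0$ then in particular it is finite, so $|\x|=|\y|=:n$, and $\Winfz(\tfrac{\x}{n},\tfrac{\y}{n})=0$ forces $\tfrac{\x}{n}=\tfrac{\y}{n}$ (definiteness of $\Winfz$ as a metric), hence $\x=\y$. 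Next, symmetry: $|\x|=|\y|$ is a symmetric condition, and when it holds $\move(\x,\y)=\Winfz(\tfrac{\x}{|\x|},\tfrac{\y}{|\y|})=\Winfz(\tfrac{\y}{|\y|},\tfrac{\x}{|\x|})=\move(\y,\x)$ by symmetry of $\Winfz$; when $|\x|\neq|\y|$ both sides equal $\infty$.

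The main work — though still routine — is the triangle inequality $\move(\x,\z)\le\move(\x,\y)+\move(\y,\z)$. If either $|\x|\neq|\y|$ or $|\y|\neq|\z|$, the right-hand side is $\infty$ and there is nothing to prove. Otherwise $|\x|=|\y|=|\z|$, so also $|\x|=|\z|$, and the inequality becomes $\Winfz(\tfrac{\x}{|\x|},\tfrac{\z}{|\z|})\le\Winfz(\tfrac{\x}{|\x|},\tfrac{\y}{|\y|})+\Winfz(\tfrac{\y}{|\y|},\tfrac{\z}{|\z|})$, which is exactly the triangle inequality for $\Winfz$ (\Lemmaref{Winf_triangle}, or the $\gamma_1=\gamma_2=0$ instance of \Lemmaref{wass-triangle}). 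The only genuinely delicate point is bookkeeping around the degenerate case $|\x|=|\y|=|\z|=0$, where all three terms are $0$ by convention and the inequality holds trivially; I would state this convention explicitly at the outset so the normalizations $\tfrac{\x}{|\x|}$ are never evaluated at $|\x|=0$. I don't anticipate a real obstacle here — the statement is essentially "a metric pulled back along the normalization map, with distinct fibers placed at infinite distance."
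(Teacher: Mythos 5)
Your proposal is correct and follows essentially the same route as the paper: $\move$ is $\Winfz$ applied to normalized histograms, and the metric axioms for $\move$ are inherited from those of $\Winfz$ (with the triangle inequality supplied by \Lemmaref{Winf_triangle}), plus a separate check of the degenerate $|\x|=|\y|=0$ convention. You are somewhat more careful than the paper in explicitly disposing of the $|\x|\neq|\y|$ (infinite-distance) cases axiom by axiom, but the content is the same.
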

\begin{proof}
Since $\move(\cdot,\cdot)$ is defined as the $\infty$-Wasserstein distance between normalized histograms, it suffices to show that the $\infty$-Wasserstein distance is a metric. We need to show three things for any triple of distributions $P,Q,R$ over a metric space $(\Omega,\met)$: 
{\sf (i)} $\Winf{}(P,Q)\geq0$ and equality holds if and only if $P=Q$, {\sf (ii)} $\Winf{}(P,Q)=\Winf{}(Q,P)$, and {\sf (iii)} $\Winf{}(P,R)\leq\Winf{}(P,Q)+\Winf{}(Q,R)$. 

By definition, $\Winf{}(P, R) = \inf_{\phi\in\Phi(P,R)} \sup_{(x,z):\\\phi(x,z)\neq 0}\met(x,z)$.
Now, the first two conditions follow because $\met$ is a metric, and the last condition (triangle inequality) we show in \Lemmaref{Winf_triangle} in \Appendixref{wasserstein}.

Note that when $|\x| = |\y| = 0$, the Wasserstein distance is undefined, but we have defined $\move(\x, \y)$ in this case separately as $0$ which is consistent with the properties of a metric.
\end{proof}

We first give an intermediate result (\Lemmaref{drop-move-switch} below) which will be used in proving that $\drme$ is a quasi-metric. The result of this lemma is also used in the proof of \Theoremref{bucketing-general-drmv}.
\begin{lem}\label{lem:drop-move-switch}
Let $\x$, $\y$ and $\z$ be any three histograms over a ground set $\G$ ,associated with a metric $\met$, such that $\move(\x, \z) = \alpha_1$ and $\drop(\z, \y) = \alpha_2$ with $\alpha_1 \ge 0$ and $\alpha_2 < 1$. Then there exists a histogram $\s$ such that $\drop(\x, \s) = \alpha_2$ and $\move(\s, \y) \leq \alpha_1$.
\end{lem}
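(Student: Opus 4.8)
The plan is to directly construct the intermediate histogram $\s$ by following the optimal coupling that witnesses $\move(\x,\z)=\alpha_1$, but routing the "dropped mass" of $\z\to\y$ back through $\x$. Since $\drop(\z,\y)=\alpha_2<1$, the histogram $\y$ is obtained from $\z$ by deleting, for each ground element $g$, some amount $\z(g)-\y(g)\ge 0$, with total deleted fraction $\alpha_2$. We want to delete a corresponding $\alpha_2$-fraction from $\x$ first (getting $\s$), and then check that the normalized version of $\s$ can be transported to the normalized version of $\y$ within distance $\alpha_1$.

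**Main construction.** Let $\phi$ be an optimal coupling achieving $\move(\x,\z)=\Winf{}(\frac{\x}{|\x|},\frac{\z}{|\z|})=\alpha_1$; so $\phi$ is a joint distribution on $\G\times\G$ with marginals $\frac{\x}{|\x|}$ and $\frac{\z}{|\z|}$, and $\phi(a,b)>0 \Rightarrow \met(a,b)\le\alpha_1$. For each $b\in\G$, define the "keep ratio" $r_b := \y(b)/\z(b)$ when $\z(b)>0$ (and, say, $r_b:=1$ when $\z(b)=0$); note $0\le r_b\le 1$ and $\sum_b (1-r_b)\frac{\z(b)}{|\z|} = \drop(\z,\y)=\alpha_2$ after rescaling by $|\z|$. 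Now define a sub-distribution $\psi(a,b) := r_b\,\phi(a,b)$ on $\G\times\G$. Its second marginal is $\sum_a \psi(a,b) = r_b\frac{\z(b)}{|\z|} = \frac{\y(b)}{|\z|}$, which is $\frac{|\y|}{|\z|}$ times the normalized $\frac{\y}{|\y|}$. Its first marginal defines $\s$: set $\s(a) := |\x|\sum_b \psi(a,b) = |\x|\sum_b r_b\phi(a,b)$. One checks $\s(a)\le |\x|\sum_b\phi(a,b)=\x(a)$ (since $r_b\le 1$), so $\drop(\x,\s)$ is finite; and $|\s| = |\x|\sum_{a,b}r_b\phi(a,b) = |\x|\sum_b r_b\frac{\z(b)}{|\z|} = |\x|(1-\alpha_2)$, giving $\drop(\x,\s)=\alpha_2$ exactly.

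**Bounding $\move(\s,\y)$.** Since $|\s|=|\x|(1-\alpha_2)$ and $|\y|=|\z|(1-\alpha_2)$... wait — that requires $|\y|=|\z|(1-\alpha_2)$, which is exactly what $\drop(\z,\y)=\alpha_2$ gives (the total dropped fraction is $\alpha_2$). So $|\s| = |\x|(1-\alpha_2)$ and we need these equal for $\move$ to be finite: but $\move$ only requires $|\s|=|\y|$, i.e. $|\x|=|\z|$. Here is a subtlety: $\move(\x,\z)=\alpha_1$ finite already forces $|\x|=|\z|$, hence $|\s|=|\y|$, so $\move(\s,\y)$ is finite. Then I would verify that $\frac{1}{(1-\alpha_2)}\psi$ (rescaled to a probability distribution on $\G\times\G$) is a valid coupling of $\frac{\s}{|\s|}$ and $\frac{\y}{|\y|}$: its first marginal is $\frac{\s(a)}{|\x|(1-\alpha_2)}=\frac{\s(a)}{|\s|}$ and its second is $\frac{\y(b)}{|\z|(1-\alpha_2)}=\frac{\y(b)}{|\y|}$. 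Since $\psi(a,b)>0\Rightarrow\phi(a,b)>0\Rightarrow\met(a,b)\le\alpha_1$, we get $\Winf{}(\frac{\s}{|\s|},\frac{\y}{|\y|})\le\alpha_1$, i.e. $\move(\s,\y)\le\alpha_1$, as required.

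**Main obstacle.** The bookkeeping is routine once the coupling $\psi$ is written down; the one genuinely delicate point is handling degenerate cases — elements $b$ with $\z(b)=0$ (where $r_b$ is a free choice and must not spoil marginals), and the edge case $|\x|=|\z|=0$ (where $\move$ is defined to be $0$ and the statement is trivial). I would dispatch these at the start. The other point to state carefully is why $\drop(\x,\s)$ comes out to \emph{exactly} $\alpha_2$ rather than just $\le\alpha_2$: this is because $|\s|=|\x|(1-\alpha_2)$ is forced by summing $\psi$, using $\sum_b(1-r_b)\tfrac{\z(b)}{|\z|}=\alpha_2$ and $\sum_b\tfrac{\z(b)}{|\z|}=1$. (If the paper's later use only needs $\drop(\x,\s)\le\alpha_2$, this is automatic; but the lemma as stated asserts equality, so I would spell it out.) Everything else follows from the definitions of $\drop$, $\move$, and $\Winf{}$, together with the triangle-inequality property of $\met$ already built into $\Winf{}$.
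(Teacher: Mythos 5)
Your construction is correct and is essentially the same as the paper's: you define the coupling $\psi(a,b)=r_b\,\phi(a,b)$ and then rescale by $1/(1-\alpha_2)$, and this rescaled object coincides exactly with the paper's $\phi_s(a,b)=\frac{1}{1-\alpha_2}\phi_z(a,b)\frac{\y(b)}{\z(b)}$. The bookkeeping you carry out (marginals, $\s(a)\le\x(a)$, $|\s|=(1-\alpha_2)|\x|=|\y|$, and transport distance bounded by $\alpha_1$) matches the paper's verification of properties \ref{S1}--\ref{S4}.
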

\begin{proof}
Using the definitions of $\drop$ and $\move$, we have the following:
\begin{enumerate}[label=\textbf{Z.\arabic*}]
    \item \label{Z1} $|\x| = |\z|$
    \item \label{Z2} $\Winf{}(\frac{\x}{|\x|}, \frac{\z}{|\z|}) \leq \alpha_1$. We will use $\phi_z$ to denote the optimal joint distribution which achieves the infimum in the definition of $\Winf{}(\frac{\x}{|\x|}, \frac{\z}{|\z|})$.
    \item \label{Z3} $|\y| = (1 - \alpha_2)|\z|$
    \item \label{Z4} For all $g \in \G$, $0 \le \y(g) \le \z(g)$
\end{enumerate}

Now we want to prove the existence of a histogram $\s$ with the following property:
\begin{enumerate}[label=\textbf{S.\arabic*}]
    \item \label{S1} $|\s| = (1 - \alpha_2)|\x|$ 
    \item \label{S2} For all $g \in \G$, $0 \le \s(g) \le \x(g)$
    \item \label{S3} $|\s| = |\y|$
    \item \label{S4} $\Winf{}(\frac{\s}{|\s|}, \frac{\y}{|\y|}) \leq \alpha_1$. 
\end{enumerate}
Consider the following joint distribution $\phi_s$:
\begin{equation}\label{eq:drmv-phi}
    \phi_s(g_x, g_y) = \begin{cases}
    \frac{1}{1 - \alpha_2} \phi_z(g_x, g_y)\frac{\y(g_y)}{\z(g_y)} & \text{if } \z(g_y) > 0\\
    0 & \text{otherwise}
    \end{cases}
\end{equation}
We denote the first marginal of $\phi_s$ by $\frac{\s}{|\s|}$, where $\s$ corresponds to the histogram that we want to show exists. 

By definition, for all $g_x, g_y \in \G$, we have $\phi_s(g_x, g_y) \ge 0$. 
Also note that, if $\z(g_y) = 0$, then for all $g_x \in \G$, we have $\phi_z(g_x, g_y) = 0$; this is because $\frac{\z}{|\z|}$ is the second marginal of $\phi_z$. 
Now we show that the above-defined $\phi_s$ satisfies properties \ref{S1}-\ref{S4} -- we show these in the sequence of \ref{S4}, \ref{S3}, \ref{S1}, \ref{S2}.
\begin{itemize}
\item {\bf Proof of \ref{S4}.}
Note that the first marginal of $\phi_s$ is assumed to be $\frac{\s}{|\s|}$. 
Now we show that its second marginal is $\frac{\y}{|\y|}$ and that $\max_{(g_x,g_y)\leftarrow\phi_s}\met(g_x,g_y)\leq\alpha_1$. Note that these together imply that $\Winf{}(\frac{\s}{|\s|},\frac{\y}{|\y|})\leq\alpha_1$.

\begin{itemize}
\item {\it Second marginal of $\phi_s$ is $\frac{\y}{|\y|}$:}
We show it in two parts, first for $g_y\in\G$ for which $\z(g_y)=0$ and then for the rest of the $g_y\in\G$. Note that when $\z(g_y) = 0$, we have from \ref{Z4} that $\y(g_y)=0$.
Now we show that $\int_{\G}\phi_s(g_x,g_y)\dd g_x=0$.
It follows from \eqref{eq:drmv-phi} that for all $g_y$ such that $\z(g_y) = 0$, we have 
$\phi_s(g_x,g_y)=0, \forall g_x\in\G$, which implies that $\int_{\G}\phi_s(g_x,g_y)\dd g_x=0$.
Now we analyze the case when $\z(g_y) > 0$.
\begin{align*}
    \int_{\G} \phi_s(g_x, g_y)\dd g_x &= \int_{\G} \frac{1}{1 - \alpha_2} \phi_z(g_x, g_y)\frac{\y(g_y)}{\z(g_y)}\dd g_x \tag{using \Equationref{drmv-phi}}\\
    &= \frac{1}{1 - \alpha_2} \frac{\y(g_y)}{\z(g_y)} \int_{\G} \phi_z(g_x, g_y)\dd g_x\\
    &= \frac{1}{1 - \alpha_2} \frac{\y(g_y)}{\z(g_y)} \frac{\z(g_y)}{|\z|}\tag{using \ref{Z2}}\\
    &= \frac{\y(g_y)}{(1 - \alpha_2)|\z|}\\
    &= \frac{\y(g_y)}{|\y|}. \tag{Using \ref{Z3}}
\end{align*}
\item {\it $\Winf{}(\frac{\s}{|\s|},\frac{\y}{|\y|})\leq\alpha_1$:} 
We have shown that the first and the second marginals of $\phi_s$ are $\frac{\s}{|\s|}$ and $\frac{\y}{|\y|}$, respectively. So, it suffices to show that $\max_{(g_x,g_y)\leftarrow\phi_s}\met(g_x,g_y)\leq\alpha_1$.
Consider any pair $(g_x, g_y) \in \G^2$ s.t. $\phi_s(g_x, g_y) > 0$. This is possible only if $\phi_z(g_x, g_y) > 0$ (see \Equationref{drmv-phi}), which, when combined with \ref{Z2}, gives $\met(g_x, g_y) \le \alpha_1$. Hence, for any pair $(g_x, g_y) \in \G^2$ s.t. $\phi_s(g_x, g_y) > 0$, we have $\met(g_x, g_y) \le \alpha_1$. 
\end{itemize}

\item {\bf Proof of \ref{S3}.}
Note that \Equationref{drmv-phi} gives the normalized $\s$, but we still have the freedom to choose $|\s|$. To satisfy \ref{S3}, we set $|\s| = |\y|$. 

\item {\bf Proof of \ref{S1}.}
Note that \ref{S1} is already satisfied using \ref{Z1}, \ref{Z3}, and \ref{S3}. 

\item {\bf Proof of \ref{S2}.}
Let us denote $\{g \in \G\ |\ \z(g) > 0\}$ by $\G_z$. We will show that for any $g \in \G$, we have $\x(g) - \s(g) \ge 0$:
\begin{align*}
    \x(g) - \s(g) &= |\x|\int_{\G} \phi_z(g, g_y)\dd g_y - |\s|\int_{\G} \phi_s(g, g_y)\dd g_y \tag{using \ref{Z2} and \ref{S4}}\\
    &= |\x|\int_{\G_z} \phi_z(g, g_y)\dd g_y - |\s|\int_{\G} \phi_s(g, g_y)\dd g_y \tag{Since $\z(g_y) = 0 \implies \phi_z(g, g_y) = 0, \forall g\in\G$;~\ref{Z2}}\\
    &= |\x|\int_{\G_z} \phi_z(g, g_y)\dd g_y - |\s|\int_{\G_z} \frac{1}{1 - \alpha_2} \phi_z(g, g_y)\frac{\y(g_y)}{\z(g_y)}\dd g_y \tag{using \Equationref{drmv-phi}}\\
    &= |\x|\int_{\G_z} \phi_z(g, g_y)\dd g_y - \frac{(1 - \alpha_2)|\x|}{1 - \alpha_2}\int_{\G_z}  \phi_z(g, g_y)\frac{\y(g_y)}{\z(g_y)}\dd g_y \tag{using \ref{S1}}\\
    &= |\x|\int_{\G_z} \phi_z(g, g_y)\dd g_y - |\x|\int_{\G_z} \phi_z(g, g_y) \frac{\y(g_y)}{\z(g_y)}\dd g_y \\
    &= |\x|\int_{\G_z} \phi_z(g, g_y)\left(1 - \frac{\y(g_y)}{\z(g_y)}\right)\dd g_y\\
    &\ge 0 \tag{using \ref{Z4}, $\frac{\y(g_y)}{\z(g_y)} \le 1$}\\
\end{align*}
\end{itemize}
Thus, we have shown that the joint distribution $\phi_s$ defined in \eqref{eq:drmv-phi} satisfies all four properties \ref{S1}-\ref{S4}.
This completes the proof of \Lemmaref{drop-move-switch}.
\end{proof}
\begin{claim}\label{clm:drop-move-quasi-metric}
For all $\eta \in \Rplus$, $\dropmove\eta(\cdot,\cdot)$ is a quasi metric.
\end{claim}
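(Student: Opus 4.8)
The plan is to verify the two defining properties of a quasi-metric directly from \eqref{eq:drop_move_defn}: (i) that $\dropmove\eta(\x,\y)=0$ if and only if $\x=\y$, and (ii) the triangle inequality. Throughout I take $\eta>0$ (when $\eta=0$ the quantity degenerates — $\dropmove{0}(\x,\y)$ becomes $0$ for all same-size $\x,\y$ — so property (i) fails), and I use the convention stated at the start of \Sectionref{proofs} that the infimum defining $\dropmove\eta$ is attained.

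Property (i) is easy. Taking $\z=\x$ in \eqref{eq:drop_move_defn} gives $\drop(\x,\x)+\eta\move(\x,\x)=0$, and since $\drop,\move\ge0$ this forces $\dropmove\eta(\x,\x)=0$. Conversely, if $\dropmove\eta(\x,\y)=0$ then for the minimizing $\z$ both summands vanish: $\drop(\x,\z)=0$ gives $\z=\x$ since $\drop$ is a quasi-metric, and $\move(\z,\y)=0$ gives $\z=\y$ since $\move$ is a metric (\Claimref{move-metric}); hence $\x=\y$.

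The substance is property (ii), $\dropmove\eta(\x,\y)\le\dropmove\eta(\x,\mathbf{w})+\dropmove\eta(\mathbf{w},\y)$, and the key tool is \Lemmaref{drop-move-switch}, which trades a $\move$-step followed by a $\drop$-step for a $\drop$-step followed by a $\move$-step. I may assume both right-hand terms are finite, else (ii) is trivial. Pick intermediate histograms $\z_1,\z_2$ attaining the two infima and abbreviate $a_1=\drop(\x,\z_1)$, $b_1=\move(\z_1,\mathbf{w})$, $a_2=\drop(\mathbf{w},\z_2)$, $b_2=\move(\z_2,\y)$, all finite, so $\dropmove\eta(\x,\mathbf{w})=a_1+\eta b_1$ and $\dropmove\eta(\mathbf{w},\y)=a_2+\eta b_2$. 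This produces a chain $\x\to\z_1\to\mathbf{w}\to\z_2\to\y$ of the form drop, move, drop, move. I apply \Lemmaref{drop-move-switch} to the middle two steps $\z_1\to\mathbf{w}\to\z_2$ (with its $\alpha_1=b_1$, $\alpha_2=a_2$) to obtain a histogram $\s$ with $\drop(\z_1,\s)=a_2$ and $\move(\s,\z_2)\le b_1$. The chain becomes $\x\to\z_1\to\s\to\z_2\to\y$, now drop, drop, move, move. Collapsing the two drops via the triangle inequality for $\drop$ gives $\drop(\x,\s)\le a_1+a_2$, and collapsing the two moves via the triangle inequality for $\move$ (a genuine metric) gives $\move(\s,\y)\le b_1+b_2$. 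Using $\s$ as a witness in \eqref{eq:drop_move_defn},
\[\dropmove\eta(\x,\y)\ \le\ \drop(\x,\s)+\eta\,\move(\s,\y)\ \le\ (a_1+\eta b_1)+(a_2+\eta b_2)\ =\ \dropmove\eta(\x,\mathbf{w})+\dropmove\eta(\mathbf{w},\y),\]
which is (ii). (If infima are only approached, run the same argument with $\varepsilon$-optimal $\z_1,\z_2$ and let $\varepsilon\to0$.)

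I expect the reordering step to be the main obstacle, which is exactly why \Lemmaref{drop-move-switch} is isolated and proved first: its proof constructs an explicit coupling and checks four size/marginal conditions. Two boundary points will also need care. First, \Lemmaref{drop-move-switch} requires $\alpha_2<1$, so the case $a_2=\drop(\mathbf{w},\z_2)=1$ — equivalently $\z_2$, hence $\y$, is the empty histogram — must be handled by hand; but then $\dropmove\eta(\x,\y)\le1$ directly (drop all of $\x$) while $\dropmove\eta(\mathbf{w},\y)\ge1$ whenever $\mathbf{w}$ is nonempty, and the remaining sub-cases where $\x$ or $\mathbf{w}$ is empty are equally immediate, so (ii) still holds. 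Second, I should confirm that every distance appearing above is finite — $|\s|=|\z_2|=|\y|$ and $\s(g)\le\z_1(g)\le\x(g)$ for all $g$ — which follows from the finiteness of $a_1,b_1,a_2,b_2$ together with the output guarantee of \Lemmaref{drop-move-switch}.
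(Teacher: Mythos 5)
Your proof is correct and follows essentially the same approach as the paper's: verify property (i) directly from the nonnegativity and point-separation of $\drop$ and $\move$, and for the triangle inequality decompose each side via an optimal intermediate histogram, apply \Lemmaref{drop-move-switch} to the middle $\move$-then-$\drop$ pair to reorder the chain into drop-drop-move-move, and collapse using the triangle inequalities of $\drop$ and $\move$ separately. You are somewhat more careful than the paper's own proof in two respects worth keeping: you note that \Lemmaref{drop-move-switch} requires $\alpha_2 < 1$ and handle the degenerate case $a_2 = 1$ (empty $\y$) by hand, and you observe that the claim as literally stated fails at $\eta = 0$ (since $\Rplus$ includes $0$), where property (i) breaks.
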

\begin{proof}
Note that both $\drop$ and $\move$ are quasi-metrics. Hence, for any $\x, \y$, $\drop(\x, \y) \ge 0$ and $\move(\x, \y) \ge 0$. This implies that for every $\x, \y$, $\drme(\x, \y) \ge 0$. Now we one by one prove that $\drme$ satisfies the properties of quasi-metric:

\textit{Property \#1: For all $\x$ and $\y$, $\x = \y \Leftrightarrow \drme(\x, \y) = 0$.}
\begin{enumerate}
    \item For all $\x$, $\drme(\x, \x) = 0$:
    \begin{align*}
    \drme(\x, \x) &= \inf_{\z} \left( \drop(\x,\z) + \eta \cdot \move(\z,\x)\right) \\
    &\le \drop(\x,\x) + \eta \cdot \move(\x,\x) \tag{infimum over a set is $\le$ the value at any fixed point in set}\\
    &= 0
    \end{align*}
    Since $\drme(\x, \x) \ge 0$ as well as $\le 0$, $\drme(\x, \x) = 0$.
    \item For all $\x, \y$, $\drme(\x, \y) = 0 \implies \x = \y$:\\
    $\drme(\x, \y) = 0$ implies that $\inf_{\z} \left( \drop(\x,\z) + \eta \cdot \move(\z,\y) \right) = 0$. As both $\drop(\x,\z)$ and $\move(\z,\y)$ are $\ge 0$ for any value of $\x, \y, \z$, this is possible only if $\drop(\x,\z) = \move(\z,\y) = 0$ which means that $\x = \z = \y$. Hence $\x = \y$.
\end{enumerate}

\textit{Property \#2: For all $\x$, $\y$ and $\z$, $\drme(\x, \z) \le \drme(\x, \y) + \drme(\y, \z)$.}

We assume that the infimum in both $\drme(\x, \y)$ and $\drme(\y, \z)$ is achieved by $\s_1$ and $\s_2$, respectively
(the proof can be easily extended to the case when the infimum is not achieved).
This means that there exists $a,b,c,d\geq0$, such that
\[\drop(\x, \s_1) = a;\ \move(\s_1, \y) = b;\  \drop(\y, \s_2) = c;\  \move(\s_2, \z) = d,\]
which implies $\drme(\x, \y) = a + \eta b$ and $\drme(\y, \z) = c + \eta d$. 
We need to show that $\drme(\x, \z) \le (a + c) + \eta(b + d)$.

Using \Lemmaref{drop-move-switch} with $\move(\s_1, \y) = b$ and $\drop(\y, \s_2) = c$, we get that there is a $\y'$ such that $\drop(\s_1, \y') = c$ and $\move(\y', s_2) \leq b$. This gives the following:
\[\drop(\x, \s_1) = a;\ \drop(\s_1, \y') = c;\ \move(\y', \s_2) \leq b;\ \move(\s_2, \z) = d.\]
Now we prove that $\drme(\x, \z) \le (a + c) + \eta(b + d)$:
\begin{align*}
    \drme(\x, \z) &= \inf_{\z} \left( \drop(\x,\y) + \eta \cdot \move(\y,\z) \right) \\
    &\le \drop(\x,\y') + \eta \cdot \move(\y',\z)\\
    &\le \drop(\x,\s_1) + \drop(\s_1,\y') + \eta \cdot \move(\y',\z) \tag{$\drop$ is a quasi-metric}\\
    &\le \drop(\x,\s_1) + \drop(\s_1,\y') + \eta \cdot (\move(\y',\s_2) + \move(\s_2,\z)) \tag{$\move$ is a metric}\\
    &\le (a + c) + \eta (b + d).
\end{align*}
This concludes the proof of \Claimref{drop-move-quasi-metric}
\end{proof}

}

\end{document}